\DeclareMathOperator*{\argmax}{arg\,max}
\DeclareMathOperator*{\argmin}{arg\,min}
\def\T{{ \mathrm{\scriptscriptstyle T} }}
\newcommand{\iid}{\overset{\mathrm{iid}}{\sim}}
\newtheorem{assumption}{Assumption}
\newtheorem{theorem}{Theorem}
\newtheorem{lemma}{Lemma}
\newtheorem{definition}{Definition}
\newtheorem{corollary}{Corollary}
\newtheorem{proposition}{Proposition}
\newtheorem{condition}{Condition}
\numberwithin{equation}{section}
\newtheoremstyle{example}{}{}{}{}{\bfseries}{\smallskip}{\newline}{}
\theoremstyle{example}
\newtheorem{example}{Example}
\providecommand{\keywords}[1]{{Keywords: \hspace{1mm}} #1}
\DeclareRobustCommand\dashdotted{\tikz[baseline=-0.5ex]\draw[thick, dash dot] (0.06,0)--(0.5,0);}
\DeclareRobustCommand\dotted{\tikz[baseline=-0.5ex]\draw[thick,dotted] (0.06,0)--(0.5,0);}
\DeclareRobustCommand\dottedmid{\tikz[baseline=-0.5ex]\draw[thick,dotted,opacity = 0.75] (0.06,0)--(0.5,0);}
\DeclareRobustCommand\dashed{\tikz[baseline=-0.5ex]\draw[thick,dashed] (0,0)--(0.52,0);}
\DeclareRobustCommand\dashedmid{\tikz[baseline=-0.5ex]\draw[thick,dashed,opacity = 0.5] (0,0)--(0.52,0);}
\DeclareRobustCommand\dashedlow{\tikz[baseline=-0.5ex]\draw[thick,dashed,opacity = 0.3] (0,0)--(0.52,0);}
\DeclareRobustCommand\full {\tikz[baseline=-0.5ex]\draw[thick] (0,0)--(0.5,0);}
\DeclareRobustCommand\fullmid  {\tikz[baseline=-0.5ex]\draw[thick,opacity = 0.5] (0,0)--(0.5,0);}
\DeclareRobustCommand\fulllow  {\tikz[baseline=-0.5ex]\draw[thick, opacity = 0.25] (0,0)--(0.5,0);}
\DeclareRobustCommand\dot {\tikz[baseline=-0.55ex]\fill[] circle[radius=1.8pt];}
\DeclareRobustCommand\dotmid {\tikz[baseline=-0.52ex]\fill[opacity = 0.4] circle[radius=1.8pt];}
\DeclareRobustCommand \sqrlow {\tikz[]{\filldraw[opacity = 0.25] (0,0)
rectangle (0.2cm,0.2cm);}}
\DeclareRobustCommand \sqrmid {\tikz[]{\filldraw[opacity = 0.4] (0,0)
rectangle (0.2cm,0.2cm);}}
\DeclareRobustCommand \sqr {\tikz[]{\filldraw[opacity = 0.7] (0,0)
rectangle (0.2cm,0.2cm);}}
\tikzset{cross/.style={cross out, draw=black, minimum size=2*(#1-\pgflinewidth), inner sep=0pt, outer sep=0pt},
cross/.default={2.2pt}}
\DeclareRobustCommand \crossmid {\tikz[baseline = -0.5ex] \draw[opacity =0.7] (0,0) node[cross] {};}
\begin{document}

\title{Martingale posterior distributions}

\author{Edwin Fong$\hspace{0.3mm}^{1,2}$,  Chris Holmes$\hspace{0.3mm}^{1,2}$ \& Stephen G. Walker$\hspace{0.3mm}^3$  }
\date{}
\maketitle
\begin{abstract}
The prior distribution on parameters of a sampling distribution is the usual starting point for Bayesian uncertainty quantification. In this paper, we present a different perspective which focuses on missing observations as the source of statistical uncertainty, with the parameter of interest being known precisely given the entire population. We argue that the foundation of Bayesian inference is to assign a distribution on missing observations conditional on what has been observed. In the conditionally i.i.d. setting with an observed sample of size $n$, the Bayesian would thus assign a predictive distribution on the missing $Y_{n+1:\infty}$ conditional on $Y_{1:n}$, which then induces a distribution on the parameter. 
Demonstrating an application of martingales, Doob shows that choosing the Bayesian predictive distribution returns the conventional posterior as the distribution of the parameter. Taking this as our cue, we relax the predictive machine, avoiding the need for the predictive to be derived solely from the usual prior to posterior to predictive density formula. We introduce the \textit{martingale posterior distribution}, which returns Bayesian uncertainty directly on any statistic of interest without the need for the likelihood and prior, and this distribution can be sampled through a computational scheme we name \textit{predictive resampling}. To that end, we introduce new predictive methodologies for multivariate density estimation, regression and classification that build upon recent work on bivariate copulas.\\

\noindent \keywords{Bayesian uncertainty; Copula; Martingale; Predictive inference \vspace{-5mm}}
\end{abstract}
{\let\thefootnote\relax\footnote{{$^1$The Alan Turing Institute}}}
{\let\thefootnote\relax\footnote{{$^2$Department of Statistics, University of Oxford }}}
{\let\thefootnote\relax\footnote{{$^3$Department of Statistics and Data Sciences, University of Texas at Austin}}}

\section{Introduction }\label{sec:intro}

{ Statistical uncertainty in a parameter of interest arises due to missing observations. If a complete population is observed, then the parameter of interest can be assumed to be known precisely. In this paper, we argue that the Bayesian accounts for this uncertainty by constructing a distribution on the missing observations conditional on what has been observed. This in turn induces a distribution
on the parameter given the observed data, which we will see is the posterior distribution. In this work, we will describe and generalize this framework in detail for the case where the observations are independent and identically distributed (i.i.d.), and we will also briefly consider  other data structures.

In the conditionally i.i.d. case, given $Y_{1:n}\iid F_0$ where $F_0$ is the unknown true sampling distribution, the missing observations are the remaining $Y_{n+1:\infty}$, and as such we focus our  modelling efforts directly on the predictive density
\begin{equation} \label{eq:joint_predictive}
p(y_{n+1:\infty} \mid  y_{1:n}). 
\end{equation}
Here, the construction of the predictive density is for parameter inference, and not for forecasting future observations as is more usual.} For inference, we assume that the object of interest is fully defined once all the observations have been viewed, which we write as $\theta_\infty = \theta(Y_{1:\infty})$.  It is clear then that \eqref{eq:joint_predictive} induces a distribution on $\theta_{\infty}$,  and we call this scheme of imputing $Y_{n+1:\infty}$ and computing $\theta_\infty$ as \textit{predictive resampling}. A key observation is that $Y_{1:\infty}$ will always contain the observed $Y_{1:n} = y_{1:n}$ as the predictive Bayesian considers the observed sample to be fixed, in contrast to the frequentist consideration of other possible values of $Y_{1:n}$.

{For conditionally i.i.d. observations}, the traditional Bayesian approach is to elicit a prior density $\pi(\theta)$ and likelihood function $f_{\theta}(y)$, derive the posterior $\pi(\theta \mid y_{1:n})$, then compute the predictive  density through
\begin{equation}\label{eq:posterior_predictive}
p(y \mid y_{1:n}) = \int f_\theta(y)\, \pi(\theta \mid y_{1:n})\,d\theta. 
\end{equation} 
{ A concise summary of our approach is the following: while \cite{DeFinetti1937} provided a representation of Bayesian inference which relies on exchangeability and the prior distribution, \cite{Doob1949} provided a framework which relies solely, in the i.i.d. case, on the predictive distribution. We will see that Doob's framework is more flexible and the mathematical requirement amounts to the construction of a martingale. It is this flexibility provided by Doob's framework which we exploit in this paper.} In fact, through Doob's theorem,  we will see that predictive resampling as described above is identical to posterior sampling when using \eqref{eq:posterior_predictive} as the predictive and $\theta$ indexes the sampling density, in which case $\theta_\infty \sim \pi(\theta \mid y_{1:n})$. Denoting $p(y)$ as the prior predictive, this connection is illustrated below for the traditional Bayesian case:
\begin{eqnarray}
f_{\theta}(y), \pi(\theta) \xrightarrow[\rm{Bayes' \, rule }]{} \, \, \, \pi(\theta \mid y_{1:n}) & \xrightarrow[\int  f_\theta(y) \, \pi(\theta | y_{1:n}) \,d\theta]{\rm{~ ~ posterior ~  predictive ~~ }} & p(y \mid y_{1:n}) \nonumber \\
~ & ~ & ~ \nonumber \\ 
{\pi}(\theta \mid y_{1:n}) & \xleftarrow[Y_{n+1:\infty} \, \sim \, {p}(\cdot  \mid y_{1:n})]{\rm{~ ~ Doob's  ~ theorem ~~ }} & {p}(y \mid y_{1:n}) \, \, \, \xleftarrow[\rm{predictive \, update}]{} {p}(y)  \nonumber 
\end{eqnarray}
 
 However, the traditional Bayesian focus on the prior on $\theta$ makes no appeal to the underlying cause of the uncertainty, that is the unobserved part of the study population $Y_{n+1:\infty}$. Furthermore, the traditional prior to posterior computation is becoming increasingly strained as model complexity and data sizes grow.  In our work, we advocate the predictive resampling strategy - given $y_{1:n}$, our starting point is directly the predictive model \eqref{eq:joint_predictive} and the target statistic of interest $\theta_{\infty}$, noting now that $\theta_\infty$ is no longer restricted to indexing the sampling density. We relax de Finetti's assumption of exchangeability, but we must now take care to construct \eqref{eq:joint_predictive} so that $\theta_N$ is indeed convergent to some $\theta_\infty$, where $\theta_N = \theta(Y_{1:N})$ can be viewed as an estimator. We highlight here that we use $n$ and $N$ for the size of the observed dataset and the imputed population respectively.  In the spirit of Doob, we rely heavily on martingales, which also aid in ensuring that expectations of limits coincide with fixed quantities seen at the sample of size $n$. This can be regarded as a predictive coherency condition, and we designate the distribution of $\theta_\infty$ as the \textit{martingale posterior}. Our choice of \eqref{eq:joint_predictive} will be density estimators based on recent ideas in the literature, specifically the \textit{conditionally identically distributed} (c.i.d.) sequence of \citet{Berti2004} and bivariate copula update of \citet{Hahn2018}.

We now discuss why one would want to go through the route of obtaining the martingale posterior via the induced distribution of $\theta_\infty$ from \eqref{eq:joint_predictive} rather than the traditional likelihood-prior construction. Firstly, predictive models are probabilistic statements on observables, which removes the need to elicit subjective probability distributions on parameters which may have no real-world interpretations and only  index the sampling density.
Secondly, the martingale posterior establishes a direct connection between prediction and statistical inference, opening up the possibility of using modern probabilistic predictive methods for inference \citep{breiman2001}, and transparently acknowledges the source of statistical uncertainty as the missing $Y_{n+1:\infty}$. Thirdly, working directly with predictive distributions is highly practical.  For an elicited 1-step ahead predictive, we can predictively resample by carrying out the recursive update
$$
 \{p(y\mid y_{1:N-1}),y_{N}\} \mapsto p(y\mid y_{1:N})
$$
to sample $Y_{n+1:N}$ for a large enough $N$ such that $\theta_N$ has effectively converged to a sample from the martingale posterior, or $N$ matches a known finite study population size. 
In complex scenarios such as multivariate density estimation and regression, we introduce new copula-based methodologies where our computations remain  exact, GPU-friendly and parallelizable, returning us Bayesian uncertainty without any reliance on Markov chain Monte Carlo (MCMC). Finally, a predictive approach more clearly delineates the core similarities and differences between Bayesian and frequentist uncertainty.

{We will focus on the conditionally i.i.d. data setting in this work, which corresponds to exchangeable traditional Bayesian models. In this setting, the martingale posterior can indeed be regarded as a generalization of the traditional Bayesian model, as the class of c.i.d. models is more general and contains the class of exchangeable models which we will see in Section \ref{sec:pred_coherence}. In more complex data structures beyond i.i.d. data, such as those encountered in hierarchical modelling or time series, our framework would still apply. In this case, the missing observations we require may no longer be $Y_{n+1:\infty}$, and model elicitation would no longer only involve a sequence of predictive distributions. For example, a simple hierarchical setting is the observation process $Y_i \sim p(y_i | \theta_i )$, where $\theta_i$ is itself drawn from an unknown $G_0$ and we may be interested in some functional  $\gamma(G_0)$.  Here, we only observe $Y_{1:n} = y_{1:n}$, so the missing observations of interest are now the unobserved random effects $\theta_{1:\infty}$. We can thus seek to impute $\theta_{1:n} \sim p(\theta_{1:n} \mid y_{1:n})$ from the data, followed by the missing remainder $\theta_{n+1:\infty} \sim p(\theta_{n+1:\infty} \mid \theta_{1:n})$. Computing $\gamma(\theta_{1:\infty})$ would then return us a posterior sample. For the remainder of the paper, we will focus only on the i.i.d. case and leave the details of non-i.i.d. settings for future work.}

In Section \ref{sec:PI}, we formally investigate the connection between predictive and posterior inference, and introduce a predictive framework for inference and the resulting martingale posterior. We then utilize the bootstrap as a canonical example to distinctly compare Bayesian and frequentist uncertainty. We postpone discussion of related work until Section \ref{sec:related} in order to provide context beforehand.  In Section \ref{sec:PR}, we discuss predictive coherence conditions for martingale posteriors, utilizing c.i.d. sequences.  In Section \ref{sec:copula}, we revisit the bivariate copula methodology of \citet{Hahn2018} for univariate density estimation, and extend it to obtain the martingale posterior. We then generalize this copula-based method to multivariate density estimation, regression and classification. Section \ref{sec:illustrations} then provides a thorough demonstration of the above methods through examples. In Section \ref{sec:theory}, we discuss some theoretical properties of the martingale posterior with the copula-based methodology. Finally, we discuss our results in Section \ref{sec:discussion}.

\section{A predictive framework for inference}
\label{sec:PI}
\subsection{Doob's theorem and Bayesian uncertainty}\label{sec:foundations}

{ Uncertainty quantification lies at the core of statistical inference, and Bayesian inference is one framework for handling uncertainty in a formal manner. The Bayesian begins with the random variables $(\Theta, Y_1,Y_2,\ldots)$, where  $(Y_1,Y_2,\ldots)$ are the observables of interest, and $\Theta$ is the parameter which indexes the sampling density $f_\Theta(y)$. We assume throughout that the appropriate densities exist. For conditionally i.i.d. data, the Bayesian elicits a joint probability model for the observables and parameter with joint density
\begin{equation}\label{eq:joint_param}
p(\theta,y_{1:N}) =\pi(\theta)\, \prod_{i=1}^N f_\theta(y_i)
\end{equation}
for each $N$. Here, the density $\pi(\theta)$ represents prior knowledge about the parameter which generates the observations, and under a Subjectivist point of view,
$\Pi(A)=\int_A \pi(\theta)\,d\theta$
represents the subjective probability that the generating parameter value $\Theta$ lies in the set $A$. Marginalizing out $\Theta$ gives the joint density of the observables, 
\begin{equation}\label{eq:joint}
p(y_{1:N}) = \int\prod_{i=1}^N f_\theta(y_i)\,  d\Pi(\theta).
\end{equation}}

De Finetti however argued that the direct likelihood--prior interpretation of the Bayesian model was insufficient, as $\Theta$ is of a ``metaphysical" nature and probability statements should only be on observables \citep{Bernardo2009}. This then motivated the notion of exchangeability, where the joint probability $P$ of the observables $Y_{1:N}=(Y_1,\ldots,Y_N)$ is invariant to the ordering of $Y_i$ for all $N$. { Through de Finetti's representation theorem \citep{DeFinetti1937} and extensions thereof (e.g. \citet{Hewitt1955}), the assumption of exchangeability induces the likelihood-prior form of the joint density in \eqref{eq:joint} (where $\Pi$ may not have a density), which motivates such a specification of the Bayesian model.} The representation theorem however is only part of the story. As alluded to in the Section \ref{sec:intro}, the source of statistical uncertainty is the lack of the infinite dataset $Y_{n+1:\infty}$ with which we could pin down any quantity of interest precisely. Bayesian uncertainty through the lens of the prior is still opaque in this regard, even with the representation theorem. The key to understanding the source of uncertainty lies in the {predictive imputation} of observables, for which we require a result from Doob.

{ Let us assume that data has yet to be observed, so the missing observations are $Y_{1:\infty}$. Following the discussion in Section \ref{sec:intro}, one can regard \eqref{eq:joint} as the predictive density on the missing population, and can estimate the parameter indexing the sampling density as a function of the imputed $Y_{1:N}$. An appropriate and intuitive point estimate for the Bayesian is the posterior mean, which we write as
$$\bar{\theta}_N = E\left[\Theta \mid Y_{1:N} \right].$$ A secondary result of \cite{Doob1949} confirms that the prior uncertainty in $\Theta$ is equivalent to the predictive uncertainty in $Y_{1:\infty}$.
\begin{theorem}[\cite{Doob1949}]\label{Th:Doob_consistency}
Assume $\Theta$ is in a linear space with $E\left[|\Theta | \right] < \infty$, and $(\Theta,Y_1,Y_2,\ldots)$ is distributed according to \eqref{eq:joint_param}, so $\Theta \sim \Pi$. Under measurability and identifiability conditions on $f_\theta$, we have
\begin{equation}
\bar{\theta}_N \to \Theta \quad\mbox{a.s.}
\end{equation}
\end{theorem}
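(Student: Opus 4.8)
The plan is to recognize that $\bar{\theta}_N = E[\Theta \mid Y_{1:N}]$ is precisely the Doob (Lévy) martingale generated by the fixed integrable variable $\Theta$ relative to the filtration $\mathcal{F}_N = \sigma(Y_1,\ldots,Y_N)$, and then to prove the theorem in two movements: first, establish that $\bar{\theta}_N$ converges almost surely to $E[\Theta \mid \mathcal{F}_\infty]$, where $\mathcal{F}_\infty = \sigma(Y_1,Y_2,\ldots)$; second, identify this limit with $\Theta$ itself using the identifiability of the sampling model. The first movement is standard martingale theory; the genuine work lies in the second.

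For the first movement, I would verify that $(\bar{\theta}_N, \mathcal{F}_N)$ is a martingale: by the tower property, $E[\bar{\theta}_{N+1}\mid \mathcal{F}_N] = E\big[E[\Theta\mid\mathcal{F}_{N+1}]\mid\mathcal{F}_N\big] = E[\Theta\mid\mathcal{F}_N] = \bar{\theta}_N$, while integrability is inherited from $E[|\Theta|]<\infty$ via conditional Jensen. Because the martingale is closed by a single integrable variable, it is uniformly integrable, so Lévy's upward convergence theorem delivers $\bar{\theta}_N \to E[\Theta\mid\mathcal{F}_\infty]$ both almost surely and in $L^1$. When $\Theta$ takes values in a linear space, I would apply this coordinatewise, which is exactly what the linear-space hypothesis is there to permit.

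The second movement is where I expect the main obstacle, and where the measurability and identifiability hypotheses do their work. The target equality $E[\Theta\mid\mathcal{F}_\infty]=\Theta$ holds precisely when $\Theta$ is $\mathcal{F}_\infty$-measurable, i.e. when the infinite data stream pins down the parameter. Conditionally on $\Theta=\theta$, the observations are i.i.d.\ from $f_\theta$, so for a suitable countable separating family of bounded measurable test functions $\{g_k\}$ the strong law of large numbers gives $N^{-1}\sum_{i=1}^N g_k(Y_i) \to \int g_k\, dF_\theta$ almost surely, and each such limit is $\mathcal{F}_\infty$-measurable. The identifiability condition is precisely the statement that the map $\theta \mapsto \big(\int g_k\, dF_\theta\big)_k$ is injective with measurable inverse, so $\theta$, and hence $\Theta$, is recovered as an $\mathcal{F}_\infty$-measurable function of these limits. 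Thus $\Theta$ is $\mathcal{F}_\infty$-measurable, $E[\Theta\mid\mathcal{F}_\infty]=\Theta$ a.s., and combining with the first movement completes the argument.

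The delicate points I would handle with care are twofold: choosing the countable separating family $\{g_k\}$ so that the almost-sure SLLN statements hold simultaneously on a single probability-one event (a countable intersection of full-measure events suffices), and invoking the identifiability and measurability hypotheses to guarantee that the reconstructing inverse map is genuinely measurable, so that the reconstructed parameter is a bona fide $\mathcal{F}_\infty$-measurable random variable rather than merely a pointwise limit. It is this measurable-identifiability step, rather than the martingale convergence, that constitutes the real content of the theorem.
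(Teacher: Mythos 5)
Your proposal is correct and follows essentially the same route as the paper, which itself only sketches the argument: the paper identifies $\bar{\theta}_N$ as a martingale and invokes Doob's martingale convergence theorem, deferring the rest to \citet[Theorems 6.8--6.9]{Ghosal2017}. Your second movement — identifying the limit $E[\Theta\mid\mathcal{F}_\infty]$ with $\Theta$ via the strong law applied to a countable separating family and the measurable-identifiability hypothesis — is exactly the standard completion of that sketch and correctly locates where the ``measurability and identifiability conditions'' in the statement are actually used.
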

For the above result, the key is to rely on $\bar{\theta}_N$ being a martingale, that is
$$
E\left[\bar{\theta}_N \mid Y_{1:N-1}\right] = \bar{\theta}_{N-1} 
$$
almost surely. Doob's martingale convergence theorem then ensures that $\bar{\theta}_N$ converges to a limit almost surely. For $\Theta$ in more general metric spaces, consistency results with general notions of posterior expectations are provided in \citet[Theorem 6.8]{Ghosal2017}. As an aside, we highlight that \cite{Doob1949} provides a more general result: the Bayesian posterior distribution converges weakly to the Dirac measure $\delta_{\Theta}$ almost surely for $\Pi$-almost every $\Theta$ as $N \to \infty$. The technical details of a more general version of this result can be found in \citet[Theorem 6.9]{Ghosal2017}.  In the Bayesian nonparametric case where $\Theta$ is a probability density function, we have a nonparametric extension of the above results \citep{Lijoi2004}. }

Returning to the task at hand, we can summarize the above by considering two distinct methods of sampling $\Theta$ from the prior $\Pi$ {before} seeing any data. The first is to draw $\Theta \sim \Pi$ directly, which is the opaque view of the inherently random parameter that we are trying to shed light on. The second, which inspires the remainder of our paper, begins with sequentially imputing the unseen observables $Y_1,Y_2,Y_3\ldots$ from the sequence of predictive densities
\begin{equation}
\begin{aligned}
Y_1 &\sim p(\cdot),\quad Y_2 &\sim p(\cdot \mid y_1), \quad Y_3 \sim p(\cdot \mid y_2,y_1), \quad \ldots
\end{aligned}
\end{equation}
until we have the complete information $Y_{1:\infty}$ in the limit. Given this random infinite dataset, the limiting point estimate $\bar{\theta}_\infty$, that is the posterior mean computed on the entire dataset, is in fact distributed according to $\Pi$. This equivalence highlights the fact that {\em{a priori}} uncertainty in $\Theta$ and uncertainty in $Y_{1:\infty}$ are one and the same, and the function $\bar{\theta}$  provides a means to precisely recover our quantity of interest when all information is made available to us.

Of course, such an interpretation is equally valid  {\em{a posteriori}}, that is after we have observed $Y_{1:n}=y_{1:n}$. Here, sampling $\Theta \sim \Pi(\cdot\mid y_{1:n})$ is equivalent to sampling $Y_{n+1:\infty}$ conditional on $y_{1:n}$ and computing $\bar{\theta}_\infty$ as if we have observed the infinite dataset, noting that $Y_{1:n} = y_{1:n}$ is now fixed. This can be seen by simply substituting the prior $\pi$ in \eqref{eq:joint_param}, \eqref{eq:joint} and Theorem \ref{Th:Doob_consistency} with the posterior $\pi(\cdot \mid y_{1:n})$. In conclusion, Doob's result highlights that the Bayesian seeks to simulate what is needed to pin down the parameter but is missing from reality, that is $Y_{n+1:\infty}$ in the i.i.d. case,  and we find this to be a compelling justification for the Bayesian approach.

We now conclude this section with a concrete demonstration of the equivalence between posterior sampling and the forward sampling of $Y_{n+1:\infty}$ through a simple normal model with unknown mean based on an example from \cite{Hahn2015}.
 \begin{example}\label{ex:bayes_normal}
 Let $f_\theta(y) = \mathcal{N}(y \mid \theta,1)$, with $\pi(\theta) = \mathcal{N}(\theta \mid 0,1)$. Given an observed dataset	 $y_{1:n}$, the tractable posterior density takes on the form $\pi(\theta \mid y_{1:n}) = \mathcal{N}(\theta \mid \bar{\theta}_n,\bar{\sigma}^2_n)$ where
 \begin{equation*}
 \bar{\theta}_n=  \frac{\sum_{i=1}^n y_i}{n+1},\quad  \bar{\sigma}^2_n =  \frac{1}{n+1}\cdot
 \end{equation*}
 The posterior  predictive density then takes on the form $p(y \mid y_{1:n}) = \mathcal{N}(y \mid \bar{\theta}_n,1 + \bar{\sigma}_n^2)$. For observed data, we generated $y_{1:n} \iid f_\theta(y)$ for $n = 10$ with $\theta = 2$, giving $\bar{\theta}_n = 1.84$. 
 
 We can plot the independent sample paths for the posterior mean, $\bar{\theta}_{n+1:N}$, as we recursively forward sample $Y_{n+1:N}$, where $N = n+1000$ in this example. In Figure \ref{fig:normal_means}, we see that the sample paths of $\bar{\theta}_{n+i}$ each converge to a random ${\Theta}$ as $i$ increases, with the density of $\bar{\theta}_N$ very close to the analytic posterior. From Doob's consistency theorem, we know this is exact for $N \to \infty$.
 \begin{figure}[!h]
\centering
 \includegraphics[width=0.95\textwidth]{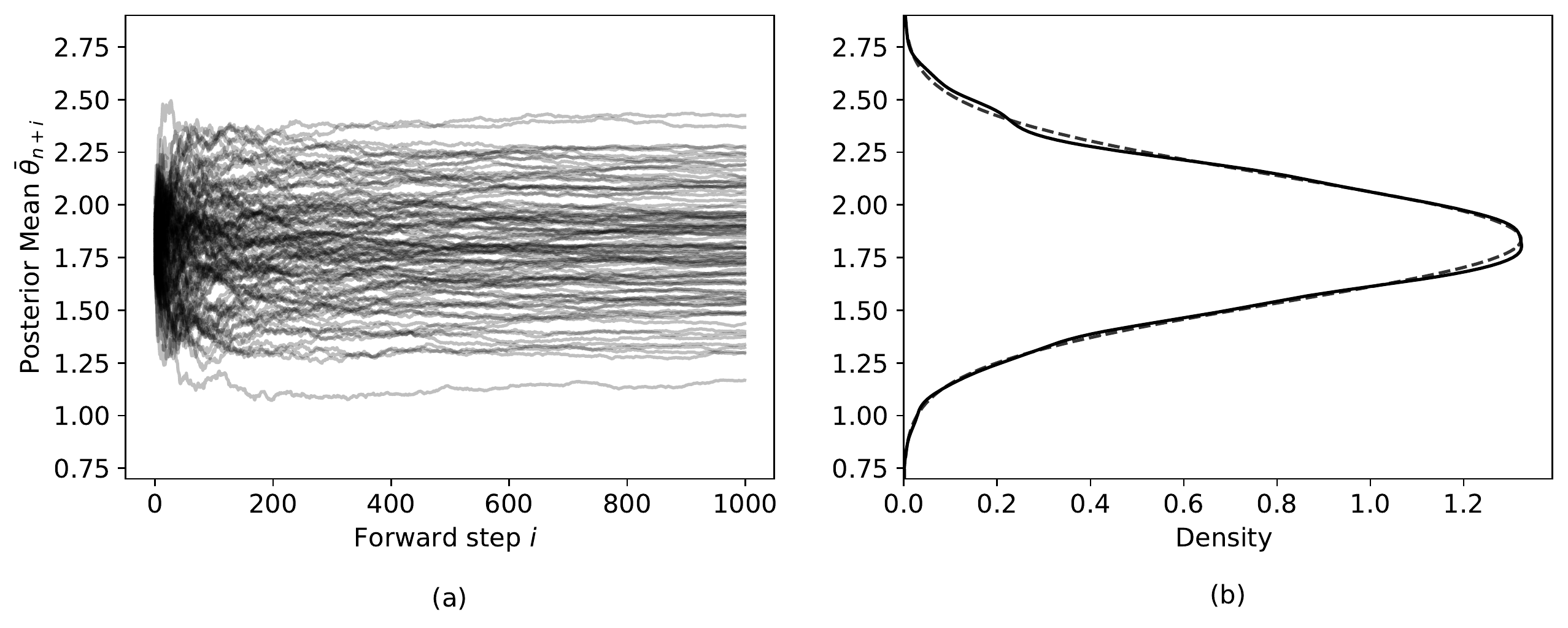}
\caption{(a) Sample paths of $\bar{\theta}_{n+i}$ through forward sampling; (b) Kernel density estimate of $\bar{\theta}_{N}$ samples (\full) and analytical posterior density $\pi(\theta \mid y_{1:n})$ (\dashed)} 
\label{fig:normal_means}
\end{figure}
 \end{example}

\subsection{The methodological approach}\label{sec:pred_inf}

Through Doob's result in Theorem \ref{Th:Doob_consistency}, we have demonstrated the predictive view of Bayesian inference as a means to understand how the posterior uncertainty in $\Theta$ arises from the missing information $Y_{n+1:\infty}$.  The predictive view of Bayesian inference partitions posterior sampling into two distinct tasks. The first is the simulation of $Y_{n+1:\infty}$ through the sequence of 1-step ahead predictive distributions to assess the uncertainty that arises from the missing observables. The second is the recovery of the parameter of interest $\Theta$ from the simulated complete information, which is facilitated by the limiting posterior mean point estimate $\bar{\theta}_\infty$. The uncertainty in $\Theta$ then flows from the uncertainty in $Y_{n+1:\infty}$. Inspired by this, we will now demonstrate the practical importance of this interpretation by introducing a predictive framework for inference built exactly on these two tasks. This framework eliminates the need for the usual likelihood--prior construction of the Bayesian model, and as such generalizes the traditional Bayesian posterior to the martingale posterior.

 \subsubsection{Sampling the missing data}
For the predictive Bayesian, the role of the posterior $\pi( \theta \mid y_{1:n})$ is to aid in the updating of the predictive density, $p(\cdot \mid y_{1:N-1}) \mapsto p(\cdot\mid y_{1:N})$ after observing $Y_{N}$, and the likelihood and prior can be viewed as merely intermediate tools to construct the sequence of predictives \citep{Roberts1965}.  
To obviate the need of a likelihood-prior specification,  our proposal is to specify the sequence of 1-step ahead predictive densities $\{p(\cdot\mid y_{1:N})\}_{N\geq n}$ directly, which implies a joint density through the  factorization
\begin{equation}\label{eq:chainrule}
p(y_{n+1:N}\mid y_{1:n}) = \prod_{i=n+1}^N p(y_i \mid y_{1:i-1}).
\end{equation} 
However, we must take care in our elicitation of $\{p(\cdot\mid y_{1:N})\}_{N\geq n}$ to ensure the existence of the limit $\theta_\infty$. As this is technical, we defer a formal discussion of this choice and the conditions required to Section \ref{sec:PR}. For now, we point out that a sufficient condition is for the 1-step ahead predictive densities to satisfy a martingale condition similar to that of Doob, with details given in Section \ref{sec:pred_coherence}. It may seem that constructing  this sequence will incur too much complexity, but we will show this is in fact feasible and desirable. One key idea is to utilize a general sequential updating procedure whereby given an observed $Y_{N}=y_{N}$, we have a direct and tractable  iterative update
$\{p(\cdot\mid y_{1:N-1}),y_{N}\}\mapsto p(\cdot\mid y_{1:N}).$

 \subsubsection{Recovering the quantity of interest}\label{sec:loss_functions}
We now discuss the second task: given a sample $Y_{n+1:\infty}$, we require a procedure to recover the quantity of interest. In a traditional parametric Bayesian model, the quantity of interest is usually the unknown parameter $\theta$ that indexes the sampling density, and as shown by Doob, the limiting posterior mean $\bar{\theta}_\infty$ serves this purpose. A more general framework is the decision task discussed in \citet{Bissiri2016}, where the aim is to minimize a functional of an unknown distribution function $F_0$ from which samples $Y_{1:n}$ are i.i.d.. For some loss function $\ell(\theta, y)$, the quantity of interest $\theta$ is now defined as
\begin{equation}\label{eq:stat_of_interest}
\theta_0 = \argmin_{\theta} \int \ell(\theta,y)\,dF_0(y).
\end{equation}
More details can be found for example in \cite{Huber2004} and \cite{Bissiri2016}.  Typical examples are $\ell(\theta, y) = |\theta-y|$ for the median, $\ell(\theta,y) = (\theta-y)^2 $ for the mean, and $\ell(\theta,y) = -\log f_\theta(y)$ for the Kullback-Leibler minimizer between some parametric density $f_\theta$ and the sampling density $f_0$. The choice of the negative log--likelihood is also interesting as it allows us to target the parameters of a parametric model without the assumption that the model is well--specified \citep{Walker2013,Bissiri2016}. {While misspecification under our framework is still an open question, the Bayesian bootstrap has particularly desirable theoretical and practical properties under misspecification \citep{Lyddon2018,Lyddon2019, Fong2019a}.}
We will also consider more general forms of $\theta_0$, e.g. the density of $F_0$.

Working now in the space of probability distributions, the traditional Bayesian approach would be to elicit a prior on $F$, perhaps nonparametric, and derive the posterior $\Pi(d F\mid y_{1:n})$. {Here, $F$ represents the Bayesian's subjective belief on the unknown true $F_0$.} A posterior sample of $\theta$ is then obtained as follows: draw $F\sim\Pi(d F\mid y_{1:n})$ and compute the $\theta$ minimizing
$\int \ell(\theta,y)\,d F(y)$. For our generalization beyond the likelihood-prior construction, we do not have a posterior mean nor a posterior $F$, and thus require an alternative to recover the quantity of interest given a sample of $Y_{n+1:\infty}$ conditioned on $y_{1:n}$. Our proposal is to construct the random limiting empirical distribution function 
\begin{equation}\label{eq:limiting_empirical}
F_\infty(y)=\lim_{N\to\infty}\frac{1}{N}\left\{\sum_{i=1}^n \mathbbm{1}(y_i\leq y)+  \sum_{i=n+1}^N \mathbbm{1}(Y_i\leq y)\right\}
\end{equation}
and take $\theta$ to minimize $\int \ell(\theta,y)\,d F_{\infty}(y)$. Here, our $F_\infty$ takes the place of the posterior draw of $F$, and its existence will rely on the martingale condition as mentioned above. We can write $\theta(F_\infty)$ or $\theta(Y_{1:\infty})$ interchangeably for the parameter of interest computed from the  completed information. If we specify $p(\cdot \mid y_{1:n})$ through the usual likelihood--prior construction, then sampling $F$ from the posterior in fact yields the same random distribution function as $F_\infty$ almost surely;  this theoretical justification for the limiting empirical distribution function $F_\infty$ is in Appendix \ref{Appendix:Bayes_pred_emp}.

\subsection{The martingale posterior}\label{sec:martingale_posterior}
Our framework for predictive inference is summarized as follows. {Suppose we observe $Y_{1:n}$ i.i.d. from some unknown $F_0$ and are interested in the $\theta_0$ defined by \eqref{eq:stat_of_interest}.} We specify a sequence of predictive densities $\{p(\cdot\mid y_{1:n})\}_{n\geq 0}$ which satisfies the martingale condition to be discussed in Section \ref{sec:pred_coherence} and implies a joint distribution through \eqref{eq:chainrule}. We then impute an infinite future dataset through
\begin{equation*}\label{eq:Bayes_PR_seq}
Y_{n+1}  \sim p(\cdot\mid y_{1:n}),\quad
Y_{n+2} \sim p(\cdot \mid y_{1:n+1}),
\quad \ldots,\quad
Y_{N}\sim p(\cdot \mid y_{1:N-1})
\end{equation*}
for $N\to \infty$. Given the infinite random dataset $Y_{n+1:\infty}$ and the corresponding empirical distribution function $F_\infty$, we compute $\theta_\infty = \theta\left(F_\infty\right)$. We designate the distribution of $\theta_\infty$ as the {martingale posterior}, where we use the notation $\Pi_\infty$ for comparability to traditional Bayes.

\begin{definition}[Martingale posterior] 
The martingale posterior distribution is defined as 
\begin{equation}\label{eq:mart_post}
\Pi_{\infty}(\theta_\infty \in A \mid y_{1:n}) = \int \mathbbm{1}\{{\theta(F_{\infty})} \in A\}\,  d\Pi(F_{\infty} \mid y_{1:n}) \,,
\end{equation}
for measurable set $A$.
\end{definition}

Drawing samples of $\theta_\infty$ from the martingale posterior involves repeating the above simulation procedure given above. We refer to this Monte Carlo scheme as {predictive resampling}, which has strong connections with the Bayesian bootstrap of \citet{Rubin1981}, as we will see in Section \ref{sec:BB}.  In practice however, we may be unable to simulate $N \to \infty$, or the study population may be of finite size $N$. In this case, we can instead impute $Y_{n+1:N}$ for finite $N$, giving us the analogous empirical distribution function $F_N$ and parameter $\theta_N = \theta(F_N)$ or $\theta(Y_{1:N})$.

\begin{definition}[Finite martingale posterior]

The finite martingale posterior is similarly defined as 
\begin{equation}\label{eq:mart_post_N}
\Pi_N(\theta_N \in A \mid y_{1:n}) = \int \mathbbm{1}\{{\theta(y_{1:N})} \in A\}\, p(y_{n+1:N} \mid y_{1:n})\,d y_{n+1:N}\,.
\end{equation}
\end{definition}
In the finite form, the role of the two constituent elements, $p(y_{n+1:N} \mid y_{1:n})$ and $\theta(y_{1:N})$, is even clearer. { For infinite populations, we also highlight that the value of $\theta_N$ varies around $\theta_\infty$, but this may be negligible for sufficiently large $N$. If the population is actually finite and of size $N$, then $\theta_N$ would be the actual target and thus not an approximation.} Finally, we reiterate that the martingale posterior \eqref{eq:mart_post} is equivalent to the traditional Bayesian posterior when using \eqref{eq:posterior_predictive} as the predictive. A summary of the notation and an illustration of the imputation scheme is provided respectively in Appendices \ref{Appendix:notation}, \ref{Appendix:tables}.

\subsection{The Bayesian bootstrap} \label{sec:BB} 
The resemblance of the martingale posterior to a bootstrap estimator should not have gone unnoticed, as both involve repeated sampling of observables followed by computing estimates from the sampled dataset. 
The Bayesian bootstrap of \citet{Rubin1981} is often described as the Bayesian version of the frequentist bootstrap. After observing $y_{1:n}$, one draws a random distribution function from the posterior through
\begin{equation}
w_{1:n} \sim \text{Dirichlet}(1,\ldots,1), \quad {F}(y) = \sum_{i=1}^n w_i \, \mathbbm{1}(y_i \leq y).
\end{equation}
 A posterior sample  of the statistic of interest can then be computed as $\theta({F})$. One interpretation of the Dirichlet weights is to generate uncertainty through the randomization of the objective function \citep{Newton1994,Jin2001,Newton2020,Ng2020}. Closer to our perspective are the connections to Bayesian nonparametric inference, which have been explored in much detail within the literature as it is the non-informative limit of a posterior Dirichlet process \citep{Lo1987,Muliere1996,Ghosal2017}.  Recent work has exploited the computational advantages of the Bayesian bootstrap for scalable nonparametric inference; see \cite{Saarela2015,Lyddon2018,Fong2019a,Newton2020, Knoblauch2020,Nie2020}.

\subsubsection{The empirical predictive}
Within the framework of martingale posteriors, the Bayesian bootstrap has a particularly elegant interpretation that follows from the equivalence to the P\'olya urn scheme \citep{Blackwell1973,Lo1988}. The Bayesian bootstrap is equivalent to the martingale posterior if we define our sequence of predictive probability distribution functions to be the sequence of empirical distribution functions, that is
\begin{equation}\label{eq:empirical}
{P}(Y_{n+1} \leq y \mid y_{1:n}) = F_n(y)=  \frac{1}{n}\sum_{i=1}^n \mathbbm{1}(y_i\leq y).
\end{equation}
 This is easy to see as sampling $Y_{n+1} \sim F_n(y)$ amounts to drawing with replacement 1 of $n$ colours with probability $1/n$ from the urn, and updating to $F_{n+1}(y)$  is equivalent to reinforcing the urn, that is
 $$
 F_{n+1}(y) = \frac{n}{n+1}F_n(y) + \frac{1}{n+1}\mathbbm{1}(y_{n+1}\leq y).
 $$
 Continuing on to $\infty$, the proportions of colours converge in distribution to  the Dirichlet distribution. Interestingly, this choice of predictive implies an exchangeable future sequence from the connection to the Dirichlet process. {The atomic support of the predictive is however slightly problematic if $F_0$ is continuous, as any new observations from $F_0$ will be assigned a predictive probability of zero; we will introduce methodology that remedies this in Section \ref{sec:copula}. Generalizations to other atomic predictives can for example be found in \cite{Zabell1982, Muliere2000}.}

{One can consider the empirical distribution function as the simplest nonparametric predictive for i.i.d. data, and can thus regard the Bayesian bootstrap as the simplest Bayesian nonparametric model.} The uncertainty from the Bayesian bootstrap arises not from the random weights, but from the sequence of empirical predictive distributions.  We resample with replacement, treating each resampled point as a new observed datum; this fundamental observation is our motivation for the term {predictive resampling}.

\subsubsection{Comparison to the frequentist bootstrap}

{Throughout this section, we have assumed the existence of an underlying $F_0$ from which $Y_{1:n}$ are i.i.d., which in turn implies the existence of an unknown true $\theta_0$ much like the frequentist. This has some connections to frequentist consistency under our framework, which we discuss in Section \ref{sec:consistency}. The posterior random variable $\theta_\infty$ then represents our subjective uncertainty in $\theta_0$ after observing $Y_{1:n} = y_{1:n}$.} The Bayesian bootstrap and Efron's bootstrap \citep{Efron1979} are then ideal vessels for the contrasting of Bayesian and frequentist uncertainty.  Both methods are nonparametric and begin by constructing the empirical predictive $F_n$ as in \eqref{eq:empirical} from the atoms of $y_{1:n}$ as an estimate of $F_0$, and both involve resampling. The key difference lies in {how} the resampling is carried out.

The frequentist draws a dataset of size $n$ i.i.d. from $F_n$, which we write as $Y^*_{1:n}$ with corresponding empirical distribution function $F_n^*$, and computes $\theta(F_n^*)$ as a random sample of the estimator. The Bayesian on the other hand draws an infinite future dataset $Y_{n+1:\infty}$ through predictive resampling, and computes $\theta(F_\infty)$  as a random sample of the estimand, where $F_\infty$ is the limiting empirical distribution function of $\{y_{1:n}, Y_{n+1:\infty}\}$, noting again that the Bayesian holds $y_{1:n}$ fixed. This is summarized in Algorithms \ref{alg:bayes_bootstrap} and \ref{alg:efrons_bootstrap}. { Notably, the specification in both bootstraps are equivalent: it is merely the elicitation of $F_n(y)$, which entirely characterizes both types of uncertainty.}

\begin{figure}
\small
  \centering
  \begin{minipage}{.45\linewidth}
  \centering
\begin{algorithm}[H]\label{alg:bayes_bootstrap}
\DontPrintSemicolon
  \SetAlgoLined
{Set ${F}_n$ from the observed data $y_{1:n}$}\;
\For{$j \gets 1$ \textnormal{\textbf{to}} $B$} {
  \For{$i \gets n+1$ \textnormal{\textbf{to}} $\infty$} {
  Sample $Y_{i}  \sim {F}_{i-1}$\;
  Update  ${F}_{i} \mapsfrom \left\{{F}_{i-1}, Y_{i}\right\}$ \;
  }
 Compute ${F}_\infty$ from $\{y_{1:n}, Y_{n+1:\infty }\}$\;
 Evaluate   ${\theta}_\infty^{(j)} =\theta({F}_\infty) $ 
 }
Return $\{\theta_\infty^{(1)},\ldots,\theta_\infty^{(B)} \}$\;
\caption{Bayesian bootstrap}
\end{algorithm}
  \end{minipage}
  \hspace{2mm}
  \begin{minipage}{.45\linewidth}
  \centering
\begin{algorithm}[H]\label{alg:efrons_bootstrap}
\DontPrintSemicolon
  \SetAlgoLined
{Set ${F}_n$ from the observed data $y_{1:n}$\; }
\For{$j \gets 1$ \textnormal{\textbf{to}} $B$} {
  \For{$i \gets 1$ \textnormal{\textbf{to}} $n$} {
  Sample $Y^*_{i}  \sim {F}_{n}$\;
  No update to ${F}_n$  \; 
  \vspace{.1mm}
  }
 Compute ${F}^*_n$ from $\{Y^*_{1:n }\}$\;
 Evaluate   ${\theta}_n^{(j)} =\theta({F}_n^*) $ 
 }
  Return $\{\theta_{n}^{(1)}, \ldots, \theta_{n}^{(B)}\}$ \;
\caption{Efron's bootstrap}
\end{algorithm}
  \end{minipage}
\end{figure}

\subsection{Related work}\label{sec:related}
There have been many others that shared de Finetti's view on the emphasis on observables for inference.  The work of \cite{Dawid1984,Dawid1992a,Dawid1992b} on prequential statistics, a portmanteau of probability/predictive and sequential, is one such example. {In his work,  Dawid focuses on the importance of forecasting, and introduces statistical methodology that assign predictive probabilities and assesses these methods on their agreement with the observed data.  In particular, \cite{Dawid1984} recommends eliciting a sequence of 1-step ahead predictive distributions as we do,  but motivates this by arguing that forecasting is the main statistical task. As pointed out in Section \ref{sec:intro}, this is in contrast to our case, where parameter inference is the main task of interest. We will see in Section \ref{sec:pred_coherence} that stricter conditions are required on this sequence of predictives for inference.} Another strong proponent of the predictive approach is the work of Geisser: he believed that the prediction of observables was of much greater importance than the estimation of parameters, which he described as ``artificial constructs" \citep{Geisser1975}. His emphasis on the predictive motivated cross-validation {\citep{Stone1974, Geisser1974}, which is now popular for Bayesian model evaluation \citep{Vehtari2002, Gelman2014}. 
Works such as \cite{Dawid1985,Lauritzen1988} also consider parameters as functions of the infinite sequence of observations using the notion of repetitive structures.}  Finally, the work of Rubin on both the potential outcomes model \citep{Rubin1974} and multiple imputation \citep{Rubin2004} highlights the idea of inference via imputation.

An early application of what is essentially finite predictive resampling and martingale posteriors is Bayesian inference for finite populations, first discussed in \cite{Roberts1965,Ericson1969} and later by \citet{Geisser1982,Geisser1983}.  A finite population Bayesian bootstrap is described in \citet{Lo1988}, in which a finite P\'olya urn is used to simulate from the posterior. The `P\'olya posterior' of \citet{Ghosh1997} uses the same approach following an admissibility argument.  These methods have applications in  survey sampling or the interim monitoring of clinical trials \citep{Saville2014}. 

There have been recent exciting directions of work that investigate the predictive view of Bayesian nonparametrics (BNP). \citet{Fortini2000} investigate under what conditions parametric models arise from the sequence of predictive distributions using the concept of predictive sufficiency, and derive conditions such that the joint distribution is exchangeable. \cite{Fortini2012,Fortini2014}  discuss the construction of a range of popular exchangeable BNP priors through a sequence of predictive distributions, motivated through a predictive de Finetti's representation theorem \citep[Theorem~2]{Fortini2012}.   \citet{Berti2020} then generalize the nonparametric approach to  c.i.d. sequences; we will later see that c.i.d. sequences, as introduced in \citet{Berti2004}, play a crucial role in our work. However, the previously described methods are mostly constrained to the discrete case. \cite{Hahn2015} and \cite{Hahn2018} construct c.i.d. models through a predictive sequence for univariate density estimation, respectively utilizing the kernel density estimator and the {bivariate copula}.  \citet{Hahn2015} also discusses the connection of Bayesian uncertainty and prediction with a weaker argument, and gives a similar example to our Example \ref{ex:bayes_normal}. Predictive resampling is then used to sample nonparametric densities from a finite martingale posterior; however \citet{Hahn2015} instead specifies the predictive distribution $P_N$ for large $N$ and works backwards to find the sequence of predictives. \citet{Fortini2020} analyze the predictive recursion algorithm of \cite{Newton1998} and the implied underlying quasi-Bayesian model. In their work, they carry out predictive resampling to simulate from the prior law of the mixing distribution in an example, and obtain its asymptotic distribution under the c.i.d. model, that is an asymptotic approximation to the martingale posterior. { An interesting aside is the recent work of \cite{Waudby2020} which utilizes adaptive betting with martingale conditions for the purpose of constructing frequentist confidence intervals.} We aim to unify these related strands of research under a single framework.

\section{Predictive resampling for martingale posteriors} \label{sec:PR}
For the martingale posterior, we now embark on the task of eliciting the general 1-step ahead predictive distributions, with the traditional Bayesian posterior predictive as a special case.  For notational convenience, we write the sequence of predictive probability distribution functions estimated after observing $Y_{1:i} = y_{1:i}$ as 
\begin{equation}\label{eq:pred_seq}
P_i(y)\vcentcolon=  P(Y_{i+1} \leq y \mid y_{1:i}), \quad i \in \{1,2,\ldots\}
\end{equation}
which may have corresponding density functions  $p_i(y)$.  The subscript indicates the length of the conditioning sequence, and there may be a $P_0(y)$ as some initial choice. For a general sequence of predictives, where exchangeability no longer necessarily holds, we instead define our joint distribution on $y_{1:N}$ through this sequence of 1-step ahead predictives and the chain rule as in \eqref{eq:chainrule}. {The Ionescu-Tulcea theorem \citep[Theorem 5.17]{Kallenberg1997} guarantees the existence of such a joint distribution as we take $N \to \infty$, which has been pointed out by works such as \cite{Dawid1984,Fortini2012, Berti2020}}.

Beyond the traditional Bayesian posterior predictive, there is good justification for specifying the model with 1-step ahead predictives, instead of say $m$-step ahead. 
It is simple to interpret and estimate a 1-step ahead predictive as the decision maker's best estimate of the unknown sampling distribution function $F_0$, and methods such as maximum likelihood estimation already do this. There are also connections with forecasting and prequential statistics \citep{Dawid1984}. Finally, we will see that a 1-step update of the predictive allows for the enforcing of the c.i.d. condition for predictive coherence.

While the prescription of \eqref{eq:pred_seq} remains a subjective task, we find it to be no more subjective than the selection of a likelihood function. {There is no longer a need to elicit subjective distributions on parameters which merely index the sampling distribution with no physical meaning, which has been described as `intrinsic'  \citep{Dawid1985}.} In nonparametric inference, we also do not need to elicit priors directly on the space of probability distributions, which can be cumbersome. The uncertainty arises simply from the elicitation of \eqref{eq:pred_seq}. It is clear that we can still use external information and subjective judgement not provided by the data $y_{1:n}$ in this construction.  

\subsection{A practical algorithm for uncertainty}\label{sec:MP_N}
 Given the model specification \eqref{eq:pred_seq}, suppose we wish to undertake inference on a statistic of interest $\theta(F_0)$, defined through a loss function $\ell(\theta,y)$ as in \eqref{eq:stat_of_interest}. We can obtain finite martingale posterior samples through predictive resampling  given in Algorithm \ref{alg:predictive_resampling}, noting the similarity to the Bayesian bootstrap algorithm.
\begin{figure}[ht]
\small
  \centering
  \begin{minipage}{.5\linewidth}
\begin{algorithm}[H]\label{alg:predictive_resampling}
\DontPrintSemicolon
  \SetAlgoLined
{Compute $P_n$ from the observed data $y_{1:n}$\;
$N>n$ is a large integer}\;
  \For{$j \gets 1$ \textnormal{\textbf{to}} $B$}{
  \For{$i \gets n+1$ \textnormal{\textbf{to}} $N$} {
  Sample $Y_{i}  \sim {P}_{i-1}$\;
  Update  $P_{i} \mapsfrom \left\{P_{i-1}, Y_{i}\right\} $\;
  }
 Compute $F_N$ from $\{y_{1:n}, Y_{n+1:N }\}$\;
 Evaluate   ${\theta}^{(j)}_N =\theta(F_N) $ or ${\theta}^{(j)}_N =\theta({P}_N) $ \;}
 Return $\{\theta_N^{(1)},\ldots,\theta_N^{(B)} \} \iid \Pi_N(\cdot \mid y_{1:n})$\;
\caption{Predictive resampling}
\end{algorithm}
  \end{minipage}
\end{figure}

In summary, we run a forward simulation starting at $P_n(y)$ by consecutively sampling from the 1-step ahead predictives and updating as we go. For large $N$, we now have a random dataset $\{y_{1:n},Y_{n+1:N}\}$ from which we can compute the empirical distribution function $F_N(y)$ and statistic of interest $\theta(F_N)$.  In particular, only when the sequence of predictives takes on the form \eqref{eq:posterior_predictive}, combined with the self-information loss, $-\log f_{\theta}(y)$,  is this procedure equivalent to traditional Bayesian inference.

The empirical distribution is atomic, which may be problematic if the object of interest $\theta_0$ requires the limiting  $F_\infty$ to be continuous, for example if $\theta_0$ is the probability density  of $F_0$ or a tail probability. In this case, we can instead compute $\theta(P_N)$, where $P_N$ is the random predictive distribution function conditioned on $\{y_{1:n}, Y_{n+1:N}\}$, which would typically be continuous. We can regard $P_N$ as the finite approximation to the limiting predictive distribution function $P_\infty := \lim_{N\to\infty} P_N$, which serves the same purpose as the limiting empirical $F_\infty$ in Section \ref{sec:loss_functions}. In fact, $P_\infty$ and $F_\infty$ coincide for traditional Bayesian models, and even for the more general c.i.d. sequence of predictives that we will consider shortly. We discuss this in Appendix \ref{Appendix:limit_pred_emp}, borrowing results from \citet{Doob1949}, \cite{Berti2004} and \cite{Lijoi2004}.

Some experimental and theoretical guidance for selecting a sufficiently large $N$ to estimate $P_\infty$ is given in Sections \ref{sec:illustrations} and \ref{sec:theory}. However, it is also interesting to consider a {finite} population, where the $F_0$ of interest is indeed the empirical distribution function of a population of size $N$, as discussed in Sections \ref{sec:martingale_posterior} and \ref{sec:related}. In this case, truncating predictive resampling at $N$ indeed returns the correct uncertainty in any parameter of interest $\theta(Y_{1:N})$ of the finite population.

\subsection{Predictive coherence and conditionally identically distributed sequences}\label{sec:pred_coherence}

{The notion of coherence on one's belief on the parameter $\theta$ is key to the subjective Bayesian, where coherence may be defined in a decision-theoretic sense \citep[Chapter~2.3]{Bernardo2009} or through Dutch book arguments (e.g. \citet{Heath1978}). Extensions of coherence to forecasting are given in \citet{Lane1984, Berti1998},  and more examples of coherence in general can be found in \citet{Robins2000, Eaton2004}. More recently, the notion of coherence of belief updating was introduced in \cite{Bissiri2016}, where a belief update on a statistic of interest $\theta$ is coherent if the update is equivalent whether computed sequentially with $y_1$ followed by $y_2$ or with  $\{y_1,y_2\}$ in tandem through an additive loss condition. } In bypassing the traditional likelihood-prior construction, we must forsake the usual coherence of belief updating and exchangeability. Instead, we specify conditions for a valid martingale posterior entirely in terms of the predictive distribution function, which we term \textit{predictive coherence}. 
 
Suppose we observe $Y_{1:n}$ i.i.d. from some $F_0$ and construct $P_n(y)$ as in \eqref{eq:pred_seq}. We can then view the predictive machine $P_n(y)$ as the best estimate of the unknown distribution function $F_0$ from which the data arose, incorporating all observed data and any possible subjective knowledge. The first minimal condition is that the sequence of predictive distribution functions $P_{n+1}(y),P_{n+2}(y)\ldots$  converges to a random distribution function. Secondly, we would ensure that predictive resampling does not introduce any new information or bias, as $P_n$ is already our best summary of the observed $y_{1:n}$, and the procedure should merely return uncertainty.  Formally, we write these conditions respectively as follows:
 \begin{condition}[Existence]\label{cond:existence}
 The sequence $P_{n+1}(y),P_{n+2}(y),\ldots$ converges  to a random $P_\infty(y)$ almost surely for each $y\in \mathbb{R}$, where $P_\infty$ is a random probability distribution function.
 \end{condition}
  \begin{condition}[Unbiasedness]\label{cond:unbiased}
 The posterior expectation of the random distribution function satisfies
$$E\left[P_\infty(y) \mid y_{1:n} \right]=P_n(y)$$ almost surely for each $y\in \mathbb{R}$.   
 \end{condition}
 Under Condition \ref{cond:existence}, $P_\infty$ is {defined} through the sequence of predictives, and we can thus treat $P_\infty$ directly as the random distribution function without the need for an underlying Bayes' rule representation. This in turn gives us the posterior uncertainty in any statistic $\theta(P_\infty)$.  Condition \ref{cond:unbiased} is stricter, and implies that $P_n$ is our best estimate of $F_0$ and is equal to the posterior mean.

 Fortunately, Conditions \ref{cond:existence} and \ref{cond:unbiased} are satisfied if the sequence $Y_{n+1},Y_{n+2},\ldots$ is \textit{conditionally identically distributed} (c.i.d.), as introduced and studied in \citet{Berti2004}. Many useful properties of c.i.d. sequences have been shown in their work, which we now summarize. {The sequence  $Y_{n+1},Y_{n+2},\ldots$ is c.i.d if we have 
\begin{equation*}\label{eq:cid}
P\left(Y_{i+k}\leq y \mid y_{1:i}\right) = P_i(y), \quad \forall k > 0
\end{equation*}
almost surely for each $y \in \mathbb{R}$. This states that conditional on $y_{1:i}$, any future data points will be identically distributed  according to the predictive $P_i$.} This predictive invariance is particularly natural as a minimal predictive coherence condition, and serves as an analogue to de Finetti's exchangeability assumption in the predictive framework. In fact, as shown in \cite{Kallenberg1998}, the c.i.d. condition is a weakening of exchangeability, and \citet{Berti2004} also show that  c.i.d. sequences are asymptotically exchangeable, which  we state formally in Theorem \ref{Th:cid_exchangeable} in Section \ref{sec:theory_mv}.

{An equivalent formulation of c.i.d. sequences which connects closely to the predictive coherency conditions is that $P_i(y)$ is a martingale for $i\in \{n+1, n+2,\ldots\}$, that is
\begin{equation}\label{eq:martingale}
E\left[P_{i}(y) \mid y_{1:i-1} \right] \equiv \int P_{i}(y)\, dP_{i-1}(y_i) = P_{i-1}(y)
\end{equation}  
almost surely for each $y\in \mathbb{R}$, noting that $P_i$ depends on $y_i$ as in \eqref{eq:pred_seq}.} Relying again on Doob's martingale convergence theorem \citep{Doob1953}, the sequence $P_n(y),P_{n+1}(y),\ldots$ converges to $P_\infty(y)$ almost surely for each $y\in \mathbb{R}$, and $P_\infty$ can be shown to be a random probability distribution function \citep{Berti2004}; we state this formally in Theorem \ref{Th:weak} in Section \ref{sec:theory_mv}. In this case, we also designate the distribution of $P_\infty$ as the martingale posterior when we do not specify $\theta_\infty$. Condition \ref{cond:unbiased} is then satisfied as the sequence $P_{n+1}(y), P_{n+2}(y), \ldots$ is uniformly integrable. Furthermore, we are guaranteed the existence of the limiting empirical distribution function $F_\infty$ as required in Section \ref{sec:loss_functions}, and in fact $F_\infty(y) = P_\infty(y)$ almost surely so the interchangeability of $\theta(F_\infty)$ and $\theta(P_\infty)$ is justified. This equivalence, as well as the convergence of $\theta(Y_{1:N})$ with $N$ for a certain class of parameters, is discussed in Appendix \ref{Appendix:cid_pred_emp}. {Although not explored here, connections of the c.i.d. property to other notions of coherence, such as those given at the start of this subsection, would be interesting to investigate especially given the absence of the prior distribution.}

Although the above predictive coherence conditions are for a valid martingale posterior, we still need to specify a sequence of predictive distributions. Clearly the traditional Bayesian posterior predictive satisfies the above conditions, but in the interest of computational expediency or the desire to bypass the likelihood-prior construction, we may wish to consider more general predictive machines. The remainder of this paper will consider recursive predictive densities using bivariate copulas. 

\section{Recursive predictives with bivariate copulas } \label{sec:copula}
In this section, we focus primarily on the elicitation of the sequence of predictives \eqref{eq:pred_seq} in the continuous case, where $p_i(y)$ is the density of $P_i(y)$ in \eqref{eq:pred_seq}. Analogous predictives are derivable for the discrete case, and these are obtained in \cite{Berti2020}. In particular, we investigate the prescription of this sequence of predictives through a recursive manner, that is for $i \in \{0,1,\ldots\}$
\begin{equation*}\label{eq:recursive_update}
p_{i+1}(y) = \psi_{i+1}^\rho\left\{p_{i}(y), y_{i+1} \right\}
\end{equation*}
where $\psi_i^\rho$  is a sequence of update functions, possibly parameterized by a hyperparameter $\rho$. In this case, we require an initial guess $p_0(y)$ for our recursion, which plays the role of a prior guess on $f_0$. A recursive update of this form is not necessary for a martingale posterior, but it allows for simple satisfaction of conditions for predictive coherence as discussed in Section \ref{sec:pred_coherence}, and computations for predictive resampling will also be significantly easier. Furthermore, when one is only interested in estimating $p_n(y)$, recursive updates may have computational advantages as one does not need to explicitly estimate the posterior. 

Recursive updates have previously been motivated as a fast alternative to MCMC in Dirichlet process mixture models (DPMM).  The predictive recursion algorithm was first introduced by \citet{Newton1998}, which estimates the mixing distribution through a recursive update, and its properties have been studied in detail in the literature; see \citet{Martin2018} for a thorough review. One interesting property shown in \citet{Fortini2020} is that the sequence of observables in Newton's algorithm is c.i.d.; however, the computation of the predictive densities is intractable and requires numerical integration, so we will not discuss this method further here. Direct recursive updates for the predictive density were then introduced in \cite{Hahn2015,Hahn2018,Berti2020}, all of which satisfy the c.i.d. condition.
The bivariate copula method of \cite{Hahn2018} is particularly tractable and well motivated, and we will now build on this method in this section.

\subsection{Bivariate copula update}
{To satisfy the c.i.d. condition required for predictive coherence, we can extend the martingale condition to hold for the sequence of densities $p_n,p_{n+1},\ldots$ such that for $i\in \{n+1,n+2,\ldots\}$
\begin{equation}\label{eq:martingale_density}
E\left[p_{i}(y) \mid y_{1:i-1} \right] \equiv \int p_{i}(y) \, p_{i-1}(y_{i})\, dy_{i} = p_{i-1}	(y)
\end{equation}  
almost surely for each $y \in \mathbb{R}$, assuming the expectations exist. We highlight again that $p_i$ depends on $y_i$ as it is the density of \eqref{eq:pred_seq}.} The above is a sufficient condition for \eqref{eq:martingale} to hold, so our sequence is c.i.d. and the existence and unbiasedness conditions are satisfied giving us a valid martingale posterior. In fact, the martingale convergence theorem shows that $p_i(y) \to p_\infty(y)$ almost surely for each $y \in \mathbb{R}$, but more assumptions are needed to show that $p_\infty$ is the density of $P_\infty(y)$; we explore this in Theorem \ref{Th:absolute_continuity} in Section \ref{sec:theory_mv}.

One particular tractable form of update rule $\psi_i^\rho$ that satisfies \eqref{eq:martingale} is the {bivariate copula} \citep{Nelsen2007} update interpretation of Bayesian inference first introduced in \citet{Hahn2018} for univariate data.   A bivariate copula is a bivariate cumulative distribution function $C:[0,1]^2 \to [0,1]$ with uniform marginal distributions, and in the cases we consider it will have a probability density function $c:[0,1]^2 \to \mathbb{R}$. The bivariate copula can be regarded as characterizing the dependence between two random variables independent of their marginals, which can be seen through Sklar's theorem in the bivariate case.
\begin{theorem}[\citet{Sklar1959}]\label{Th:Sklar}
For a bivariate cumulative distribution function $F(y_1,y_2)$ with continuous marginals $F_1(y_1),F_2(y_2)$, there exists a unique bivariate copula $C$ such that
\begin{equation*}
F(y_1,y_2) = C\{F_1(y_1),F_2(y_2)\}.
\end{equation*}
Furthermore, if $F$ has a density $f$ with marginal densities $f_1,f_2$, we can write
\begin{equation*}
f(y_1,y_2) = c\{F_1(y_1),F_2(y_2)\}\, f_1(y_1) \, f_2(y_2)
\end{equation*}
where $c$ is the density of $C$.
\end{theorem}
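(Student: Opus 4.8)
The engine of the proof is the probability integral transform: if a random variable $Y_i$ has continuous distribution function $F_i$, then $U_i := F_i(Y_i)$ is uniform on $[0,1]$, and equivalently the continuity of $F_i$ makes its quantile function a genuine right inverse. The plan is to construct $C$ explicitly from the quantile functions and then verify the two claimed representations in turn. Define the generalized inverse $F_i^{-1}(u) := \inf\{t \in \mathbb{R} : F_i(t) \geq u\}$ for $u \in (0,1)$, and set
\begin{equation*}
C(u_1, u_2) := F\{F_1^{-1}(u_1), F_2^{-1}(u_2)\}.
\end{equation*}
First I would check that $C$ is a genuine copula: it inherits monotonicity, right-continuity and the $2$-increasing (rectangle) property directly from the joint $F$, while continuity of each $F_i$ yields the right-inverse identity $F_i\{F_i^{-1}(u)\} = u$ and hence the uniform margins $C(u,1) = F_1\{F_1^{-1}(u)\} = u$ and $C(1,u) = u$.

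Next I would establish the representation $F(y_1,y_2) = C\{F_1(y_1), F_2(y_2)\}$. Substituting the definition of $C$ gives $C\{F_1(y_1),F_2(y_2)\} = F\{F_1^{-1}(F_1(y_1)), F_2^{-1}(F_2(y_2))\}$, so it suffices to show this equals $F(y_1,y_2)$. Here $F_i^{-1}(F_i(y_i)) \leq y_i$, with strict inequality possible only when $F_i$ is flat immediately to the left of $y_i$; but on such a flat interval $P(a < Y_i \leq y_i) = 0$, so replacing $y_i$ by $F_i^{-1}(F_i(y_i))$ leaves the value of the joint $F$ unchanged.

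For uniqueness, continuity of the $F_i$ is again essential: by the intermediate value theorem the range of each $F_i$ is all of $(0,1)$, so for any $(u_1,u_2) \in (0,1)^2$ one may choose $y_i$ with $F_i(y_i) = u_i$, and the representation then forces $C(u_1,u_2) = F(y_1,y_2)$ to equal a prescribed value; the boundary values on $[0,1]^2$ are pinned down by right-continuity. I expect the handling of the level sets of the marginals in the two steps above to be the main obstacle, since it is precisely where continuity is used and where the non-injectivity of $F_i$ must be controlled by arguing that the joint $F$ is constant over each marginal's flat regions.

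Finally, for the density statement, assuming $F$ admits a density $f$ and the marginals have densities $f_i = F_i'$, I would differentiate the established representation. Applying the chain rule to $F(y_1,y_2) = C\{F_1(y_1),F_2(y_2)\}$,
\begin{equation*}
f(y_1,y_2) = \frac{\partial^2}{\partial y_1\, \partial y_2}\, C\{F_1(y_1),F_2(y_2)\} = c\{F_1(y_1),F_2(y_2)\}\, f_1(y_1)\, f_2(y_2),
\end{equation*}
where $c = \partial^2 C / \partial u_1 \partial u_2$ is the copula density. This last part is routine once the assumed existence of the densities justifies interchanging differentiation with the composition.
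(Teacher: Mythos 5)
The paper does not prove this result; it is stated as a classical theorem and attributed to \citet{Sklar1959}, so there is no internal proof to compare against. Your argument is the standard one for the continuous-marginal case of Sklar's theorem and is essentially correct: defining $C$ via the generalized inverses, using continuity of the $F_i$ to get the right-inverse identity and hence uniform margins, handling the flat regions of the marginals by noting that $F$ assigns zero mass to them, and obtaining uniqueness from the fact that each $F_i$ is surjective onto $(0,1)$. The only places where you are informal are routine: the boundary case $u_i=1$ (where $F_i^{-1}(1)$ may be $+\infty$ and one should read $C(u,1)$ as a limit) and the final density computation, which as stated is a formal chain-rule identity and deserves a sentence acknowledging that it holds Lebesgue-almost everywhere; neither affects the validity of the proof.
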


This holds for higher dimensions, but we state it for $d= 2$ as this is what we will be working with. From this, we can see that the bivariate copula can model the dependence structure between consecutive predictive densities, and thus we have the following corollary, with the proof given in Appendix \ref{Appendix:corr_copula_CID}.
\begin{corollary}\label{corr:copula_CID}
The sequence  of conditional densities $p_0,p_{1},\ldots$ satisfies the martingale condition \eqref{eq:martingale_density} if and only if there exists a unique sequence of bivariate copula densities $c_1,c_2,\ldots$ such that 
\begin{equation}\label{eq:copula_update}
p_{i+1}(y) = c_{i+1}\{P_i(y),P_i(y_{i+1})\}\, p_i(y)
\end{equation} 
for $i \in \left\{0,1,\ldots \right\}$ and $P_i$ is the distribution function of $p_i$. 
\end{corollary}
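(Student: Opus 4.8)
The plan is to prove both implications by exploiting Sklar's theorem (Theorem \ref{Th:Sklar}), with the martingale condition \eqref{eq:martingale_density} playing the role of a normalization-and-marginal-matching constraint. Throughout I would work in the continuous case assumed for the copula construction, so that each $P_i$ is a continuous distribution function and the copula supplied by Sklar's theorem is unique.

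For the reverse implication (assume the copula update \eqref{eq:copula_update}, deduce \eqref{eq:martingale_density}), I would substitute the update directly into the conditional expectation. First I would write
\[
E\left[p_{i+1}(y) \mid y_{1:i}\right] = p_i(y)\int c_{i+1}\{P_i(y), P_i(y_{i+1})\}\, p_i(y_{i+1})\, dy_{i+1},
\]
pulling out the factor $p_i(y)$, which does not depend on the integration variable. Then I would change variables via $v = P_i(y_{i+1})$, so that $dv = p_i(y_{i+1})\, dy_{i+1}$ and $v$ ranges over $[0,1]$. The integral reduces to $\int_0^1 c_{i+1}\{P_i(y), v\}\, dv$, which equals $1$ because the marginals of a copula density are uniform on $[0,1]$. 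This leaves $E[p_{i+1}(y)\mid y_{1:i}] = p_i(y)$, as required.

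For the forward implication, the key construction is the nonnegative function $g(y_{i+1}, y) := p_{i+1}(y)\, p_i(y_{i+1})$ on $\mathbb{R}^2$, where I emphasize that $p_{i+1}$ depends on $y_{i+1}$. I would verify that $g$ is a genuine bivariate density: its total mass is $\int \bigl(\int p_{i+1}(y)\, p_i(y_{i+1})\, dy_{i+1}\bigr) dy = \int p_i(y)\, dy = 1$ by \eqref{eq:martingale_density}; its $y$-marginal is $\int g\, dy_{i+1} = p_i(y)$, again by \eqref{eq:martingale_density}; and its $y_{i+1}$-marginal is $\int g\, dy = p_i(y_{i+1})$ since $p_{i+1}$ integrates to one. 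Thus $g$ is a bivariate density whose two marginals both equal $p_i$, with common marginal distribution function $P_i$. Sklar's theorem then yields a unique copula density $c_{i+1}$ with $g(y_{i+1}, y) = c_{i+1}\{P_i(y_{i+1}), P_i(y)\}\, p_i(y_{i+1})\, p_i(y)$; dividing through by $p_i(y_{i+1})$ recovers \eqref{eq:copula_update}, and uniqueness of the whole sequence $c_1, c_2, \ldots$ is inherited directly from the uniqueness clause of Sklar's theorem under continuous marginals.

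The step I expect to be the main obstacle is the verification that $g$ is a legitimate joint density with the correct marginals in the forward direction: this is precisely where \eqref{eq:martingale_density} is indispensable, both for normalization and for fixing the $y$-marginal as $p_i$, and it is exactly what licenses the application of Sklar's theorem. The reverse direction is essentially a one-line change of variables once the uniform-marginal property of copula densities is invoked. A minor bookkeeping point is the ordering of the copula arguments: the construction above naturally produces $c_{i+1}\{P_i(y_{i+1}), P_i(y)\}$, which matches the stated $c_{i+1}\{P_i(y), P_i(y_{i+1})\}$ after relabelling the two slots, noting that the transpose of a copula density is again a copula density.
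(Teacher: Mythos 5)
Your proposal is correct and follows essentially the same route as the paper's proof: the forward direction identifies $p_{i+1}(y)\,p_i(y_{i+1})$ as a bivariate density with both marginals equal to $p_i$ and invokes Sklar's theorem, while the reverse direction is the change of variables $v = P_i(y_{i+1})$ together with the uniform-marginal property of copula densities. Your write-up merely spells out the marginal verifications in more detail than the paper does.
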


In the univariate case, we can thus elicit a c.i.d. model through a sequence of copulas, that is we have \eqref{eq:copula_update} as our update function $\psi_{i+1}^\rho$. { We highlight that $c_{i+1}$ is the bivariate copula density that models the dependence between $\{Y_{i+1},Y_{i+2}\}$ conditioned on $Y_{1:i}$. Although the sequence $c_{i+1}$ can technically depend arbitrarily on $y_{1:i}$ (and the sample size $i+1$) without violating the martingale condition, we will later constrain this dependence.} As all exchangeable Bayesian models are c.i.d., there exists a unique sequence of copulas which may or may not be tractable that characterize the model \citep{Hahn2018}. This sequence takes on exactly the form
\begin{equation}\label{eq:Bayes_cop}
p_{i+1}(y) =\underbrace{\frac{\int f_\theta(y)\,f_\theta(y_{i+1})\,\pi(\theta \mid y_{1:i}) \, d\theta}{p_i(y) \, p_i(y_{i+1})}}_{c_{i+1}\{P_i(y),P_i(y_{i+1}) \}}\, p_i(y ).
\end{equation}
The copula density arises following Theorem \ref{Th:Sklar} as the numerator in \eqref{eq:Bayes_cop} is the joint density $p_i(y,y_{i+1})$ with marginal densities $p_i(y)$ and $p_i(y_{i+1})$.
Instead of specifying the likelihood and prior, we will now consider the specification of the sequence of copulas $c_i$ directly. The form for $c_i$ inspired by the DPMM is particularly attractive, and serves well as the canonical extension of the Bayesian bootstrap predictive to continuous random variables.  In the remainder of this section, we will first review the method of \citet{Hahn2018} for univariate density estimation, and extend the methodology to include predictive resampling and hyperparameter selection. We then introduce analogous copula updates for more advanced data settings, including multivariate density estimation, regression and classification.

\subsection{Univariate case}\label{sec:univariate}

Tractable forms of this sequence of copulas in Bayesian models are investigated in \cite{Hahn2018},  which correspond to conjugate priors. The update of particular interest is that of the DPMM \citep{Escobar1995} of the particular form
\begin{equation}\label{eq:DP_mixture}
f_G(y) = \int \mathcal{N}(y \mid \theta,1) \, dG(\theta),\quad
G \sim \text{DP}\left(a, G_0 \right),\quad G_0 = \mathcal{N}(\theta \mid 0,\tau^{-1}),
\end{equation}
where $a>0$ is the scalar precision parameter that we set to $a = 1$. The model is nonparametric, making it a strong candidate for a predictive update, but only the copula update for $i = 0$ is tractable. Inspired by this first update step, \citet{Hahn2018} suggest that the general update to compute the density $p_i(y)$ after observing $y_{1:i}$  for $i \in \{0,\ldots,n-1\}$ takes on the form
\begin{equation}\label{eq:DP_copdens}
\begin{aligned}
p_{i+1}(y) &= \left(1-\alpha_{i+1}\right)p_i(y) + \alpha_{i+1} c_\rho\left\{P_i(y),P_i(y_{i+1})\right\}p_i(y) \\
P_{i+1}(y) &= (1-\alpha_{i+1})P_i(y) + \alpha_{i+1} H_\rho\left\{P_i(y),P_i(y_{i+1})\right\}
\end{aligned}
\end{equation}
where $P_i(y)$ is the distribution function of $p_i(y)$. Here $c_\rho(u,v)$ is the bivariate Gaussian copula density and $H_\rho(u,v)$ is the conditional Gaussian copula of the forms:
\begin{equation}\label{eq:Gauss_copdens}
c_\rho(u,v) = \frac{\mathcal{N}_2\left\{ \Phi^{-1}(u),\Phi^{-1}(v)\mid 0,1,\rho\right\}	}{\mathcal{N}\{\Phi^{-1}(u) \mid 0,1\}\, \mathcal{N}\{\Phi^{-1}(v) \mid 0,1\}}, \quad H_\rho(u,v) = \Phi\left\{ \frac{\Phi^{-1}(u) - \rho \Phi^{-1}(v)}{\sqrt{1-\rho^2}}\right\}
\end{equation}
where $\Phi^{-1}$ is the standard inverse normal distribution function and $\mathcal{N}_2$ is the standard bivariate density with correlation $\rho \in (0,1)$. The role of $\rho$ as a bandwidth will be explored shortly. The update \eqref{eq:DP_copdens} is then a mixture of the independent copula density and the Gaussian copula density, and the sequence $\alpha_i=\mathcal{O}\left(i^{-1}\right)$ ensures the update approaches the independent copula as $i \to \infty$. { Although asymptotic independence is not necessary for the martingale condition, this property holds for Bayesian sequences of copulas \citep{Hahn2018}, and is indeed important for frequentist consistency when estimating $p_n$ as we will see in Section \ref{sec:consistency}. We will see the specific suggested form of $\alpha_i$ at the end of this subsection.}

Note the similarity of the update in \eqref{eq:DP_copdens} to the generalized P\'olya urn  for the Dirichlet process, which for $c = 1$ has the update
${P}_{i+1}(y) =(1-\alpha_{i+1}){P}_i(y) + \alpha_{i+1}\,\mathbbm{1}(y_{i+1} \leq y)$.
We can thus interpret \eqref{eq:DP_copdens} as a smooth generalization of the Bayesian bootstrap update for continuous distributions. One can also interpret  \eqref{eq:DP_copdens} as a Bayesian kernel density estimate (KDE) that satisfies the c.i.d. condition, as the regular KDE cannot satisfy this condition \citep{West1991}. The update can be visualized in Figure \ref{fig:cop_illustration}, where for convenience we write $u_i = P_i(y), \, v_i = P_i(y_{i+1})$. The Gaussian copula kernel $c_\rho\left(u_i,v_i \right)p_i(y)$ is a data dependent kernel roughly centered at $y_{i+1}$, as shown in the left.  The kernel becomes sharper as $\rho$ increases, and we recover the Bayesian bootstrap in the limit of $\rho \to 1$ (with $\alpha_i = 1/i$). The update is then a mixture of $p_{i}(y)$ and the copula kernel, which gives us $p_{i+1}(y)$ in the right panel.

\begin{figure}[!h]
    \centering
        \includegraphics[width=0.95\textwidth]{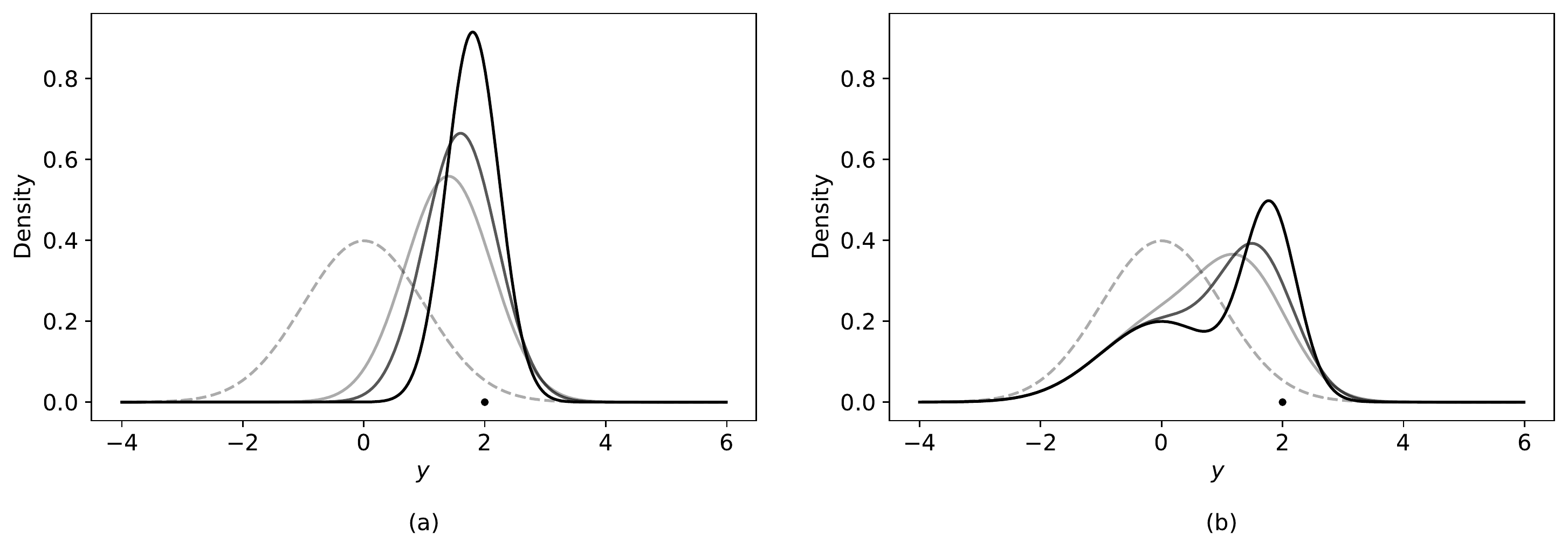}
    \caption{ Current predictive density $p_i(y)$ (\dashedlow) and new datum $y_{i+1}$ (\dot); (a) Copula kernel $c_\rho\left(u_i,v_i\right) p_{i}(y)$ for correlation $\rho =0.7,0.8,0.9$ (\fulllow,\fullmid,\full); (b) Corresponding updated predictive density $p_{i+1}(y)$ (\fulllow,\fullmid,\full) for $\alpha_{i+1} = 0.5$; note that we write $u_i = P_i(y), v_i = P_i(y_{i+1})$} \label{fig:cop_illustration}\vspace{-2mm}
\end{figure}

The recursive update was first introduced to compute $p_n(y)$, but properties of the update make it a highly suitable candidate for predictive resampling. Firstly, by Corollary \ref{corr:copula_CID}, this update is guaranteed to provide a c.i.d. sequence and hence satisfy the existence and unbiasedness conditions. Secondly, the update of the predictive distribution is online, and does not require an expensive recomputation of the predictive distribution at each step.  Finally, the predictive resampling update is particularly computationally elegant as $y_{i+1} \sim P_i(y)$ implies that $P_i(y_{i+1}) \sim \mathcal{U}[0,1]$, so all that is required is the simulation of uniform random variables. The forward sampling step then involves simulating $V_{i} \sim \mathcal{U}[0,1]$ and computing
\begin{equation}\label{eq:1d_copula_pr}
\begin{aligned}
{p}_{i+1}(y) &= \left[1-\alpha_{i+1}  + \alpha_{i+1}  c_\rho\left\{{P}_i(y),V_i\right\}\right] p_i(y) \\
{P}_{i+1}(y) &= (1-\alpha_{i+1}){P}_i(y) + \alpha_{i+1} H_\rho\left\{{P}_i(y),V_i\right\}
\end{aligned}
\end{equation}
iterated over $i \in \{n,\ldots,N\}$, which gives us a random ${p}_N(y)$ at the end. There is no need to actually sample $Y_{i+1} \sim P_i(y)$, which is possible but is more computationally expensive. In Section \ref{sec:theory}, we will see that this update form allows easy analysis of the theoretical properties of predictive resampling.

The bandwidth $\rho$ controls the smoothness of the density estimate, which we can set in a data-dependent manner as we show in Section \ref{sec:hyperparameters}. On the other hand, the sequence $\alpha_i$ is responsible for the uncertainty as we will see in Section \ref{sec:theory}, so extra care must be taken when eliciting this. \cite{Hahn2018} suggest the form $\alpha_i = (i+1)^{-1}$ inspired from the stick-breaking process of the posterior DP as in the Bayesian bootstrap, which works well for estimating $p_n(y)$ but we find this performs poorly when predictive resampling, giving too little uncertainty. This was also observed in \citet{Fortini2020} in the case of Newton's recursive method. However, it should be observed that the posterior over the mixing distribution $G$ is actually a mixture of DPs, that is
\begin{equation*}
\begin{aligned}
[G \mid \theta_{1:n},y_{1:n}] \sim \text{DP}\left(a + n, \frac{a G_0 + \sum_{i=1}^n \delta_{\theta_i}}{a + n}\right), \quad 
[\theta_{1:n} \mid y_{1:n}] \sim \pi(\theta_{1:n} \mid y_{1:n})
\end{aligned}
\end{equation*}
where $\pi(\theta_{1:n} \mid y_{1:n})$ is intractable. As shown in Appendix \ref{Appendix:alpha_deriv}, we only require the simplifying assumption of $\pi(\theta_{1:n} \mid y_{1:n}) = \prod_{i=1}^n G_0(\theta_i)$, which corresponds to each datum belonging to its own cluster in a similar spirit to the KDE. This then returns us the same copula update as \eqref{eq:DP_copdens} with 
\begin{equation}\label{eq:alpha_i}
\alpha_i = \left(2- \frac{1}{i} \right)\frac{1}{i+1}\cdot
\end{equation}
Intuitively, the additional mixing over $\theta_{1:n}$ results in the inflated value  compared to $\alpha_i=(i+1)^{-1}$. Note this is still $\mathcal{O}(i^{-1})$, matches with initial update step for $i = 1$, and works much better in practice as it approaches $0$ more slowly. We use this sequence for the remainder of the copula methods. 

\subsection{Multivariate case} 

In this section, we extend the univariate method to multivariate data $\mathbf{y} \in \mathbb{R}^d$, allowing us to both learn $p_n(\mathbf{y})$ recursively, and retain the c.i.d. sequence so we can  predictively resample to obtain uncertainty. Even without predictive resampling, a general multivariate density estimator $p_n(\mathbf{y})$ is of interest, as the KDE is known to perform poorly in high dimensions; see \citet{Wang2019} for a review. Computation for the multivariate DPMM \citep{Maceachern1994,Escobar1995,Neal2000} may scale poorly as the number of dimensions grows. Variational inference (VI) is a quicker approximation as demonstrated in \citet{Blei2006}, but there is strong dependence on the optimization procedure, which may impair performance in high dimensions. A copula method for bivariate data is suggested in the appendix of \cite{Hahn2018}, but it does not scale well with dimensionality and is not c.i.d.. A recursive method for multivariate density estimation is introduced in \citet{Cappello2018}, but numerical integration on a grid is still required, which scales exponentially with $d$, or a Monte Carlo scheme is required. \cite{Fortini2020} propose a multivariate extension of Newton's recursive method, but it also requires an approximate Monte Carlo scheme to evaluate the predictive density.

Extending the above argument in Corollary \ref{corr:copula_CID} to multivariate data is not as straightforward,  as we would like to factorize the joint density into $p_i(\mathbf{y}, \mathbf{y}_{i+1}) = k(\mathbf{y},\mathbf{y}_{i+1})p_i(\mathbf{y})p_i(\mathbf{y}_{i+1})$, which does not have the copula interpretation like in the 2-dimensional case. Furthermore, building high-dimensional copulas is a difficult task, and bivariate copulas are good building blocks for higher dimensional dependency \citep{Joe1996,Bedford2001,Aas2009}. 

\subsubsection{Factorized kernel}\label{sec:multivariate_copula}
With the above in mind, we now consider the first step update of a multivariate DPMM below
\begin{equation}\label{eq:mv_DP_mixture}
\begin{aligned}
f_G(\mathbf{y}) = \int \prod_{j=1}^d\mathcal{N}(y^j \mid \theta^j,1) \, dG(\bm{\theta}),\quad 
 G \sim \text{DP}\left(a, G_0 \right), \quad G_0(\bm{\theta}) = \prod_{j=1}^d\mathcal{N}(\theta^j \mid 0,\tau^{-1})
\end{aligned}
\end{equation}
where $y^j$ is the $j$-th dimension of $\mathbf{y}$, and likewise for $\theta^j$. Note the factorized normal kernel and independent priors for each $\theta^j$. From this, we see that we can factorize $p_0(\mathbf{y})=\prod_{j=1}^d p_0(y^j)$. It is shown in Appendix \ref{Appendix:multivariate} that the first update step takes on the form
\begin{equation*}\label{eq:mv_DP_firststep}
p_{1}(\mathbf{y}) = \left[1-\alpha_1 + \alpha_1 \prod_{j=1}^d c_\rho\left\{P_0(y^j),P_0(y_1^j)\right\}\right] p_0(\mathbf{y})
\end{equation*}
 where $y_i^j$ is the $j$-th dimension of the $i$-th data point. However, naively using this update for $i>1$ will result in the sequence $p_i(\mathbf{y})$ no longer satisfying the martingale condition in \eqref{eq:martingale_density}, and we also find that it performs poorly empirically. A simple but key extension allows us to retain the c.i.d. sequence:
\begin{equation}\label{eq:mv_DP_copdens}
p_{i+1}(\mathbf{y}) = \left\{1-\alpha_{i+1} + \alpha_{i+1} \prod_{j=1}^d c_\rho\left(u_i^{j},v_i^{j}\right)\right\} p_i(\mathbf{y})
\end{equation}
where
\begin{equation*}\label{eq:mv_DP_conditional_def}
\begin{aligned}
u_i^{j} &= P_i\left(y^j \mid  y^{1:j-1}\right), \quad v_i^{j} &= P_i\left(y_{i+1}^j \mid y_{i+1}^{1:j-1}\right).
\end{aligned}
\end{equation*}
The input to the bivariate normal copula is now the \textit{conditional} cumulative distribution function at $\mathbf{y}$ and $\mathbf{y}_{i+1}$ for a particular dimension ordering, and this change ensures many desirable properties. First, we can verify that the martingale condition \eqref{eq:martingale_density} now holds through a multivariate change of variables from $\mathbf{y}_{i+1}$ to $v_i^{1:d}$, so the c.i.d. condition is satisfied. By marginalizing $y^d, y^{d-1},\ldots, y^{k+1}$ in descending order, we also have that the marginals for a single ordering of dimensions has the same update
\begin{equation}\label{eq:mv_DP_marginal}
p_{i+1}\left({y}^{1:k}\right) = \left\{1-\alpha_{i+1} + \alpha_{i+1} \prod_{j=1}^k c_\rho\left(u_i^{j},v_i^{j}\right)\right\} p_i\left(y^{1:k}\right). 
\end{equation}
From this, we can update the conditional distribution functions via
\begin{equation}\label{eq:mv_DP_conditional}
\begin{aligned}
u_{i+1}^{k} =\left\{(1-\alpha_{i+1})u_i^k + \alpha_{i+1} H_\rho\left(u_i^k,v_i^k \right) \prod_{j=1}^{k-1} c_\rho\left(u_i^{j},v_i^{j}\right)\right\} \frac{p_i\left(y^{1:k-1}\right)}{p_{i+1}\left(y^{1:k-1}\right)}
\end{aligned}
\end{equation}
and likewise for $v_{i+1}^{k}$. As a result, all terms in the update \eqref{eq:mv_DP_copdens} can be computed tractably, with no need for numerical integration or approximations, allowing us to extend this method to any number of dimensions as computation complexity is linear in $d$. Notably, we must specify an ordering of the dimensions of $\mathbf{y}$, which at first may seem undesirable. However, it is not an assumption on dependence, and the only implication is that the subset of ordered marginal distributions continue to satisfy \eqref{eq:mv_DP_marginal}, that is a sort of marginal coherence. Interestingly, the form of \eqref{eq:mv_DP_marginal} suggests that $p_i\left(y^{1:k}\right)$ depends only on the first $k$ dimensions of $\mathbf{y}_{1:i}$. Practically, we find the dimension ordering makes little difference, and we recommend selecting the ordering such that any conditional or marginal distributions of interest remain tractable. In Appendix \ref{Appendix:copula_discrete} we provide an extension to the above for mixed-type data.

Predictive resampling again takes on a simple form due to the nature of the update \eqref{eq:mv_DP_copdens}. We can imagine drawing each dimension of ${\mathbf{Y}} \sim P_i(\cdot)$ in a sequential nature, that is 
\begin{equation}\label{eq:mv_simulation}
[Y^{1}] \sim P_i\left(y^1\right),\,\quad
[Y^{2}\mid y^1] \sim P_i\left(y^2 \mid y^{1}\right),\quad
\ldots,\quad
  [Y^d \mid {y}^{1:d-1}] \sim P_i\left(y^d\mid {y}^{1:d-1}\right).
\end{equation}
Denoting $V_i^j := P_i\left(Y^j \mid Y^{1:j-1}\right)$, we then have that $V_i^j \iid \mathcal{U}[0,1]$ for $j = \{1,\ldots, d\}$, which we can substitute into \eqref{eq:mv_DP_copdens} and \eqref{eq:mv_DP_conditional}, similar to the univariate case. Predictive resampling again only requires sampling $d$ independent uniform random variables for each forward step and computing the update.

\subsection{Regression}\label{sec:copula_regression}

We now consider extending the copula method and predictive resampling to the regression setting, where we have univariate $y_{i} \in \mathbb{R}$ (which can be easily extended to multivariate) with corresponding covariates $\mathbf{x}_{i} \in \mathcal{X}$, where for example $\mathcal{X} = \mathbb{R}^d$. We will later also consider binary regression, where $y_{i} \in \{0,1\}$. One assumption is that the covariates are random, where we write $\{y_i, \mathbf{x}_i\} \iid f_0(y,\mathbf{x})$, and we are interested in $f_0(y_i \mid \mathbf{x}_i)$. We term this the `joint method', as we infer the full joint $f_0(y_i,\mathbf{x}_i)$ from which the conditional then follows. Examples of this are \citet{Muller1996, Shahbaba2009,Hannah2011}, where the prior on $f_0(y_i,\mathbf{x}_i)$ is a DPMM.   The second type of assumption, which we call the `conditional method', is the more common framework. Here we assume that $\mathbf{x}_{1:n}$ are fixed design points and the randomness arises from the response $y_{1:n}$, so we infer a {family} of conditional densities $\{f_\mathbf{x}(y): {\mathbf{x} \in \mathcal{X}}\}$. The most common framework is the additional assumption of $y_i = g(\mathbf{x_i}) + \epsilon_i$, where $\epsilon_i$ are independent zero-mean noise, and a prior on the mean function $g$ is assumed, e.g. a Gaussian process \citep{Rasmussen2003}. Alternatively, one can elicit a prior on $\{f_\mathbf{x}(y): {\mathbf{x} \in \mathcal{X}}\}$ directly, for example with mixture models based on the dependent Dirichlet process \citep{Maceachern1999}.  We recommend \citet{Wade2013,Wade2014,Quintana2020} for thorough reviews.

\subsubsection{Joint method}\label{sec:jointreg}
The joint method follows easily from the multivariate: we first estimate the joint predictive density $p_{i+1}(y, \mathbf{x})$, then compute the conditional $p_{i+1}(y \mid \mathbf{x}) = p_{i+1}(y, \mathbf{x})/p_{i+1}(\mathbf{x})$. Utilizing \eqref{eq:mv_DP_marginal}, we have the tractable update for the conditional density
\begin{equation}\label{eq:jointreg_conditional}
\begin{aligned}
p_{i+1}(y \mid \mathbf{x}) =  p_{i}(y \mid \mathbf{x})  \frac{\left\{1- \alpha_{i+1}+ \alpha_{i+1} c_{\rho_y}(q_{i},r_{i})\prod_{j=1}^d  c_{\rho_x} \left(u_{i}^j,v_{i}^j\right)\right\}}{\left\{1- \alpha_{i+1}+ \alpha_{i+1} \prod_{j=1}^d  c_{\rho} \left(u_{i}^j,v_{i}^j\right)\right\}}
\end{aligned}
\end{equation}
where
\begin{equation}\label{eq:jointreg_cdf}
\begin{aligned} 
q_{i}&=P_{i}(y\mid \mathbf{x}), \quad &&r_{i} =P_{i}(y_{i+1} \mid \mathbf{x}_{i+1})\\  u_{i}^j &=P_{i}\left(x^j \mid x^{1:j-1}\right),  &&v_{i}^j = P_{i}\left(x_{i+1}^j \mid x_{i+1}^{1:j-1}\right).
\end{aligned}
\end{equation}
Here, we can have separate bandwidths for $y$ and $\mathbf{x}$, and even one for each dimension of $\mathbf{x}$. The updates for $q_{i+1},r_{i+1}, u_{i+1}^j,v_{i+1}^j$ are the same as in  \eqref{eq:mv_DP_conditional}, and again all terms are tractable. Predictive resampling in this case requires simulating both $\{Y,\mathbf{X}\} \sim P_i(y,\mathbf{x})$ 
just like in \eqref{eq:mv_simulation}.

\subsubsection{Conditional method}\label{sec:conditreg}
When $\mathbf{x}$ is high-dimensional, it may be cumbersome to model $p_n(\mathbf{x})$ when we are only interested in the conditional density. The conditional method models $p(y \mid \mathbf{x})$ directly, and we turn to the dependent Dirichlet process (DDP) and its extensions for inspiration. In particular, consider the general covariate-dependent stick-breaking mixture model 
\begin{equation}\label{eq:DDP_mixture}
\begin{aligned}
f_{G_\mathbf{x}}(\mathbf{y}) = \int \mathcal{N}(y \mid \theta,1) \, dG_{\mathbf{x}}({\theta}), \quad
G_\mathbf{x} =\sum_{k=1}^\infty w_k(\mathbf{x})\,\delta_{\theta_k^*}
\end{aligned}
\end{equation}
where $w_k(\mathbf{x})$ follows an $\mathbf{x}$-dependent stick-breaking process, and $\theta_k^* \iid \mathcal{N}(\theta \mid 0, \tau^{-1})$. A full derivation is provided in Appendix \ref{Appendix:copula_regression}. We can show that the  update step of the predictive takes the form
\begin{equation}\label{eq:conditreg_conditional}
\begin{aligned}
p_{i+1}(y \mid \mathbf{x}) = \left\{1-\alpha_{i+1}(\mathbf{x},\mathbf{x}_{i+1})+ \alpha_{i+1}(\mathbf{x},\mathbf{x}_{i+1})\, c_{\rho_y}\left(q_i,r_i\right)\right\} p_i(y\mid \mathbf{x})
\end{aligned}
\end{equation}
where $\alpha_1(\mathbf{x},\mathbf{x}') = \sum_{k=1}^\infty E\left[ w_k(\mathbf{x})w_k(\mathbf{x}') \right]$, $\rho_y = 1/(1+\tau)$ and $q_i,r_i$ are as in \eqref{eq:jointreg_cdf}. The term $\alpha_1(\mathbf{x},\mathbf{x}')$ is tractable for some choices of the construction of $w_k(\mathbf{x})$, e.g. the kernel stick-breaking process \citep{Dunson2008}. Unfortunately this does not provide guidance on how to generalize to $\alpha_{i}(\mathbf{x},\mathbf{x}')$. Instead, we turn to the joint copula method in the previous section for inspiration, which can be written as \eqref{eq:conditreg_conditional} with \vspace{-2mm}
\begin{equation*}
\alpha_{i}(\mathbf{x},\mathbf{x}') = \frac{\alpha_{i}\prod_{j=1}^d  c_{\rho_x} \left(u_{i-1}^j,v_{i-1}^j\right)}{1- \alpha_{i} + \alpha_{i}\prod_{j=1}^d  c_{\rho_x} \left(u_{i-1}^j,v_{i-1}^j\right)}\cdot
\end{equation*}
This form of $\alpha_{i}(\mathbf{x},\mathbf{x}')$ can be viewed as a distance measure between $\mathbf{x}$ and $\mathbf{x}'$ that is dependent on $P_n(\mathbf{x})$ which is updated in parallel. To avoid modelling $P_n(\mathbf{x})$, we can simplify the above and consider the following as a distance function directly:
 \begin{equation}\label{eq:condit_alpha}
\alpha_{i}(\mathbf{x},\mathbf{x}') = \frac{\alpha_{i}\prod_{j=1}^d  c_{\rho_{x^j}} \left\{\Phi\left(x^j\right),\Phi\left(x'^j\right)\right\}}{1- \alpha_{i} + \alpha_{i}\prod_{j=1}^d   c_{\rho_{x^j}} \left\{\Phi\left(x^j\right),\Phi\left(x'^j\right)\right\}}
\end{equation}
which is equivalent to the joint method but leaving $P_n(\mathbf{x}) = P_0(\mathbf{x})$ without updating, providing us an increase in computational speed. This form requires $\mathbf{x}_{1:n}$ to be standardized for good performance, and we find that specifying independent bandwidths for each dimension in $\mathbf{x}$ works well. This method is similar to the normalized covariate-dependent weights of \cite{Antoniano2014}.
 
If $\mathbf{x}_{1:n}$ is indeed a subsequence of a deterministic sequence of design points $\mathbf{x}_1,\mathbf{x}_2,\ldots$, then predictive resampling simply involves selecting $\mathbf{x}_{i}$ for $i > n$ from this  sequence, and drawing $[Y_{i+1} \mid \mathbf{x}_{i+1}] \sim P_i(y \mid \mathbf{x}_{i+1})$. If $\mathbf{X}_{1:n}$ is actually random and we have chosen the conditional approach simply for convenience, then we can draw the future $\mathbf{X}_{n+1:N}$ from the sequence of empirical predictives as in the Bayesian bootstrap. We have however noticed some numerical sensitivity to this choice of $P_n(\mathbf{x})$ in the uncertainty in $p_n(y \mid \mathbf{x})$ for $\mathbf{x}$ far from the observed dataset; this is illustrated in Appendices \ref{Appendix:moon} and \ref{Appendix:lidar}.  Once again, conditional on $\mathbf{X}_{i+1} = \mathbf{x}_{i+1}$,  we have that $ P_i(Y_{i+1} \mid \mathbf{x}_{i+1}) \sim \mathcal{U}[0,1]$, so predictive resampling only consists of simulating independent uniform random variables and updating. An example of using the Bayesian bootstrap for the covariates is provided in Appendix \ref{Appendix:lidar}.

\subsubsection{Classification}
For classification, both the joint and conditional approach generalize easily to when $y_i \in \{0,1\}$. To this end, we can derive the copula update for a beta-Bernoulli mixture. As shown in  Appendix {\ref{Appendix:copula_classification}}, this gives 

\begin{equation}
\begin{aligned}
d_{\rho_y}\{q_i,r_i\} &= \begin{dcases} 1-\rho_y + \rho_y\,\frac{q_i\wedge r_i}{q_i\, r_i} &\quad \text{if } y = y_{i+1} \vspace{2mm}\\  1-\rho_y + \rho_y\,\frac{q_i - \{q_i\wedge (1-r_i)\}}{q_i\, r_i}&\quad \text{if } y \neq y_{i+1} \end{dcases}
\end{aligned}
\end{equation}
where $q_i = p_i(y \mid \mathbf{x}),r_i = p_i(y_{i+1}\mid \mathbf{x}_{i+1})$ and $\rho_y \in (0,1)$. We can simply replace the bivariate Gaussian copula density $c_{\rho_y}(q_i,r_i)$ in \eqref{eq:jointreg_conditional} and \eqref{eq:conditreg_conditional} with $d_{\rho_y}(u_i,v_i)$. One can check that $q_i$ is indeed a martingale when predictive resampling, and forward sampling can be done directly as drawing binary $Y_{n+1}$ from the Bernoulli predictive is straightforward. Unfortunately, we do not have the useful property of $P_i(y_{i+1}) \sim \mathcal{U}[0,1]$ in the discrete case, so predictive resampling beyond the Bayesian bootstrap for $\mathbf{X}_{n+1:N}$ is computationally expensive at $\mathcal{O}(N^2)$, or approximation via a grid is required. The Bayesian bootstrap for $\mathbf{X}_{n+1:N}$ is still feasible as we only need to compute $p_N(y \mid \mathbf{x})$ at the observed $\mathbf{x}_{1:n}$. An example of this method is provided in Appendix \ref{Appendix:moon}.

\subsection{Practical considerations}\label{sec:practicalcons}
In this subsection, we discuss some practical considerations.  Further details, such as those regarding sampling and optimization, are given in Appendix \ref{Appendix:practicalcons}.

\subsubsection{Initial density}
 For the copula methods, we require an initial guess $p_0(\mathbf{y})$ to begin our recursive updates, which can contain prior information. As it is a statement on observables, it is easier to elicit than a traditional Bayesian prior. In practice, we recommend standardizing each variable in the data  $y^j_{1:n}$ to have mean $0$ and variance $1$ and using the default initialization $\mathcal{N}(y^j \mid 0,1)$ for each dimension in an empirical Bayes fashion. For discrete variables, a suitable default choice is the uniform distribution over the classes. Finally, in the regression case, we can include prior information on the regression function, e.g.
$p_0(y \mid \mathbf{x}) = \mathcal{N}( y \mid \beta^\T \mathbf{x},1)$. However,  $p_0(y \mid \mathbf{x}) = \mathcal{N}( y \mid 0,1)$ tends to work well as a default choice. 

\subsubsection{Hyperparameters}\label{sec:hyperparameters}
As we recommend the fixed form of $\alpha_i$ in \eqref{eq:alpha_i}, the only hyperparameter in the copula update is the constant $\rho$ which parameterizes the bivariate normal copula in \eqref{eq:Gauss_copdens}.  While \citet{Hahn2018} suggest a default choice for $\rho$, we prefer a data-driven approach. Fortunately, there is an obvious method to select $\rho$ using the prequential log score of \citet{Dawid1984}, that is to maximize $\sum_{i=1}^{n} \log p_{i-1}(\mathbf{y}_{i})$ for density estimation or $\sum_{i=1}^{n} \log p_{i-1}({y}_{i} \mid \mathbf{x}_{i})$ for regression, which is related to a cross-validation metric \citep{Gneiting2007,Fong2020}. This fits nicely into our simulative framework, as $\rho$ is selected on how well the sequence of predictives forecasts consecutive data points, which then informs us on the future predictives for predictive resampling. We can also specify a separate $\rho_j$ for each dimension, which corresponds to differing length scales for the update from each conditional distribution. For optimization, gradients with respect to $\rho$  can be computed quickly using automatic differentiation.

\subsubsection{Permutations}
Due to our relaxation of exchangeability in Section \ref{sec:pred_coherence}, one downside to the copula update and c.i.d. sequences in general is the dependence of $p_n$ on the permutation of $y_{1:n}$ when there is no natural ordering of the data. For permutation invariance, we can average $p_n$ and the corresponding prequential log-likelihood over $M$ random permutations of $y_{1:n}$. We find in practice that $M = 10$ is sufficient, which is computationally feasible for moderate $n$ due to the speed of the copula update, and the method is also parallelizable over permutations. For predictive resampling, we then begin with the permutation averaged $p_n$ and forward sample with the copula update. From asymptotic exchangeability in Theorem \ref{Th:cid_exchangeable} in Section \ref{sec:theory_mv}, averaging over permutations is not required for forward sampling provided $N$ is chosen to be sufficiently large. Theoretical properties of permutation averaging are explored in  \citet{Tokdar2009,Dixit2019}, which we do not consider here.

\subsubsection{Computational complexity}
For computing $p_n(\mathbf{y})$ in the multivariate copula method, there is an overhead of first computing $v_i^j$ for $j \in \{1,\ldots,d\}$, $i \in \{0,\ldots,n-1\}$ using \eqref{eq:mv_DP_conditional}, which requires $\mathcal{O}\left(n^2 d \right)$ operations, followed by $\mathcal{O}\left(n d \right)$ operations to compute $p_n(\mathbf{y})$ at a single $\mathbf{y}$ (which is then parallelizable).  After computing $p_n(\mathbf{y})$, predictive resampling $N$ future observables requires $\mathcal{O}\left(Nd\right)$ for each sample of $p_N(\mathbf{y})$; this is fully parallelizable across test points and posterior samples. Interestingly, we first compute $p_n(\mathbf{y})$ and only predictively resample after if uncertainty is desired, allowing for large computational savings if we are only interested in prediction.  The regression methods have a similar computational cost.

\section{Illustrations}\label{sec:illustrations}

In this section, we demonstrate the martingale posteriors induced by the copula methods of the previous section. Code for all experiments is available online at {\href{https://github.com/edfong/MP}{\texttt{https://github.com/edfong/MP}}}. We will demonstrate the copula method on examples where $\theta_0$ is the density itself or the loss function induces a simple parameter, e.g. quantiles. However, any $\theta_0$ of interest (as in Section \ref{sec:loss_functions}) can technically be computed directly from the density or from $y_{1:n}$ and samples of $Y_{n+1:\infty}$, although this may require a high-dimensional grid or relatively expensive sampling. As a result, for cases with complex loss functions that do not rely on the smoothness of $F_\infty$ (e.g. a parametric log-likelihood), we recommend the Bayesian bootstrap instead as a computationally efficient predictive resampling approach. For examples regarding the Bayesian bootstrap, we refer the reader to  the references in Section \ref{sec:BB}, and we qualitatively compare the Bayesian bootstrap and the copula methods in Section \ref{sec:discussion}.

For all examples, we follow the recommendations of Section \ref{sec:practicalcons} for $P_0$ and averaging over permutations. We will  demonstrate the monitoring of convergence to $P_\infty$, but we set $N = n + 5000$ as a standard default for the number of forward samples, where $n$ is the size of the dataset. All copula examples are implemented in JAX \citep{Jax2018}, which is a Python package popular in the machine learning community. JAX is ideal for our copula  updates: its just-in-time compilation facilitates a dramatic speed-up for our iterative updates especially on a GPU,  and its efficient automatic differentiation allows for quick hyperparameter selection. Note that the first execution of code induces an overhead compilation time of between 10-20 seconds for all examples. We carry out all copula experiments on an Azure NC6 Virtual Machine, which has a one-half Tesla K80 GPU card. The copula methods consist of many parallel simple computations on a matrix of density values, which is very suitable for a GPU, unlike traditional MCMC. The DPMM with MCMC examples are implemented in the \texttt{dirichletprocess} package \citep{Ross2018}, which utilizes Gibbs sampling.  Other benchmarks are implemented in \texttt{sklearn} \citep{Pedregosa2011}. Unless otherwise stated, default hyperparameter values are set for baselines. As the baseline packages are designed for CPU usage, we run them on a 2.6 GHz 6-Core Intel Core i7-8850H CPU. Further details can be found in Appendix \ref{Appendix:baselines}.

\subsection{Density estimation}

\subsubsection{Univariate Gaussian mixture model}
We begin by demonstrating the validity of the martingale posterior uncertainty returned from predictive resampling by comparing to a traditional DPMM in a simulated example, where the true density is known. We also discuss the monitoring of convergence of predictive resampling. For the data, we simulate $n = 50$ and $n = 200$ samples from a  Gaussian mixture model:
$$
f_0(y) = 0.8 \, \mathcal{N}(y \mid -2,1) + 0.2 \, \mathcal{N}(y \mid 2,1).
$$
For all plots, we compute the copula predictive $p_n(y)$ on an even grid of size $160$. Figures \ref{fig:toy_gmm50} and \ref{fig:toy_gmm200} show the martingale posterior density using the copula method for $n = 50$ and $n = 200$ respectively, compared to the traditional DPMM of \cite{Escobar1995} with MCMC. We draw $B = 1000$ samples for both methods. We see that the resulting uncertainty and posterior means are comparable between the copula and DPMM, and the uncertainty decreases as $n$ increases. The true density is largely contained within the 95\% credible intervals.

For predictive resampling with the copula method, we judge convergence by considering the $L_1$ distance between the forward sampled $p_N$ and initial $p_n$. This is demonstrated in Figure \ref{fig:gmm_conv} for a single forward sample for $n = 50$. On the left, we have a numerical estimate of $\lVert p_N - p_{n} \rVert_1$ which converges to a constant, and likewise for $\lVert P_N - P_{n} \rVert_1$ on the right, where $\lVert \cdot \rVert_1$ is the $L_1$ norm and is computed on the grid.  We see in this example that $N = n+5000$ is sufficiently large for $p_N$ to approximate  $p_\infty$. When we are not plotting on a grid and instead predicting over some test set, we may instead monitor 

$$\frac{1}{n_\mathrm{test}}\sum_{i=1}^{n_{\mathrm{test}}}|p_N({y_i}) - p_n({y_i})|.$$

Optimization of the prequential log-likelihood gives us the optimal hyperparameter $\rho = 0.77$ and $0.78$ for $n= 50$ and $200$ respectively. The prequential log-likelihood is returned easily from the copula method, allowing for easy hyperparameter selection. However, computing the marginal likelihood for the DPMM is non-trivial, and thus setting the hyperparameters of the priors in a data-driven way, that is empirical Bayes, remains a difficult task. Here, we select the DPMM hyperparameters to match the smoothness of the posterior mean of the copula method for comparability of the uncertainty.

\begin{figure}[!h]
    \centering
        \includegraphics[width=0.97\textwidth]{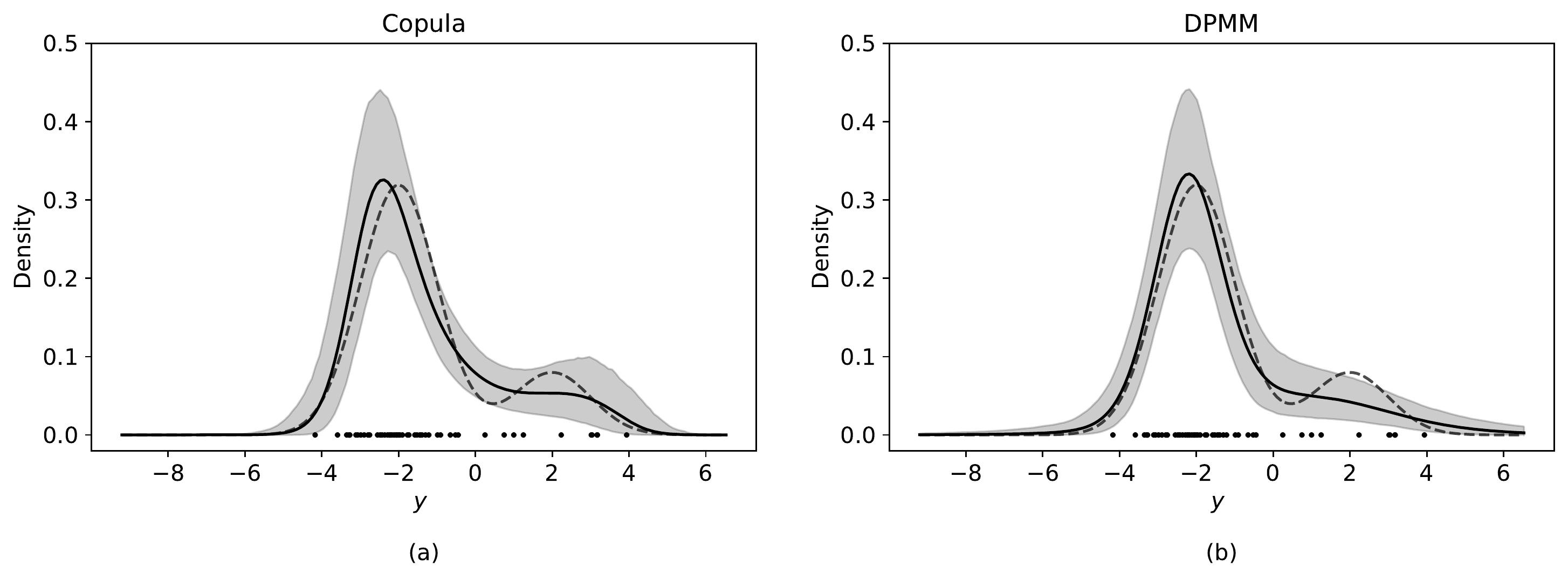}\vspace{-3mm}
    \caption{Posterior mean (\full) and 95\% credible interval (\sqrlow) of   (a) $p_N(y)$ for the copula method  and \\ (b) $p_\infty(y)$ for the DPMM, for $n = 50$ with true density (\dashedmid) and data (\dot)} \label{fig:toy_gmm50}
\end{figure}\vspace{-4mm}
\begin{figure}[!h]
    \centering
        \includegraphics[width=0.97\textwidth]{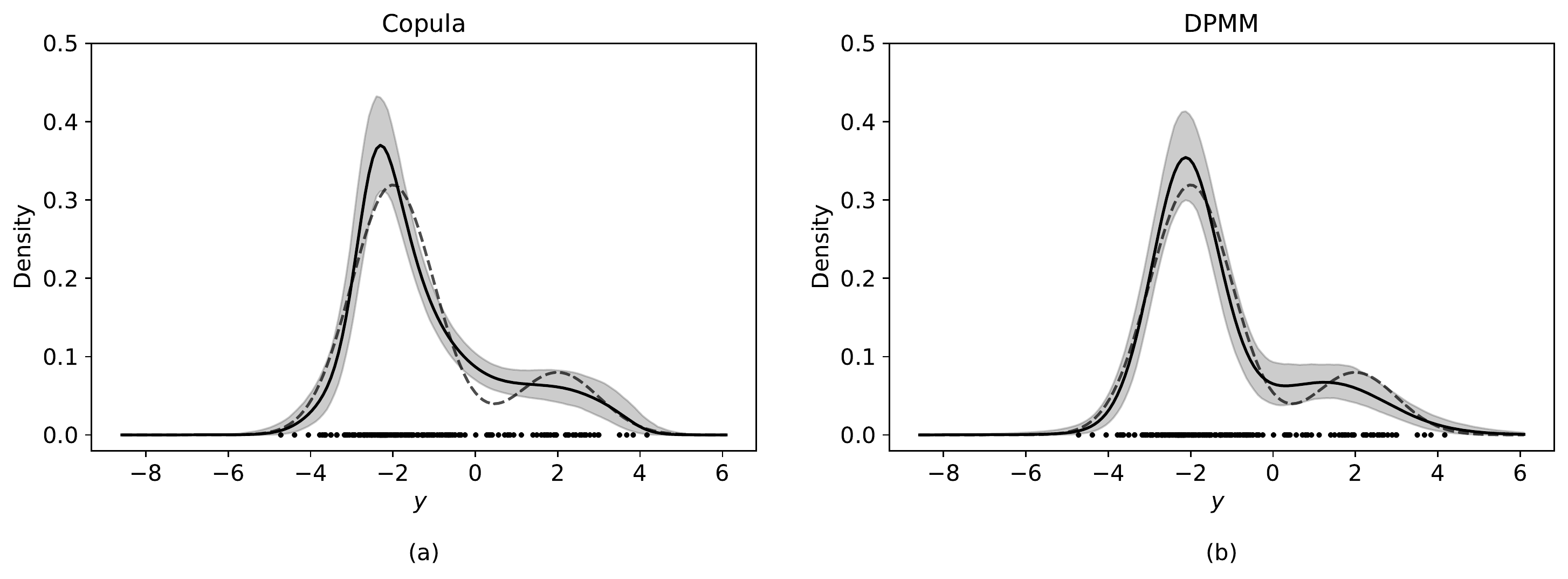}\vspace{-3mm}
    \caption{Posterior mean (\full) and 95\% credible interval (\sqrlow) of   (a) $p_N(y)$ for the copula method  and \\ (b) $p_\infty(y)$ for the DPMM, for $n = 200$ with true density (\dashedmid) and data (\dot)} \label{fig:toy_gmm200}
\end{figure}
\begin{figure}[!h]
    \centering
        \includegraphics[width=0.97\textwidth]{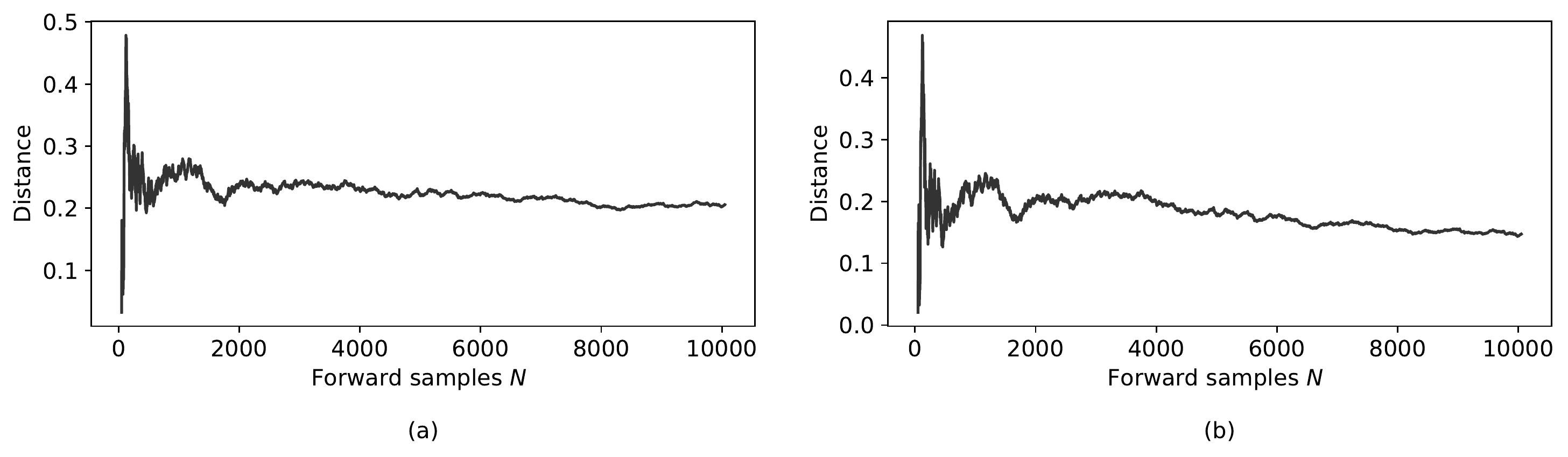}
    \caption{Estimated $L_1$ distance (a) $\lVert p_N - p_{n} \rVert_1$  and (b) $\lVert P_N - P_n \rVert_1$   for a single forward sample for $n = 50$} \label{fig:gmm_conv}
\end{figure}

\subsubsection{Univariate galaxy dataset}
We now demonstrate the martingale posterior sampling of a parameter of interest that requires a smooth density, through predictive resampling and the computation of $\theta(P_N)$. We analyze the classic `galaxy' dataset \citep{Roeder1990}, thereby extending the example of \citet{Hahn2018} to the predictive resampling framework. The dataset consists of $n = 82$ velocity measurements of galaxies in the Corona Borealis region.  For all plots, we compute $p(y)$ on an even grid of size $200$, and unnormalize after the copula method so that the scale of $y$ is in km/sec.

Figure \ref{fig:galaxy_dens} compares predictive resampling with the  copula method for $B = 1000$	posterior samples of $p_N$, where the selected bandwidth is $\rho = 0.93$. The bandwidth for KDE was computed through 10-fold cross-validation, and DPMM hyperparameters are set to the suggested values in \cite{West1991}. The 95\% credible intervals and posterior mean of the copula approach are comparable with that of the DPMM. Excluding compilation times, the optimization for $\rho$ and computation of $p_n(y)$ on the grid of size $200$ took 0.5 seconds, and predictive resampling took 2 seconds. In comparison,  DPMM with MCMC took 25 seconds for the same number of samples ($B = 1000$), where the samples are not independent; the plots for MCMC are thus produced with $B = 2000$.  Given this random density, we can also compute the statistics of interest $\theta$ directly from the grid of density values. Martingale posterior samples of the number of modes and 10\% quantiles of the random density are shown in Figure \ref{fig:galaxy_modes}, with comparison to the DPMM. Here the copula method tends to prefer 4 modes, whereas the DPMM prefers 5.

\begin{figure}[!h]
    \centering
        \includegraphics[width=0.97\textwidth]{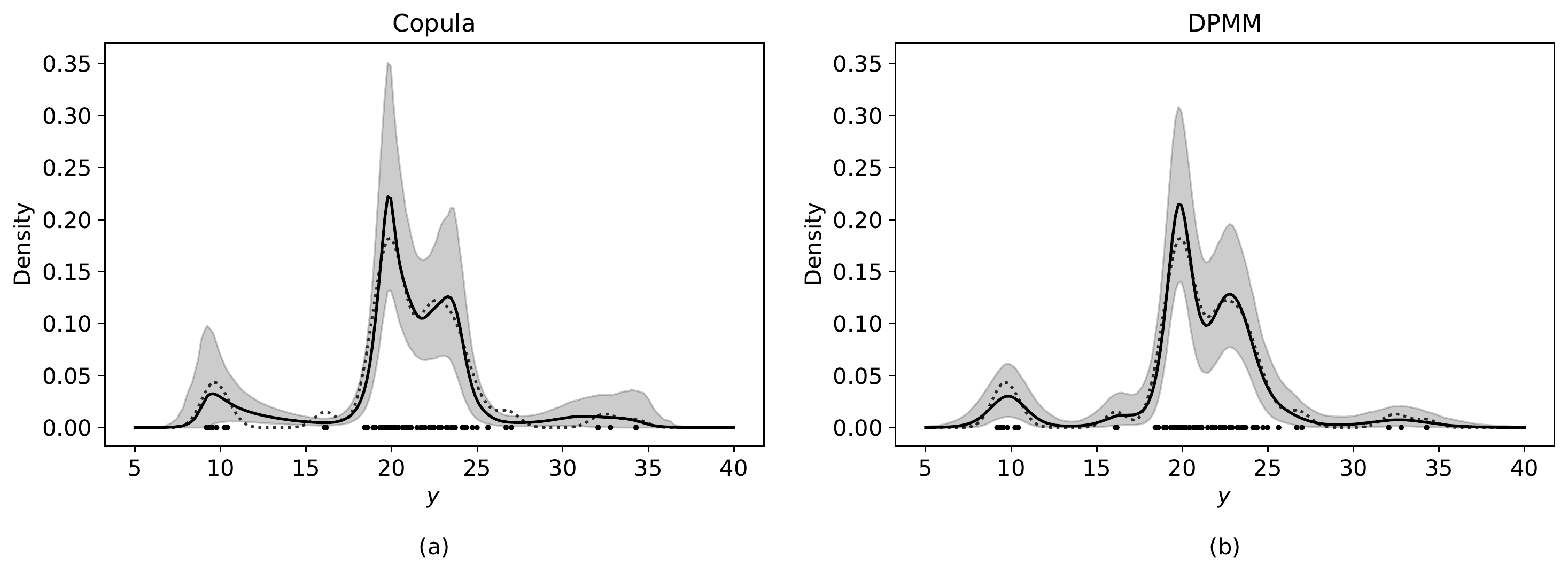}
    \caption{Posterior mean (\full) and 95\% credible interval (\sqrlow) of (a) $p_N(y)$ for the copula method  and \\ (b) $p_\infty(y)$ for the DPMM, with KDE (\dottedmid) and data (\dot)} \label{fig:galaxy_dens}
\end{figure}
\begin{figure}[!h]
    \centering
        \includegraphics[width=0.97\textwidth]{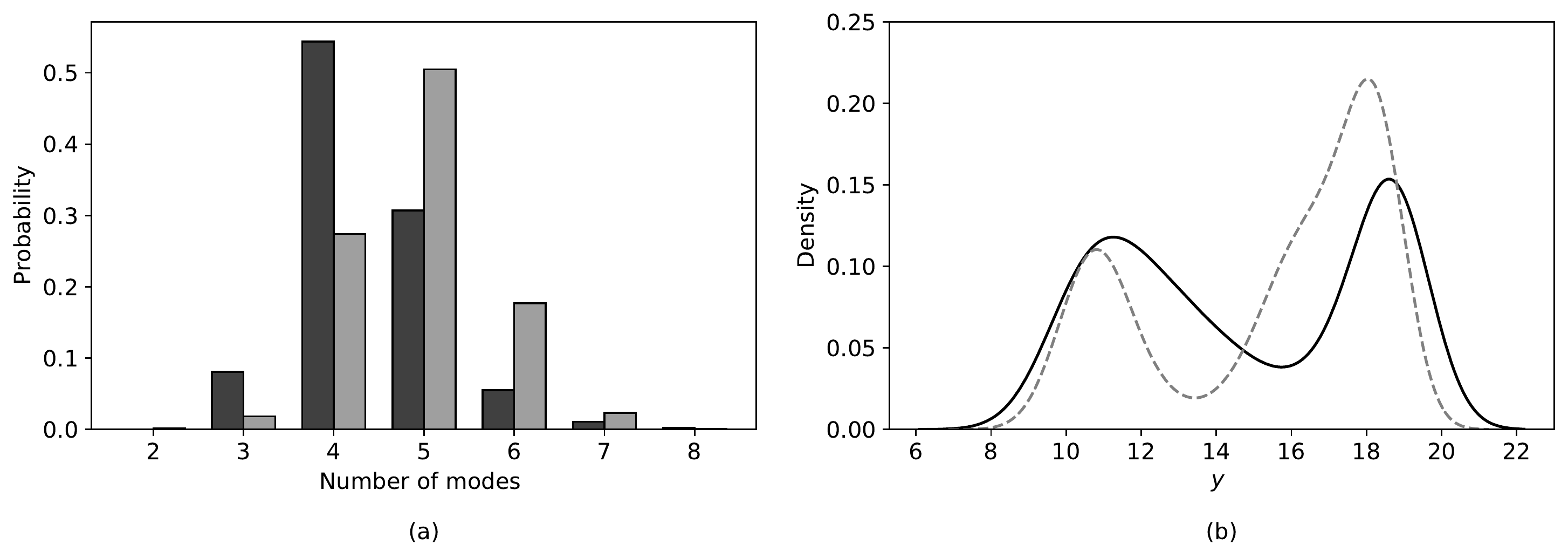}
    \caption{(a) Posterior samples of number of modes for the copula method (\sqr) and DPMM (\sqrmid); \\ (b) Posterior density of 10\% quantiles for the copula method (\full) and the DPMM (\dashedmid)} \label{fig:galaxy_modes}
\end{figure}

\subsubsection{Bivariate air quality dataset}
We demonstrate the martingale posterior for bivariate data using the method of Section \ref{sec:multivariate_copula}, which has large computational gains over posterior sampling with DPMM when the density is of interest, where the latter is expensive due to dimensionality. For this, we look at the `airquality' dataset \citep{Chambers2018}  from \texttt{DPpackage}. The dataset consists of daily ozone and solar radiation measurements in New York,  with $n = 111$ completed data points. For all plots, we compute $p_n(\mathbf{y})$ on a grid of size $25 \times 25$.

We fit the multivariate copula method of Section \ref{sec:multivariate_copula} with one bandwidth per dimension, and optimizing the prequential log-likelihood returns $\rho =[0.47,0.82]$. Predictive resampling $B = 1000$ martingale posterior samples returns us the martingale posterior mean and standard deviation of the bivariate density as shown in Figure \ref{fig:ozone_dens}. Again excluding compilation times, the optimization for $\rho$ and computation of $p_n(y)$ on the grid of size $625$ took 1 second, and predictive resampling took 10 seconds in total. For comparison,  the DPMM with MCMC required 4 minutes for the same number of samples. Further details and comparisons to the DPMM are given in Appendix \ref{Appendix:ozone}. 

Figure \ref{fig:ozone_conv} plots a martingale posterior sample of the density, with the corresponding $L_1$ distance convergence plot. We see that $N = 5000$ is again sufficient, which suggests a dimension independent convergence rate of $P_N \to P_\infty$. This is justified in the theory in Section \ref{sec:theory}. 

\begin{figure}[!h]
    \centering
        \includegraphics[width=0.97\textwidth]{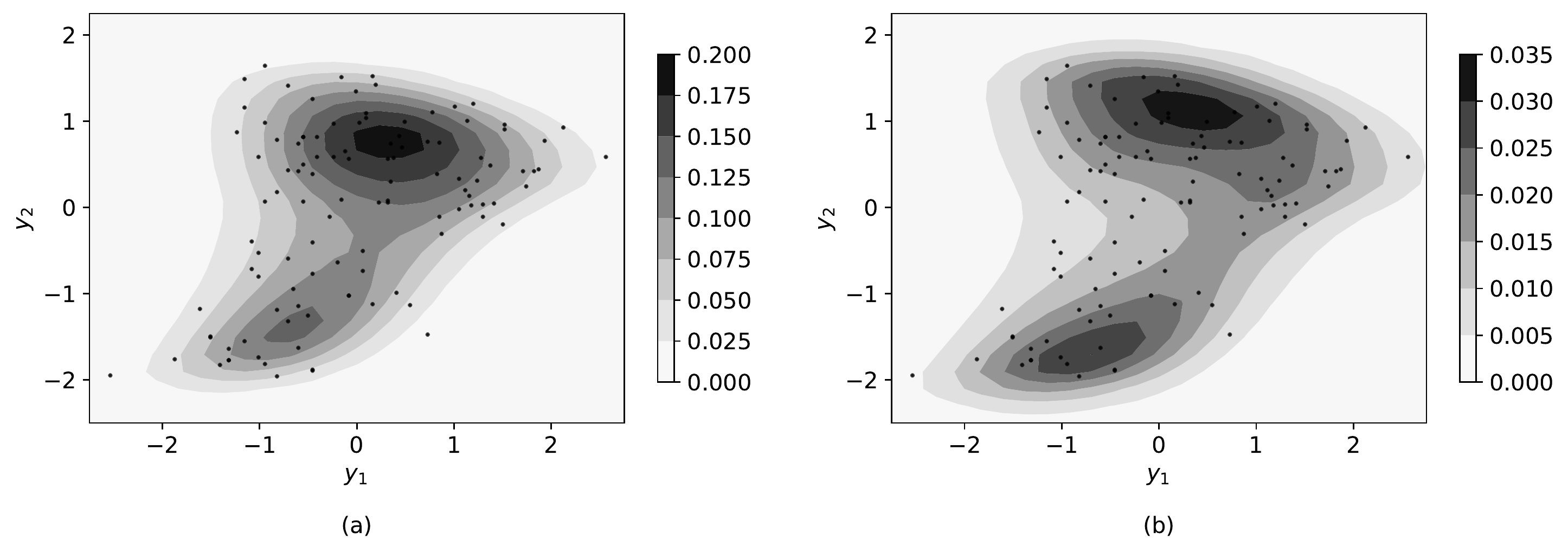}
    \caption{Posterior (a) mean and (b) standard deviation of $p_N(\mathbf{y})$ for the copula method with scatter plot of data (\dot)} \label{fig:ozone_dens}
\end{figure}

\begin{figure}[!h]
    \centering
        \includegraphics[width=0.97\textwidth]{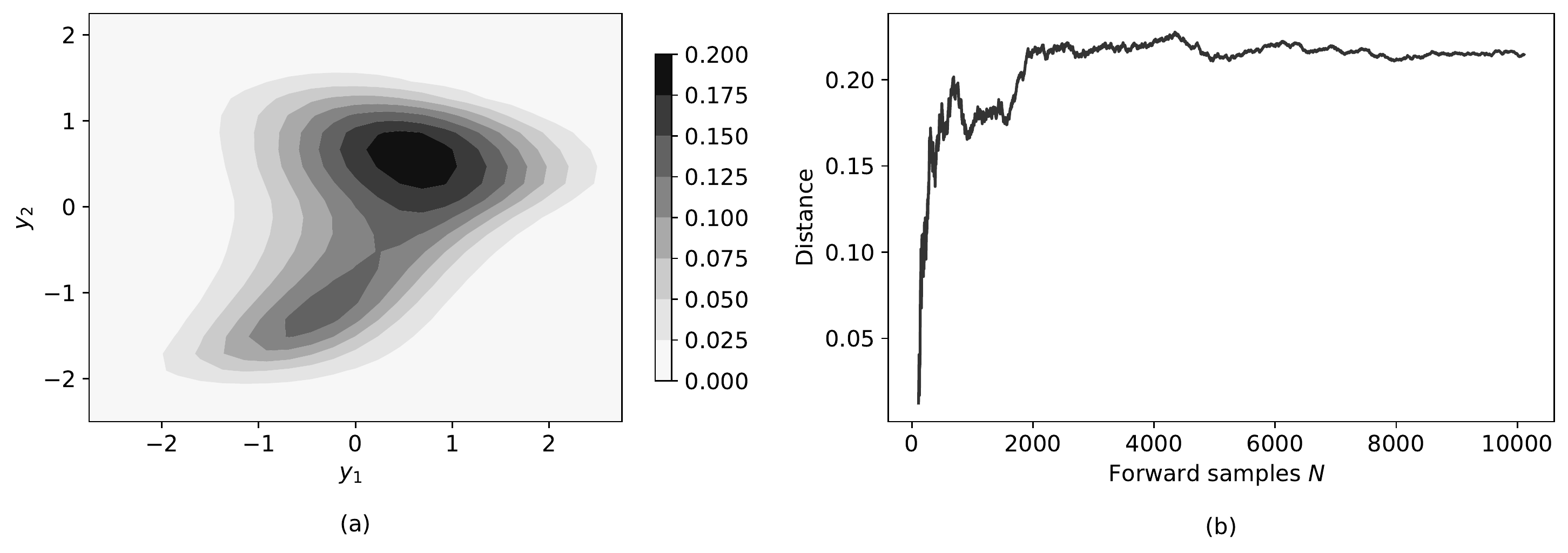}
    \caption{(a) Random sample of $p_N(\mathbf{y})$; (b) Corresponding estimated $||p_N - p_n ||_{1}$ } \label{fig:ozone_conv}
\end{figure}

\subsubsection{Multivariate UCI datasets}
In this section, we demonstrate the multivariate copula method of Section \ref{sec:multivariate_copula} as a highly effective  density estimator compared to the usual DPMM, as we do not need to deal with the posterior sampling or integration over high-dimensional parameters. We demonstrate on multivariate datasets from the UCI Machine Learning Repository \citep{Dua2019}. To prevent misleadingly high density values, we remove non-numerical variables and one variable from any pairs with Pearson correlation coefficient greater than $0.98$ (e.g. see \citet{Tang2012}). We compare to the KDE, DPMM and multivariate Gaussian, and evaluate the methods with a 50-50 test-train split and average the test log-likelihoods over 10 random splits. 

For the copula method, we use a single value of $\rho$ for all dimensions for a fair comparison to the KDE. We find that having distinct $\rho_{1:d}$ slightly improves predictive performance at the cost of  higher optimization times.  For the KDE, we use a single scalar bandwidth set through 10-fold cross-validation. For the DPMM, we set the Gaussian kernel to have diagonal covariance matrices and use VI \citep{Blei2006}. Using a full covariance matrix kernel is unreliable likely due to local optima for VI, and MCMC is too computationally expensive for large $d$. For the multivariate Gaussian, we use the empirical mean and covariance.

\begin{table}[!h]
\small
\begin{center}
\begin{tabular}{ l cccccc }
 Dataset & $n$ & $d$ &Gaussian & KDE &   DPMM (VI)  & Copula \\
 \hline 
 Breast cancer & $569$ &$26$ &$-17.8$ $(0.61)$&$-25.6$ $(0.29)$&$-33.4$ $(0.80)$&$\textbf{-13.0}$ $(0.26)$ \\  
 Ionosphere & $351$ &$32$ &$-49.4$ $(1.97)$&$-32.3$ $(0.79)$&$-36.5$ $(0.59)$   & $\textbf{-21.5}$ $(1.63)$   \\
  Parkinsons & $195$ & $16$&$-14.3$ $(0.54)$&$-15.6$ $(0.41)$&$-25.7$ $(0.92)$&$\textbf{\phantom{1}-9.9}$ $(0.28)$ \\
   Wine &$178$  &$13$ &$-16.1$ $(0.26)$& $-15.7$ $(0.20)$ & $-22.8$ $(0.61)$&$\textbf{-14.6}$ $(0.17)$    \\
\end{tabular}
\caption{Average test log-likelihood, standard errors (in brackets) and best performance in bold} \label{tab:mv_test_loglik}
\end{center}
\end{table}

\vspace{-5mm}
As shown in Table \ref{tab:mv_test_loglik}, the performance is significantly better on test data for these datasets. The better performance than the KDE is likely due to the regularizing effect of $p_0(\mathbf{y})$, which is important here as $n$ is only of moderate size. The DPMM (VI) likely performs poorly as the diagonal covariance cannot capture dependent structure, and the number of variational parameters is still high so optimization is difficult. We provide a more detailed analysis of the degradation in performance with dimensionality of the DPMM with VI in Appendix \ref{Appendix:highd_gmm}, where the copula method remains robust to dimensionality.

Overall, the run-times for the copula method, KDE and DPMM (VI) are similar, all of which are orders of magnitude faster than the DPMM with MCMC. For a single train-test split, the slowest example of the above (Breast cancer) for the copula method required less than 4 seconds in total to optimize $\rho$, while computing the overhead $v_i^j$ and predicting on the test data required less than 100ms. For the same example, the KDE and DPMM (VI) required around 1.5 and 6 seconds respectively.

\subsection{Regression and classification}

\subsubsection{Regression in LIDAR dataset}
We now demonstrate the joint copula regression method of Section \ref{sec:jointreg} on a non-linear heteroscedastic regression example, where the copula method performs well off-the-shelf.  We use the LIDAR dataset from \citet{Wasserman2006}, which consists of $n = 221$ observations of the distance travelled by the light and the log ratio of intensity of the measured light from the two lasers; the latter is the dependent variable. For the plots below, we evaluate the conditional density on a $y,x$ grid of $200 \times 40$ points.

For the copula method, we optimize the prequential conditional log-likelihood over the $M = 10$ permutations, and get $\rho_y = 0.90, \rho_x = 0.83$. The \textit{predictive} mean and 95\% central interval of $p_n(y \mid x)$ are shown in Figure \ref{fig:lidar_jdpmm}, compared to the DPMM, and we observe that the copula methods handle the nonlinearity better. The optimization, fitting and prediction on the grid took under 4 seconds for the copula method, compared to 5 minutes for the DPMM with MCMC for the same number of samples.

In Figure \ref{fig:lidar_jdpmm_x0}, we see martingale posterior samples of $p_N(y\mid x= 0)$ for the copula method compared to the DPMM. For reference, predictive resampling the $B=1000$ martingale posterior samples on  the $y$ grid for a single $x$ took under 3 seconds. One can see in Figure \ref{fig:lidar_jdpmm_x0} that there is more posterior uncertainty in the density $p_N(y \mid x = 0)$ for the copula methods, as the DPMM has a simpler mean function (weighted sum of linear). Convergence of the conditional density under predictive resampling is now dependent on the value of $x$. Figure \ref{fig:lidar_conv}(b) shows the $L_1$ distances as before for $x = 0$; however, we find that more forward samples are needed for $x$ far from the data. Figure \ref{fig:lidar_jdpmm_x-3} then shows martingale posterior samples of $p_N(y\mid x= -3)$ where $x$ is far from the data, and we see that both the copula and DPMM method have larger uncertainty as expected. However, predictive resampling for the {conditional} copula method of Section \ref{sec:conditreg} does not always demonstrate this desirable behaviour for outlying $x$; the joint and conditional methods are compared in Appendix \ref{Appendix:lidar} and this undesirable behaviour is noted in the next experiment.

One may also be interested in the uncertainty in  a point estimate for the function which we write as $\theta_x$, in this case the conditional median. In Figure \ref{fig:lidar_conv}(a), we plot the martingale posterior mean and 95\% credible interval of the conditional median of $P_N(y \mid x)$, where we see the uncertainty increasing with $x$ . Here we predictively resample on a $y,x$ grid of size $40 \times 40$ and compute the median numerically; this took 12 seconds for $B = 1000$ samples.  

\begin{figure}[!h]
    \centering
        \includegraphics[width=0.97\textwidth]{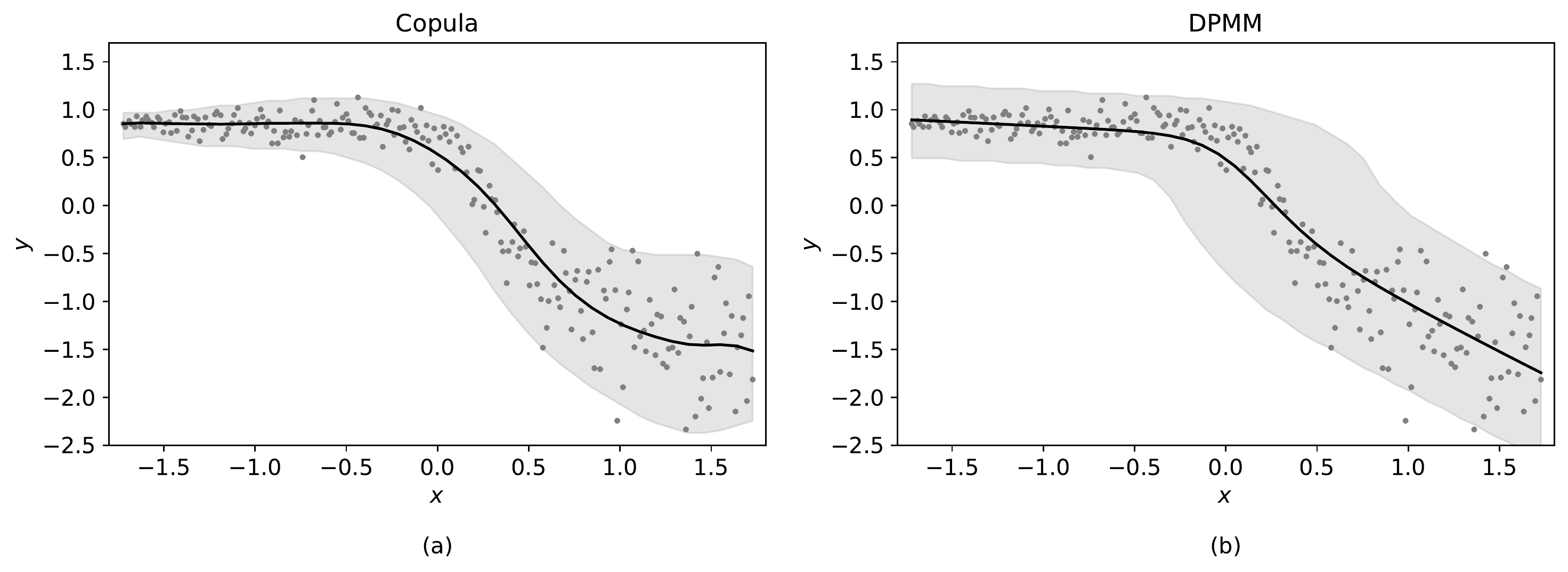}\vspace{-3mm}
    \caption{ $p_n(y \mid x)$ (\full) with 95\% predictive interval (\sqrlow) for the (a) joint copula method  and (b) joint DPMM, with data (\dotmid)} \label{fig:lidar_jdpmm}
\end{figure}
\begin{figure}[!h]
    \centering
        \includegraphics[width=0.97\textwidth]{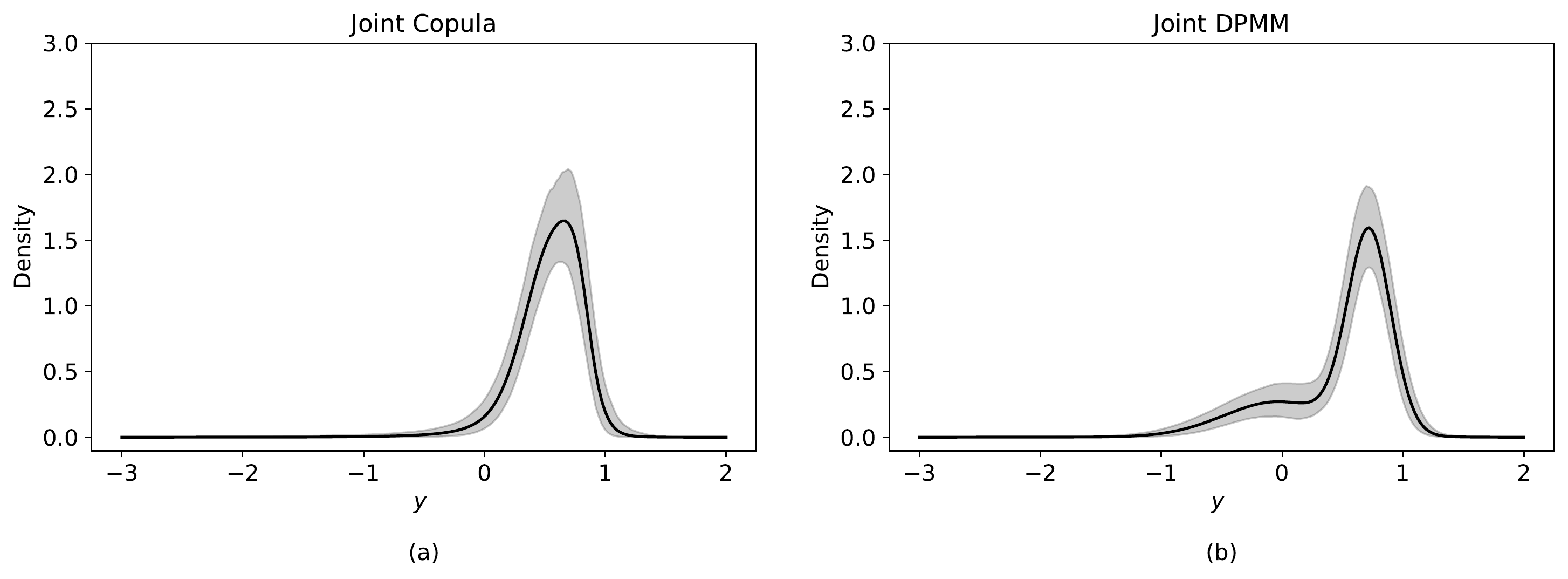}\vspace{-3mm}
    \caption{Posterior mean (\full) and 95\% credible interval (\sqrlow) of (a) $p_N(y \mid x=0)$ for the joint copula method and (b) $p_\infty(y \mid x = 0)$ for the joint DPMM } \label{fig:lidar_jdpmm_x0}\vspace{2mm}
\end{figure}
\begin{figure}[!h]
    \centering
        \includegraphics[width=0.97\textwidth]{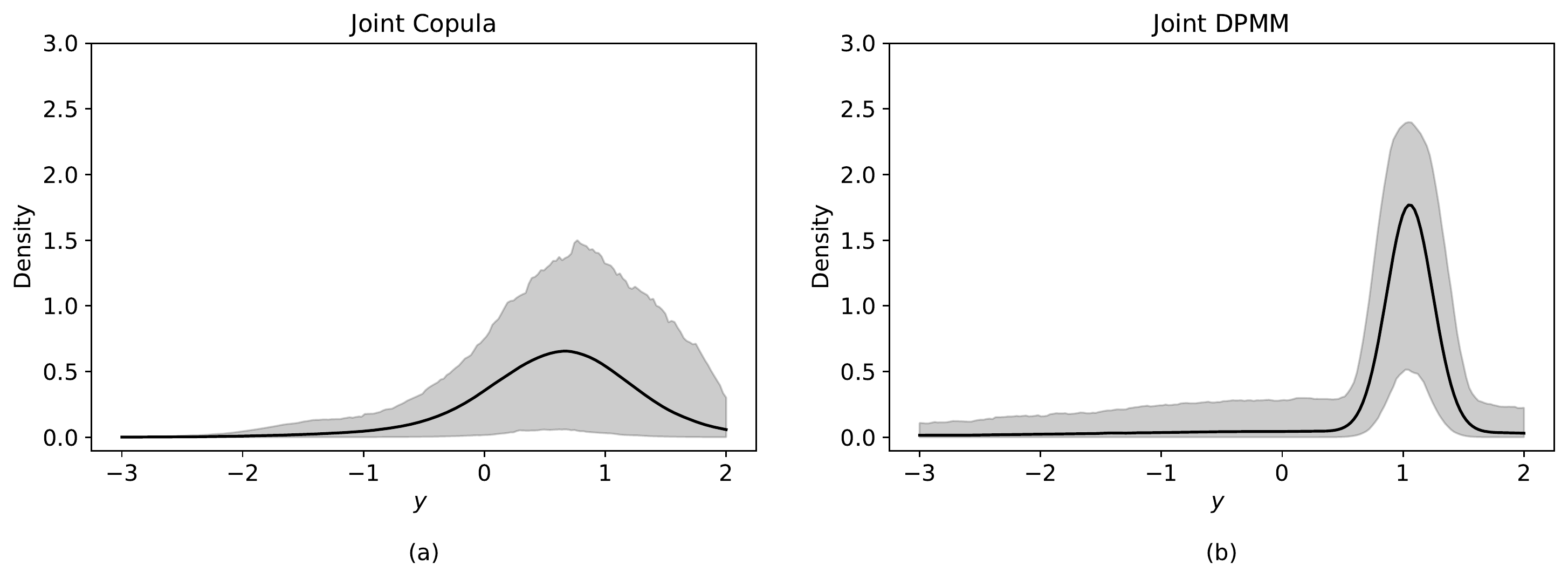}\vspace{-3mm}
    \caption{Posterior mean (\full) and 95\% credible interval (\sqrlow) of (a) $p_N(y \mid x=-3)$ for the joint copula method and (b) $p_\infty(y \mid x=-3)$ for the joint DPMM} \label{fig:lidar_jdpmm_x-3}\vspace{2mm}
\end{figure}
\begin{figure}[!h]
    \centering
        \includegraphics[width=0.97\textwidth]{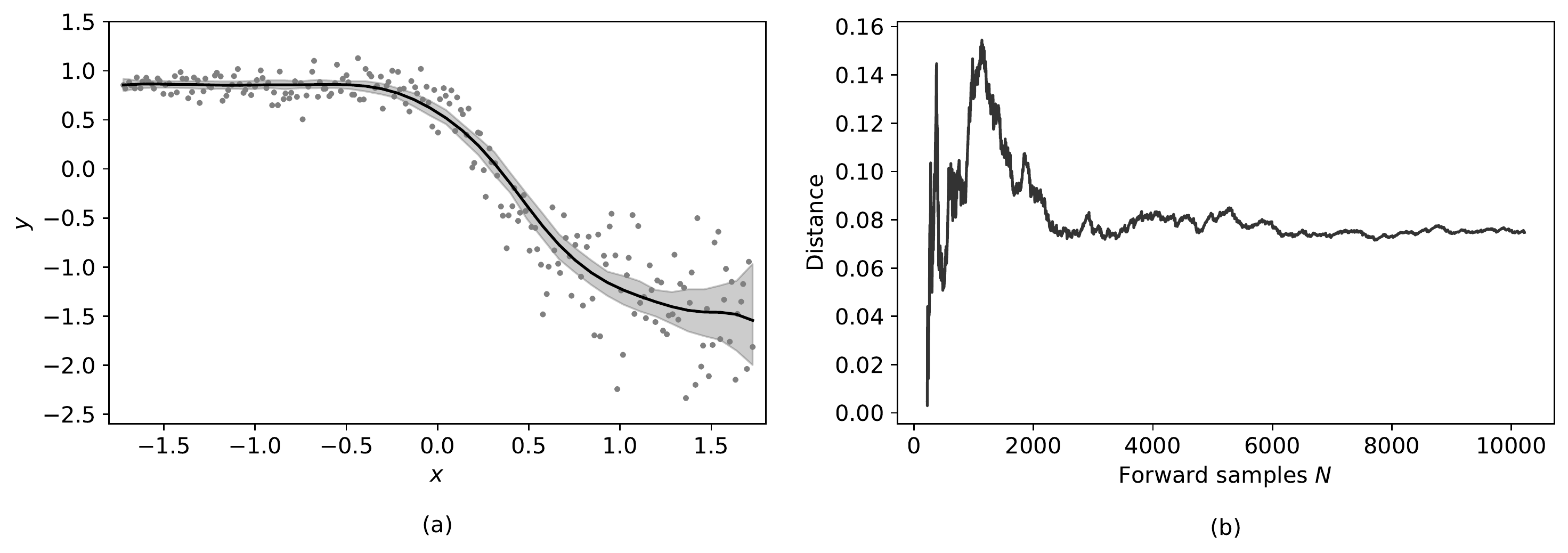}\vspace{-3mm}
    \caption{(a) Posterior mean (\full) and 95\% credible interval (\sqrlow) of the conditional median of $P_N(y \mid x)$, with data (\dotmid); (b) Estimated $L_1$ distance  $\lVert p_N(\cdot \mid {x}) - p_{n}(\cdot \mid {x}) \rVert_1$  for a single forward sample with $x=0$ } \label{fig:lidar_conv}\vspace{2mm}
\end{figure}

\subsubsection{Multivariate covariates in UCI datasets}
We now demonstrate the conditional copula method for prediction in the regression and classification setting with multivariate covariates, which is of particular interest to the machine learning community. For high-dimensional covariates, the conditional copula method performs better than the joint method, both in terms of computational speed and test log-likelihood. This is likely due to the dominance of estimating $P_n(\mathbf{x})$ in high dimensions, which disrupts the estimate of $P_n(y \mid \mathbf{x})$.

Similar to the multivariate density estimation, we demonstrate the regression and classification conditional copula methods on UCI datasets with scalar $y$ and multivariate $\mathbf{x}$. Again, we evaluate the methods with 10 random 50-50 test-train splits and evaluate the average test conditional log-likelihoods. We convert categorical variables into dummy variables, and report the preprocessed covariate dimensionality in Table \ref{tab:reg_test_loglik}. We compare to Bayesian linear regression and Gaussian processes (GP) with a single length scale RBF kernel as baselines for regression, and similarly to logistic regression and GPs with the logistic link and Laplace approximation for classification. We use the Laplace approximation as it is available off-the-shelf in \texttt{sklearn}, and we found that independent kernel length scales (ARD) performed worse due to overfitting given $n$ is moderate. For the conditional copula method, we have distinct bandwidths $\rho_{1:d}$ for each covariate, which we optimize through the prequential log-likelihood over $M = 10$ permutations.

\begin{table}[!h]
\footnotesize
\begin{center}
\begin{tabular}{ clccccc }
 & Dataset & $n$ & $d$ & Linear & GP &  Copula\\
 \hline 
Regression&  Boston & $\phantom{1}506$ &$13$&$-0.842$ $(0.043)$&$-0.404$ $(0.040)$&$-\textbf{0.351}$ $(0.025)$\\  
& Concrete  & $1030$ &$\phantom{1}8$ &$-0.965$ $(0.008)$&$-\textbf{0.364}$ $(0.014)$&$-0.445$  $(0.013)$    \\
  &Diabetes & $\phantom{1}442$ & $10$&$-1.096$ $(0.017)$&$-1.089$ $(0.015)$&$-\textbf{1.003}$ $(0.018)$  \\
   &Wine Quality   &$1599$  &$11$ &$-1.196$ $(0.017)$& $-\textbf{0.497}$ $(0.034)$ & $-1.143$ $(0.020)$   \\
   \hline
   Classification &Breast cancer &$\phantom{1}569$&$30$ &$-0.107$ $(0.005)$&$-0.105$ $(0.005)$&$-\textbf{0.096}$ $(0.008)$ \\
   &Ionosphere&$\phantom{1}351$&$33$&$-0.348$ $(0.005)$&$-\textbf{0.304}$ $(0.006)$&$-0.388$ $(0.016)$\\
   &Parkinsons&$\phantom{1}195$ &$22$ &$-0.352$ $(0.007)$&$-0.364$ $(0.013)$ &$-\textbf{0.257}$ $(0.010)$\\ 
   &Statlog &$1000$&$20$ &$-\textbf{0.530}$ $(0.009)$& $-0.542$ $(0.011)$&$-0.541$ $(0.006)$\\ 
\end{tabular}
\caption{Average test log-likelihood, standard errors (in brackets) and best performance in bold} \label{tab:reg_test_loglik}
\end{center}
\end{table}
\vspace{-5mm}

In Table \ref{tab:reg_test_loglik}, we see the test log-likelihoods, where the copula method is competitive with the GP, though in general we find that the GP provides a better estimate for the mean function for regression. Again, optimization took the most time due to the $d$ bandwidths, taking on average 30 seconds per fold for the slowest example (`Statlog'). The time for actual fitting and prediction on the test set was under 120ms per fold for all examples. The GP on the slowest examples required around 20 seconds per fold for the marginal likelihood optimizations, but computation time scales as $\mathcal{O}(n^3)$. 
\section{Theory}\label{sec:theory}
In this section, we provide a theoretical analysis of the martingale posteriors and predictive resampling using the copula update introduced in Section \ref{sec:copula}. We utilize the theory of c.i.d. sequences from the works of \cite{Berti2004,Berti2013}. 
We then show frequentist consistency (with little $n$) under relatively weak conditions for the multivariate copula update by extending the proof of \cite{Hahn2018}, and we discuss its implications.  All proofs are deferred to Appendix \ref{Appendix:proofs}. 

\subsection{Martingale posteriors for copula density estimation}\label{sec:theory_mv}
We first analyze the properties under predictive resampling of  the multivariate copula recursive update for the martingale posterior. We write $P_i(\mathbf{y})$ as the joint cumulative distribution function of the density $p_i(\mathbf{y})$ with update \eqref{eq:mv_DP_copdens}, and consider predictive resampling starting at $p_n(\mathbf{y})$ such that $\mathbf{Y}_{i+1} \sim P_i(\mathbf{y})$ for $i = n,n+1\ldots,N$. As before, $n$ corresponds to the number of observed data points, whereas $N-n$ corresponds to the number of forward samples drawn from predictive resampling. The first two results follow directly from the c.i.d. property of the sequence.
\begin{theorem}\label{Th:cid_exchangeable}[\citet[Theorem~2.5]{Berti2004}]
  The sequence $\mathbf{Y}_{N+1},\mathbf{Y}_{N+2},\ldots$ is asymptotically exchangeable, that is 
 $$
 (\mathbf{Y}_{N+1},\mathbf{Y}_{N+2},\ldots) \overset{d}{\to} (\mathbf{Z}_1,\mathbf{Z}_2,\ldots) 
 $$
for $N \to \infty$, where $(\mathbf{Z}_1,\mathbf{Z}_2,\ldots)$ is exchangeable.
 \end{theorem}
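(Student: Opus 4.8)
The plan is to establish convergence in distribution on the sequence space $(\mathbb{R}^d)^\infty$ by reducing it to convergence of all finite-dimensional marginals and identifying the limit as a sequence that is conditionally i.i.d. given the limiting random measure $P_\infty$. Recall from the c.i.d. property that the martingale condition \eqref{eq:martingale}, applied to the multivariate update \eqref{eq:mv_DP_copdens}, makes $P_i(\mathbf{y})$ a uniformly bounded martingale for each fixed $\mathbf{y}$, so $P_N(\mathbf{y}) \to P_\infty(\mathbf{y})$ almost surely with $P_\infty$ a random probability distribution function (the convergence recorded in Section \ref{sec:pred_coherence} and in Theorem \ref{Th:weak}). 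I would then define $(\mathbf{Z}_1,\mathbf{Z}_2,\ldots)$ to be the sequence that, conditional on $P_\infty$, is i.i.d. with common law $P_\infty$; such a sequence is exchangeable directly, since permuting a conditionally i.i.d. sequence leaves its law invariant. It therefore suffices to prove that $(\mathbf{Y}_{N+1},\ldots,\mathbf{Y}_{N+k}) \overset{d}{\to} (\mathbf{Z}_1,\ldots,\mathbf{Z}_k)$ for every fixed $k$.

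For the finite-dimensional step I would evaluate, for bounded continuous $g_1,\ldots,g_k$, the expectation $E\left[\prod_{j=1}^k g_j(\mathbf{Y}_{N+j})\right]$ by iterated conditioning from the largest index inward. The c.i.d. property gives $E\left[g_k(\mathbf{Y}_{N+k}) \mid \mathcal{F}_{N+k-1}\right] = \int g_k \, dP_{N+k-1}$, since conditional on $\mathcal{F}_{N+k-1}$ the next observation $\mathbf{Y}_{N+k}$ is distributed according to $P_{N+k-1}$. Peeling off one factor at a time in this way expresses the $k$-fold expectation through nested predictive integrals $\int g_j \, dP_{N+j-1}$. The target is to show that each such factor may be replaced by its limit $\int g_j \, dP_\infty$, so that the whole expression converges to $E\left[\prod_{j=1}^k \int g_j \, dP_\infty\right]$, which is precisely the $k$-dimensional marginal of the conditionally i.i.d. limit.

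The main obstacle is justifying this replacement, that is, showing that the accumulated dependence between the peeled factors washes out as $N \to \infty$. The difficulty is that $\int g_j \, dP_{N+j-1}$ is $\mathcal{F}_{N+j-1}$-measurable and hence correlated with the remaining factors, so the integrals do not decouple at finite $N$. The ingredients I would use to overcome this are: first, that for each fixed $\mathbf{y}$ the martingale $P_i(\mathbf{y})$ converges almost surely and, being uniformly bounded, also in $L^1$, so the bounded-continuous integrals $\int g_j \, dP_{N+j-1}$ converge almost surely and in $L^1$ to $\int g_j \, dP_\infty$; and second, that the limiting factors $\int g_j \, dP_\infty$ are all functions of the single tail-measurable random measure $P_\infty$, so they combine into the product $\prod_{j=1}^k \int g_j \, dP_\infty$. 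The uniform bound $\left|\prod_{j=1}^k g_j(\mathbf{Y}_{N+j})\right| \le \prod_{j=1}^k \|g_j\|_\infty$ then supplies the dominated-convergence control needed to pass the limit through both the product and the outer expectation.

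Convergence of every finite-dimensional marginal upgrades to convergence in distribution on the sequence space, and the limit $(\mathbf{Z}_1,\mathbf{Z}_2,\ldots)$ is exchangeable by construction, which establishes the claim. This is exactly the content of \citet[Theorem~2.5]{Berti2004}, which applies once the multivariate update \eqref{eq:mv_DP_copdens} has been verified to be c.i.d.; the only nontrivial work specific to our setting is confirming that c.i.d. property, which was done in Section \ref{sec:multivariate_copula} via the change of variables from $\mathbf{y}_{i+1}$ to $v_i^{1:d}$.
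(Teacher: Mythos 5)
Your proposal is correct and matches the paper's approach: the paper gives no independent proof of this statement, simply invoking \citet[Theorem~2.5]{Berti2004} once the c.i.d. property of the copula update \eqref{eq:mv_DP_copdens} has been verified (via the change of variables in Section \ref{sec:multivariate_copula}), which is exactly the reduction you identify in your final paragraph. Your additional sketch of the internals of the cited theorem (finite-dimensional marginals, $L^1$ convergence of the predictive integrals to $\int g\,dP_\infty$ via Theorem \ref{Th:weak}, and dominated convergence) is a faithful outline of the standard argument and contains no gap, though it goes beyond what the paper itself records.
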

 The above justifies that we may not need to average over permutations for sufficiently large $N$ when predictive resampling. 

As mentioned in Section \ref{sec:pred_coherence}, we would like $P_N(\mathbf{y}) \to P_\infty(\mathbf{y})$  at each $\mathbf{y} \in \mathbb{R}^d$, which indeed holds for predictive resampling here from the c.i.d. sequence:
\begin{theorem}\label{Th:weak} [\citet[Lemma 2.1, 2.4]{Berti2004}]
  There exists a random probability measure $P_\infty$ such that $P_N$ converges weakly to $P_\infty$ almost surely.
 \end{theorem}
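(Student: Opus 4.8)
The plan is to exploit the martingale structure of the c.i.d.\ sequence recorded in \eqref{eq:martingale}: for each fixed $\mathbf{y}\in\mathbb{R}^d$ the sequence $\{P_N(\mathbf{y})\}_{N\ge n}$ is a martingale with respect to the filtration $\mathcal{F}_N=\sigma(\mathbf{Y}_{1:N})$, and it is bounded, taking values in $[0,1]$. Doob's martingale convergence theorem then delivers, for each fixed $\mathbf{y}$, a limit random variable $P_\infty(\mathbf{y})$ with $P_N(\mathbf{y})\to P_\infty(\mathbf{y})$ almost surely. Boundedness moreover gives uniform integrability, so the limit satisfies $E[P_\infty(\mathbf{y})\mid y_{1:n}]=P_n(\mathbf{y})$, which is exactly Condition \ref{cond:unbiased} and will be the workhorse of the tail argument below.

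First I would pass from pointwise-in-$\mathbf{y}$ almost-sure convergence to a single almost-sure event. Fix a countable dense set $D\subset\mathbb{R}^d$, say $\mathbb{Q}^d$; since $D$ is countable, the union over $\mathbf{y}\in D$ of the exceptional null sets is itself null, so off a single null set $P_N(\mathbf{y})\to P_\infty(\mathbf{y})$ holds simultaneously for all $\mathbf{y}\in D$. On this event the limits inherit the defining monotonicity of distribution functions: the probability assigned by $P_N$ to any box with corners in $D$ is a finite signed combination of corner values, hence itself a bounded martingale, and therefore converges to a nonnegative limit. This ensures the limiting values are $d$-increasing, and I would then define $P_\infty$ on all of $\mathbb{R}^d$ as the right-continuous extension of its values on $D$, yielding a bona fide (a priori possibly defective) multivariate distribution function.

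The main obstacle is to rule out escape of mass, i.e.\ to show that $P_\infty$ is almost surely a \emph{proper} probability distribution function rather than a sub-distribution. Here I would combine the unbiasedness identity with monotone convergence: as $\mathbf{y}\to+\infty$ componentwise, $P_\infty(\mathbf{y})$ increases almost surely to some $L\le 1$, and $E[L]=\lim_{\mathbf{y}\to\infty}E[P_\infty(\mathbf{y})]=\lim_{\mathbf{y}\to\infty}P_n(\mathbf{y})=1$; since $L\le 1$ with $E[L]=1$, we must have $L=1$ almost surely, and the analogous argument at $-\infty$ controls the lower tail. Consequently $P_\infty$ is almost surely a genuine random probability measure, establishing the existence claim.

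Finally, weak convergence $P_N\Rightarrow P_\infty$ follows from the standard Helly-type upgrade: pointwise almost-sure convergence on the dense set $D$ together with monotonicity of the distribution functions forces $P_N(\mathbf{y})\to P_\infty(\mathbf{y})$ at every continuity point $\mathbf{y}$ of $P_\infty$, which is precisely the definition of weak convergence of measures. I expect the mass-conservation step to be the crux of the argument; the remaining pieces are routine bookkeeping with countable dense sets and elementary properties of multivariate distribution functions.
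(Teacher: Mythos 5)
The paper does not actually prove this statement: it is imported wholesale from \citet{Berti2004} (their Lemmas 2.1 and 2.4), applied to the c.i.d.\ sequence generated by predictive resampling. There is therefore no in-paper proof to compare against, but your argument is a correct, self-contained reconstruction of the cited result. The skeleton --- bounded-martingale convergence of $P_N(\mathbf{y})$ at each point of a countable dense set, inclusion--exclusion over box corners to preserve the $d$-increasing property in the limit, a right-continuous Helly-type extension, and weak convergence via the monotone sandwich at continuity points of the limit --- is exactly right, and you correctly isolate the only non-routine step, namely ruling out escape of mass. Your treatment of that step is clean: boundedness gives uniform integrability, hence $E[P_\infty(\mathbf{y})\mid y_{1:n}]=P_n(\mathbf{y})$ (Condition \ref{cond:unbiased}), and monotone convergence along a countable sequence $\mathbf{y}_k\uparrow\infty$ forces the limit $L\le 1$ to have mean one and hence equal one almost surely. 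The only place that deserves slightly more care is the lower tail: one should run the symmetric argument on each univariate marginal of $P_\infty$ (itself obtained as a monotone limit along the dense set), so that only countably many almost-sure statements are ever combined and the bound $P_\infty(\mathbf{y})\le P_{\infty,j}(y_j)$ kills the mass as any single coordinate tends to $-\infty$; this is bookkeeping, not a gap. For the record, \citet{Berti2004} phrase essentially the same martingale-plus-countable-determining-class argument in terms of empirical measures and bounded functions rather than CDF evaluations on $\mathbb{Q}^d$; nothing of substance is gained or lost either way.
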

 Specifically for the univariate case of the copula update above, we can strengthen this to convergence in total variation, which also implies that the martingale posterior $P_\infty$ is absolutely continuous, following from an interesting result in \cite{Berti2013}.
\begin{theorem} \label{Th:absolute_continuity}
 For $y \in \mathbb{R}$, suppose the sequence of probability measures $P_N$ has density function $p_N(y)$ and cumulative distribution function $P_N(y)$  satisfying the updates \eqref{eq:DP_copdens}. Let us assume that the initial $P_n(y)$ is continuous and its density satisfies
\begin{equation}
\int_K p^2_n(y) \, dy < \infty
\end{equation}
 for all $K$, where $K$ is a compact subset of $\mathbb{R}$ with finite Lebesgue measure. For the sequence
 $$
 \alpha_i = \left(2-\frac{1}{i}\right)\frac{1}{i+1},
 $$
  let us assume further that $\rho <1 /\sqrt{3}$. We then have
  \begin{enumerate}[\upshape (a)]
  \item $P_\infty$ is absolutely continuous with respect to the Lebesgue measure almost surely, with density $p_\infty$.
    \item $P_{N}$  converges in total variation to $P_\infty$ almost surely, that is 
    $$\lim_{N\to \infty}\int |p_N(y)  - p_\infty(y) | \, dy = 0 \quad \textnormal{a.s.}$$
  \end{enumerate}
\end{theorem}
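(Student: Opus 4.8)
The plan is to reduce both claims to a single local second--moment bound and then invoke the mechanism of \citet{Berti2013}. For each fixed $y$, Corollary \ref{corr:copula_CID} and \eqref{eq:martingale_density} make $p_N(y)$ a nonnegative martingale in $N$, so by Doob's convergence theorem it converges almost surely to a limit $p_\infty(y)$; meanwhile Theorem \ref{Th:weak} already gives $P_N \Rightarrow P_\infty$ a.s. to a random probability measure. I would show that if, for every compact $K$, $\sup_N \int_K p_N^2(y)\,dy < \infty$ almost surely, then $\{p_N\}$ is locally uniformly integrable with respect to Lebesgue measure, whence $\int_K p_N \to \int_K p_\infty$; combined with the weak convergence and the tightness it entails, this identifies $dP_\infty = p_\infty\,dy$ with $\int p_\infty = 1$, giving absolute continuity (a). Scheff\'e's lemma then upgrades the a.e.\ convergence $p_N \to p_\infty$ to $\int |p_N - p_\infty|\,dy \to 0$, which is exactly the total variation statement (b). So everything hinges on the local $L^2$ bound.

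To obtain that bound I would set up an energy recursion. Writing $V_N = P_N(y_{N+1}) \sim \mathcal{U}[0,1]$ and using $\int_0^1 c_\rho(u,v)\,dv = 1$, the update \eqref{eq:DP_copdens} gives the clean conditional second moment
\begin{equation*}
E\!\left[p_{N+1}(y)^2 \mid y_{1:N}\right] = p_N(y)^2\left\{1 + \alpha_{N+1}^2\left(g_\rho(P_N(y)) - 1\right)\right\}, \qquad g_\rho(u) := \int_0^1 c_\rho(u,v)^2\,dv .
\end{equation*}
Integrating over $K$ shows that $\int_K p_N^2\,dy$ is a nonnegative submartingale whose compensator increments are $\alpha_{N+1}^2 \int_K p_N^2\,(g_\rho(P_N)-1)\,dy \ge 0$. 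Since $\alpha_i = (2 - i^{-1})/(i+1) = \mathcal{O}(i^{-1})$ has $\sum_i \alpha_i^2 < \infty$, a.s.\ boundedness of the energy (hence a.s.\ convergence of the submartingale) will follow once the per--step factor is controlled in terms of the current energy, with the hypothesis $\int_K p_n^2\,dy < \infty$ serving as the base case.

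The main obstacle, and the reason for the hypothesis $\rho < 1/\sqrt{3}$, is that the Gaussian--copula factor is explicitly
\begin{equation*}
g_\rho(u) = \frac{1}{\sqrt{1-\rho^4}}\exp\!\left\{\frac{\rho^2\,\Phi^{-1}(u)^2}{1+\rho^2}\right\},
\end{equation*}
which is \emph{unbounded} as $u \to 0$ or $u \to 1$, so one cannot simply dominate $g_\rho(P_N(y))$ by a constant. The natural integrability that survives is $\int_0^1 g_\rho(u)^2\,du$, and a Gaussian computation shows this integral is finite precisely when the exponent coefficient $\tfrac12 - 2\rho^2/(1+\rho^2) = (1-3\rho^2)/\{2(1+\rho^2)\}$ is positive, i.e.\ exactly when $\rho < 1/\sqrt{3}$. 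I would therefore control the compensator by pairing $g_\rho$ against the densities through a Cauchy--Schwarz estimate (after the change of variables $t = P_N(y)$, under which $\int p_N^2\,g_\rho(P_N)\,dy$ becomes an integral of the density--quantile against $g_\rho$), so that $\int_0^1 g_\rho^2$ enters and the recursion closes under $\rho < 1/\sqrt{3}$. The delicate bookkeeping in matching the energy exponent to the integrability threshold is where I expect the real work to lie; this is the step that genuinely uses the \citet{Berti2013} result and both standing hypotheses.

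Once $\sup_N \int_K p_N^2\,dy < \infty$ a.s.\ is established, the remainder is routine: local uniform integrability plus the weak limit and tightness from Theorem \ref{Th:weak} give part (a), and Scheff\'e's lemma applied to the a.e.\ convergent densities of equal total mass gives the total variation convergence in part (b).
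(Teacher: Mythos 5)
Your architecture matches the paper's: reduce everything to $\sup_N E\bigl[\int_K p_N^2\,dy\bigr]<\infty$ on compacts and then invoke \citet{Berti2013} (Theorem 4 for absolute continuity, Theorem 1 for the $L_1$/total-variation convergence), and your conditional second-moment recursion and the formula for $g_\rho(u)=\int_0^1 c_\rho(u,v)^2\,dv$ are exactly right. The gap is in the one step you yourself flag as "the real work": closing the recursion despite $g_\rho(u)\to\infty$ as $u\to 0,1$. Your proposed Cauchy--Schwarz pairing does not close. Under the change of variables $t=P_N(y)$ one gets
\begin{equation*}
\int_K p_N^2(y)\,g_\rho\{P_N(y)\}\,dy=\int \tilde p_N(t)\,g_\rho(t)\,dt,\qquad \tilde p_N=p_N\circ P_N^{-1},
\end{equation*}
and Cauchy--Schwarz then produces $\bigl(\int \tilde p_N^2\,dt\bigr)^{1/2}\bigl(\int_0^1 g_\rho^2\,dt\bigr)^{1/2}=\bigl(\int p_N^3\,dy\bigr)^{1/2}\bigl(\int_0^1 g_\rho^2\,dt\bigr)^{1/2}$ --- a \emph{third} moment of $p_N$, not the $L^2$ energy you are recursing on. Pairing the other way, $\bigl(\int_K p_N^4\bigr)^{1/2}\bigl(\int_K g_\rho(P_N(y))^2\,dy\bigr)^{1/2}$, is no better: it needs a fourth moment, and $\int_K g_\rho(P_N(y))^2\,dy$ is not $\int_0^1 g_\rho^2\,dt$ (the substitution introduces a factor $1/\tilde p_N$). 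So the recursion escalates through higher moments and never becomes self-contained; the fact that $\int_0^1 g_\rho^2\,du<\infty$ exactly when $\rho<1/\sqrt3$ is a suggestive coincidence with the correct threshold, but it is not the mechanism.

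The missing idea is a \emph{pointwise} control of $g_\rho(P_N(y))$ on the compact $K$, uniform in $y$ but growing polynomially in $N$. Since each update is a convex combination $P_{i+1}=(1-\alpha_{i+1})P_i+\alpha_{i+1}H_\rho$ with $H_\rho\ge 0$, one has $P_N(y)\ge P_0(y)\prod_{i\le N}(1-\alpha_i)$ and likewise for $1-P_N(y)$. Combining this with $\Phi^{-1}(u)^2\le -2\log\{u\wedge(1-u)\}$ gives
\begin{equation*}
g_\rho\{P_N(y)\}\le \frac{1}{\sqrt{1-\rho^4}}\,\bigl\{P_0(y)\wedge \bar P_0(y)\bigr\}^{-2\gamma}\prod_{i\le N}(1-\alpha_i)^{-2\gamma},\qquad \gamma=\tfrac{\rho^2}{1+\rho^2},
\end{equation*}
where the first factor is bounded on $K$ by continuity of $P_0$ and the product is at most $(N+1)^{4\gamma}$ for the stated $\alpha_i$. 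The compensator increment is then $O\{\alpha_{N+1}^2 (N+1)^{4\gamma}\}=O(N^{4\gamma-2})$, summable precisely when $4\gamma<1$, i.e.\ $\rho<1/\sqrt3$, and iterating the conditional expectation from the base case $\int_K p_n^2<\infty$ gives the required bound on $E\bigl[\int_K p_N^2\,dy\bigr]$ (note Berti's condition is on the expectation, which is also what makes your submartingale $L^1$-bounded). With that bound in hand, your concluding steps (a) and (b) go through as you describe.
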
  
The assumptions hold if  $p_n(y)$ is continuous. From this, we are justified in using $p_N(y)$ as an approximate sample of the martingale posterior $p_\infty(y)$. We conjecture that the choice of $\rho <1/\sqrt{3}$  can be relaxed, and empirically it seems the case. Furthermore, this restriction on $\rho$ is not needed if $\alpha_i = (i+1)^{-1}$. Unfortunately, we have been unable to extend Theorem \ref{Th:absolute_continuity} to the multivariate copula update, as the update for $P\left(y^j \mid y^{1:j-1}\right)$ is not as easy to bound. We also conjecture that the $L_1$ convergence  holds true in the multivariate case, and again the empirical results suggest so.

We can also quantify to some degree the convergence rate to $P_\infty$ as we predictively resample. We have the following result from a variant of the Azuma-Hoeffding inequality from \cite{Mcdiarmid1998}.
\begin{proposition}  \label{prop:concentration}
For $M > N$ and any $\epsilon \geq 0$, the cumulative distribution function $P_N(\mathbf{y})$ of the density in \eqref{eq:mv_DP_copdens} satisfies 
$$\sup_{\mathbf{y}} \,\mathbb{P}\left(|P_{M}(\mathbf{y}) - P_{N}(\mathbf{y})|  \geq \epsilon\right) \leq 2\exp\left( \frac{-\epsilon^2}{\frac{2\epsilon \alpha_{N+1}}{3} + \frac{1}{2}\sum_{i=N+1}^M\alpha^2_{i}}\right).
$$
\end{proposition}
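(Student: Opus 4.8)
The plan is to recognize $\{P_i(\mathbf{y})\}_{i\geq N}$, for each fixed $\mathbf{y}$, as a bounded martingale and to apply the Bernstein--Freedman type concentration inequality for martingales with bounded increments and controlled predictable quadratic variation, which is the variant of Azuma--Hoeffding recorded in \cite{Mcdiarmid1998}. Writing $\mathcal{F}_i = \sigma(y_{1:n},Y_{n+1:i})$ and $D_{i+1} = P_{i+1}(\mathbf{y}) - P_i(\mathbf{y})$, the c.i.d.\ property established for the update \eqref{eq:mv_DP_copdens} gives $E[D_{i+1}\mid\mathcal{F}_i] = 0$, so that $P_M(\mathbf{y}) - P_N(\mathbf{y}) = \sum_{i=N}^{M-1} D_{i+1}$ is a sum of martingale differences. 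The two ingredients the inequality requires are a uniform almost-sure bound $b$ on $|D_{i+1}|$ and a bound on the predictable quadratic variation $\sum_{i=N}^{M-1} E[D_{i+1}^2\mid\mathcal{F}_i]$; I expect these to be $\alpha_{N+1}$ and $\tfrac14\sum_{i=N+1}^M\alpha_i^2$ respectively, which reproduce exactly the two terms in the stated denominator.

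First I would put the cumulative distribution function update into an explicit convex-combination form. Integrating \eqref{eq:mv_DP_copdens} over the orthant $\{\mathbf{t}: t^j\leq y^j,\ j=1,\ldots,d\}$ gives
$$P_{i+1}(\mathbf{y}) = (1-\alpha_{i+1})P_i(\mathbf{y}) + \alpha_{i+1}A_{i+1}(\mathbf{y}), \qquad A_{i+1}(\mathbf{y}) := \int_{-\infty}^{\mathbf{y}} \prod_{j=1}^d c_\rho\!\left(P_i(t^j\mid t^{1:j-1}),\, v_i^j\right) p_i(\mathbf{t})\, d\mathbf{t},$$
so that $D_{i+1} = \alpha_{i+1}\{A_{i+1}(\mathbf{y}) - P_i(\mathbf{y})\}$, the randomness entering through the predictively resampled $v_i^{1:d} = V_i^{1:d}\iid\mathcal{U}[0,1]$. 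The key elementary fact is $A_{i+1}(\mathbf{y})\in[0,1]$: nonnegativity is immediate, and for the upper bound I would drop the box constraint (which only enlarges a nonnegative integrand) and integrate iteratively from the innermost coordinate, using the probability integral transform $s = P_i(t^j\mid t^{1:j-1})$ together with the marginal-uniformity identity $\int_0^1 c_\rho(s,v)\,ds = 1$ to collapse each coordinate integral to $1$. Since $P_i(\mathbf{y})\in[0,1]$ as well, this yields $|D_{i+1}|\leq\alpha_{i+1}\leq\alpha_{N+1}$ for all $i\geq N$, the last inequality using that $\alpha_i$ is decreasing; this gives $b=\alpha_{N+1}$.

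For the predictable variation I would substitute the convex-combination form into $E[P_{i+1}(\mathbf{y})\mid\mathcal{F}_i] = P_i(\mathbf{y})$ to obtain $E[A_{i+1}(\mathbf{y})\mid\mathcal{F}_i] = P_i(\mathbf{y})$. Since $A_{i+1}(\mathbf{y})\in[0,1]$, the bound $E[A_{i+1}^2\mid\mathcal{F}_i]\leq E[A_{i+1}\mid\mathcal{F}_i]$ gives $E[D_{i+1}^2\mid\mathcal{F}_i] = \alpha_{i+1}^2\,\mathrm{Var}(A_{i+1}(\mathbf{y})\mid\mathcal{F}_i) \leq \alpha_{i+1}^2\,P_i(\mathbf{y})\{1-P_i(\mathbf{y})\} \leq \tfrac14\alpha_{i+1}^2$. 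Summing, $\sum_{i=N}^{M-1} E[D_{i+1}^2\mid\mathcal{F}_i]\leq\tfrac14\sum_{i=N+1}^M\alpha_i^2$, which is deterministic. Feeding $b=\alpha_{N+1}$ and $v=\tfrac14\sum_{i=N+1}^M\alpha_i^2$ into the one-sided Bernstein--Freedman bound $\exp\{-\epsilon^2/(2v+2b\epsilon/3)\}$, applying it to both $\pm(P_M-P_N)$ with a union bound for the factor $2$, and noting that every constant is independent of $\mathbf{y}$ so the estimate is uniform, delivers the claim.

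The main obstacle is the multivariate increment bound: unlike the univariate update \eqref{eq:DP_copdens}, where $P_{i+1}(y) = (1-\alpha_{i+1})P_i(y) + \alpha_{i+1}H_\rho\{P_i(y),V_i\}$ exhibits a $[0,1]$-valued term directly, here one must verify $A_{i+1}(\mathbf{y})\in[0,1]$ through the iterated change of variables, carefully checking that restricting to the lower orthant only decreases a nonnegative integrand and that the marginal uniformity of $c_\rho$ in each conditional coordinate makes the unconstrained integral telescope to $1$. Everything else---the martingale structure, which comes for free from the c.i.d.\ property, and the concentration step, which is a black-box application---is routine once this bound is in place.
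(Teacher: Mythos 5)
Your proposal is correct and follows essentially the same route as the paper's proof: the same convex-combination decomposition $P_{i+1}=(1-\alpha_{i+1})P_i+\alpha_{i+1}Q_{i+1}$ with $Q_{i+1}\in[0,1]$, the same increment bound $\alpha_{N+1}$, the same conditional variance bound $\tfrac14\alpha_{i+1}^2$ (the paper invokes the maximal variance of a $[0,1]$-valued random variable where you use $E[A^2]\leq E[A]$, landing on the same constant), and the same two-sided McDiarmid/Freedman inequality. Your explicit iterated change-of-variables verification that the multivariate $Q_{i+1}(\mathbf{y})$ lies in $[0,1]$ just fills in a step the paper dispatches by noting $Q_{i+1}$ is itself a cumulative distribution function.
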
    
Taking the limit (superior) as $M \to \infty$ of the above gives insight into the quality of the approximation of $P_\infty$ when we truncate the predictive resampling at $P_N$. For our choice of $\alpha_i$ from \eqref{eq:alpha_i}, we have $\sum_{i=N+1}^\infty \alpha_i^2 = \mathcal{O}(N^{-1})$, so the limiting probability of a difference greater than $\epsilon$  decreases roughly at rate $\exp(-\epsilon^2 cN)$ for some constant $c$.  Notably, this rate is independent from the dimensionality $d$, and instead depends only on the sequence $\alpha_i$. Furthermore, we have some notion of posterior contraction in Proposition \ref{prop:concentration} if  we instead consider $N$ as the number of observed data points and $M$ as the number of forward samples.

\subsection{Martingale posteriors for conditional copula regression}

For the regression case where $y \in \mathbb{R}$, $\mathbf{x}\in \mathbb{R}^d$, we analyze the update given in \eqref{eq:conditreg_conditional} and \eqref{eq:condit_alpha}. Assuming we have observed $y_{1:n},\mathbf{x}_{1:n}$, we draw the sequence $\mathbf{X}_{n+1:\infty}$ from the Bayesian bootstrap with $\mathbf{x}_{1:n}$. While this is no longer the traditional c.i.d. setup, we still have that $P_N(y \mid \mathbf{x})$ is a martingale under predictive resampling, so we have that $P_N(y \mid  \mathbf{x})$ converges pointwise for each $\mathbf{x}$ almost surely. Fortunately, \citet[Theorem 2.2]{Berti2006} assures that the martingale posterior $P_\infty(y \mid \mathbf{x})$ exists.
\begin{theorem} \label{Th:weak_reg}
  For each $\mathbf{x} \in \mathbb{R}^d$, there exists a random probability measure $P_\infty(\cdot \mid \mathbf{x})$ such that $P_N(\cdot \mid \mathbf{x})$ converges weakly to $P_\infty(\cdot \mid \mathbf{x})$ almost surely.
 \end{theorem}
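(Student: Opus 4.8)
The plan is to reduce the statement to Doob's martingale convergence theorem applied pointwise in $y$, and then to promote the pointwise limits to weak convergence of a proper random distribution function; this is precisely the route of \citet[Theorem~2.2]{Berti2006}, so the substance of the work is to verify its hypotheses in the present non-exchangeable setup. Fix a test point $\mathbf{x} \in \mathbb{R}^d$ and let $\mathcal{F}_N = \sigma(Y_{n+1:N}, \mathbf{X}_{n+1:N})$, holding $y_{1:n}, \mathbf{x}_{1:n}$ fixed. The first and key step is to show that for each fixed $y$ the real-valued sequence $\{P_N(y \mid \mathbf{x})\}_{N \geq n}$ is an $\mathcal{F}_N$-martingale. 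Integrating the density update \eqref{eq:conditreg_conditional} in $y$ and using $\int_0^{q} c_{\rho_y}(u,r)\,du = H_{\rho_y}(q,r)$ gives the distribution-function recursion
\begin{equation*}
P_{i+1}(y \mid \mathbf{x}) = \{1 - \alpha_{i+1}(\mathbf{x},\mathbf{x}_{i+1})\}\, P_i(y \mid \mathbf{x}) + \alpha_{i+1}(\mathbf{x},\mathbf{x}_{i+1})\, H_{\rho_y}(q_i, r_i),
\end{equation*}
with $q_i = P_i(y \mid \mathbf{x})$ and $r_i = P_i(y_{i+1} \mid \mathbf{x}_{i+1})$. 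Conditioning on $\mathbf{X}_{i+1} = \mathbf{x}_{i+1}$ (drawn from the Bayesian bootstrap) and on $\mathcal{F}_i$, the forward draw $Y_{i+1} \sim P_i(\cdot \mid \mathbf{x}_{i+1})$ makes the pivot $r_i = P_i(Y_{i+1} \mid \mathbf{x}_{i+1})$ uniform on $[0,1]$, while the conditional Gaussian copula of \eqref{eq:Gauss_copdens} satisfies $\int_0^1 H_{\rho_y}(q_i, r)\,dr = q_i$. Hence $E[P_{i+1}(y \mid \mathbf{x}) \mid \mathcal{F}_i, \mathbf{X}_{i+1}] = P_i(y \mid \mathbf{x})$ for every realization of $\mathbf{X}_{i+1}$, and averaging over $\mathbf{X}_{i+1}$ yields the martingale identity. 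Since $0 \leq P_N(y \mid \mathbf{x}) \leq 1$ the martingale is bounded, so Doob's theorem supplies an almost sure limit $P_\infty(y \mid \mathbf{x})$.

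Taking $y$ over the rationals and intersecting the countably many almost sure convergence events, I obtain a single almost sure event on which $P_N(q \mid \mathbf{x}) \to P_\infty(q \mid \mathbf{x})$ for all $q \in \mathbb{Q}$ simultaneously; monotonicity in $q$ passes to the limit, and I define $P_\infty(y \mid \mathbf{x}) = \inf_{q > y,\, q \in \mathbb{Q}} P_\infty(q \mid \mathbf{x})$ to obtain a nondecreasing right-continuous candidate. The main obstacle is tightness: one must rule out mass escaping to $\pm\infty$, i.e.\ show $P_\infty(\cdot \mid \mathbf{x})$ is almost surely a proper distribution function. I would handle this through uniform integrability of the bounded martingale, which gives $E[P_\infty(y \mid \mathbf{x})] = P_n(y \mid \mathbf{x})$ for each $y$. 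Letting $G = \lim_{y\to\infty} P_\infty(y \mid \mathbf{x})$, which exists by monotonicity and boundedness, monotone convergence yields $E[G] = \lim_{y\to\infty} P_n(y \mid \mathbf{x}) = 1$; since $G \leq 1$ this forces $G = 1$ almost surely, and the symmetric argument gives $\lim_{y\to-\infty} P_\infty(y \mid \mathbf{x}) = 0$ almost surely.

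Finally, on the same almost sure event I would upgrade pointwise-on-$\mathbb{Q}$ convergence to weak convergence by the standard sandwich at continuity points: for $y$ a continuity point of $P_\infty(\cdot \mid \mathbf{x})$ and rationals $q < y < q'$, the monotonicity $P_N(q \mid \mathbf{x}) \leq P_N(y \mid \mathbf{x}) \leq P_N(q' \mid \mathbf{x})$ combined with the rational-point limits and continuity at $y$ forces $P_N(y \mid \mathbf{x}) \to P_\infty(y \mid \mathbf{x})$, which is exactly weak convergence of the associated measures. The only step genuinely specific to this regression construction, as opposed to a verbatim application of the c.i.d.\ machinery, is the martingale verification in the first paragraph, where the Bayesian-bootstrap randomness in $\mathbf{X}_{i+1}$ must be integrated out without disturbing the martingale property; the conditional uniformity of $r_i$ given $\mathbf{X}_{i+1}$ is what makes the weight $\alpha_{i+1}(\mathbf{x},\mathbf{x}_{i+1})$ from \eqref{eq:condit_alpha} drop out cleanly, and I expect that to be the crux.
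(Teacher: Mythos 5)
Your proposal is correct, and its crux coincides with the paper's: the martingale property of the predictive under predictive resampling, obtained by conditioning on $\mathbf{X}_{i+1}$, using the conditional uniformity of $r_i = P_i(Y_{i+1}\mid \mathbf{X}_{i+1})$ together with the copula marginal identity, and then integrating out the Bayesian-bootstrap draw of $\mathbf{X}_{i+1}$ by the tower property. Where you differ is in the packaging. The paper verifies the martingale at the density level (via $\int_0^1 c_{\rho_y}(q,r)\,dr = 1$ and Fubini) for the functionals $P_N(f\mid\mathbf{x}) = \int f(y)\,p_N(y\mid\mathbf{x})\,dy$ with $f \in C_b(\mathbb{R})$, and then delegates the entire passage from almost sure convergence of these functionals to weak convergence towards a genuine \emph{random probability measure} to \citet[Theorem~2.2]{Berti2006}; that citation is precisely what absorbs the tightness issue. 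You instead work at the distribution-function level (via $\int_0^1 H_{\rho_y}(q,r)\,dr = q$, which is the integrated form of the same identity) and re-prove the content of that theorem by hand on $\mathbb{R}$: rational skeleton, right-continuous modification, properness of the limit via uniform integrability of the bounded martingale and monotone convergence in $y$, and the sandwich argument at continuity points. Your route is more self-contained and elementary but is specific to the real line, whereas the paper's appeal to the Radon-space result would carry over unchanged to multivariate responses; both arguments are sound, and your explicit treatment of the no-mass-escape step is exactly the point the paper leaves implicit in the citation.
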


We also have the  appropriate extension to Proposition \ref{prop:concentration} below.

\begin{proposition}  \label{prop:concentration_reg}
For $M > N$ and any $\epsilon \geq 0$, the cumulative distribution function $P_N({y} \mid \mathbf{x})$ of the density in \eqref{eq:conditreg_conditional} satisfies 
$$\sup_{{y}} \,\mathbb{P}\left(|P_{M}(y \mid \mathbf{x}) - P_{N}(y\mid\mathbf{x})|  \geq \epsilon\right) \leq 2\exp\left( \frac{-\epsilon^2}{\frac{4\epsilon C\alpha_{N+1} }{3}+ 2C^2\sum_{i=N+1}^M\alpha^2_{i}}\right)
$$
 for each $\mathbf{x} \in \mathbb{R}^d$, where $C$ depends only on $\rho$ and $\mathbf{x}$. 
\end{proposition}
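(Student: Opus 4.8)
The plan is to follow the proof of Proposition \ref{prop:concentration} almost verbatim, the guiding observation being that for each fixed $\mathbf{x}$ and $y$ the sequence $\{P_i(y\mid\mathbf{x})\}_{i\geq N}$ is a bounded martingale under predictive resampling, so the same Freedman/McDiarmid martingale Bernstein inequality applies. The single genuinely new feature is the covariate-dependent mixing weight $\alpha_{i+1}(\mathbf{x},\mathbf{X}_{i+1})$ from \eqref{eq:condit_alpha}, which is now random (through the bootstrap-drawn covariate $\mathbf{X}_{i+1}$) and can exceed $\alpha_{i+1}$; controlling it is precisely where the constant $C=C(\rho,\mathbf{x})$ enters.

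First I would write out the conditional CDF recursion implied by \eqref{eq:conditreg_conditional},
$$P_{i+1}(y\mid\mathbf{x}) = \left(1-\alpha_{i+1}(\mathbf{x},\mathbf{X}_{i+1})\right)P_i(y\mid\mathbf{x}) + \alpha_{i+1}(\mathbf{x},\mathbf{X}_{i+1})\,H_{\rho_y}(q_i,r_i),$$
with $q_i=P_i(y\mid\mathbf{x})$, $r_i=P_i(Y_{i+1}\mid\mathbf{X}_{i+1})$, and $H_{\rho_y}$ the conditional Gaussian copula of \eqref{eq:Gauss_copdens}. I would then verify the martingale property: conditioning on the history $\mathcal{F}_i$ (the data $y_{1:n},\mathbf{x}_{1:n}$ and all forward draws up to step $i$) \emph{and} on the bootstrap draw $\mathbf{X}_{i+1}$, we have $r_i\sim\mathcal{U}[0,1]$, and since the conditional copula integrates to the marginal identity, $\int_0^1 H_{\rho_y}(q_i,v)\,dv = q_i$, the inner conditional expectation of $P_{i+1}(y\mid\mathbf{x})$ equals $q_i$ for \emph{every} value of $\mathbf{X}_{i+1}$. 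The tower property over $\mathbf{X}_{i+1}$ then gives $E[P_{i+1}(y\mid\mathbf{x})\mid\mathcal{F}_i]=P_i(y\mid\mathbf{x})$, so the extra averaging over the random covariate does not disturb the martingale structure.

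Next I would bound the martingale differences $D_{i+1}=\alpha_{i+1}(\mathbf{x},\mathbf{X}_{i+1})\{H_{\rho_y}(q_i,r_i)-q_i\}$. Using $|H_{\rho_y}(q_i,r_i)-q_i|\leq 1$ and, from \eqref{eq:condit_alpha}, $\alpha_{i+1}(\mathbf{x},\mathbf{x}')\leq (1-\alpha_{i+1})^{-1}\alpha_{i+1}\,g_\rho(\mathbf{x},\mathbf{x}')$, where $g_\rho$ denotes the product of Gaussian copula densities in the numerator of \eqref{eq:condit_alpha}, the key point is that under the Bayesian bootstrap $\mathbf{X}_{i+1}$ takes values only in the finite set $\{\mathbf{x}_1,\ldots,\mathbf{x}_n\}$, so $g_\rho(\mathbf{x},\mathbf{X}_{i+1})$ is bounded almost surely by a finite constant depending only on $\rho$ and $\mathbf{x}$. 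Absorbing $(1-\alpha_{n+1})^{-1}$, this yields $\alpha_{i+1}(\mathbf{x},\mathbf{X}_{i+1})\leq C\alpha_{i+1}$ a.s., hence a uniform difference bound proportional to $C\alpha_{N+1}$ (since $\alpha_i$ is decreasing) and a conditional variance bound proportional to $C^2\alpha_{i+1}^2$. Feeding these into the McDiarmid inequality used for Proposition \ref{prop:concentration}, with variance budget $\sum_{i=N+1}^M C^2\alpha_i^2$, produces the stated exponential bound; the numeric factors $4/3$ and $2$ in the denominator follow from the same bookkeeping as in the unconditional case. The supremum over $y$ is taken freely because neither $C$ nor the difference bound depends on $y$.

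The main obstacle, and the only real departure from Proposition \ref{prop:concentration}, is the random covariate-dependent weight $\alpha_{i+1}(\mathbf{x},\mathbf{X}_{i+1})$: one must argue both that the martingale survives the extra expectation over $\mathbf{X}_{i+1}$ (which holds because the inner conditional mean is $q_i$ \emph{regardless} of $\mathbf{X}_{i+1}$) and that the weight cannot blow up. The latter rests entirely on the finiteness of the bootstrap support, which bounds $g_\rho$ and furnishes $C(\rho,\mathbf{x})$; were the covariates instead resampled from a continuous predictive, $g_\rho$ would be unbounded and this uniform control would fail, requiring a more delicate argument. Everything else is the routine constant-tracking already carried out for Proposition \ref{prop:concentration}.
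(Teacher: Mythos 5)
Your overall architecture is exactly the paper's: verify that $P_i(y\mid\mathbf{x})$ is a martingale under predictive resampling (via $r_i\sim\mathcal{U}[0,1]$ conditionally on $\mathbf{X}_{i+1}$ and the tower property), bound the increments by a constant times $\alpha_{i+1}$, bound the conditional variance by $C^2\alpha_{i+1}^2$ using the fact that the update term lies in $[0,1]$ and so has conditional variance at most $1/4$, and feed both into the McDiarmid/Freedman inequality already used for Proposition \ref{prop:concentration}. The bookkeeping you describe reproduces the stated constants.

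The one genuine flaw is your justification for the bound $\alpha_{i+1}(\mathbf{x},\mathbf{X}_{i+1})\leq C\alpha_{i+1}$. You attribute it to the finiteness of the Bayesian bootstrap support, claiming that $g_\rho(\mathbf{x},\mathbf{X}_{i+1})$ is controlled only because $\mathbf{X}_{i+1}$ ranges over $\{\mathbf{x}_1,\ldots,\mathbf{x}_n\}$, and you assert that a continuous resampling distribution for the covariates would break the argument. Both claims are incorrect, and the first one yields a constant that depends on the observed covariates $\mathbf{x}_{1:n}$, contradicting the statement that $C$ depends only on $\rho$ and $\mathbf{x}$. The paper instead uses the elementary identity
\begin{equation*}
c_\rho\{\Phi(x),\Phi(x')\}=\frac{1}{\sqrt{1-\rho^2}}\exp\left\{-\frac{(\rho x'-x)^2}{2(1-\rho^2)}+\frac{x^2}{2}\right\}\leq \frac{1}{\sqrt{1-\rho^2}}\exp\left\{\frac{x^2}{2}\right\},
\end{equation*}
which is uniform in $x'$. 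Combined with $c_\rho\geq 0$ and $1-\alpha_{i+1}\geq 1/2$, this gives $\alpha_{i+1}(\mathbf{x},\mathbf{x}')\leq 2C\alpha_{i+1}$ for \emph{every} $\mathbf{x}'\in\mathbb{R}^d$ with $C=\prod_{j=1}^d (1-\rho_j^2)^{-1/2}\exp(x_j^2/2)$, so the argument is entirely insensitive to how $\mathbf{X}_{n+1:\infty}$ is resampled. Replacing your support-finiteness step with this pointwise bound closes the gap and recovers the constant as stated.
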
   
It can be shown that $C$ increases as $\mathbf{x}$ moves from the origin. Assuming $x_{1:n}$ is standardized, this implies that the number of forward samples needed for convergence may increase as $\mathbf{x}$ shifts away from the data. The above results can also be easily extended to the classification scenario.

{\subsection{Frequentist consistency of copula density estimation}\label{sec:consistency}
To simulate from the martingale posterior given $\mathbf{Y}_{1:n}$, we start with the density $p_n$ computed from \eqref{eq:mv_DP_copdens}, so we would like to verify that it is indeed an appropriate predictive density.
In this subsection, we thus concern ourselves with the frequentist notion of consistency, that is we look at the properties of the density estimate $p_n$ assuming $\mathbf{Y}_{1:n}$ is i.i.d. from some probability distribution with density function $f_0$ as we take $n \to \infty$. It should be noted that this is distinct from the Doob-type asymptotics of predictive resampling in the previous subsections where we take $N \to \infty$. 

The frequentist consistency of the univariate copula method was first discussed in \citet{Hahn2018} based on the `almost supermartingale' of \cite{Robbins1971}. We will now extend the result to the multivariate copula method, of which the univariate method is a special case. The full proof can be found in Appendix \ref{Appendix:consistency}. Instead of the Kullback-Leibler divergence, we work with the squared Hellinger distance between probability density functions $p_1$ and  $p_2$ on $\mathbf{y}\in\mathbb{R}^d$, defined as
$H^2(p_1,p_2) := 1- \int \sqrt{p_1(\mathbf{y}) \, p_2(\mathbf{y})} \, d\mathbf{y}.$
We then have the main result. 
\begin{theorem}\label{Th:consistency}
For $\mathbf{Y}_{1:n} \iid f_0$, suppose the sequence of densities $p_n(\mathbf{y})$ satisfies the updates in \eqref{eq:mv_DP_copdens}.
Assume that $\rho \in (0,1)$, $\alpha_i = a(i+1)^{-1}$ where  $a< {2}/{5}$, and there exists $B < \infty$ such that 
${f_0(\mathbf{y})}/{p_0(\mathbf{y})} \leq B$
for all $\mathbf{y} \in \mathbb{R}^d$. We then have that $p_n$ is Hellinger consistent at $f_0$, that is
$$
\lim_{n\to \infty}H^2(p_n,f_0)= 0 \quad \textnormal{a.s.}
$$
\end{theorem}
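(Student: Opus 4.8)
The plan is to run an almost-supermartingale (Robbins--Siegmund, \cite{Robbins1971}) argument, following and extending the univariate route of \citet{Hahn2018}, on the Hellinger affinity. Let $\mathcal{F}_i = \sigma(\mathbf{Y}_{1:i})$ be the natural filtration. The decisive structural point is that here we are in the frequentist regime, so $\mathbf{Y}_{i+1}\sim f_0$ independently of $\mathcal{F}_i$; hence $E[\,\cdot\mid\mathcal{F}_i]$ integrates the new datum against $f_0$ rather than against the predictive $P_i$. First I would track the affinity $A_i := \int \sqrt{p_i(\mathbf{y})\,f_0(\mathbf{y})}\,d\mathbf{y} = 1 - H^2(p_i,f_0)$, equivalently $V_i := H^2(p_i,f_0)$, and aim to show $A_i\to 1$ almost surely.

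To obtain a recursion I would substitute the update \eqref{eq:mv_DP_copdens} into $A_{i+1}$, writing $\sqrt{p_{i+1}}=\sqrt{w_i}\,\sqrt{p_i}$ with $w_i = 1-\alpha_{i+1}+\alpha_{i+1}\prod_{j=1}^d c_\rho(u_i^j,v_i^j)$, and expand $\sqrt{w_i}=\sqrt{1+\alpha_{i+1}(\prod_j c_\rho - 1)}$ to second order, which gives a lower bound of the form $\sqrt{w_i}\ge 1+\tfrac{\alpha_{i+1}}{2}(\prod_j c_\rho-1)-C\alpha_{i+1}^2(\prod_j c_\rho-1)^2$. The key simplification is the Rosenblatt change of variables $\mathbf{u}=T_i(\mathbf{y})=\bigl(P_i(y^1),P_i(y^2\mid y^1),\dots,P_i(y^d\mid y^{1:d-1})\bigr)$, whose Jacobian is exactly $p_i(\mathbf{y})$. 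This pushes $p_i$ to the uniform law on $[0,1]^d$ and $f_0$ to $\tilde f_i := (f_0/p_i)\circ T_i^{-1}$; since the Hellinger distance is invariant under this diffeomorphism, $A_i=\int_{[0,1]^d}\sqrt{\tilde f_i}$, and the conditional CDF arguments $u_i^j,v_i^j$ become the natural copula coordinates. Taking $E[\,\cdot\mid\mathcal{F}_i]$ then turns the first-order term into the drift
\begin{equation*}
D_i = \int_{[0,1]^d}\bigl(\mathcal{C}\tilde f_i - 1\bigr)\sqrt{\tilde f_i}\,d\mathbf{u},\qquad (\mathcal{C}h)(\mathbf{u})=\int_{[0,1]^d}\prod_{j=1}^d c_\rho(u^j,v^j)\,h(\mathbf{v})\,d\mathbf{v},
\end{equation*}
the product-Gaussian-copula smoothing operator, which perturbatively (with $\eta_i=\tilde f_i-1$) behaves like $\tfrac12\langle \mathcal{C}\eta_i,\eta_i\rangle$.

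The sign of $D_i$ is controlled via Mehler's expansion: $\mathcal{C}$ is diagonalized by products of Hermite polynomials composed with $\Phi^{-1}$, with nonnegative eigenvalues $\rho^{|\mathbf{k}|}$, so $\mathcal{C}$ is positive semidefinite on mean-zero functions and $D_i\ge 0$, vanishing only when $\tilde f_i\equiv 1$, i.e.\ $p_i=f_0$. For the second-order remainder I would bound it by $C\alpha_{i+1}^2$ times a copula-weighted moment of the ratio $f_0/p_i$; here the hypothesis $f_0/p_0\le B$ propagates through $p_{i+1}\ge(1-\alpha_{i+1})p_i$ to a polynomially growing bound on $f_0/p_i$, and the restriction $a<2/5$ with $\alpha_i=a(i+1)^{-1}$ is precisely what keeps $\sum_i \alpha_{i+1}^2\cdot(\text{bound})<\infty$. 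This yields an almost-supermartingale inequality $E[V_{i+1}\mid\mathcal{F}_i]\le V_i-\alpha_{i+1}D_i+b_i$ with $D_i\ge 0$ and $\sum_i b_i<\infty$ almost surely.

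Robbins--Siegmund then delivers that $V_i\to V_\infty$ almost surely and $\sum_i\alpha_{i+1}D_i<\infty$; since $\sum_i\alpha_i=\infty$, this forces $\liminf_i D_i=0$. I expect the main obstacle to be the final identification step, $V_\infty=0$. The copula operator $\mathcal{C}$ has \emph{no spectral gap} on mean-zero functions, its eigenvalues $\rho^{|\mathbf{k}|}$ decaying to $0$, so $D_i\to 0$ does not by itself bound $\|\tilde f_i-1\|$, and a clean inequality $D_i\ge cV_i$ genuinely fails for high-frequency discrepancies. Overcoming this requires exploiting that each update additionally smooths $\tilde f_i$ and damps its high-frequency content, so that any persistent discrepancy must be low-frequency, precisely where the drift does control it; combining this damping with the almost-sure convergence of $V_i$ should then pin the limit at $V_\infty=0$, giving Hellinger consistency. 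The multivariate aspect, namely verifying that the Rosenblatt transform, the positivity of the product operator, and the moment bounds all carry over from the univariate template of \citet{Hahn2018}, is where most of the bookkeeping lies.
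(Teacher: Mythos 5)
You have correctly assembled most of the machinery the paper uses — the Robbins--Siegmund almost-supermartingale, the Rosenblatt change of variables pushing $p_i$ to the uniform on $[0,1]^d$, the mixture/Mehler representation of the product Gaussian copula operator giving positive semidefiniteness, and the roles of $f_0/p_0\le B$ and $a<2/5$ in making the second-order remainder summable. But there are two genuine gaps. First, by running the recursion on the Hellinger affinity you lose the exact quadratic structure: your drift is $D_i=\langle \mathcal{C}\eta_i,\sqrt{1+\eta_i}\rangle$ with $\eta_i=\tilde f_i-1$, and this is only \emph{perturbatively} equal to $\tfrac12\langle\mathcal{C}\eta_i,\eta_i\rangle$; the higher-order terms $-\tfrac18\langle\mathcal{C}\eta_i,\eta_i^2\rangle+\cdots$ are not sign-controlled, so the claim ``$D_i\ge 0$, vanishing only when $\tilde f_i\equiv 1$'' is not justified, and without a signed drift Robbins--Siegmund does not apply. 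The paper avoids this by running the recursion on the Kullback--Leibler divergence with $\log(1+x)\ge x-2x^2$, for which the drift is \emph{exactly} the quadratic form $T(p)=\langle\mathcal{C}(g-1),g-1\rangle\ge 0$, with $T(p)=0$ iff $p=f_0$ by completeness of the Gaussian location family.

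Second, and more importantly, your concluding step is the one you yourself flag as the main obstacle, and the smoothing/damping argument you sketch is not what closes the proof. The paper never shows that the KL (or Hellinger) limit delivered by Robbins--Siegmund is zero directly — indeed it explicitly notes that the analogous claim in \citet{Hahn2018} is nontrivial. Instead it extracts only $\sum_n\alpha_n T(p_{n-1})<\infty$, hence $\liminf_n T(p_n)=0$, and then applies Fatou's lemma twice: once to get $T(\liminf_n g_n)\le\liminf_n T(g_n)=0$, which by the completeness argument forces $\liminf_n g_n=1$ Lebesgue-a.e.; and once more to get $\liminf_n\int\sqrt{g_n}\ge\int\sqrt{\liminf_n g_n}=1$, which combined with $\int\sqrt{g_n}\le 1$ gives $H^2(g_n,1)\to 0$ and hence, by invariance of the Hellinger distance under the Rosenblatt transform, $H^2(p_n,f_0)\to 0$. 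This liminf-plus-Fatou device is the key idea that sidesteps the absence of a spectral gap entirely; without it (or the KL drift that feeds it), your argument does not reach the conclusion.
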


Intuitively, the update \eqref{eq:mv_DP_copdens} can be regarded as a stochastic gradient descent in the space of probability density functions, where $\alpha_{i+1}$ is the step-size. As is standard in stochastic optimization \citep{Kushner2003}, consistency of the copula method relies delicately on the decay of the sequence $\alpha_i$, which ensures we approach the independent copula at the correct rate. A similar condition is for example discussed in \cite{Tokdar2009} for Newton's algorithm. On the one hand, we require $\sum_{i=1}^\infty \alpha_i  = \infty$ to ensure that the initialization $p_0$ is forgotten. On the other hand, we require the sequence $\alpha_i$ to decay sufficiently quickly to $0$, that is $\sum_{i=1}^\infty \alpha_i^2 < \infty$,  for information to accumulate correctly. The requirement on $a$ also ensures the information in later terms decay properly. Notably, the condition on $a < {2}/{5}$ is different to the suggestion for predictive resampling, so a different choice of $\alpha_n$ may be more suitable when consistency is of primary interest. The second assumption is a regularity condition on the tails of the initial $p_0$ being heavier than $f_0$, which motivates a heavy-tailed initial density as also suggested by \cite{Hahn2018}. Interestingly, the bounded condition on $f_0/p_0$ is the only requirement on $f_0$ for consistency, which  follows from the nonparametric update. However, unlike the KDE there are no conditions on the bandwidth $\rho$, which likely follows from the data-dependence of the copula kernel. 

There are a number of unanswered questions when compared to the consistency of traditional Bayes. The first is whether the martingale posterior converges weakly to the Dirac measure at $F_0$, as we have only shown Hellinger consistency of the posterior mean measure of $P_\infty$. We believe this is likely to be positive, as there is a notion of posterior contraction as in Proposition \ref{prop:concentration}.  A related inquiry is the rate of convergence of $p_n$, or the martingale posterior on $p_\infty$, to the true $f_0$. The second and more ambitious question is whether the above approach provides a general method to prove consistency for other copula models. For the multivariate copula method, we only require the weak tail condition on $f_0$, but the proof relies heavily on the nonparametric nature of the update. It is still unclear what the conditions would be if the copula sequence corresponded to a parametric Bayesian model, such as the examples given in \cite{Hahn2018}. In the absence of the prior under the predictive view, a question of interest is whether an analogue to the Kullback-Leibler property of the traditional Bayesian prior (e.g. \citet[Definition 6.15]{Ghosal2017}) exists, which would highlight a predictive notion of model misspecification. }

\section{Discussion}\label{sec:discussion}
{ We see that Bayesian uncertainty at its core is concerned with the missing observations required to know any statistic of interest precisely. In the i.i.d. case, this is $Y_{n+1:\infty}$, and our task is to obtain the joint distribution $p(y_{n+1:\infty} \mid y_{1:n})$, which is simplified through the factorization into a sequence of of 1-step ahead predictive densities.  One open question is whether there are more general methods to elicit this joint beyond the likelihood-prior construction and the prequential factorization. For the more general data setting, the Bayesian would be tasked with eliciting $p(y_{\text{mis}} \mid y_{\text{obs}})$, where the  missing observations $y_{\text{mis}}$ would be specific to the setting and statistic of interest. We highlight that $y_{\text{mis}}$ must be sufficiently large to compute the statistic precisely, unlike in multiple imputation \citep{Rubin2004} where the imputed data is often finite and for computational convenience. For future work, identifying $y_{\text{mis}}$ and extending the methodology in more complex data settings such as time series or hierarchical data is of primary interest.}

In terms of practical methodology,  it is worth comparing when one would prefer to use the Bayesian bootstrap versus the copula methods. When the data is high-dimensional but a low-dimensional statistic is of interest, the copula methods may not be suitable, as computing the density on a grid or sampling the data directly is required. Fortunately, the Bayesian bootstrap shines in this setting. On the other hand, the discreteness of the Bayesian bootstrap makes it unsuitable for when smoothness is required, for example when the density is directly of interest, or in regression where we rely on smoothness with $x$. In these settings, the copula methods are highly suitable. Together, the predictive framework allows us to cover a wide variety of settings with practical advantages over the traditional Bayesian approach.

{ We believe our framework offers interesting insight into the interplay between Bayesian and frequentist approaches.  As we have seen through the lens of the Bayesian bootstrap, Bayesians and frequentists are concerned with $Y_{n+1:\infty}$ and $Y_{1:n}$ respectively.  Analysis of the frequentist asymptotic properties of martingale posteriors also offers new challenges, as we must work with the predictive distribution directly, and it is unclear if the methods used in our paper generalize to other copula models.  For generalizations of our martingale posterior framework,  imputing aspects of the population instead of the entire population directly may also help bridge the gap between Bayesian and frequentist methods. In the hierarchical example in Section \ref{sec:intro}, we can in fact treat $\theta_i$ as the mean of population $i$ from which we observe a single sample $y_i$. We would thus be imputing the means of observation populations (i.e. the random effects) instead of the entire population of observables directly. This interpretation would align well with our philosophy of only imputing what one would need to carry out the statistical task.}

\newpage
\section*{Acknowledgements}
The authors are grateful for the detailed comments of three referees and the Associate Editor on the previous version of the paper. The authors also thank Sahra Ghalebikesabi, Brieuc Lehmann, Geoff Nicholls, George Nicholson and Judith Rousseau for their helpful comments. Fong is funded by The Alan Turing Institute Doctoral Studentship, under the EPSRC grant EP/N510129/1. Holmes is supported by The Alan Turing Institute, the Health Data Research, U.K., the Li Ka Shing Foundation, the
Medical Research Council, and the U.K. Engineering and Physical
Sciences Research Council.

\bibliographystyle{apalike}

\bibliography{paper-ref}

\newpage
\begin{appendices}
\setcounter{theorem}{7} 
\setcounter{proposition}{2}
\setcounter{figure}{13}
\numberwithin{equation}{section}

\section{Notation}\label{Appendix:notation}
In this section, we summarize the notation used in the main paper. In the table below, we provide a summary of the notation introduced in Section 1 and provide some concrete examples after.
\begin{table}[!h]
\small
\begin{center}
{\def\arraystretch{1.3}\tabcolsep=10pt
\begin{tabular}{ c|c |l }
 &  \textbf{Notation} & \textbf{Definition} \\
 \hline 
Data& $Y$& One data unit as a random variable, e.g. one row in a data table\\
& $y$ & Observable as a fixed realisation of $Y$, or input into a PDF/CDF  \\
& $n$& The size of the data set, or the number of observed data units\\
&$N$ & The size of the study population (or approximating $\infty$)\\
& $ Y_{1:N}, Y_{1:\infty}$ & The conceptual complete data table for the whole study population \\
   \hline
  Distributions & $F_0$&  The true, unknown sampling distribution function where $Y_{1:N} \sim F_0$\\  
& $F_N, F_\infty$ &  The (limiting) empirical distribution function of the imputed population $Y_{1:N}$ \\
& $P_N, P_\infty$ &  The (limiting) predictive distribution function of the imputed  $Y_{1:N}$ \\
   \hline 
  Parameters& $\Theta$ & Bayesian parameter as a random variable, with distribution  $\Pi(\cdot)$ \\
  &$\bar{\theta}_N $& Posterior mean of $\Theta$ computed from $Y_{1:N}$\\
     &$\Pi(\theta \mid y_{1:n})$& The conventional Bayesian posterior distribution  \\
   & $\theta_0$ & True parameter or estimand of interest, computed from $F_0$ \\
   & $\theta_N, \theta_\infty$ & The estimate of $\theta_0$, computed from the imputed population $Y_{1:N}, Y_{1:\infty}$  \\
    &$\Pi_N(\theta_N \mid y_{1:n})$,& \multirow{2}{*}{The (finite) martingale posterior distribution} \\ 
    &$\Pi_{\infty}(\theta_\infty \mid y_{1:n})$ &\\
   \hline
\end{tabular}
}
\caption{Notation for some key values} \label{tab:notation}
\end{center}
\end{table}
\vspace{-5mm}
\subsection{Parameters}
Formally, we write $\theta(F)$ as a functional which takes as input a distribution function $F(y)$ and returns a vector in $\mathbb{R}^p$. In some cases, it can be written as
$$
\theta(F) = \argmin_\theta \int \ell(\theta,y) \, dF(y)
$$
for a loss function of interest, where for example we may have the mean functional as 
$$
\theta(F) = \int y \, dF(y).
$$
The true parameter/statistic of interest is then $\theta_0= \theta(F_0)$. The input distribution function may also be the atomic empirical distribution $F_N$, in which case $\theta_N = \theta(F_N)$. We also use the notation $\theta(Y_{1:N})$ interchangeably with $\theta(F_N)$. In the mean example then, we have
$$
\theta(Y_{1:N}) = \theta(F_N) = \frac{1}{N}\sum_{i=1}^N Y_i \cdot
$$
In the limiting case, we write $\theta_\infty = \theta(F_\infty)= \theta(Y_{1:\infty})$, where
$$
F_\infty = \lim_{N\to\infty}F_N.
$$
In some cases where $\theta(F)$ requires $F$ to be smooth, we may instead pass in a smooth predictive distribution, that is $\theta_N= \theta(P_N)$, and similarly $\theta_\infty = \theta(P_\infty)$.
  
 \newpage
 
\section{Bayesian inference as missing data}\label{Appendix:tables}
We illustrate the conceptual idea of treating Bayesian inference as a missing population imputation in the visualization below.
\begin{table}[!htb]
     \begin{minipage}{.48\textwidth}
        \caption*{\textbf{A}: Population data table \vspace{-3mm}}
        \centering
            \begin{tabular}{@{\extracolsep{1pt}}c|ccc} 
                \hline 
                Unit & A &B& C\\ 
                \hline  
                1 & 10 & 5 & 8     \\ 
                2 & 5& 21 & 13  \\ 
                3 & 13 &12 & 17 \\ 
                4 & 6 & 7 & 10 \\ 
                \vdots & \vdots & \vdots & \vdots \\ 
                $N$ & 21 & 13 & 11  \\ 
                \hline 
            \end{tabular} 
    \end{minipage}
    \begin{huge}
    $\hspace{-9.3mm}\xRightarrow{\begin{subarray}{c}\text{\small The study } \end{subarray}}\hspace{-9.2mm}$
    \end{huge}
    \begin{minipage}{.48\textwidth}
        \centering
        \caption*{\textbf{B}: Observed data \vspace{-3mm}}
            \begin{tabular}{@{\extracolsep{1pt}}c|ccc} 
                \hline 
                Unit & A &B& C\\ 
                \hline 
                1 & 10 & 5 & 8     \\ 
                2 & ?& ? & ?  \\ 
                3 & ? &? & ? \\ 
                4 & 6 & 7 & 10 \\ 
                \vdots & \vdots & \vdots & \vdots \\ 
                $N$ & ? & ? & ?  \\ 
                \hline 
            \end{tabular} 
    \end{minipage}
    
\vspace{6mm}
  \begin{minipage}{.48\textwidth}
        \caption*{\textbf{C}: Exchangeability \vspace{-3mm}}
        \centering
            \begin{tabular}{@{\extracolsep{1pt}}cccc} 
                \hline 
                Unit & A &B& C\\ 
                \hline 
                1 & 10 & 5 & 8     \\ 
                $n=2$ & 6 & 7 & 10 \\ 
                \hdashline 
                3 & ? &? & ? \\ 
                4  & ? &? & ?\\ 
                \vdots & \vdots & \vdots & \vdots \\ 
                $N$ & ? & ? & ?  \\ 
                \hline 
            \end{tabular} 
    \end{minipage}
    \begin{huge}
    $\hspace{-10mm}\xRightarrow{\begin{subarray}{c}
                \text{\small Predictively}\\
                \text{\small resample}
            \end{subarray}}\hspace{-10mm}$
    \end{huge}
    \begin{minipage}{.48\textwidth}
        \centering
        \caption*{\textbf{D}: Imputed population 1 \vspace{-3mm}}
            \begin{tabular}{@{\extracolsep{1pt}}c|ccc} 
                \hline 
                Unit & A &B& C\\ 
                \hline 
                1 & 10 & 5 & 8     \\ 
                2 & 6 & 7 & 10   \\ 
                \hdashline
                3 &{\bf{4}} & {\bf{20}} & {\bf{12}} \\ 
                4 & {\bf{12}} & {\bf{12}} &  {\bf{18}} \\ 
                \vdots & \vdots & \vdots & \vdots \\ 
                $N$ & {\bf{19}} & {\bf{15}} & {\bf{12}}  \\ 
                \hline 
            \end{tabular} 
    \end{minipage}
    
    \vspace{3mm}
    
      \begin{minipage}{.48\textwidth}
       \caption*{} \vspace{-5mm} 
        \centering
    \end{minipage}
    \begin{huge}
    $\hspace{-10mm}\xRightarrow{\begin{subarray}{c}
                \text{\small Predictively}\\
                \text{\small resample}
            \end{subarray}}\hspace{-10mm}$
    \end{huge}
    \begin{minipage}{.48\textwidth}
        \centering
        \caption*{\textbf{ } Imputed population 2 \vspace{-3mm}}
            \begin{tabular}{@{\extracolsep{1pt}}c|ccc} 
                \hline 
                Unit & A &B& C\\ 
                \hline 
                1 & 10 & 5 & 8     \\ 
                2 & 6 & 7 & 10   \\ 
                \hdashline
                3 &{\bf{6}} & {\bf{18}} & {\bf{13}} \\ 
                4 & {\bf{10}} & {\bf{9}} & {\bf{21}} \\ 
                \vdots & \vdots & \vdots & \vdots \\ 
                $N$ & {\bf{15}} & {\bf{12}} & {\bf{16}} \\ 
                \hline 
            \end{tabular} 
    \end{minipage}
    \vspace{5mm}

    \caption*{{\bf{A}}: The conceptual complete target population data table. If this table was known, then there would be no uncertainty in the statistic of interest, $\theta_0 = \theta(Y_{1:N})$, or in any resulting decision.\\  {\bf{B}}: The experiment or observational study reveals $n$ data units selected at random from the population data table. Uncertainty in $\theta(Y_{1:N})$ arises from the remaining missing data marked $(?)$. \\ {\bf{C}}: Following an assumption of exchangeability, we can relabel the observed units from 1 to $n$. \\{\bf{D}}: A predictive model allows us to impute the missing data $Y_{n+1:N} \sim p(\cdot \mid y_{1:n})$ via predictive resampling to create full synthetic data tables. The imputed synthetic data tables gives us corresponding estimates $\{\theta_N^{(1)}, \theta_N^{(2)}, \ldots\}$,  which are posterior samples that characterise the Bayesian uncertainty in $\theta_0$ arising from the missing $Y_{n+1:N}$. This notion of Bayesian inference as imputation is connected to the ideas of \citet{Rubin1974,Rubin2008}.}
\end{table}
\newpage
\section{Limiting predictive and empirical distribution}\label{Appendix:limit_pred_emp}
In this section, we summarize some asymptotic properties of conditionally identically distributed (c.i.d.) sequences and exchangeable sequences from the literature. In particular, we look at the equivalence of the limiting predictive and empirical distributions which we use in the main paper.

\subsection{Conditionally identically distributed sequences}\label{Appendix:cid_pred_emp}
We begin with the more general class of c.i.d. sequences, of which exchangeable sequences are a subset. We have the following strong law for c.i.d. sequences.
\begin{theorem}\label{Th:CID}[\citet[Theorem~2.2]{Berti2004}]
Suppose the sequence $Y_{1},Y_{2},\ldots$ is c.i.d. with respective predictive distribution functions $P_0,P_{1},\ldots$, where $P_N$ is conditional on $\mathcal{F}_N:=\sigma(Y_1,\ldots, Y_N)$. We then have
\begin{equation*}
F_\infty(y) := \lim_{N\to \infty} \frac{1}{N}\sum_{i=1}^N \mathbbm{1}(Y_i \leq y) = P_{\infty}(y) \quad \text{a.s.}
\end{equation*} 
for each $y \in \mathbb{R}$, where $P_{\infty}:=\lim_{N \to \infty} P_N$ is the limiting random predictive distribution function from the martingale posterior, obtained through predictive resampling as in Condition \ref{cond:existence}.
\end{theorem}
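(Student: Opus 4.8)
The plan is to fix $y \in \mathbb{R}$ and decompose the empirical distribution function into a martingale ``fluctuation'' term plus a predictable ``drift'' term, treating each separately on a $y$-dependent almost-sure event (no uniformity in $y$ is needed, since the theorem is stated pointwise). Writing $X_i = \mathbbm{1}(Y_i \leq y)$, the defining c.i.d. property applied with $k=1$ at stage $i-1$ gives $E[X_i \mid \mathcal{F}_{i-1}] = P(Y_i \leq y \mid Y_{1:i-1}) = P_{i-1}(y)$. Hence $D_i := X_i - P_{i-1}(y)$ is a bounded martingale difference sequence with $|D_i| \leq 1$, and we have the exact decomposition
\[
\frac{1}{N}\sum_{i=1}^N \mathbbm{1}(Y_i \leq y) = \frac{1}{N}\sum_{i=1}^N D_i + \frac{1}{N}\sum_{i=1}^N P_{i-1}(y).
\]

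First I would show the fluctuation term vanishes via a martingale strong law. Consider the martingale $T_N = \sum_{i=1}^N D_i / i$. Since $E[D_i^2 \mid \mathcal{F}_{i-1}] \leq 1$, the sum of its conditional variances satisfies $\sum_{i=1}^\infty i^{-2}\,E[D_i^2 \mid \mathcal{F}_{i-1}] \leq \sum_{i=1}^\infty i^{-2} < \infty$ almost surely, so $T_N$ converges almost surely by the martingale convergence theorem. Kronecker's lemma then yields $N^{-1}\sum_{i=1}^N D_i \to 0$ almost surely.

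For the drift term I would invoke the martingale property of the predictive itself. Equation \eqref{eq:martingale} states $E[P_i(y) \mid \mathcal{F}_{i-1}] = P_{i-1}(y)$, so $P_N(y)$ is a bounded martingale and by Doob's theorem $P_N(y) \to P_\infty(y)$ almost surely; this is precisely the existence guaranteed by Condition \ref{cond:existence} and Theorem \ref{Th:weak}. Because $P_{i-1}(y) \to P_\infty(y)$ almost surely, the Ces\`aro averages converge to the same limit, $N^{-1}\sum_{i=1}^N P_{i-1}(y) \to P_\infty(y)$ almost surely. Combining the two limits in the displayed decomposition gives $F_\infty(y) = P_\infty(y)$ almost surely, as claimed.

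The proof is fairly clean, so the main obstacle is less a conceptual barrier than a matter of deploying the right tool at the right place: the delicate step is the martingale strong law for the fluctuation term, where one must pass through the auxiliary martingale $T_N = \sum D_i/i$ and Kronecker's lemma rather than attempting a direct variance bound on $N^{-1}\sum D_i$. A secondary point requiring care is that the argument uses \emph{two} distinct consequences of the c.i.d. assumption: the one-step conditional identity $E[X_i\mid\mathcal F_{i-1}]=P_{i-1}(y)$ for the martingale-difference structure, and the full martingale property of $P_N(y)$ for the drift. Both hold under c.i.d. by \eqref{eq:martingale}, and since every convergence statement is for a fixed $y$, no empirical-process or uniform-convergence machinery is required.
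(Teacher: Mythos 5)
Your proof is correct. Note that the paper does not actually prove this statement: it quotes it directly from \citet[Theorem~2.2]{Berti2004}, so there is no internal proof to compare against; your argument is essentially the standard one underlying that cited result. The decomposition into the martingale-difference part $N^{-1}\sum_i D_i$ (killed by the $L^2$-bounded auxiliary martingale $\sum_i D_i/i$ plus Kronecker) and the Ces\`aro average of $P_{i-1}(y)$ (which converges because $P_N(y)$ is a bounded martingale) is exactly the right mechanism, and the pointwise-in-$y$ framing matches the statement. One small attribution quibble: the identity $E[\mathbbm{1}(Y_i\leq y)\mid \mathcal{F}_{i-1}]=P_{i-1}(y)$ is just the definition of the one-step-ahead predictive and holds for any sequence, not a consequence of the c.i.d. assumption; the c.i.d. hypothesis enters the argument only through the martingale property $E[P_i(y)\mid\mathcal{F}_{i-1}]=P_{i-1}(y)$ that drives the drift term. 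This does not affect the validity of the proof.
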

In summary, the limiting empirical is equivalent to the limiting predictive distribution for c.i.d. sequences (with exchangeable sequences as a special case), which justifies the interchangeability of $F_\infty$ and $P_\infty$ as discussed in Section \ref{sec:pred_coherence}.

\subsubsection{Convergence of the parameter} \label{Appendix:limit_param}
Here we consider the convergence of parameters of interest $\theta_0$ which take the form of
$$
\theta_0 = \int g(y)\, dF_0(y)
$$
for the martingale posterior with c.i.d. sequences. This form of the parameter is a special case of the more general $ \argmax_{\theta}\int \ell(\theta,y) \, dF_0(y)$, which is difficult to analyze due to the stronger convergence required of the entire function $\int \ell(\theta,y) \, dF_0(y)$. For a finite predictive sample of size $N$, we can write the parameter estimate as
 $$\theta(F_N) = \int g(y) \, dF_N(y) = \frac{1}{N}\sum_{i=1}^N g(y_i).$$
Alternatively, if we work directly with the predictive distribution function, we have
  $$\theta(P_N) = \int g(y) \, dP_N(y).$$
The more general strong law for c.i.d. sequences assures us that $\theta(Y_{1:N})$ converges with $N$ in both settings to $\theta_\infty$ from the martingale posterior almost surely. 
\begin{theorem}\label{Th:CID_param}[\citet[Lemma~2.1, Theorem~2.2]{Berti2004}]
Suppose the sequence $Y_{1},Y_{2},\ldots$ is c.i.d. with respective predictive distribution functions $P_0,P_{1},\ldots$, where $P_N$ is the predictive distribution function conditioned on $\mathcal{F}_N$. For a measurable function $g: \mathcal{Y}\to \mathbb{R}$ such that $E\left[|g(Y_1)| \right] < \infty$, we have that
\begin{equation*}
\lim_{N\to \infty} \frac{1}{N}\sum_{i=1}^N g(Y_i) = \theta_\infty
\end{equation*} 
almost surely and in $L_1$. Likewise, we have that
\begin{equation*}
\lim_{N\to \infty} \int g(y) \, dP_N(y) = \theta_\infty
\end{equation*} 
almost surely and in $L_1$. Furthermore, $\theta_\infty$ is integrable and 
$$E\left[\theta_\infty \mid \mathcal{F}_N\right] = \theta(P_N).$$
\end{theorem}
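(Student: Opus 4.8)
The plan is to build the whole argument around the single process $M_N := \int g(y)\, dP_N(y)$ and recognize it as a martingale. First I would record that the c.i.d. property, taken in its unconditional form, forces $Y_1,Y_2,\ldots$ to be identically distributed, so that $E[|g(Y_i)|] = E[|g(Y_1)|] < \infty$ for every $i$ and every object below is integrable. Next, since $P_N$ is the conditional law of $Y_{N+1}$ given $\mathcal{F}_N$, I would write $M_N = E[g(Y_{N+1}) \mid \mathcal{F}_N]$ and verify the martingale identity $E[M_N \mid \mathcal{F}_{N-1}] = M_{N-1}$: the tower property collapses the left side to $E[g(Y_{N+1}) \mid \mathcal{F}_{N-1}]$, and the defining c.i.d. equality $E[g(Y_{N+1}) \mid \mathcal{F}_{N-1}] = E[g(Y_N) \mid \mathcal{F}_{N-1}]$ identifies this with $M_{N-1}$.

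With $(M_N)$ an $(\mathcal{F}_N)$-martingale satisfying $E[|M_N|] \leq E[|g(Y_{N+1})|] = E[|g(Y_1)|]$, it is $L_1$-bounded, so Doob's martingale convergence theorem yields an integrable limit $M_N \to \theta_\infty$ almost surely; this is precisely the claimed convergence of $\int g\, dP_N = \theta(P_N)$. To upgrade to $L_1$ convergence and, crucially, to close the martingale, I would show $(M_N)$ is uniformly integrable. Conditional Jensen gives $|M_N| \leq E[|g(Y_{N+1})| \mid \mathcal{F}_N]$, and since $\{|M_N|>a\}\in\mathcal{F}_N$ a two-parameter split $E[|g(Y_{N+1})|\mathbbm{1}\{|M_N|>a\}] \leq c\,P(|M_N|>a) + E[|g(Y_1)|\mathbbm{1}\{|g(Y_1)|>c\}]$ (using the identical distribution of the $g(Y_i)$ and the uniform bound $P(|M_N|>a)\leq E[|g(Y_1)|]/a$) makes the family uniformly integrable: choose $c$ large, then $a$ large. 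Uniform integrability then delivers $L_1$ convergence and the closure identity $E[\theta_\infty \mid \mathcal{F}_N] = M_N = \theta(P_N)$, the final assertion.

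It remains to treat the empirical average, for which I would decompose
$$\frac{1}{N}\sum_{i=1}^N g(Y_i) = \frac{1}{N}\sum_{i=1}^N M_{i-1} + \frac{1}{N}\sum_{i=1}^N \bigl(g(Y_i) - M_{i-1}\bigr),$$
where $M_{i-1} = E[g(Y_i)\mid\mathcal{F}_{i-1}]$. Since $M_{i-1}\to\theta_\infty$ almost surely, Cesàro averaging sends the first term to $\theta_\infty$. The second term is a normalized sum of martingale differences $D_i := g(Y_i) - M_{i-1}$, which I would argue vanishes almost surely. When $g(Y_1)\in L^2$ this is immediate: $\sum_i D_i/i$ is an $L_2$-bounded martingale (as $\sum_i E[D_i^2]/i^2 < \infty$), hence converges almost surely, and Kronecker's lemma gives $N^{-1}\sum_{i\leq N} D_i \to 0$.

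The main obstacle is the general $L_1$ case, where square-integrability is unavailable and the Kronecker argument cannot be applied directly. Here I would truncate, writing $g = g\mathbbm{1}\{|g|\leq c\} + g\mathbbm{1}\{|g|>c\}$ and controlling the martingale-difference sum of the bounded part by the $L_2$ argument while bounding the tail part in $L_1$ uniformly via the identical distribution of the $g(Y_i)$; letting $c\to\infty$ after $N\to\infty$ then forces the normalized martingale-difference sum to $0$ almost surely. This truncated martingale strong law, which is exactly the content of \citet[Theorem~2.2]{Berti2004}, is the delicate step; everything else reduces to the martingale structure of $M_N$ already established, and the agreement of the two limits reproduces the identity $F_\infty = P_\infty$ recorded in Theorem~\ref{Th:CID}.
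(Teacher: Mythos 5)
The paper does not actually prove this statement; it is imported wholesale from \citet{Berti2004} (their Lemma~2.1 and Theorem~2.2), so there is no internal proof to compare against. Your reconstruction follows the same route as the cited source: the first half of your argument --- identifying $M_N=\int g\,dP_N=E[g(Y_{N+1})\mid\mathcal{F}_N]$, verifying the martingale property from the c.i.d.\ identity $E[g(Y_{N+1})\mid\mathcal{F}_{N-1}]=E[g(Y_N)\mid\mathcal{F}_{N-1}]$, invoking Doob for the a.s.\ limit, and establishing uniform integrability via the split $E[|g(Y_{N+1})|\mathbbm{1}\{|M_N|>a\}]\leq c\,P(|M_N|>a)+E[|g(Y_1)|\mathbbm{1}\{|g(Y_1)|>c\}]$ (which legitimately uses that c.i.d.\ sequences are identically distributed) --- is complete and correct, and it delivers the convergence of $\theta(P_N)$, the $L_1$ statement, and the closure identity $E[\theta_\infty\mid\mathcal{F}_N]=\theta(P_N)$. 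The Ces\`aro decomposition of the empirical average into $N^{-1}\sum_i M_{i-1}$ plus a normalized martingale-difference sum, and the $L^2$/Kronecker treatment of the latter under square-integrability, are also right.

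The one genuine gap is the final truncation step for general $g\in L_1$. As described --- truncate at a fixed level $c$, control the tail part ``in $L_1$ uniformly,'' then let $c\to\infty$ after $N\to\infty$ --- the argument does not close for \emph{almost sure} convergence: you end up needing $\limsup_N N^{-1}\sum_i\bigl(|h_c(Y_i)|+E[|h_c(Y_i)|\mid\mathcal{F}_{i-1}]\bigr)$ to be small a.s., where $h_c=g\mathbbm{1}\{|g|>c\}$, and a small expectation for each summand only bounds the $\liminf$ of the Ces\`aro average via Fatou, not the $\limsup$; there is no dominating function to run reverse Fatou, and without independence or ergodicity the Ces\`aro average of the identically distributed tails need not concentrate at its mean. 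The standard repair (and essentially what Berti et al.\ do) is to truncate at level $i$ rather than at a fixed $c$: Borel--Cantelli with $\sum_i P(|g(Y_i)|>i)\leq E[|g(Y_1)|]<\infty$ disposes of the discrepancy between $g(Y_i)$ and $g(Y_i)\mathbbm{1}\{|g(Y_i)|\leq i\}$, the Kolmogorov-type bound $\sum_i i^{-2}E[g(Y_i)^2\mathbbm{1}\{|g(Y_i)|\leq i\}]<\infty$ (a computation that again only uses identical marginals) feeds the $L^2$ martingale convergence plus Kronecker, and a separate centering term must then be controlled. Since you explicitly flag this step and attribute it to \citet[Theorem~2.2]{Berti2004}, the proposal is acceptable as a citation-backed argument --- which is all the paper itself offers --- but as a self-contained proof the fixed-$c$ truncation would fail.
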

However, we prefer to define the parameter as a function of $F_\infty$ instead of a limiting parameter, that is $\theta_\infty = \theta(F_\infty)$, as $F_\infty$ always exists for c.i.d. sequences.
\subsection{Exchangeable sequences}\label{Appendix:Bayes_pred_emp}
Given the model specification of the likelihood and prior, we have random variables $(\Theta, Y_1,Y_2,\ldots)$ on some probability space which have the joint density
\begin{equation}\label{Appendix:eq:joint_param}
p(\theta,y_{1:N}) =\pi(\theta)\, \prod_{i=1}^N f_\theta(y_i)
\end{equation}
for all $N$. We write our random sequence of posterior predictive distribution functions as
\begin{equation}\label{Appendix:eq:random_predictive}
P_{N}(y) := P(Y_{N+1} \leq y \mid \mathcal{F}_{n})
\end{equation}
for all $N$. 
We then have the following equivalence result.
\begin{theorem}\label{Th:Predictive_Doob}[\citet{Doob1949,Lijoi2004}]
Suppose that $(\Theta,Y_{1:N})$ are distributed according to $P$ with density (\ref{Appendix:eq:joint_param}), then the sequence of predictive distribution functions satisfies
\begin{equation*}
P_\infty(y) := \lim_{N\to \infty} P_N(y) = F_{\Theta}(y)\quad\textnormal{a.s.} 
\end{equation*} 
for each $y\in \mathbb{R}$. Furthermore, this holds when our parameter space is the family of all densities on $\mathbb{R}$, that is $\Theta = f$ where $f$ is a random density,  and we define $F_{{f}}(y) = \int_{-\infty}^y {f}(z) \, dz$. 
\end{theorem}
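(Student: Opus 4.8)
The plan is to recognize $P_N(y)$ as a Doob martingale in $N$ and to identify its almost sure limit with $F_\Theta(y)$ via a conditional strong law of large numbers. First I would fix $y \in \mathbb{R}$ and write $P_N(y) = E\left[\mathbbm{1}(Y_{N+1}\leq y) \mid \mathcal{F}_N\right]$. Conditioning further on $\Theta$ and using that, under \eqref{Appendix:eq:joint_param}, the variables $Y_1,Y_2,\ldots$ are i.i.d.\ given $\Theta$ (so that $Y_{N+1}$ is independent of $\mathcal{F}_N$ given $\Theta$), the tower property yields
\begin{equation*}
P_N(y) = E\left[E\left[\mathbbm{1}(Y_{N+1}\leq y)\mid \Theta,\mathcal{F}_N\right] \mid \mathcal{F}_N\right] = E\left[F_\Theta(y)\mid \mathcal{F}_N\right].
\end{equation*}
Thus $\{P_N(y)\}_{N\geq 0}$ is exactly the Doob martingale generated by the fixed, bounded ($[0,1]$-valued, hence integrable) random variable $F_\Theta(y)$ with respect to the filtration $\{\mathcal{F}_N\}$.

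Next I would invoke L\'evy's upward convergence theorem, a direct consequence of the martingale convergence theorem already relied upon in the paper: since $F_\Theta(y)$ is integrable,
\begin{equation*}
P_N(y) = E\left[F_\Theta(y)\mid \mathcal{F}_N\right] \longrightarrow E\left[F_\Theta(y)\mid \mathcal{F}_\infty\right] \quad \text{a.s.},
\end{equation*}
where $\mathcal{F}_\infty = \sigma(Y_1,Y_2,\ldots)$. It therefore remains to show that $F_\Theta(y)$ is, almost surely, equal to an $\mathcal{F}_\infty$-measurable random variable, for then the conditioning is redundant and the limit collapses to $F_\Theta(y)$.

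The crux is establishing this measurability, where I would use the conditional strong law. Conditional on $\Theta=\theta$, the $Y_i$ are i.i.d.\ with $E\left[\mathbbm{1}(Y_i\leq y)\right] = F_\theta(y)$, so the SLLN gives $\tfrac{1}{N}\sum_{i=1}^N \mathbbm{1}(Y_i\leq y) \to F_\theta(y)$ almost surely for $\Pi$-a.e.\ $\theta$; a Fubini argument then lifts this to convergence almost surely under the joint law $P$. The limit $L(y) := \lim_N \tfrac{1}{N}\sum_{i=1}^N \mathbbm{1}(Y_i\leq y)$ depends only on the sequence $(Y_1,Y_2,\ldots)$ and is hence $\mathcal{F}_\infty$-measurable, while the preceding limit shows $L(y)=F_\Theta(y)$ a.s. Consequently $E\left[F_\Theta(y)\mid \mathcal{F}_\infty\right]=L(y)=F_\Theta(y)$ a.s., which combined with the martingale limit gives $P_\infty(y)=F_\Theta(y)$ a.s. Alternatively, since an exchangeable sequence is c.i.d., one could obtain $P_\infty(y)=F_\infty(y)$ directly from Theorem \ref{Th:CID} and then identify $F_\infty(y)=F_\Theta(y)$ by the same conditional SLLN.

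The main obstacle is precisely this measurability step: turning the pointwise-in-$\theta$ strong law into an almost sure statement under the joint law requires care with the exceptional null sets, which may depend on $\theta$, and which the Fubini argument handles for the fixed $y$ under consideration. I note that no identifiability of $\Theta$ is actually needed, since we only recover the value $F_\Theta(y)$ rather than $\Theta$ itself. For the nonparametric extension, the argument is verbatim with the random density $f$ in place of $\theta$: conditional on $f$ the observations are i.i.d.\ with distribution function $F_f(y)=\int_{-\infty}^y f(z)\,dz$, the SLLN renders $F_f(y)$ an $\mathcal{F}_\infty$-measurable quantity, and the same Doob martingale collapse yields $P_\infty(y)=F_f(y)$ a.s.
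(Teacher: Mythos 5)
Your proof is correct. Note that the paper does not actually prove this statement --- it is stated as a citation to \citet{Doob1949} and \citet{Lijoi2004} --- so you have supplied the standard argument that underlies those references, and it is sound: the identity $P_N(y)=E[F_\Theta(y)\mid\mathcal{F}_N]$ exhibits $P_N(y)$ as a Doob (closed) martingale, L\'evy's upward theorem gives the a.s.\ limit $E[F_\Theta(y)\mid\mathcal{F}_\infty]$, and the conditional SLLN plus the disintegration $P(A)=\int P(A\mid\theta)\,d\Pi(\theta)$ shows $F_\Theta(y)$ coincides a.s.\ with the $\mathcal{F}_\infty$-measurable limit of the empirical frequencies, collapsing the conditional expectation. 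Your observation that identifiability is not needed here (one recovers the value $F_\Theta(y)$, not $\Theta$) is exactly why this result is stated separately from Theorem \ref{Th:Doob_consistency}, and your alternative route through Theorem \ref{Th:CID} is the one the paper itself uses to pass between $P_\infty$ and $F_\infty$. Two small points worth making explicit if you wrote this out in full: you need measurability of $\theta\mapsto F_\theta(y)$ for $F_\Theta(y)$ to be a random variable (this is among the standing measurability assumptions on $f_\theta$), and in the nonparametric extension the phrase ``conditional on $f$'' presupposes a regular conditional distribution on a suitable (Polish) space of densities, which is the technical content that \citet{Lijoi2004} supply; the argument is otherwise verbatim as you say.
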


From Theorems \ref{Th:CID} and \ref{Th:Predictive_Doob}, we then have a strong law result below, which justifies using the limiting empirical distribution function $F_\infty$ as in Section \ref{sec:loss_functions}.
\begin{theorem}\label{Th:Empirical_Doob}[\citet[Theorem~2.2]{Berti2004}]
Suppose that $(\Theta,Y_{1:N})$  are distributed according to $P$ with density (\ref{Appendix:eq:joint_param}), then the sequence of empirical distribution functions satisfies
\begin{equation*}
F_\infty(y) :=\lim_{N\to \infty} \frac{1}{N}\sum_{i=1}^N \mathbbm{1}(Y_i \leq y) = F_{\Theta}(y) \quad\textnormal{a.s.}
\end{equation*} 
for each $y\in \mathbb{R}$, where $\mathbbm{1}(A)$ is the indicator function for the event A.   Again, this holds when $\Theta =f$,  and $F_{{f}}(y) = \int_{-\infty}^y {f}(z) \, dz$. 
\end{theorem}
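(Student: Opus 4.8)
The plan is to obtain the statement by chaining the two preceding results, since an exchangeable sequence is in particular c.i.d. First I would invoke the fact (noted in Section~\ref{sec:pred_coherence}, following \citet{Kallenberg1998}) that the exchangeable sequence $Y_1, Y_2, \ldots$ generated by the joint density \eqref{Appendix:eq:joint_param} is conditionally identically distributed, with predictive distribution functions $P_N$ defined as in \eqref{Appendix:eq:random_predictive}. This lets me apply Theorem~\ref{Th:CID}, which guarantees that for each fixed $y \in \mathbb{R}$ the empirical and predictive distribution functions share the same almost-sure limit, $F_\infty(y) = P_\infty(y)$ a.s. Then Theorem~\ref{Th:Predictive_Doob} identifies that predictive limit with the sampling distribution function of the generating parameter, $P_\infty(y) = F_\Theta(y)$ a.s. for each $y$. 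Combining the two displays gives $F_\infty(y) = F_\Theta(y)$ a.s. for each $y$, which is exactly the claim, and the nonparametric case $\Theta = f$ follows because both cited theorems already cover it.

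For completeness I would also record the direct, self-contained argument, which makes transparent why the limit is $F_\Theta$. Fix $y \in \mathbb{R}$ and work under the regular conditional distribution given $\Theta = \theta$: by the factorization \eqref{Appendix:eq:joint_param}, conditionally on $\Theta = \theta$ the variables $Y_1, Y_2, \ldots$ are i.i.d.\ from $f_\theta$, so the indicators $\mathbbm{1}(Y_i \le y)$ are i.i.d.\ Bernoulli with mean $F_\theta(y)$ and are bounded, hence integrable. The strong law of large numbers then yields $\frac{1}{N}\sum_{i=1}^N \mathbbm{1}(Y_i \le y) \to F_\theta(y)$ almost surely, on a conditional-probability-one event $A_\theta$. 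Averaging over the prior via the tower property gives $P\big(F_\infty(y) = F_\Theta(y)\big) = \int P\big(A_\theta \mid \Theta = \theta\big)\, d\Pi(\theta) = 1$, which is the unconditional almost-sure convergence.

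The nonparametric case $\Theta = f$ is handled identically: the Hewitt--Savage / de Finetti representation supplies a regular conditional distribution under which $Y_1, Y_2, \ldots$ are i.i.d.\ with random density $f$, so the same conditional SLLN applies with the mean of $\mathbbm{1}(Y_i \le y)$ equal to $F_f(y) = \int_{-\infty}^y f(z)\, dz$, and averaging over the mixing measure again removes the conditioning. The main obstacle is the measure-theoretic bookkeeping in this averaging step: one must ensure that the conditional SLLN event is jointly measurable in $(\theta, \omega)$ so that $\theta \mapsto P(A_\theta \mid \Theta = \theta)$ is a genuine integrable function, and that a regular conditional distribution exists — this is where the standing measurability assumptions on $f_\theta$ (and, in the nonparametric case, separability of the space of densities) enter. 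Once that is in place the passage from conditional to unconditional almost-sure statements is routine, and the ``for each $y$'' qualifier means no uniform-in-$y$ (Glivenko--Cantelli) control is needed.
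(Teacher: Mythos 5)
Your first paragraph is exactly the paper's own derivation: the result is obtained by noting that the exchangeable sequence is c.i.d.\ and chaining Theorem~\ref{Th:CID} (limiting empirical equals limiting predictive) with Theorem~\ref{Th:Predictive_Doob} (limiting predictive equals $F_\Theta$), so on the main route your proposal matches the paper. Your supplementary direct argument---conditioning on $\Theta=\theta$, applying the SLLN to the bounded i.i.d.\ indicators $\mathbbm{1}(Y_i\le y)$, and integrating the conditional-probability-one event over the prior---is also correct and self-contained (the conditional i.i.d.\ structure is already supplied by the assumed factorization \eqref{Appendix:eq:joint_param}, so no appeal to de Finetti is even needed in the parametric case), but it is not the path the paper takes.
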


\newpage
\section{Proofs}\label{Appendix:proofs}

\subsection{Corollary \ref{corr:copula_CID}} \label{Appendix:corr_copula_CID}
From the martingale condition \eqref{eq:martingale_density}, we have that the product $p_{i+1}(y)\,p_i(y_{i+1}) = p_i(y,y_{i+1})$ is a bivariate density with marginals $p_i(y)$ and $p_i(y_{i+1})$, so from Theorem \ref{Th:Sklar} there exists a bivariate copula density $c_{i+1}$ such that $p_i(y,y_{i+1}) = c_{i+1}\{P_i(y),P_i(y_{i+1})\} \,p_i(y) \, p_i(y_{i+1})$. Dividing both sides by $p_i(y_{i+1})$ gives us the result. The reverse implication requires checking that \eqref{eq:copula_update} satisfies \eqref{eq:martingale_density}, which follows easily through a change of variables with $v = P_{i}(y_{i+1})$.
\subsection{Theorem \ref{Th:absolute_continuity}}
From \citet[Theorem~4]{Berti2013}, we require that 
\begin{equation}
\sup_N E \left[\int_K p_N^2(y) \, dy \right] < \infty
\end{equation}
for all compact $K \subset \mathbb{R}$ in order for $P_\infty$ to be absolutely continuous with respect to the the Lebesgue measure. If this holds, then from \citet[Theorem~1]{Berti2013}, we know that $P_N \to P_\infty$ in total variation and $p_N \to p_\infty$ pointwise and in $L_1$ almost surely, and $p_\infty$ is the density of $P_\infty$ with respect to the Lebesgue measure.

For notational convenience, we assume we are predictive resampling starting at $p_0(y)$. If we look at the second moment conditioned on $y_{1:n}$, we have
\begin{equation}
\begin{aligned}
E\left[p_{n+1}(y)^2 \mid y_{1:n} \right] &=p_n^2(y) \left\{(1- \alpha_{n+1})^2 + 2\alpha_{n+1}(1-\alpha_{n+1}) \,E_{v}\left[c_\rho\{P_{n}(y), v\}\right] \right. \\&+\left. \alpha_{n+1}^2 \,E_{v}\left[c^2_\rho\{P_{n}(y),v\}\right] \right\}
\end{aligned}
\end{equation}
where $v \sim \mathcal{U}[0,1]$. We have that
\begin{equation}
\begin{aligned}
\int_0^1 c_\rho(u,v) \, dv = 1
\end{aligned}
\end{equation}
and 
\begin{equation}
\begin{aligned}
q_\rho(u):= \int_0^1 c^2_\rho(u,v)  \,dv = \frac{\exp\left(\frac{ \rho^2}{1+\rho^2}\Phi^{-1}(u)^2 \right)}{\sqrt{1-\rho^4}}\cdot
\end{aligned}
\end{equation}
So we have that
\begin{equation}
\begin{aligned}
E\left[p^2_{n+1}(y) \mid y_{1:n}\right] = p_n^2(y)[1 -\alpha_{n+1}^2+ \alpha_{n+1}^2 \, q_\rho\{P_n(y)\}].
\end{aligned}
\end{equation}
From Fubini's theorem, we can write
\begin{equation}
\begin{aligned}
E\left[\int_K p^2_{n+1}(y) \, dy\mid y_{1:n}\right] &= \int_K p_n^2(y)\, [1 -\alpha_{n+1}^2+ \alpha_{n+1}^2 q_\rho\{P_n(y)\}]\, dy.\\
\end{aligned}
\end{equation}
Following \citet{Hahn2018} and \cite[Theorem 2.1]{Inglot2010}, we have
$$
\Phi^{-1}(u)^2 \leq -2\log (u \wedge \bar{u})
$$
where $\bar{u} = 1-u$. Using $P_n(y) \geq P_0(y) \prod_{i=1}^n(1-\alpha_i)$ and likewise for $\bar{P}_n$, we can upper bound 
\begin{equation}
\begin{aligned}
q_n\{P_n(y)\} &\leq \frac{1}{\sqrt{1-\rho^4}} \left\{P_n(y) \wedge \bar{P}_n(y) \right\}^{-2\gamma}\\
&\leq  \frac{1}{\sqrt{1-\rho^4}} \left\{P_0(y) \wedge \bar{P}_0(y) \right\}^{-2\gamma}\, \prod_{i=1}^n \left(1-\alpha_i \right)^{-2\gamma}
\end{aligned}
\end{equation}
where $\gamma = \rho^2 /(1+\rho^2)$. As $y\in K$ where $K$ is compact and $P_0(y)$ is continuous, we have that $\left\{P_0(y) \wedge \bar{P}_0(y) \right\}^{-2\gamma}$ is upper bounded by $A < \infty$. Plugging this in, we have
\begin{equation}
\begin{aligned}
E\left[\int_K  p^2_{n+1}(y) \, dy\mid y_{1:n}\right] &\leq  \left\{1 -\alpha_{n+1}^2 + \frac{A}{\sqrt{1-\rho^4}} \,\alpha_{n+1}^2 \prod_{i=1}^n (1-\alpha_i)^{-2\gamma}\right\}
 \int_K p_n^2(y) \, dy.
\end{aligned}
\end{equation}
We have that
\begin{equation*}
\begin{aligned}
\prod_{i=1}^n (1-\alpha_i)^{-2\gamma} &= \prod_{i=1}^n \left\{1 - \left(2-\frac{1}{i}\right)\frac{1}{i+1} \right\}^{-2\gamma}\\
&= \prod_{i=1}^n \left(\frac{i+1}{i-1+\frac{1}{i}} \right)^{2\gamma} \\
&\leq 2^{2\gamma} \prod_{i=2}^n\left(\frac{i + 1}{i-1} \right)^{2\gamma} \\
&= \left\{n(n+1) \right\}^{2\gamma}\\
&\leq (n+1)^{4\gamma}
\end{aligned}
\end{equation*}
Iterating the expectation, we then have
\begin{equation}
\begin{aligned}
 E\left[\int_K p^2_{n+1}(y) \, dy \right]
&\leq \prod_{i=1}^{n+1} \left(1 +\frac{C}{i^\kappa}\right)
\int_K p_0^2(y)\, dy
\end{aligned}
\end{equation}
where $\kappa = 2 - 4\gamma$ and $C< \infty$. We have by assumption that $\int_K p_0^2(y) \, dy$ is bounded. Finally, the product term is monotonically increasing and upper bounded by
\begin{equation}
\begin{aligned}
\prod_{i=1}^\infty \left(1 +\frac{C}{i^\kappa}\right) &= \exp\left\{\sum_{i=1}^\infty \log \left(1+ \frac{C}{i^\kappa} \right) \right\}\\
&\leq \exp \left(C \sum_{i=1}^\infty \frac{1}{i^\kappa} \right)
\end{aligned}
\end{equation}
which is bounded if $4\gamma < 1$ so  $\kappa > 1$. This implies that $\rho < 1/\sqrt{3}$ is required for boundedness.
\subsection{Proposition \ref{prop:concentration}}

Theorem 6.1 from \citet{Chung2006} states that for a martingale $X_i$ relative to $\mathcal{F}_i$, we have
$$
\text{Pr}(X_n- E[X_n] \geq \epsilon) \leq \exp\left\{\frac{-\epsilon^2}{2\left(\sum_{i=1}^n \sigma_i^2 + \frac{M\epsilon}{3}\right)} \right\}
$$
where $|X_i - X_{i-1} | \leq M$ and $\sigma_i^2 := E\left[(X_i - X_{i-1})^2 \mid \mathcal{F}_{i-1} \right] $ for $1 \leq i \leq n$. The original result is by \citet{Mcdiarmid1998}. Considering the martingale $-X_i$ and applying the union bound gives the two-sided inequality:
$$
\text{Pr}(|X_n- E[X_n]| \geq \epsilon) \leq 2\exp\left\{\frac{-\epsilon^2}{2\left(\sum_{i=1}^n \sigma_i^2 + \frac{M\epsilon}{3}\right)} \right\}.
$$
 \\

The cumulative distribution function of the multivariate copula method satisfies
\begin{equation*}
\begin{aligned}
P_{i+1}(\mathbf{y}) =P_{i}(\mathbf{y}) \left( 1- \alpha_{i+1}\right) + \alpha_{i+1} \underbrace{ \int_{-\infty}^\mathbf{y}\prod_{j=1}^d  c_{\rho} \left(u_i^{j},v^j_i\right) p_n(\mathbf{y}') \, d\mathbf{y}'}_{Q_{i+1}(\mathbf{y})}
\end{aligned}
\end{equation*} 
so we have
\begin{equation*}
\begin{aligned}
|P_{i+1}(\mathbf{y}) -P_{i}(\mathbf{y}) |  = \alpha_{i+1} \, |Q_{i+1}(\mathbf{y})  -P_{i}(\mathbf{y})| \leq \alpha_{i+1}
\end{aligned}
\end{equation*} 
for all $\mathbf{y} \in \mathbb{R}^d$ as $Q_{i+1}(\mathbf{y})$ is also a CDF and lies in the interval $(0,1)$.

When predictive resampling, we have that $v_i^ j \sim \mathcal{U}[0,1]$ independently across $j \in \{1,\ldots,d\}$, and from the property of copulas we have
$$
\int_{0}^1 c_\rho(u, v_i^j) \, d v_i^j  = 1.
$$
Defining $\mathcal{F}_i = \sigma(\mathbf{Y}_1,\ldots,\mathbf{Y}_i)$, this implies that
$$
E\left[ Q_{i+1}(\mathbf{y}) \mid \mathcal{F}_i\right] = P_i(\mathbf{y})
$$
almost surely for each $\mathbf{y} \in \mathbb{R}^d$, so $P_{i}(\mathbf{y})$ is a martingale with respect to $\mathcal{F}_i$.\\

\begin{lemma}
For the multivariate copula method, the conditional variance of the martingale satisfies
$$
 E\left[\left\{P_{N+1}({\mathbf{y}})- P_N({\mathbf{y}})\right\}^2 \mid \mathcal{F}_N\right] \leq  \frac{\alpha_{N+1}^2}{4} \quad\mbox{a.s.} 
$$
\end{lemma}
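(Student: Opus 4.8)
The plan is to exploit the additive decomposition of the update established immediately above the lemma and reduce the claim to a textbook bound on the variance of a $[0,1]$-valued random variable. Writing the increment of the martingale explicitly, the update gives
\begin{equation*}
P_{N+1}(\mathbf{y}) - P_N(\mathbf{y}) = \alpha_{N+1}\left\{Q_{N+1}(\mathbf{y}) - P_N(\mathbf{y})\right\},
\end{equation*}
so squaring and taking the conditional expectation yields
\begin{equation*}
E\left[\left\{P_{N+1}(\mathbf{y}) - P_N(\mathbf{y})\right\}^2 \mid \mathcal{F}_N\right] = \alpha_{N+1}^2 \, E\left[\left\{Q_{N+1}(\mathbf{y}) - P_N(\mathbf{y})\right\}^2 \mid \mathcal{F}_N\right].
\end{equation*}
Since $P_N(\mathbf{y})$ is $\mathcal{F}_N$-measurable and, as shown above, $E\left[Q_{N+1}(\mathbf{y}) \mid \mathcal{F}_N\right] = P_N(\mathbf{y})$, the right-hand expectation is exactly the conditional variance of $Q_{N+1}(\mathbf{y})$ given $\mathcal{F}_N$.

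First I would record that $Q_{N+1}(\mathbf{y})$ is itself a cumulative distribution function evaluated at $\mathbf{y}$, hence $Q_{N+1}(\mathbf{y}) \in [0,1]$ almost surely, a fact already used above to bound the martingale increments. For any random variable $X$ taking values in $[0,1]$ one has $X^2 \leq X$ pointwise, so
\begin{equation*}
\mathrm{Var}\left(Q_{N+1}(\mathbf{y}) \mid \mathcal{F}_N\right) = E\left[Q_{N+1}(\mathbf{y})^2 \mid \mathcal{F}_N\right] - P_N(\mathbf{y})^2 \leq E\left[Q_{N+1}(\mathbf{y}) \mid \mathcal{F}_N\right] - P_N(\mathbf{y})^2 = P_N(\mathbf{y})\left\{1 - P_N(\mathbf{y})\right\}.
\end{equation*}
The elementary inequality $t(1-t) \leq 1/4$ for $t \in [0,1]$ then gives the conditional variance bound $1/4$, and multiplying by $\alpha_{N+1}^2$ delivers the stated inequality. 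Because the bound $P_N(\mathbf{y})\{1-P_N(\mathbf{y})\} \leq 1/4$ holds for every $\mathbf{y}$, the result is automatically uniform in $\mathbf{y}$.

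There is no serious obstacle here: the entire argument rests only on the martingale property $E[Q_{N+1}(\mathbf{y}) \mid \mathcal{F}_N] = P_N(\mathbf{y})$ and the fact that $Q_{N+1}(\mathbf{y})$ is $[0,1]$-valued, both of which are already in hand. In particular, no properties of the Gaussian copula density $c_\rho$, of the dimension $d$, or of the conditioning structure enter beyond what guarantees that $Q_{N+1}$ is a genuine CDF; this is exactly why the eventual concentration bound in Proposition \ref{prop:concentration} turns out to be independent of $d$. The only point to state carefully is that the predictive-resampling expectation is taken over $v_i^{1:d} \iid \mathcal{U}[0,1]$, which is what makes $E[Q_{N+1}(\mathbf{y}) \mid \mathcal{F}_N] = P_N(\mathbf{y})$ hold, so I would simply cite the martingale identity established in the lines preceding the lemma rather than re-derive it.
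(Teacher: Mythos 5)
Your proof is correct and follows essentially the same route as the paper's: factor out $\alpha_{N+1}^2$, recognize the remaining conditional expectation as the conditional variance of the $[0,1]$-valued $Q_{N+1}(\mathbf{y})$ (using the martingale identity $E[Q_{N+1}(\mathbf{y})\mid\mathcal{F}_N]=P_N(\mathbf{y})$), and bound that variance by $1/4$. The only cosmetic difference is that you derive the $1/4$ bound explicitly via $X^2\leq X$ and $t(1-t)\leq 1/4$, whereas the paper simply cites the maximum variance of a random variable on $[0,1]$.
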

\begin{proof}
We have that
$$
\left\{P_{N+1}({\mathbf{y}})- P_N({\mathbf{y}})\right\}^2 = \alpha_{N+1}^2 \left\{Q_{N+1}(\mathbf{y}) - P_N({\mathbf{y}}) \right\}^2
$$
where 
$$
Q_{N+1}(\mathbf{y})  = \int_{-\infty}^\mathbf{y}\prod_{j=1}^d  c_{\rho} \left(u_{N}^{j},v^j_N\right) p_N(\mathbf{y}')  \, d\mathbf{y}'
$$
 and $Q_{N+1}(\mathbf{y})$ lies in $[0,1]$. Putting this together gives us
\begin{equation*}
\begin{aligned}
E\left[\left\{Q_{N+1}(\mathbf{y}) - P_N({\mathbf{y}}) \right\}^2 \mid \mathcal{F}_N \right] = \text{Var}\left[ Q_{N+1}(\mathbf{y}) \mid  \mathcal{F}_N\right] \leq \frac{1}{4}
\end{aligned}
\end{equation*}
almost surely, which follows from the maximum variance of a random variable on $[0,1]$.
\end{proof}

The martingale is also bounded in difference, as $|P_{M+1}(\mathbf{y}) - P_M(\mathbf{y})| \leq \alpha_{N+1}$ for all $M\geq N$ as $\alpha_N = \left(2-\frac{1}{N}\right)\frac{1}{N+1}$ is monotonically decreasing. McDiarmid's theorem then gives us Proposition \ref{prop:concentration}. As the bound is independent of $\mathbf{y}$, we can take the supremum of both sides.

Numerically this is tighter than Azuma's inequality due to the extra $1/4$ before $\sigma_i^2$. Assuming the sequence $\alpha_{i} \leq 2(i+1)^{-1}$, for $\epsilon = 0.05$ and $N = 5000$, we have $0.42$ for Azuma's inequality and $0.0047$ for McDiarmid's inequality. However, decreasing $\epsilon$ further makes this bound quite loose.

\subsection{Theorem \ref{Th:weak_reg}}
In the regression context, we are interested in the conditional distribution of $Y_n$ given $X_n = x$, so we cannot rely on the c.i.d. result    of \citet{Berti2004}. Fortunately, we can use Theorem 2.2 of \cite{Berti2006} to show that $P_N(\cdot \mid \mathbf{x})$ from predictive resampling converges weakly to a random probability measure almost surely for each $\mathbf{x}\in \mathbb{R}^d$.

We consider the sequence of random probability measures $\{P_N(\cdot \mid \mathbf{x}), P_{N+1}(\cdot \mid \mathbf{x}), \ldots\}$ on $S = \mathbb{R}$ defined on $(\Omega,\mathcal{A},P)$, and begin by showing that for any $f\in C_b(S)$, we have that $P_N(f \mid \mathbf{x})$ converges almost surely, where we define
\begin{equation*}
\begin{aligned}
P_N(f \mid \mathbf{x}) := \int f(y) \, p_N(y \mid \mathbf{x})\, dy.
\end{aligned}
\end{equation*}
This is indeed condition (2) of \citet{Berti2006}. We further write $Z_i = \{Y_i,\mathbf{X}_i \}$ and $\mathcal{F}_i = \sigma(Z_1,\ldots,Z_i)$. Now taking the conditional expectation, we have from Fubini's theorem 
\begin{equation*}
\begin{aligned}
E\left[P_{N+1}(f \mid \mathbf{x}) \mid \mathcal{F}_N\right]&= \int f(y) \, E\left[p_{N+1}(y \mid \mathbf{x}) \mid\mathcal{F}_N\right]\, dy \\
&= P_N(f \mid \mathbf{x})
\end{aligned}
\end{equation*}
almost surely for each $\mathbf{x} \in \mathbb{R}^d$, as $p_N(y \mid \mathbf{x})$ is a martingale with respect to $\mathcal{F}_N$ irrespective of how we draw $\mathbf{X}_{n+1:\infty}$. As $|f(y)|$ is bounded by some $B<\infty$, we also have that
\begin{equation*}
\begin{aligned}
E\left[|P_N(f \mid \mathbf{x})|\right] \leq B
\end{aligned}
\end{equation*}
for all $N$, so $P_N(f \mid \mathbf{x})$ is a martingale with respect to $\mathcal{F}_N$ and converges almost surely. As $\mathbb{R}$ is Radon, Theorem 2.2 of \citet{Berti2006} applies, so there exists a random probability measure $P_\infty$ on $S$, defined on $(\Omega,\mathcal{A},P)$ such that $P_N(\cdot \mid \mathbf{x}) \to P_\infty$ weakly almost surely.

\subsection{Proposition \ref{prop:concentration_reg}}
Following the derivation of Proposition \ref{prop:concentration}, we have that

$$
|P_{N+1}(y \mid \mathbf{x}) - P_N(y \mid \mathbf{x})| \leq \alpha_{N+1}(\mathbf{x},\mathbf{x}_{N+1})
$$
where
 \begin{equation*}
\alpha_{N+1}(\mathbf{x},\mathbf{x}_{N+1}) = \frac{\alpha_{N+1}\prod_{j=1}^d  c_{\rho_{j}} \left\{\Phi\left(x^j\right),\Phi\left(x_{N+1}^j\right)\right\}}{1- \alpha_{N+1} + \alpha_{N+1}\prod_{j=1}^d   c_{\rho_{j}} \left\{\Phi\left(x^j\right),\Phi\left(x_{N+1}^j\right)\right\}}\cdot
\end{equation*}
We have the following lemma:
\begin{lemma}
For the conditional copula method for regression, we have that
$$
\alpha_{N+1}(\mathbf{x},\mathbf{x}_{N+1})  \leq 2C \alpha_{N+1}
$$
where 
$$
C = \prod_{j=1}^d\frac{1}{\sqrt{1-\rho_j^2}}\exp\left\{\frac{x_j^2}{2} \right\}.
$$
\end{lemma}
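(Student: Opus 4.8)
The plan is to bound the data-dependent weight $\alpha_{N+1}(\mathbf{x},\mathbf{x}_{N+1})$ directly from its definition in \eqref{eq:condit_alpha}. Writing $a=\alpha_{N+1}$ and $z=\prod_{j=1}^d c_{\rho_j}\{\Phi(x^j),\Phi(x_{N+1}^j)\}\ge 0$, the quantity of interest is $az/(1-a+az)$. Since $az\ge 0$, the denominator is bounded below by $1-a$, so $\alpha_{N+1}(\mathbf{x},\mathbf{x}_{N+1})\le az/(1-a)$. First I would observe that for the sequence $\alpha_i=(2-1/i)(i+1)^{-1}$ one has $\alpha_i\le 1/2$ at every integer $i\ge 1$ (its maximum over the integers is attained at $i=1,2$, where it equals $1/2$), hence $1/(1-a)\le 2$ and the bound reduces to $\alpha_{N+1}(\mathbf{x},\mathbf{x}_{N+1})\le 2\alpha_{N+1}\,z$.

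It then remains to show $z\le C$, i.e.\ to bound each factor $c_{\rho_j}\{\Phi(x^j),\Phi(x_{N+1}^j)\}$ uniformly in the resampled coordinate $x_{N+1}^j$. For this I would substitute $u=\Phi(a)$, $v=\Phi(b)$ into the Gaussian copula density \eqref{eq:Gauss_copdens}, so that $\Phi^{-1}(u)=a$ and $\Phi^{-1}(v)=b$, and cancel the standard normal marginals against the bivariate normal density. A short computation collapses the exponent and gives
$$
c_\rho\{\Phi(a),\Phi(b)\}=\frac{1}{\sqrt{1-\rho^2}}\exp\left\{\frac{-\rho^2(a^2+b^2)+2\rho ab}{2(1-\rho^2)}\right\}.
$$

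The key step is to maximize this expression over $b$ (the resampled coordinate) while holding $a=x^j$ fixed. The exponent is a concave quadratic in $b$ with maximizer $b^\ast=a/\rho$, at which it equals exactly $a^2/2$ after the $(1-\rho^2)$ factors cancel; hence $\sup_b c_\rho\{\Phi(a),\Phi(b)\}=(1-\rho^2)^{-1/2}\exp(a^2/2)$. Applying this coordinatewise with $a=x^j$, $\rho=\rho_j$ and taking the product over $j$ yields $z\le\prod_{j=1}^d (1-\rho_j^2)^{-1/2}\exp\{(x^j)^2/2\}=C$, which combined with the first paragraph gives $\alpha_{N+1}(\mathbf{x},\mathbf{x}_{N+1})\le 2C\alpha_{N+1}$.

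I do not anticipate a serious obstacle here: the two points requiring care are verifying $\alpha_i\le 1/2$ over the integers (a one-line check that the continuous maximum near $i\approx 1.37$ is not attained at an integer) and carrying out the quadratic maximization cleanly so the $(1-\rho^2)$ terms cancel and leave the exponent $a^2/2$. The crucial feature is that the maximization is \emph{uniform in} $b$, which is precisely what makes $C$ depend only on $\mathbf{x}$ and $\boldsymbol\rho$ and not on the resampled point $\mathbf{x}_{N+1}$ — exactly the form needed to feed the difference bound into the martingale concentration argument of Proposition \ref{prop:concentration_reg}.
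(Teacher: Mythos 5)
Your proposal is correct and follows essentially the same route as the paper's proof: bound the denominator of $\alpha_{N+1}(\mathbf{x},\mathbf{x}_{N+1})$ below by $1-\alpha_{N+1}\geq 1/2$, write out the Gaussian copula density at $\{\Phi(x),\Phi(x')\}$, and maximize the concave quadratic exponent over the resampled coordinate to get $(1-\rho^2)^{-1/2}\exp(x^2/2)$ per dimension. The only difference is that you make explicit the quadratic maximization at $b^\ast=a/\rho$ and the integer check $\alpha_i\leq 1/2$, both of which the paper states without detail.
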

\begin{proof}
As $c_\rho(u,v) \geq 0$, we have
$$
\alpha_{N+1}(\mathbf{x},\mathbf{x}_{N+1})  \leq \frac{\alpha_{N+1}}{1-\alpha_{N+1}}\prod_{j=1}^d c_{\rho_{j}} \left\{\Phi\left(x^j\right),\Phi\left(x_{N+1}^j\right)\right\}.
$$
We can then write
 \begin{equation*}
 \begin{aligned}
c_\rho\{\Phi(x),\Phi(x')\} &= \frac{1}{\sqrt{1-\rho^2}}\exp\left[-\frac{1}{2(1-\rho^2)}\left\{\rho^2 (x^2 + x'^2) - 2\rho x x' \right\} \right]\\
&\leq \frac{1}{\sqrt{1-\rho^2}}\exp\left\{\frac{x^2}{2} \right\}.
\end{aligned}
\end{equation*}
Finally, noting that $(1-\alpha_{N+1}) \geq 0.5$, we have the result.
\end{proof}
To get the concentration inequality, again writing $Z_i = \{Y_i,\mathbf{X}_i \}$ and $\mathcal{F}_i = \sigma(Z_1,\ldots,Z_i)$. We have that $P_N(y\mid \mathbf{x})$ is a martingale with respect to $\mathcal{F}_N$ so the following lemma and McDiarmid's theorem gives us Proposition \ref{prop:concentration_reg}, where the supremum again follows from the bound being independent of $y$.
\begin{lemma}
For the conditional regression method, the conditional variance of the martingale satisfies
$$
 E\left[\left\{P_{N+1}({y \mid \mathbf{x}})- P_N({y \mid \mathbf{x}})\right\}^2 \mid \mathcal{F}_N\right] \leq  C^2 \alpha_{N+1}^2
$$
almost surely for each $\mathbf{x} \in \mathbb{R}^d$
\end{lemma}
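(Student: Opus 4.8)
The plan is to mirror the argument used for the lemma inside the proof of Proposition \ref{prop:concentration}, but now accounting for the extra randomness coming from the covariate $\mathbf{X}_{N+1}$ drawn in the Bayesian bootstrap step. First I would write out the one-step increment of the conditional distribution function. From the update \eqref{eq:conditreg_conditional}, the associated distribution-function recursion is
$$P_{N+1}(y\mid\mathbf{x}) = \{1-\alpha_{N+1}(\mathbf{x},\mathbf{X}_{N+1})\}\,P_N(y\mid\mathbf{x}) + \alpha_{N+1}(\mathbf{x},\mathbf{X}_{N+1})\,H_{\rho_y}(q_N,r_N),$$
so that
$$P_{N+1}(y\mid\mathbf{x}) - P_N(y\mid\mathbf{x}) = \alpha_{N+1}(\mathbf{x},\mathbf{X}_{N+1})\left\{H_{\rho_y}(q_N,r_N) - P_N(y\mid\mathbf{x})\right\},$$
with $q_N = P_N(y\mid\mathbf{x})$ and $r_N = P_N(Y_{N+1}\mid\mathbf{X}_{N+1})$.

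The key step is to condition on the enlarged information $\sigma(\mathcal{F}_N,\mathbf{X}_{N+1})$ rather than on $\mathcal{F}_N$ alone. Given $\mathbf{X}_{N+1}$, the coefficient $\alpha_{N+1}(\mathbf{x},\mathbf{X}_{N+1})$ is fixed and can be pulled out, so the only remaining randomness is $r_N = V_N \sim \mathcal{U}[0,1]$. Using the copula marginal identity $\int_0^1 H_{\rho_y}(u,v)\,dv = u$, the conditional mean of $H_{\rho_y}(q_N,V_N)$ equals $q_N = P_N(y\mid\mathbf{x})$, so the bracketed term has conditional mean zero; this is precisely the martingale property already invoked in the proof of Theorem \ref{Th:weak_reg}. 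Consequently the conditional second moment of the bracket is its conditional variance, and since $H_{\rho_y}(q_N,V_N)$ takes values in $[0,1]$, this variance is at most $1/4$. This yields
$$E\left[\left\{P_{N+1}(y\mid\mathbf{x}) - P_N(y\mid\mathbf{x})\right\}^2 \mid \mathcal{F}_N, \mathbf{X}_{N+1}\right] \leq \tfrac14\,\alpha_{N+1}(\mathbf{x},\mathbf{X}_{N+1})^2.$$

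To remove the dependence on $\mathbf{X}_{N+1}$ I would apply the preceding lemma, which bounds $\alpha_{N+1}(\mathbf{x},\mathbf{X}_{N+1}) \leq 2C\alpha_{N+1}$ uniformly in $\mathbf{X}_{N+1}$, with $C = \prod_{j=1}^d (1-\rho_j^2)^{-1/2}\exp(x_j^2/2)$. Since this bound is deterministic, squaring and substituting gives $\tfrac14\,\alpha_{N+1}(\mathbf{x},\mathbf{X}_{N+1})^2 \leq C^2\alpha_{N+1}^2$ almost surely, and the tower property (taking expectation over $\mathbf{X}_{N+1}$ given $\mathcal{F}_N$) then delivers the claimed bound $C^2\alpha_{N+1}^2$. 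I expect the main subtlety — as opposed to the covariate-free case of Proposition \ref{prop:concentration} — to be the bookkeeping of the two-stage randomness: one must condition on $\mathbf{X}_{N+1}$ first to exploit both the uniformity of $V_N$ and the boundedness of the variance, while the uniform control of the random coefficient $\alpha_{N+1}(\mathbf{x},\mathbf{X}_{N+1})$ is exactly what the previous lemma supplies. Everything else is a routine repetition of the maximum-variance argument.
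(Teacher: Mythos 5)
Your proposal is correct and follows essentially the same route as the paper: write the increment as $\alpha_{N+1}(\mathbf{x},\mathbf{X}_{N+1})$ times a mean-zero bracket (the paper's $Q_{N+1}(y\mid\mathbf{x})$ is exactly your $H_{\rho_y}(q_N,r_N)$ after a change of variables), bound the random coefficient uniformly by $2C\alpha_{N+1}$ via the preceding lemma, and bound the bracket's conditional variance by $1/4$ since it is a $[0,1]$-valued random variable with conditional mean $P_N(y\mid\mathbf{x})$. The only cosmetic difference is that you condition on $\sigma(\mathcal{F}_N,\mathbf{X}_{N+1})$ and then tower down, whereas the paper first applies the deterministic bound on $\alpha_{N+1}(\mathbf{x},\cdot)^2$ and then takes the $\mathcal{F}_N$-conditional expectation directly; both orderings are valid and yield the same constant.
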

\begin{proof}
We have that
\begin{equation*}
\begin{aligned}
\left\{P_{N+1}({y \mid \mathbf{x}})- P_N({y \mid \mathbf{x}})\right\}^2&= \alpha_{N+1}(\mathbf{x},\mathbf{x}_{N+1})^2 \left\{Q_{N+1}(y \mid \mathbf{x}) - P_N({y \mid \mathbf{x}}) \right\}^2\\
& \leq 4C^2 \alpha_{N+1}^2\left\{Q_{N+1}(y \mid \mathbf{x}) - P_N({y \mid \mathbf{x}}) \right\}^2
\end{aligned}
\end{equation*}
from the above lemma, where 
$$
Q_{N+1}({y} \mid \mathbf{x})  = \int_{-\infty}^{y} c_{\rho}\left(q_N,r_N \right) \, p_N(y' \mid \mathbf{x}) \, dy'
$$
 and $Q_{N+1}({y \mid \mathbf{x}})$ lies in $[0,1]$. From predictive resampling, $r_N \sim \mathcal{U}[0,1]$ conditional on $\mathbf{X}_{N+1}$, so from the tower property we have that $E\left[Q_{N+1}({y} \mid \mathbf{x}) \mid \mathcal{F}_N\right] = P_N(y \mid \mathbf{x})$. Putting this together gives us
\begin{equation*}
\begin{aligned}
E\left[\left\{Q_{N+1}({y}\mid \mathbf{x}) - P_N({{y \mid \mathbf{x}}}) \right\}^2 \mid \mathcal{F}_N \right] = \text{Var}\left[ Q_{N+1}({y\mid \mathbf{x}}) \mid  \mathcal{F}_N\right] \leq \frac{1}{4}
\end{aligned}
\end{equation*}
almost surely for all $\mathbf{x} \in \mathbb{R}^d$, which follows from the maximum variance of a random variable on $[0,1]$.
\end{proof}

\subsection{Theorem \ref{Th:consistency}}\label{Appendix:consistency}
In this section, we prove frequentist consistency of the multivariate copula update \eqref{eq:mv_DP_copdens} in Section \ref{sec:multivariate_copula}. For $\mathbf{y} \in \mathbb{R}^d$, $p_n(\mathbf{y})$ is an estimate of the true density $f_0(\mathbf{y})$  from which we observe samples $\mathbf{Y}_{1:n} \iid f_0(\cdot)$, and we consider $n \to \infty$. For convenience, we restate the following assumptions:
\begin{assumption}\label{assump:a}
We have $\rho \in (0,1)$ and $\alpha_i = a(i+1)^{-1}$ where 
$$a< \frac{2}{5}\cdot$$
\end{assumption}
\begin{assumption}\label{assump:bounded}
There exists $B < \infty$ such that 
$$
\frac{f_0(\mathbf{y})}{p_0(\mathbf{y})} \leq B
$$
for all $\mathbf{y} \in \mathbb{R}^d$.
\end{assumption}
The interpretation of the assumptions are discussed in the main paper. We highlight that the assumptions are slightly different to those in \citet{Hahn2018} as the `almost supermartingale' requires different conditions in the multivariate case. We begin in the same way as \cite{Hahn2018}.  Define the KL divergence as
$$
K(f_0,p_n) = \int \log \frac{f_0(\mathbf{y})}{p_n(\mathbf{y})} \, f_0(\mathbf{y}) \, d\mathbf{y}
$$
and define
$$
T(p) := \int \int \left\{\prod_{j=1}^d c_\rho\left(u^j,v^j\right)-1\right\} f_0(\mathbf{y}) \, f_0(\mathbf{y}') \, d\mathbf{y} \, d\mathbf{y}'
$$
where again we have
$$
u^{j} = P\left(y^j \mid  y^{1:j-1}\right), \quad v^{j} = P\left(y'^j \mid y'^{1:j-1}\right).
$$

The inequality $\log(1+x) \geq x-2x^2,  x\approx 0$ can still be used as $\prod_{j=1}^d c_\rho(u^j,v^j) \geq 0$ and $\alpha_n \to 0$. We can follow through the same algebra to obtain a multivariate version of equation (17) in  \cite{Hahn2018}. Writing $\mathcal{F}_n = \sigma(Y_1,\ldots,Y_n)$, we have
\vspace{2mm}
\begin{equation*}\label{eq:sup_marg}
\begin{aligned}
&E\left\{K(f_0,p_n) \mid \mathcal{F}_{n-1} \right\} - K(f_0,p_{n-1})  \\
&\leq -\alpha_n \int \int \left\{ \prod_{j=1}^d c_\rho\left(u^{j}_{n-1},v^j_{n-1}\right)-1 \right\} f_0(\mathbf{y}) \, f_0(\mathbf{y'})\, d\mathbf{y}  \, d\mathbf{y}'  + E(R_n \mid \mathcal{A}_{n-1})
\end{aligned}
\end{equation*}
 almost surely, where $u^j_{n-1} =P_{n-1}(y^j \mid y^{1:j-1})$ and $v^j_{n-1} = P_{n-1}({y^{j}}' \mid {y^{1:j-1}}')$, and
\begin{equation}
R_n = 2\alpha_n^2 \int \left\{ \prod_{j=1}^d c_\rho\left(u^{j}_{n-1},v^j_{n-1}\right)-1 \right\}^2  f_0(\mathbf{y})\, d\mathbf{y}.
\end{equation}
This is an `almost supermartingale' in the sense of \cite{Robbins1971} if $T(p_{n})$ is positive  and $\sum_n E[R_n \mid \mathcal{F}_{n-1}] < \infty$ almost surely.

\begin{lemma}\label{lemma:T_pos}
For a density $p$ with support containing that of $f_0(\mathbf{y})$, we have that $T(p) \geq 0$ with equality if and only if $p =f_0$ Lebesgue-almost everywhere.
\begin{proof}
First we note that the copula density product can be written as
\begin{equation*}
\prod_{j=1}^d c_\rho(u^j,v^j) = \int  \prod_{j=1}^d \psi_{\theta^j}(u^j) \, \psi_{\theta^j}(v^j) \, \mathcal{N}(\theta^j \mid 0,\rho) \, d\boldsymbol{\theta}
\end{equation*}
where ${\boldsymbol\theta} = \{\theta^1,\ldots,\theta^d\}$ and
\begin{equation*}
\psi_\theta(u) =  \frac{\mathcal{N}(\Phi^{-1}(u) \mid \theta,1-\rho)}{\mathcal{N}(\Phi^{-1}(u) \mid 0,1)}\cdot
\end{equation*}
This gives us
\begin{equation*}
\begin{aligned}
T(p) &= \int \int \left\{ \prod_{j=1}^d c_\rho\left(u^{j},v^j\right)-1 \right\} f_0(\mathbf{y}) \, f_0(\mathbf{y'})\, d\mathbf{y} \, d\mathbf{y}'
\\&=   \int \left[\left\{\int \prod_{j=1}^d  \psi_{\theta^j}(u^j) f_0(\mathbf{y})\, d\mathbf{y} \right\}^2 -1 \right] \prod_{j=1}^d \mathcal{N}(\theta^j \mid 0,\rho) \, d\boldsymbol{\theta}\\
&=  \int \left\{\int \prod_{j=1}^d  \psi_{\theta^j}(u^j) f_0(\mathbf{y})\, d\mathbf{y} -1 \right\}^2 \prod_{j=1}^d \mathcal{N}(\theta^j \mid 0,\rho) \, d\boldsymbol{\theta}.
\end{aligned}
\end{equation*}
 The last line follows from $E(X^2) - E^2(X) = E\{ X-E(X)\}^2$ as 
 \begin{equation*}
 \prod_{j=1}^d \int  \psi_{\theta^j}(u^j) \, \mathcal{N}(\theta^j \mid 0,\rho) \, d\theta^j = 1
 \end{equation*}
 for all $\mathbf{u} = \{u^{1},\ldots,u^d\} \in [0,1]^d$ . The above shows that $T(p)\geq 0$. \\

For the second equality result, note that $T(p)= 0$ if and only if  $\int \prod_{j=1}^d  \psi_{\theta^j}(u^j) \, f_0(\mathbf{y}) \,  d\mathbf{y} = 1 $  for Lebesgue-almost all ${\boldsymbol\theta}$. This can be strengthened to all $\boldsymbol{\theta} \in \mathbb{R}^d$ from the continuity of $\int \prod_{j=1}^d  \psi_{\theta^j}(u^j) \, f_0(\mathbf{y}) \, d\mathbf{y}$, which follows from the continuity of $\psi_\theta(u)$, the upper bound
$$
\psi_\theta(u) \leq \frac{1}{\sqrt{1-\rho}}\exp\left(\frac{\theta^2}{2\rho} \right)
$$
and dominated convergence.\vspace{1mm}

To show that $\int \prod_{j=1}^d  \psi_{\theta^j}(u^j) \, f_0(\mathbf{y}) \,  d\mathbf{y} = 1$  for all $\boldsymbol{\theta}$ holds if and only if $p=f_0$, factorize $f_0(\mathbf{y}) = \prod_{j=1}^d {f_0}^j(y^j)$ where ${f_0}^j(y^j) = f_0(y^j \mid y^{1:j-1})$. We carry out a multivariate change of variables from $\mathbf{y}$ to $\mathbf{z} = \{z^1,\ldots,z^j\}$, where $z^j = \Phi^{-1}(u^j)$, and note the Jacobian is triangular. This gives
\begin{equation*}\label{eq:full_prod}
\begin{aligned}
\int \prod_{j=1}^d  \frac{{f_0}^j\left\{({P^j})^{-1}(\Phi(z^j)) \right\}}{p^j\left\{({P^j})^{-1}(\Phi(z^j))\right\}} \, \mathcal{N}(z^j \mid \theta^j,1-\rho) \, d\mathbf{z}
\end{aligned}
\end{equation*}
 where $({P^j})^{-1}$ is the inverse CDF for $P(y^j \mid y^{1:j-1})$, $p^j(y^j)= p(y^j \mid y^{1:j-1})$, and each ratio term with $z^j$ depends on $z^{1:j-1}$. It is clear that $T(f_0) = 0$. We now want to show that $T(p) = 0$ implies the above density ratio is 1 almost everywhere.  
 
 To do so, we point out that the multivariate normal location family $\mathcal{N}\{\mathbf{z};\boldsymbol \theta, (1-\rho)I_d\}$ is complete from \citet[Theorem 4.3.1]{Lehmann2006}, that is $E[g(\mathbf{z})] = 0$  for all $
 \boldsymbol{\theta} \in \mathbb{R}^d$ implies $g(\mathbf{z}) =0$ Lebesgue-almost everywhere. From this, $\int \prod_{j=1}^d  \psi_{\theta^j}(u^j) \, f_0(\mathbf{y}) \,  d\mathbf{y} = 1$  for all $\boldsymbol{\theta}$ implies
 \begin{equation*}
  \prod_{j=1}^d  \frac{{f_0}^j\left\{({P^j})^{-1}(\Phi(z^j)) \right\}}{p^j\left\{({P^j})^{-1}(\Phi(z^j))\right\}} = 1
 \end{equation*}
 for Lebesgue-almost all $\mathbf{z}$, so $f_0 = p$ holds Lebesgue-almost everywhere as the product of the conditionals is the joint.
\end{proof}
\end{lemma}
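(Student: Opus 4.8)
The plan is to exploit the scale-mixture (latent-variable) representation of the bivariate Gaussian copula density, which converts $T(p)$ into the variance of a single functional and thereby makes its sign transparent. First I would record the representation
\begin{equation*}
c_\rho(u,v) = \int \psi_\theta(u)\,\psi_\theta(v)\,\mathcal{N}(\theta \mid 0,\rho)\, d\theta, \qquad \psi_\theta(u) = \frac{\mathcal{N}\{\Phi^{-1}(u)\mid \theta, 1-\rho\}}{\mathcal{N}\{\Phi^{-1}(u)\mid 0,1\}},
\end{equation*}
which follows by writing the standard bivariate normal $\mathcal{N}_2(\cdot\mid 0,1,\rho)$ as the marginal of the hierarchy $\theta\sim\mathcal{N}(0,\rho)$ with $\Phi^{-1}(u),\Phi^{-1}(v)\mid\theta \iid \mathcal{N}(\theta,1-\rho)$, and dividing by the standard normal marginals. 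Taking the product over $j$ and substituting into the definition of $T(p)$, Fubini lets me pull the $\mathbf{y}$ and $\mathbf{y}'$ integrals inside the $\boldsymbol{\theta}$-integral; since $u^j$ depends only on $\mathbf{y}$ and $v^j$ only on $\mathbf{y}'$, the double integral factorizes into the square of $g(\boldsymbol{\theta}) := \int \prod_{j=1}^d \psi_{\theta^j}(u^j)\, f_0(\mathbf{y})\, d\mathbf{y}$.

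The positivity then reduces to a second-moment inequality. I would establish the normalizing identity $\int \psi_\theta(u)\,\mathcal{N}(\theta\mid 0,\rho)\,d\theta = 1$ for every $u$, which is simply the statement that convolving $\mathcal{N}(\cdot\mid\theta,1-\rho)$ against $\mathcal{N}(\theta\mid 0,\rho)$ returns the standard normal. Integrating $g$ against the product measure $\mu(d\boldsymbol{\theta}) = \prod_j \mathcal{N}(\theta^j\mid 0,\rho)\,d\theta^j$ and applying Fubini dimension by dimension then shows $E_\mu[g] = \int f_0 = 1$. Consequently
\begin{equation*}
T(p) = E_\mu[g^2] - 1 = E_\mu[g^2] - (E_\mu[g])^2 = \mathrm{Var}_\mu(g) \geq 0,
\end{equation*}
with equality if and only if $g \equiv 1$ holds $\mu$-almost everywhere.

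The remaining and most delicate task is to show that $g\equiv 1$ forces $p=f_0$. First I would upgrade the $\mu$-a.e.\ statement to $g(\boldsymbol{\theta})=1$ for \emph{all} $\boldsymbol{\theta}$ via continuity of $g$: the bound $\psi_\theta(u)\le (1-\rho)^{-1/2}\exp\{\theta^2/(2\rho)\}$, obtained by maximizing the exponent of $\psi_\theta$ over $\Phi^{-1}(u)$, furnishes a locally uniform, $f_0$-integrable dominating function, so dominated convergence yields continuity. Then I would perform the triangular change of variables $z^j = \Phi^{-1}\{P(y^j\mid y^{1:j-1})\}$, whose Jacobian cancels against the standard normal marginals, recasting $g(\boldsymbol{\theta})$ as $\int h(\mathbf{z})\prod_j \mathcal{N}(z^j\mid\theta^j,1-\rho)\,d\mathbf{z}$ with $h(\mathbf{z}) = \prod_j f_0^j\{(P^j)^{-1}(\Phi(z^j))\}/p^j\{(P^j)^{-1}(\Phi(z^j))\}$. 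Thus $g(\boldsymbol{\theta})-1 = E_{\mathcal{N}(\boldsymbol{\theta},(1-\rho)I_d)}[h-1]=0$ for every $\boldsymbol{\theta}$, and completeness of the multivariate normal location family \citep[Theorem~4.3.1]{Lehmann2006} forces $h\equiv 1$ Lebesgue-a.e.; since products of conditional densities reconstitute the joint, inverting the (a.e.) diffeomorphism gives $p=f_0$ a.e. The converse is immediate, as $p=f_0$ makes $h\equiv 1$ and hence $T(p)=\mathrm{Var}_\mu(g)=0$.

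I expect the main obstacle to lie entirely in the equality case rather than the inequality: specifically, in justifying that the transformation $\mathbf{y}\mapsto\mathbf{z}$ is an almost-everywhere bijection on the support of $f_0$ so the ratio $h$ is well defined, in strengthening the $\mu$-a.e.\ identity to one holding everywhere before invoking completeness, and in checking that the completeness conclusion transfers back through the change of variables. The positivity argument, by contrast, is a clean consequence of the mixture representation and Jensen/variance bookkeeping.
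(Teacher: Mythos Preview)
Your proposal is correct and follows essentially the same route as the paper: the mixture representation $c_\rho(u,v)=\int\psi_\theta(u)\psi_\theta(v)\,\mathcal{N}(\theta\mid 0,\rho)\,d\theta$, the reduction of $T(p)$ to $\mathrm{Var}_\mu(g)$ via $E_\mu[g]=1$, the upgrade from $\mu$-a.e.\ to everywhere using the bound $\psi_\theta(u)\le(1-\rho)^{-1/2}e^{\theta^2/(2\rho)}$ and dominated convergence, the triangular change of variables $z^j=\Phi^{-1}(u^j)$, and the appeal to completeness of the normal location family (Lehmann, Theorem~4.3.1). The paper's proof is step-for-step the same, including the concerns you flag about the equality case.
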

We highlight this above lemma to show that $T(p)$ has the makings of a probability divergence, which we will use later. We now prove the second requirement for the almost super-martingale.
\begin{lemma}\label{lemma:R_bounded}
Under the assumptions above, we have that
$$
\sum_n E\left[R_n \mid \mathcal{F}_{n-1}\right] <\infty 
$$
almost surely.
\begin{proof}
We only need to bound
$$
 \zeta_n = \int \int \prod_{j=1}^d c_\rho\left(u_{n-1}^j, v_{n-1}^j \right)^2 f_0(\mathbf{y}) \, f_0(\mathbf{y}') \, d\mathbf{y} \, d\mathbf{y}'.
$$
Following the univariate proof of \cite{Hahn2018}, we have from the mixture representation of the copula and Cauchy-Schwarz that
$$
c_\rho(u,v)^2 \lesssim \exp\left(\lambda z_u^2 \right) \, \exp\left(\lambda z_v^2 \right)
$$
where $\lesssim$ indicates the inequality up to a constant, $\lambda = \rho/(1+\rho)$ and $z_u = \Phi^{-1}(u)$. This then gives us
\begin{equation*}
\begin{aligned}
 \zeta_n &\lesssim \left[\int \prod_{j=1}^d \exp\left\{{\lambda (z^j)^2}\right\}f_0(\mathbf{y}) \,d\mathbf{y}\right]^2.
 \end{aligned}
 \end{equation*}
Again applying a change of variables from $\mathbf{y}$ to $\mathbf{z} = \{z^1,\ldots,z^d\}, z^j = \Phi^{-1}(u_{n-1}^j)$ as before,  we can write
\begin{equation*}
\begin{aligned}
\int \prod_{j=1}^d \exp\left\{{\lambda (z^j)^2}\right\}f_0(\mathbf{y}) \, d\mathbf{y} &\propto \int \exp\left\{{\left(\lambda-\frac{1}{2}\right) \sum_{j=1}^d (z^j)^2}\right\} \frac{f_0(\mathbf{y})}{p_{n-1}(\mathbf{y})} \, d \mathbf{z}\\
&= \int \exp\left\{{\left(\lambda-\frac{1}{2}\right) \sum_{j=1}^d (z^j)^2}\right\}\frac{f_0(\mathbf{y})}{p_{0}(\mathbf{y})\prod_{i=1}^{n-1}(1-\alpha_i+\alpha_i \prod_{j=1}^d c_{i,j})} \, d \mathbf{z}\\&\leq \prod_{i=1}^{n-1} (1-\alpha_i)^{-1} \int \exp\left\{{\left(\lambda-\frac{1}{2}\right) \sum_{j=1}^d (z^j)^2}\right\}\frac{f_0(\mathbf{y})}{p_{0}(\mathbf{y})} \, d\mathbf{z}
\\&\leq B\prod_{i=1}^{n-1} (1-\alpha_i)^{-1} \int \exp\left\{{\left(\lambda-\frac{1}{2}\right) \sum_{j=1}^d (z^j)^2}\right\} \, d\mathbf{z}.
\end{aligned}
\end{equation*}
 The third line follows from $c_{i,j}:= c_\rho\left(u_{i-1}^j, v_{i-1}^j \right) \geq 0$, and the last line from Assumption \ref{assump:bounded}. As $\lambda < \frac{1}{2}$, we have that $\int \exp\left\{{\left(\lambda-\frac{1}{2}\right) \sum_{j=1}^d (z^j)^2}\right\} d\mathbf{z}$ is bounded. Following through, we need to show
\begin{equation}\label{eq:consistency_sum}
\sum_{n}^\infty \alpha_n^2 \left\{ \prod_{i=1}^{n-1} (1-\alpha_i)^{-2}\right\}
\end{equation}
converges as in original proof. The product term on the right can be written as
\begin{equation*}
\begin{aligned}
\prod_{i=1}^{n-1} (1-\alpha_i)^{-2} &= \exp\left\{-2\sum_{i=1}^{n-1}\log(1-\alpha_i) \right\}\\
&\leq \exp\left\{\frac{2}{1-\alpha_1}\sum_{i=1}^{n-1}\alpha_i \right\}\\
&\leq \exp\left\{\frac{2a}{1-a/2}\sum_{i=1}^{n-1}(i+1)^{-1} \right\}\\
&\leq n^{\frac{4a}{2-a} }
\end{aligned}
\end{equation*}
where the last line follows from $\sum_{i=1}^n i^{-1} \leq \log n + 1$. Finally, for the sum in \eqref{eq:consistency_sum} to be finite, we just require
$$
\frac{4a}{2-a} \leq 1
$$
which is satisfied for $a < 2/5$ as in Assumption \ref{assump:a}.
\end{proof}
\end{lemma}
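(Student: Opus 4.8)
The plan is to reduce the almost-sure summability of $\sum_n E[R_n \mid \mathcal{F}_{n-1}]$ to a purely deterministic convergent series, using the nonnegativity of the copula density to eliminate the data-dependence of $p_{n-1}$. First I would compute the conditional expectation: the only quantity in $R_n$ depending on the fresh datum $\mathbf{Y}_n$ is $v^j_{n-1} = P_{n-1}(Y_n^j \mid Y_n^{1:j-1})$, and $\mathbf{Y}_n \sim f_0$ given $\mathcal{F}_{n-1}$, so
$$
E[R_n \mid \mathcal{F}_{n-1}] = 2\alpha_n^2 \int\!\!\int \Big\{\prod_{j=1}^d c_\rho(u^j_{n-1}, v^j_{n-1}) - 1\Big\}^2 f_0(\mathbf{y})\, f_0(\mathbf{y}')\, d\mathbf{y}\, d\mathbf{y}'.
$$
Expanding the square, the cross term $-2\int\!\!\int \prod_j c_\rho\, f_0 f_0$ equals $-2\{T(p_{n-1})+1\}$, which is nonpositive by Lemma \ref{lemma:T_pos}, and the constant term contributes a multiple of $\alpha_n^2$ that is summable. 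Hence $E[R_n \mid \mathcal{F}_{n-1}] \leq 2\alpha_n^2 \zeta_n$, and it suffices to show $\sum_n \alpha_n^2 \zeta_n < \infty$ almost surely, where $\zeta_n = \int\!\!\int \prod_{j=1}^d c_\rho(u^j_{n-1}, v^j_{n-1})^2 f_0(\mathbf{y}) f_0(\mathbf{y}')\, d\mathbf{y}\, d\mathbf{y}'$.

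Next I would import the pointwise bound used in the univariate argument of \citet{Hahn2018}: from the Gaussian-mixture representation of $c_\rho$ and Cauchy--Schwarz, writing $z_u = \Phi^{-1}(u)$, one has $c_\rho(u,v)^2 \lesssim \exp(\lambda z_u^2)\exp(\lambda z_v^2)$ with $\lambda = \rho/(1+\rho) < 1/2$. Since the integrand factorizes over $j$ and the two copies $\mathbf{y}, \mathbf{y}'$ are integrated against the same $f_0$, this collapses $\zeta_n$ to the square of a single integral,
$$
\zeta_n \lesssim \left[\int \prod_{j=1}^d \exp\{\lambda (z^j)^2\}\, f_0(\mathbf{y})\, d\mathbf{y}\right]^2, \qquad z^j := \Phi^{-1}(u^j_{n-1}).
$$

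The crux is to evaluate this single integral by the change of variables $\mathbf{y} \mapsto \mathbf{z}$ with $z^j = \Phi^{-1}(P_{n-1}(y^j \mid y^{1:j-1}))$. Because $z^j$ depends only on $y^{1:j}$, the Jacobian is triangular, and its determinant reproduces exactly $p_{n-1}(\mathbf{y})/\prod_j \mathcal{N}(z^j \mid 0,1)$, so the change of variables turns $\prod_j \exp\{\lambda(z^j)^2\} f_0(\mathbf{y})$ into $\exp\{(\lambda - \tfrac12)\sum_j (z^j)^2\}\, f_0(\mathbf{y})/p_{n-1}(\mathbf{y})$ up to a constant. Here the nonnegativity is essential: since $p_{n-1}(\mathbf{y}) = p_0(\mathbf{y})\prod_{i=1}^{n-1}(1 - \alpha_i + \alpha_i \prod_j c_{i,j})$ with every $c_{i,j} \geq 0$, we get the deterministic lower bound $p_{n-1} \geq p_0 \prod_{i=1}^{n-1}(1-\alpha_i)$, whence Assumption \ref{assump:bounded} yields $f_0/p_{n-1} \leq B\prod_{i=1}^{n-1}(1-\alpha_i)^{-1}$ uniformly in $\mathbf{y}$. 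The remaining Gaussian integral $\int \exp\{(\lambda - \tfrac12)\sum_j (z^j)^2\}\, d\mathbf{z}$ converges because $\lambda < 1/2$, leaving $\zeta_n \lesssim B^2 \prod_{i=1}^{n-1}(1-\alpha_i)^{-2}$.

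Finally I would check the numerical series. Using $-\log(1-\alpha_i) \leq \alpha_i/(1-\alpha_1)$, $\alpha_i = a(i+1)^{-1}$, and $\sum_{i=1}^{n} i^{-1} \leq \log n + 1$, the product is controlled by $\prod_{i=1}^{n-1}(1-\alpha_i)^{-2} \lesssim n^{4a/(2-a)}$, so $\sum_n \alpha_n^2 \zeta_n \lesssim \sum_n n^{-2} n^{4a/(2-a)}$, which converges precisely when $4a/(2-a) < 1$, i.e. $a < 2/5$, exactly as required by Assumption \ref{assump:a}. I expect the main obstacle to be the multivariate change of variables: one must check that the map built from the conditional distribution functions $P_{n-1}(\cdot \mid y^{1:j-1})$ is a genuine triangular transformation whose Jacobian cleanly produces $p_{n-1}$, so that the product-of-copulas structure decouples into one-dimensional Gaussian factors. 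Once that bookkeeping and the deterministic lower bound on $p_{n-1}$ are secured, the harmonic-sum estimate and the threshold $a<2/5$ follow routinely.
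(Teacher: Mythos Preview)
Your proposal is correct and follows essentially the same route as the paper's proof: the same pointwise bound $c_\rho(u,v)^2 \lesssim \exp(\lambda z_u^2)\exp(\lambda z_v^2)$ with $\lambda=\rho/(1+\rho)$, the same triangular change of variables $z^j=\Phi^{-1}(u^j_{n-1})$ turning $f_0$ into $f_0/p_{n-1}$, the same deterministic lower bound $p_{n-1}\geq p_0\prod_{i=1}^{n-1}(1-\alpha_i)$ from nonnegativity of the copula, and the same harmonic-sum estimate leading to the threshold $a<2/5$. The only difference is that you explicitly justify the reduction to $\zeta_n$ by expanding the square and invoking Lemma~\ref{lemma:T_pos} for the cross term, whereas the paper simply asserts ``we only need to bound $\zeta_n$''; your extra sentence is a welcome clarification but not a different argument.
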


We now have the first main result, which is similar to that in \cite{Hahn2018}.
\begin{theorem}\label{Th:consistency1}
Let $p_n$ satisfy the update \eqref{eq:mv_DP_copdens} for $\mathbf{Y}_{1:n} \iid f_0(\mathbf{y})$ where $f_0$ is continuous. Under the assumptions above, we have that the following holds almost surely:
$$
K(f_0,p_n) \to K_\infty  \quad \textnormal{and} \quad \sum_{n=1}^\infty \alpha_n \,  T(p_{n-1}) < \infty,
$$
where $K_\infty$ is a random variable. 
\begin{proof}
This follows directly from \citet[Theorem 1]{Robbins1971} as we have shown the conditions required in our Lemmas \ref{lemma:T_pos}, \ref{lemma:R_bounded}.
\end{proof}
\end{theorem}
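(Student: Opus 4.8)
The plan is to recognize the recursion for $K(f_0,p_n)$ as an \emph{almost supermartingale} in the sense of \citet{Robbins1971} and to apply their convergence theorem. Concretely, I set $Z_n = K(f_0,p_n)$, which is nonnegative and $\mathcal{F}_n$-measurable (and finite under Assumption \ref{assump:bounded}, since $f_0/p_0\le B$ and the update \eqref{eq:mv_DP_copdens} keeps $p_n$ strictly positive on the support of $f_0$). The preliminary task is to produce the one-step inequality exhibiting the Robbins-Siegmund structure, i.e. the multivariate analogue of equation (17) of \cite{Hahn2018}: expanding $K(f_0,p_n)-K(f_0,p_{n-1})$ through \eqref{eq:mv_DP_copdens}, using $\log(1+x)\ge x-2x^2$ (valid because $\prod_j c_\rho(u^j,v^j)\ge 0$ and $\alpha_n\to 0$) to control the quadratic remainder, and taking conditional expectations, yields
\begin{equation*}
E\left[K(f_0,p_n)\mid\mathcal{F}_{n-1}\right] \le K(f_0,p_{n-1}) - \alpha_n\,T(p_{n-1}) + E\left[R_n\mid\mathcal{F}_{n-1}\right]
\end{equation*}
almost surely, with $R_n\ge 0$ by construction and $T(p_{n-1})$ the functional built from the conditional distribution functions of $p_{n-1}$.

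I would then match the three terms on the right to the Robbins-Siegmund template $E\left[Z_n\mid\mathcal{F}_{n-1}\right]\le(1+\beta_n)Z_{n-1}-\xi_n+\zeta_n$ with $\beta_n\equiv 0$, nonnegative ``decrease'' $\xi_n=\alpha_n T(p_{n-1})$, and nonnegative ``increase'' $\zeta_n=E\left[R_n\mid\mathcal{F}_{n-1}\right]$. The two hypotheses of the theorem are exactly the content of the lemmas already in hand: Lemma \ref{lemma:T_pos} gives $T(p_{n-1})\ge 0$, so $\xi_n\ge 0$, and Lemma \ref{lemma:R_bounded} gives $\sum_n\zeta_n<\infty$ almost surely. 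With these verified, the Robbins-Siegmund theorem delivers both conclusions at once: $Z_n=K(f_0,p_n)$ converges almost surely to a finite random limit $K_\infty$, and $\sum_n\xi_n=\sum_n\alpha_n T(p_{n-1})<\infty$ almost surely, which is precisely the statement.

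The substantive difficulty lies not in this final assembly, which is a clean citation once the pieces are in place, but in the two supporting lemmas. The nonnegativity of $T$ rests on rewriting the Gaussian copula product as a Gaussian mixture, recognizing the resulting expression as the integral of a variance so that $T(p)\ge 0$, and then upgrading ``$T(p)=0$'' to ``$p=f_0$'' via completeness of the multivariate normal location family. The harder obstacle is Lemma \ref{lemma:R_bounded}: the remainder carries a factor $\prod_{i=1}^{n-1}(1-\alpha_i)^{-2}$ that grows polynomially like $n^{4a/(2-a)}$, so that $\sum_n\alpha_n^2\prod_{i}(1-\alpha_i)^{-2}$ converges only when this growth is dominated by $\alpha_n^2\sim n^{-2}$, which is exactly where the sharp decay requirement $a<2/5$ of Assumption \ref{assump:a} is forced, together with the tail domination $f_0/p_0\le B$ of Assumption \ref{assump:bounded} used to bound the copula second moments. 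Since both lemmas are established earlier in the excerpt, the theorem then follows immediately.
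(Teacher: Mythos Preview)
Your proposal is correct and follows exactly the paper's approach: the one-step inequality you describe is precisely the displayed almost-supermartingale relation derived just before the lemmas, and the paper's proof then cites \citet[Theorem~1]{Robbins1971} with Lemmas~\ref{lemma:T_pos} and~\ref{lemma:R_bounded} supplying the nonnegativity of $\xi_n=\alpha_n T(p_{n-1})$ and the almost-sure summability of $\zeta_n=E[R_n\mid\mathcal{F}_{n-1}]$, just as you lay out.
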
 

It is not straightforward to show that $K_\infty = 0$ almost surely, even in the univariate case as claimed in \citet{Hahn2018}, as their proof by contradiction requires that
$$
K_\infty >0  \implies \liminf_n \,  T(p_{n-1}) > 0
$$ 
which is nontrivial to verify. Nonetheless, we have opted to retain the result that $K_\infty$ exists for completion, but we emphasize that this result is not necessary for the remaining proofs. We now deviate from the original proof and provide the  details missing from \cite{Hahn2018}.

We will rely on the second result of Theorem \ref{Th:consistency1}, that is the almost sure boundedness of $\sum_{n=1}^\infty \alpha_n \,  T(p_{n-1})$, to prove Hellinger consistency. These are new additional steps that depend on the divergence-like properties of $T(p)$. To proceed, we require two lemmas.
\begin{lemma}\label{lemma:liminfT}
From Theorem \ref{Th:consistency1}, we have that
$$
\liminf_n T(p_n) = 0
$$
almost surely.
\begin{proof}
For contradiction, assume that $\liminf_n \,  T(p_n) = \delta > 0$. Picking $\epsilon < \delta$, there exists $N$ such that for all $n>N$, we have
$$
T(p_n) > \delta - \epsilon.
$$ 
However, this implies that 
$$
\sum_{n=1}^\infty \alpha_n \,  T(p_{n-1}) \geq C + (\delta - \epsilon)\sum_{n=N+1}^\infty \alpha_n = \infty
$$
which is a contradiction as $\sum_{n=1}^\infty \alpha_n \, T(p_{n-1}) <\infty$.
\end{proof}
\end{lemma}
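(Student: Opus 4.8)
The plan is to prove this by contradiction, using only the two conclusions already secured in Theorem \ref{Th:consistency1} together with the nonnegativity of $T$ established in Lemma \ref{lemma:T_pos} and the divergence of the step-size series. The ingredients are: (i) $\sum_{n=1}^\infty \alpha_n\,T(p_{n-1}) < \infty$ almost surely; (ii) $T(p_{n-1}) \ge 0$ for every $n$; and (iii) $\sum_{n=1}^\infty \alpha_n = \infty$, which holds because $\alpha_n = a(n+1)^{-1}$ is, up to the constant $a$, the harmonic sequence. I would first restrict attention to the almost-sure event on which the series in (i) is finite, and carry out a purely deterministic argument there.

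On that event, suppose toward a contradiction that $\liminf_n T(p_n) = \delta > 0$. Fixing any $\epsilon \in (0,\delta)$, the definition of the limit inferior supplies an index $N$ with $T(p_n) > \delta - \epsilon$ for all $n > N$. Using nonnegativity of the earlier terms together with this eventual lower bound, I would bound the tail of the weighted series from below by $\sum_{n>N} \alpha_n\,T(p_{n-1}) \ge (\delta-\epsilon)\sum_{n>N}\alpha_n$. Since $\sum_n \alpha_n$ diverges by (iii), the right-hand side is infinite, contradicting the finiteness in (i). Hence $\liminf_n T(p_n) = 0$ almost surely.

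This is essentially a Kronecker/Abel-type observation: a nonnegative series $\sum_n \alpha_n b_n$ can converge with divergent weights $\sum_n \alpha_n = \infty$ only if the factors $b_n = T(p_{n-1})$ have limit inferior zero. The only points requiring care are minor: the harmless index shift between $T(p_n)$ and $T(p_{n-1})$ in the sum (the limit inferior of a sequence and of its one-step shift coincide), and the fact that summability in (i) holds only almost surely, so the conclusion is also only almost sure. There is no substantive analytic obstacle here, since the real work was done upstream in Lemma \ref{lemma:R_bounded} (to obtain summability) and Lemma \ref{lemma:T_pos} (to obtain nonnegativity). What this lemma does \emph{not} deliver, and what remains the genuine difficulty for the overall consistency argument, is upgrading $\liminf_n T(p_n)=0$ to $\lim_n T(p_n)=0$ or directly to Hellinger convergence; that step must separately exploit the divergence-like structure of $T$ and the convergence of $K(f_0,p_n)$ from Theorem \ref{Th:consistency1}.
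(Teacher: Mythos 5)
Your proposal is correct and follows essentially the same contradiction argument as the paper: assume $\liminf_n T(p_n)=\delta>0$, bound the tail of $\sum_n \alpha_n T(p_{n-1})$ below by $(\delta-\epsilon)\sum_{n>N}\alpha_n=\infty$, and contradict the almost-sure summability from Theorem \ref{Th:consistency1}. Your added remarks on the index shift, the divergence of $\sum_n\alpha_n$, and restricting to the almost-sure event are harmless clarifications of the same argument.
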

Let us now consider the density on $\mathbf{u}=\{u^1,\ldots,u^d\} \in [0,1]^d$ defined
\begin{equation}\label{eq:g_n}
g_n(\mathbf{u}) =\prod_{j=1}^d\frac{ f_0^j\{{(P_n^j)}^{-1}(u^j)\} }{p^j_n\{{(P_n^j)}^{-1}(u^j)\}}
\end{equation}
where  as a reminder
\begin{equation*}
\begin{aligned}
f_0^j(y^j) &= f_0(y^j \mid y^{1:j-1})\\
p_n^j(y^j) &= p_n(y^j \mid y^{1:j-1})\\P_n^j(y^j) &= P_n(y^j \mid y^{1:j-1})\\
 {\left(P_n^j\right)}^{-1}(u^j)&= P_n^{-1}(u^j \mid y^{1:j-1}).
\end{aligned}
\end{equation*}
We can  write $T(p_n)$ as a function of $g_n$ through a change of variables from $y^j$ to $u^j = P^j_n(y^j )$, and similarly for ${y^j}'$ to $v^j$, for $j = 1,\ldots, d$, that is
$$
T(g_n) =\int  \int \prod_{j=1}^d c_\rho(u^j,v^j)\, g_n(\mathbf{u})\, g_n(\mathbf{v})\, d\mathbf{u} \, d\mathbf{v}-1.
$$ 
Lemma \ref{lemma:liminfT} gives us the following.
\begin{lemma}\label{lemma:liminff}
The sequence of densities $g_n$ satisfies
$$
g_\infty(\mathbf{u}) := \liminf_n \, g_n(\mathbf{u}) = 1
$$
for Lebesgue-almost all $\mathbf{u}\in [0,1]^d$ almost surely.
\begin{proof}
Repeated use of Fatou's lemma and the fact that $\liminf_n x^2_n = (\liminf_n x_n)^2$ for $x_n \geq 0$ gives us
\begin{equation*}
\begin{aligned}
T(g_\infty) &= E_{\boldsymbol{\theta}}\left[ \left(\int \prod_{j=1}^d\psi_{\theta^j}(u^j) \, g_\infty(\mathbf{u})\,d\mathbf{u} \right)^2 \right] - 1\\
& \leq E_{\boldsymbol{\theta}}\left[  \left(\liminf_n \int \prod_{j=1}^d\psi_{\theta^j}(u^j)  \, g_n(\mathbf{u}) \, d\mathbf{u} \right)^2 \right] - 1\\
& = E_{\boldsymbol{\theta}}\left[ \liminf_n \left(\int \prod_{j=1}^d\psi_{\theta^j}(u^j)  \, g_n(\mathbf{u})\, d\mathbf{u} \right)^2 \right] - 1\\
& \leq \liminf_n E_{\boldsymbol{\theta}}\left[  \left(\int \prod_{j=1}^d\psi_{\theta^j}(u^j) \, g_n(\mathbf{u})\, d\mathbf{u} \right)^2 \right] - 1\\
&= \liminf_n \, T(g_n) = 0.
\end{aligned}
\end{equation*}
As $T(g_\infty)$ is non-negative, it is equal to 0. From the original proof then, $T(g_\infty) = 0$ implies $g_\infty(\mathbf{u}) = 1$ Lebesgue-almost everywhere from Lemma \ref{lemma:T_pos}.
\end{proof}
\end{lemma}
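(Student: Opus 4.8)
The plan is to reduce the statement $g_\infty = 1$ a.e.\ to the single assertion $T(g_\infty) = 0$, and then invoke the divergence-like behaviour of $T$ already established. Recall from Lemma \ref{lemma:T_pos} that $T$ is nonnegative and vanishes exactly at the constant density; read in the $\mathbf{u}$-coordinates this says precisely that $T(g) = 0$ if and only if $g = 1$ Lebesgue-a.e. Thus, once $T(g_\infty) = 0$ is in hand, the claim follows immediately. The entire content of the proof is therefore to establish the upper bound $T(g_\infty) \le 0$, which I would prove as a lower-semicontinuity estimate for $T$ along the liminf.

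First I would record the change-of-variables invariance $T(p_n) = T(g_n)$, where $g_n$ is the density ratio in \eqref{eq:g_n} and $T(g_n) = \int\int \prod_{j=1}^d c_\rho(u^j,v^j)\, g_n(\mathbf{u})\, g_n(\mathbf{v})\, d\mathbf{u}\, d\mathbf{v} - 1$; this is the substitution $u^j = P_n^j(y^j)$, ${v^j} = P_n^j({y^j}')$ used to define $T(g_n)$, and it turns Lemma \ref{lemma:liminfT} into $\liminf_n T(g_n) = 0$. Next I would bring in the Gaussian-location mixture representation of the copula product from the proof of Lemma \ref{lemma:T_pos}, namely $\prod_j c_\rho(u^j,v^j) = \int \prod_j \psi_{\theta^j}(u^j)\,\psi_{\theta^j}(v^j)\,\prod_j \mathcal{N}(\theta^j \mid 0,\rho)\, d\boldsymbol{\theta}$, which rewrites $T$ as the variance-type functional $T(g) = E_{\boldsymbol{\theta}}[X(\boldsymbol{\theta})^2] - 1$ with $X(\boldsymbol{\theta}) = \int \prod_j \psi_{\theta^j}(u^j)\, g(\mathbf{u})\, d\mathbf{u}$ and $\theta^j \iid \mathcal{N}(0,\rho)$.

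The heart of the argument is then the chained use of Fatou's lemma. Since $g_\infty = \liminf_n g_n$ and all integrands are nonnegative, Fatou applied to the inner $\mathbf{u}$-integral gives $X_\infty(\boldsymbol{\theta}) \le \liminf_n X_n(\boldsymbol{\theta})$ pointwise in $\boldsymbol{\theta}$; because $t \mapsto t^2$ is nondecreasing on $[0,\infty)$ and $\liminf_n x_n^2 = (\liminf_n x_n)^2$ for nonnegative $x_n$, this upgrades to $X_\infty(\boldsymbol{\theta})^2 \le \liminf_n X_n(\boldsymbol{\theta})^2$; a second Fatou in the outer $\boldsymbol{\theta}$-expectation yields $E_{\boldsymbol{\theta}}[X_\infty^2] \le \liminf_n E_{\boldsymbol{\theta}}[X_n^2]$, i.e.\ $T(g_\infty) \le \liminf_n T(g_n) = 0$. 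Combined with nonnegativity this forces $T(g_\infty) = 0$, and the equality case of Lemma \ref{lemma:T_pos} gives $g_\infty = 1$ a.e. The Fatou interchanges are routine once one notes the pointwise domination $\psi_\theta(u) \le (1-\rho)^{-1/2}\exp\{\theta^2/(2\rho)\}$ that keeps the quantities measurable and finite.

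The step I expect to be the main obstacle is the passage from $T(g_\infty) \le 0$ to $g_\infty = 1$, i.e.\ the bookkeeping that makes the nonnegativity-and-equality characterization of $T$ legitimately applicable to $g_\infty$. Unlike each $g_n$, the liminf $g_\infty$ is only guaranteed measurable and nonnegative with $\int g_\infty \le 1$ by Fatou, whereas the variance interpretation $T(g) = \mathrm{Var}_{\boldsymbol{\theta}}(X)$ and its vanishing-iff-constant consequence rely on $E_{\boldsymbol{\theta}}[X(\boldsymbol{\theta})] = \int g = 1$. I would therefore verify that $\int g_\infty = 1$ (for instance via uniform integrability of $\{g_n\}$ or by arguing along a subsequence realizing the liminf of $T$), so that $T(g_\infty) = 0$ genuinely forces $X_\infty(\boldsymbol{\theta}) \equiv 1$ and hence, by completeness of the Gaussian location family \citep[Theorem 4.3.1]{Lehmann2006}, $g_\infty = 1$ Lebesgue-a.e. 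Confirming that the boundary case $\int g_\infty < 1$ does not leak through is exactly the delicate point of the proof.
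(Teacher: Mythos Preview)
Your approach is exactly the paper's: the same mixture representation $T(g)=E_{\boldsymbol{\theta}}[X(\boldsymbol{\theta})^2]-1$, the same double application of Fatou (inner $\mathbf{u}$-integral, then outer $\boldsymbol{\theta}$-expectation) together with $\liminf_n x_n^2=(\liminf_n x_n)^2$ for $x_n\ge 0$, yielding $T(g_\infty)\le \liminf_n T(g_n)=0$, followed by the appeal to Lemma~\ref{lemma:T_pos}.

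The concern you flag---that Lemma~\ref{lemma:T_pos} (both the nonnegativity $T(g)\ge 0$ and the equality case via completeness) was proved for genuine densities, whereas Fatou only gives $\int g_\infty\le 1$---is well taken, and the paper does not address it either: it simply asserts ``$T(g_\infty)$ is non-negative'' and invokes Lemma~\ref{lemma:T_pos}. So you have not missed a trick in the paper; this is a point both arguments leave implicit, and your instinct to secure $\int g_\infty=1$ (e.g.\ via uniform integrability of $\{g_n\}$) before invoking the equality case is the natural way to close it.
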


We now require the squared Hellinger distance between probability density functions $g_1$ and  $g_2$ on $\mathbf{u}\in [0,1]^d$, which is defined
$$
 H^2(g_1,g_2) := 1- \int \sqrt{g_1(\mathbf{u}) \, g_2(\mathbf{u})} \, d\mathbf{u}.
$$
A straightforward lemma follows from this.
\begin{lemma} \label{lemma:hellinger}
The density $p_n$ is Hellinger consistent at $f_0$ if and only if $g_n$ in \eqref{eq:g_n} is Hellinger consistent at the uniform density on $[0,1]^d$. 
 \begin{proof}
 Through a change of variables from $u_j$ to $y_j =  {(P_n^j)}^{-1}(u^j)$ for $j = 1,\ldots,d$, it is simple to show that
\begin{equation*}
\begin{aligned}
H^2(g_n,1) &= 1-\int \sqrt{g_n(\mathbf{u})}\,  d\mathbf{u} \\&=  1- \int \sqrt{f_0(\mathbf{y})\, p_n(\mathbf{y})} \,d\mathbf{y} \\
&= H^2(p_n,f_0).
\end{aligned}
\end{equation*}
\end{proof}
\end{lemma}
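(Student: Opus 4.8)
The plan is to reduce the entire "if and only if" to a single per-$n$ identity. Since Hellinger consistency of $p_n$ at $f_0$ means $H^2(p_n,f_0)\to 0$ as $n\to\infty$, and consistency of $g_n$ at the uniform density means $H^2(g_n,1)\to 0$, it suffices to prove that for each fixed $n$ we have the exact equality
$$
H^2(g_n,1) = H^2(p_n,f_0).
$$
If this holds, the two sequences $\{H^2(g_n,1)\}$ and $\{H^2(p_n,f_0)\}$ coincide term by term, so one tends to zero precisely when the other does, giving both directions of the equivalence at once. Thus the whole lemma is a change-of-variables calculation, with no asymptotics involved.

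Next I would set up the change of variables. Recalling $H^2(g_n,1) = 1 - \int_{[0,1]^d}\sqrt{g_n(\mathbf{u})}\,d\mathbf{u}$, I substitute $y^j = (P_n^j)^{-1}(u^j)$, equivalently $u^j = P_n(y^j\mid y^{1:j-1})$, for $j=1,\ldots,d$. The crucial structural observation is that $u^j$ depends on $\mathbf{y}$ only through $y^1,\ldots,y^j$, so $\partial u^j/\partial y^k = 0$ whenever $k>j$; the Jacobian matrix of the map $\mathbf{y}\mapsto\mathbf{u}$ is therefore lower triangular. Its diagonal entries are $\partial u^j/\partial y^j = p_n(y^j\mid y^{1:j-1}) = p_n^j(y^j)$, so the determinant is the product of the diagonal, and by the chain rule for densities
$$
\Bigl|\det\tfrac{\partial\mathbf{u}}{\partial\mathbf{y}}\Bigr| = \prod_{j=1}^d p_n(y^j\mid y^{1:j-1}) = p_n(\mathbf{y}),
$$
i.e. $d\mathbf{u} = p_n(\mathbf{y})\,d\mathbf{y}$.

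I then evaluate $g_n$ along the transformation. Because $(P_n^j)^{-1}(u^j)=y^j$ under this substitution, the definition \eqref{eq:g_n} collapses to $g_n(\mathbf{u}) = \prod_{j=1}^d f_0^j(y^j)/p_n^j(y^j) = f_0(\mathbf{y})/p_n(\mathbf{y})$, again using that the product of conditionals is the joint. Combining the two pieces gives
$$
\int_{[0,1]^d}\sqrt{g_n(\mathbf{u})}\,d\mathbf{u} = \int_{\mathbb{R}^d}\sqrt{\tfrac{f_0(\mathbf{y})}{p_n(\mathbf{y})}}\,p_n(\mathbf{y})\,d\mathbf{y} = \int_{\mathbb{R}^d}\sqrt{f_0(\mathbf{y})\,p_n(\mathbf{y})}\,d\mathbf{y},
$$
whence $H^2(g_n,1) = 1 - \int\sqrt{f_0\,p_n}\,d\mathbf{y} = H^2(p_n,f_0)$, as required.

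The main obstacle is not the algebra but justifying that the map $\mathbf{y}\mapsto\mathbf{u}$ is a legitimate (bijective, $C^1$) change of variables. This needs each conditional cdf $P_n^j(\cdot\mid y^{1:j-1})$ to be strictly increasing and absolutely continuous, so that the inverse $(P_n^j)^{-1}$ exists and the triangular Jacobian above is well defined and nonvanishing. I would verify this from the structure of the copula update \eqref{eq:mv_DP_copdens}: since the Gaussian copula density $c_\rho$ is strictly positive and $p_n$ is obtained from a continuous, strictly positive initial density $p_0$ by multiplying by positive factors, each $p_n^j>0$ on the support, making $P_n^j$ strictly monotone and the map a diffeomorphism from $\mathbb{R}^d$ onto the interior of $[0,1]^d$. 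With positivity in hand, the boundary (a Lebesgue-null set) is irrelevant to the integrals, and the computation goes through.
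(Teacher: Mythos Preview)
Your proposal is correct and follows exactly the same approach as the paper: both reduce the equivalence to the per-$n$ identity $H^2(g_n,1)=H^2(p_n,f_0)$ via the change of variables $u^j=P_n(y^j\mid y^{1:j-1})$. You simply supply more detail than the paper does, spelling out the triangular Jacobian with determinant $p_n(\mathbf{y})$ and the regularity (strict positivity of $p_n^j$) needed for the map to be a diffeomorphism, whereas the paper just says ``it is simple to show''.
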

We now have the final main result.
\begin{theorem}\label{Th:consistency2}
The density $g_n$ converges in Hellinger distance to the uniform, that is
$$
\lim_n H^2(g_n,1) = 0
$$
almost surely.
\begin{proof}
The limit superior of the Hellinger distance is 
\begin{equation*}
\begin{aligned}
\limsup_n H^2(g_n,1) &= 1- \liminf_n \int \sqrt{g_n(\mathbf{u})}\, d\mathbf{u}.
\end{aligned}
\end{equation*}
From Fatou's lemma and the fact that $\liminf_n \sqrt{x_n} = \sqrt{\liminf_n x_n}$ for $x_n \geq 0$, we have
\begin{equation*}
\begin{aligned}
\liminf_n \int \sqrt{g_n(\mathbf{u})}\, d\mathbf{u} \geq \int \sqrt{  \liminf_n g_n(\mathbf{u})}  \, d\mathbf{u} = 1.
\end{aligned}
\end{equation*}
So we have
\begin{equation*}
\begin{aligned}
\limsup_n H^2(g_n,1) =0
\end{aligned}
\end{equation*}
and $0\leq\liminf_n H^2(g_n,1) \leq \limsup_n H^2(g_n,1)$, which gives us the result.
\end{proof}
\end{theorem}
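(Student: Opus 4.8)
The plan is to use the Kullback--Leibler divergence $K(f_0,p_n)$ as a Lyapunov function and to show that it behaves like an almost supermartingale in the sense of Robbins and Siegmund \citep{Robbins1971}. First I would substitute the multiplicative update, $p_n = \{1-\alpha_n + \alpha_n \prod_j c_\rho(u^j,v^j)\}\,p_{n-1}$, into $K(f_0,p_n)$ and apply the elementary inequality $\log(1+x)\geq x - 2x^2$ near $x=0$ (legitimate since the copula product is nonnegative and $\alpha_n\to0$). This produces a one-step drift bound of the shape $E[K(f_0,p_n)\mid\mathcal{F}_{n-1}] - K(f_0,p_{n-1}) \leq -\alpha_n\,T(p_{n-1}) + E[R_n\mid\mathcal{F}_{n-1}]$, where $T(p)$ is the symmetric divergence-like functional defined in the appendix and $R_n$ collects the second-order remainder.

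Two ingredients then make this an almost supermartingale. The first is positivity of the drift multiplier: I would establish that $T(p)\geq 0$, with equality if and only if $p=f_0$ Lebesgue-a.e. The key device is the Gaussian mixture representation $\prod_j c_\rho(u^j,v^j) = \int \prod_j \psi_{\theta^j}(u^j)\,\psi_{\theta^j}(v^j)\,\prod_j\mathcal{N}(\theta^j\mid 0,\rho)\,d\boldsymbol{\theta}$, which rewrites $T(p)$ as an integrated variance and hence renders it manifestly nonnegative; the equality case then follows from completeness of the multivariate normal location family. The second ingredient is summability of the remainder, $\sum_n E[R_n\mid\mathcal{F}_{n-1}]<\infty$ a.s. Here I would bound $\int\!\!\int (\prod_j c_\rho)^2 f_0\,f_0$ using the Cauchy--Schwarz estimate $c_\rho(u,v)^2\lesssim \exp(\lambda z_u^2)\exp(\lambda z_v^2)$ with $\lambda=\rho/(1+\rho)<1/2$, the tail hypothesis $f_0/p_0\leq B$, and the crude lower bound $p_{n-1}\geq p_0\prod_i(1-\alpha_i)$. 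This reduces matters to the convergence of $\sum_n \alpha_n^2\prod_{i<n}(1-\alpha_i)^{-2}$, and since $\prod_{i<n}(1-\alpha_i)^{-2}\leq n^{4a/(2-a)}$, convergence holds exactly when $4a/(2-a)<1$, i.e.\ $a<2/5$; this is precisely where the hypothesis on $a$ is consumed. Robbins--Siegmund then gives a.s.\ convergence of $K(f_0,p_n)$ together with $\sum_n \alpha_n\,T(p_{n-1})<\infty$.

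From the summability and $\sum_n\alpha_n=\infty$ I would deduce $\liminf_n T(p_n)=0$ a.s. The delicate point, and what I expect to be the main obstacle, is upgrading this $\liminf$ into genuine Hellinger convergence: showing $K_\infty=0$ directly seems intractable, so my plan is to sidestep it rather than close that gap. I would pass to the standardized densities $g_n(\mathbf{u}) = \prod_j f_0^j\{(P_n^j)^{-1}(u^j)\}/p_n^j\{(P_n^j)^{-1}(u^j)\}$ on $[0,1]^d$ via the triangular change of variables $u^j=P_n^j(y^j)$, under which $T(g_n)=\int\!\!\int\prod_j c_\rho(u^j,v^j)\,g_n(\mathbf{u})\,g_n(\mathbf{v})\,d\mathbf{u}\,d\mathbf{v}-1$ and, crucially, the Hellinger distance is preserved, $H^2(g_n,1)=H^2(p_n,f_0)$. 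Repeated Fatou arguments (using $\liminf_n x_n^2 = (\liminf_n x_n)^2$ for $x_n\geq0$) give $T(g_\infty)\leq\liminf_n T(g_n)=0$ for $g_\infty:=\liminf_n g_n$, so the equality case of the $T\geq0$ lemma forces $g_\infty=1$ a.e. A final Fatou step then yields $\limsup_n H^2(g_n,1) = 1-\liminf_n\int\sqrt{g_n}\,d\mathbf{u} \leq 1-\int\sqrt{\liminf_n g_n}\,d\mathbf{u} = 0$, which converts the $\liminf$ control into a true limit and, by the Hellinger identity, completes the proof of consistency.
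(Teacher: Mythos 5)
Your proposal is correct and follows essentially the same route as the paper: the Robbins--Siegmund almost-supermartingale bound on $K(f_0,p_n)$, the nonnegativity and equality case of $T$ via the Gaussian mixture representation and completeness of the normal location family, the $a<2/5$ summability computation, the passage to the standardized densities $g_n$ with $H^2(g_n,1)=H^2(p_n,f_0)$, the repeated Fatou argument forcing $\liminf_n g_n=1$ almost everywhere, and the final Fatou step converting this into $\limsup_n H^2(g_n,1)=0$. In particular, your decision to sidestep proving $K_\infty=0$ and instead extract consistency from $\liminf_n T(p_n)=0$ is exactly the strategy the paper adopts.
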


From Lemma \ref{lemma:hellinger} and Theorem \ref{Th:consistency2}, we have Theorem \ref{Th:consistency} in the main paper as desired.

\section{Copula derivations}\label{Appendix:copula_deriv}

\subsection{Density estimation}

\subsubsection{The sequence of weights}\label{Appendix:alpha_deriv}

We now derive the sequence of weights $\alpha_i$ for the univariate copula update. The actual copula update for $n>1$ for the posterior DP mixture is
\begin{equation*}
p_{n+1}(y) =p_n(y) \,  \frac{\int p(y \mid G)\, p(y_{n+1} \mid G) \, d\pi(G \mid y_{1:n})}{p_n(y) \,  p_n(y_{n+1})} 
\end{equation*}
where $\pi(G \mid y_{1:n})$ is a mixture of Dirichlet processes, that is
\begin{equation*}
\begin{aligned}
[G \mid \theta_{1:n},y_{1:n}] &\sim \text{DP}\left(a + n, \frac{a G_0 + \sum_{i=1}^n \delta_{\theta_i}}{a + n}\right)\\
[\theta_{1:n} \mid y_{1:n}] &\sim \pi(\theta_{1:n} \mid y_{1:n}).
\end{aligned}
\end{equation*}
Usually, samples from the posterior over the means of the cluster assignments $\pi(\theta_{1:n} \mid y_{1:n})$ are obtained through Gibbs sampling. For tractability, we need to modify the term $\pi(\theta_{1:n} \mid y_{1:n})$.  Let us instead assume that each cluster mean is drawn independently from the prior, so we can write
$$
\pi(\theta_{1:n} \mid y_{1:n}) = \prod_{i=1}^n G_0(\theta_i).
$$
Now computing the integral term, we have
\begin{equation}\label{eq:alpha_first}
\begin{aligned}
\int p(y \mid G) \, p(y_{n+1} \mid G) \,  d\pi(G \mid y_{1:n}) = E\left[\int p(y \mid G) \,  p(y_{n+1} \mid G)\,   d\pi(G \mid \theta_{1:n}) \right]
\end{aligned}
\end{equation}
where the expectation is over $\theta_{1:n} \sim \prod_{i=1}^n G_0(\theta_i)$. We can use the stick-breaking construction of the DP for the term inside the integral to get the familiar form (see Section \ref{Appendix:multivariate}). We have the inner term 
\begin{equation}\label{eq:alpha_inner}
\begin{aligned}
\int p(y \mid G) \, p(y_{n+1} \mid G) \, d\pi(G \mid \theta_{1:n})  = &\left(1- \frac{1}{a + n+1}\right) \int  K(y \mid \theta) \,  dG_n(\theta) \int K(y_{n+1} \mid \theta') \,  dG_n(\theta') \\ +&\frac{1}{a + n+1}\int  K(y \mid \theta) K(y_{n+1} \mid \theta)\, dG_n(\theta)
\end{aligned}
\end{equation}
where we write $K(y \mid \theta) = \mathcal{N}(y \mid \theta,1)$. Here, $G_n$ is random and defined as
 \begin{equation*}
\begin{aligned}
G_n &= \frac{a G_0 + \sum_{i=1}^n \delta_{\theta_i}}{a + n} \\
\theta_{1:n} &\sim \prod_{i=1}^n G_0(\theta_i).
\end{aligned}
\end{equation*}

Taking expectation of the first term in \eqref{eq:alpha_inner}, we can write
 \begin{equation*}
\begin{aligned}
E\left[\int  K(y \mid \theta) \,dG_n(\theta) \int K(y_{n+1} \mid \theta')\, dG_n(\theta') \right] &= \frac{a^2}{(a + n)^2} \left\{\int K(y \mid \theta) \,dG_0(\theta)  \int K(y_{n+1} \mid \theta') \, dG_0(\theta')    \right\} \\ &+\frac{a}{(a+ n)^2}\left\{\int K(y \mid \theta)\, dG_0(\theta) \,\sum_{i=1}^n E\left[  K(y_{n+1} \mid \theta_i)\right] \right\} \\
&+\frac{a}{(a+ n)^2}\left\{\int K(y_{n+1} \mid \theta) \, dG_0(\theta) \,\sum_{i=1}^n E\left[  K(y \mid \theta_i)\right] \right\} \\
&+ \frac{1}{(a+n)^2} \sum_{i=1}^n \sum_{j=1}^n E\left[ K(y \mid \theta_i) \,K(y_{n+1} \mid \theta_j) \right]
\end{aligned}
\end{equation*}
We use the fact that $\theta_i \sim G_0$ and $\theta_i,\theta_j$ are independent for $i \neq j$ to simplify the above to:
 \begin{equation*}
\begin{aligned}
 \frac{a^2 + 2n a + (n^2 - n) }{(a + n)^2 } \left\{\int K(y \mid \theta)\, dG_0(\theta)  \int K(y_{n+1} \mid \theta') \,dG_0(\theta')    \right\} \\+ \frac{n}{(a+n)^2} \int  K(y \mid \theta) K(y_{n+1} \mid \theta)\, dG_0(\theta).
\end{aligned}
\end{equation*}
Taking expectation of the second term in \eqref{eq:alpha_inner} is much simpler:
\begin{equation*}
\begin{aligned}
E\left[ \int  K(y \mid \theta) \, K(y_{n+1} \mid \theta)\,dG_n(\theta) \right] = \int  K(y \mid \theta) \,K(y_{n+1} \mid \theta) \,dG_0(\theta).
\end{aligned}
\end{equation*}
Now plugging this back into \eqref{eq:alpha_first}, we have
\begin{equation*}
\begin{aligned}
&E\left[\int p(y \mid G) \,p(y_{n+1} \mid G) \,d\pi(G \mid \theta_{1:n}) \right]  \\&=  \left\{\frac{a+n }{a + n+1} \right\}\left\{ \frac{a^2 + 2n a + (n^2 - n) }{(a + n)^2 }\right\}\left\{\int K(y \mid \theta) \,dG_0(\theta)  \int K(y_{n+1} \mid \theta') \,dG_0(\theta')    \right\}\\&+ \underbrace{\left\{ \frac{1}{a + n+1} +\left\{\frac{a+n }{a + n+1} \right\} \frac{n}{(a+n )^2}\right\}}_{\alpha_{n+1}} \int  K(y \mid \theta) \, K(y_{n+1} \mid \theta)\,dG_0(\theta).
\end{aligned}
\end{equation*}
This suggests the update
\begin{equation*}
\begin{aligned}
p_{n+1}(y) = p_n(y)\left[1- \alpha_{n+1} + \alpha_{n+1} c_\rho\left\{P_n(y),P_n(y_{n+1}) \right\} \right]
\end{aligned}
\end{equation*}
where for $a = 1$, we have
$$
\alpha_n = \frac{1}{n+1} + \frac{n-1}{n(n+1)} =\left( 2 - \frac{1}{n}\right)\frac{1}{n+1}\cdot
$$
The intuitive reasoning for this discrepancy from the usually suggested $(i+1)^{-1}$ is due to the mixing over the atoms of $G_n$. Note that for $n = 1$, we still have $\alpha_1 = 0.5$, so the first copula update step still agrees with the DP mixture.

The assumption of
$
\pi(\theta_{1:n} \mid y_{1:n}) = \prod_{i=1}^n G_0(\theta_i)
$
is the only simplification required to get the copula update with the above $\alpha_i$ exactly, where $\theta_{1:n}$ are the means of the cluster allocations. For the DPMM, there are usually ties in the posterior samples of $\theta_{1:n}$, so by assuming all $\theta_i \iid G_0$, we have allocated each $y_i$ to its own cluster. In that sense, the copula update can be viewed as a mixture model where we allocate a new cluster for each data point, similar to the KDE.

\subsubsection{Multivariate copula method} \label{Appendix:multivariate}
In this section, we derive the copula update for the multivariate DPMM, focussing on just the first step. One could also follow the argument of Section \ref{Appendix:alpha_deriv} to return the same update with the specific form for $\alpha_i$. The multivariate DPMM with factorized kernel has the form
\begin{equation*}
\begin{aligned}
f_G(\mathbf{y}) = \int \prod_{j=1}^d\mathcal{N}(y^j \mid \theta^j,1) \, dG(\bm{\theta}),\quad 
 G \sim \text{DP}\left(a, G_0 \right), \quad G_0(\bm{\theta}) = \prod_{j=1}^d\mathcal{N}(\theta^j \mid 0,\tau^{-1}).
\end{aligned}
\end{equation*}
Following the example in \cite{Hahn2018} and \eqref{eq:Bayes_cop}, we want to compute the copula density for the first update step of the DPMM, that is
\begin{equation}\label{eq:mv_num}
\begin{aligned}
\frac{E\left[ f_G(\mathbf{y})\, f_G(\mathbf{y}_1) \right]}{p_0(\mathbf{y})\, p_0(\mathbf{y}_1)}\cdot
\end{aligned}
\end{equation}
From the stick-breaking representation of the DP, we can write $G$ as
$$
G = \sum_{k=1}^\infty w_k \, \delta_{\bm{\theta}^*_k}
$$
where $w_k = v_k \prod_{j< k} \{1-v_j\}$, $v_k \iid \text{Beta}(1,a)$ and $\bm{\theta}^*_k \iid G_0$.
We can then write the numerator as
\begin{equation*}
\begin{aligned}
&E\left[ \sum_{j=1}^\infty \sum_{k=1}^\infty  w_j \,w_k\, K(\mathbf{y} \mid \bm{\theta}^*_j)\, K(\mathbf{y}_1 \mid \bm{\theta}^*_k) \right] \\&=\left(1-E\left[ \sum_{k=1}^\infty w_k^2\right]\right)E\left[ K(\mathbf{y} \mid \bm{\theta}^*) \right] E\left[ K(\mathbf{y}_1 \mid \bm{\theta}^*) \right] + E\left[ \sum_{k=1}^\infty w_k^2\right] E\left[ K(\mathbf{y} \mid \bm{\theta}^*)\, K(\mathbf{y}_1 \mid \bm{\theta}^*) \right] 
\end{aligned}
\end{equation*}
where we have used the fact that $\sum_{k=1}^\infty w_k = 1$ almost surely. Here, $\theta^* \sim G_0$ and we have written 
$$
K(\mathbf{y} \mid \bm{\theta}^*)  = \prod_{j=1}^d K(y^j \mid \theta^{*j}), \quad K(y^j \mid \theta^{*j}) = \mathcal{N}(y^j \mid \theta^{*j},1).
$$
It is easy to show that
$$
\alpha_1 := E\left[ \sum_{k=1}^\infty w_k^2\right] = \frac{1}{1+a} \quad \textnormal{a.s.}
$$
As $p_0(\mathbf{y}) = E\left[ K(\mathbf{y} \mid \bm{\theta}^*) \right]$, we have that \eqref{eq:mv_num} can be written as
$$
1-\alpha_1 + \alpha_1 \frac{E\left[ K(\mathbf{y} \mid \bm{\theta}^*) \, K(\mathbf{y}_1 \mid \bm{\theta}^*) \right] }{p_0(\mathbf{y}) \, p_0(\mathbf{y}_1)} \cdot
$$
We note that the kernel $K$ factorizes with independent priors on each dimension, and $p_0(\mathbf{y}) = \prod_{j=1}^d p_0(y^j) = \prod_{j=1}^d \mathcal{N}(y^j \mid 0,1 +\tau^{-1})$, so
\begin{equation*}
\begin{aligned}
\frac{E\left[ K(\mathbf{y} \mid \bm{\theta}^*) \, K(\mathbf{y}_1 \mid \bm{\theta}^*) \right] }{p_0(\mathbf{y})\, p_0(\mathbf{y}_1)}  = \prod_{j=1}^d \frac{E\left[ K(y^j \mid \theta^{*j}) \, K({y}^j_1 \mid {\theta^{*j}}) \right] }{p_0({y}^j) \, p_0({y}^j_1)}\cdot 
\end{aligned}
\end{equation*}
Finally, we can compute each univariate term
\begin{equation*}
\begin{aligned}
\frac{E\left[ K(y \mid \theta^{*}) \,K({y}_1 \mid {\theta}^*) \right] }{p_0({y}) \,p_0({y}_1)} &= \frac{\mathcal{N}_2(\Phi^{-1}(u),\Phi^{-1}(v) \mid 0, 1,\rho)}{\mathcal{N}(\Phi^{-1}(u) \mid 0,1 )\mathcal{N}(\Phi^{-1}(v) \mid 0,1)}\\
\end{aligned}
\end{equation*}
where $\mathcal{N}_2(\cdot \mid 0,1,\rho)$  is the bivariate normal density with mean 0, variance 1 and correlation $\rho = 1/(1+\tau)$, and $u = P_0(y), v= P_0(y_1)$. This is of course exactly the Gaussian copula density $c_\rho(u,v)$.  Putting the above together gives us the copula update
\begin{equation*}
p_{1}(\mathbf{y}) = \left[1-\alpha_1 + \alpha_1 \prod_{j=1}^d c_\rho\left\{P_0(y^j),P_0(y_1^j)\right\}\right] p_0(\mathbf{y}).
\end{equation*}

\subsubsection{Categorical data}\label{Appendix:copula_discrete}
In this section, we derive a copula type update for categorical data. For $y$ on a countable space, we can derive a copula-type update with the DP prior, that is
$$
f_G(y)  = g(y), \quad G \sim \text{DP}(a,G_0)
$$
where $g$ is the probability mass function of $G$, and likewise for $g_0$ and $G_0$. The predictive probability mass function is
$$
p_n(y) = \frac{a g_0(y) + T_y^n}{a+n}\cdot
$$
where $T_y^n = \sum_{i=1}^n \mathbbm{1}(y_i = y)$. Following a similar calculation of the Dirichlet-categorical model in \cite{Hahn2018}, we have that 
\begin{equation*}
\begin{aligned}
d_\rho(y,y_1) = \frac{p_1(y)}{p_{0}(y)} &= \frac{a}{a+1}\left(1 + \frac{\mathbbm{1}(y= y_1)}{a g_0(y)} \right)\\
&= 1-\rho + \rho \frac{\mathbbm{1}(y= y_1)}{g_0(y)}
\end{aligned}
\end{equation*}
where we have used $p_0(y) = g_0(y)$ and $\rho = 1/(a+1)$. We can then compute
\begin{equation*}
\begin{aligned}
 D_\rho\{P_0(y), P_0(y_1)\}= P(Y \leq y, Y_1 \leq y_1) &= \sum_{z\leq y, z' \leq y_1} d_\rho(z,z')\,p_0(z)\,p_0(z') \\
&=(1-\rho)P_0(y)P_0(y_1) + \rho \left\{P_0(y) \wedge P_0(y_1)\right\} .
\end{aligned}
\end{equation*}
which again is the mixture of the independent and Fr\'{e}chet-Hoeffding copula.

For our updates, we can rewrite $d_\rho(y,y_1)$ as a function of $P_0,p_0$. Although $p_0(Y = k) = p_0(Y_1 = k)$ in this context, we need to keep the terms $ p_0(y), p_0(y_1)$ separate in anticipation of the multivariate case, where this equality may not hold as we will be working with conditionals $p_n(y^j \mid y^{1:j-1})$. 

 Using the above, we have
\begin{equation}\label{eq:DP_countable}
\begin{aligned}
d_\rho(y,y_1) = &\left\{[D_\rho\{P_0(y), P_0(y_1)\} - D_\rho\{P_0(y), P_0(y_1-1)\}] \right.\\
& \left. - [D_\rho\{P_0(y-1), P_0(y_1)\} - D_\rho\{P_0(y-1), P_0(y_1-1)\}]\right\}/\{p_0(y) \, p_0(y_1)\}.
\end{aligned}
\end{equation}
Here, $d_\rho(y,y_1)$ is the difference quotient of $D_\rho$, in a similar way $c_\rho(u,v)$ is the derivative of $C_\rho(u,v)$ for the continuous case. The update for the density and distribution function is then
\begin{equation*}
\begin{aligned}
p_1(y) &= d_\rho(y,y_1) \, p_0(y) \\
P_1(y ) &= [D_\rho\{P_0(y), P_0(y_1)\} -D_\rho\{P_0(y), P_0(y_1-1)\}]/p_0(y_1).
\end{aligned}
\end{equation*}
For the categorical case where $y \in \{1,\ldots,K\}$, we would have $P_0(y) = 0$ for $y <1$ and $P_0(y) = 1$ for $y \geq K$.

For mixed data where some dimensions of $\mathbf{y}$ may be discrete, the conditional factorization allows for an easy extension of the a multivariate mixed copula method. We simply substitute the bivariate Gaussian copula density $c_\rho$ with $d_\rho$ in \eqref{eq:mv_DP_copdens} for the respective discrete dimensions, where we may have different bandwidths for the discrete and continuous data. However, in the discrete case, obtaining the martingale posterior may be more computationally difficult as we do not have the property $P_i(y_{i+1}) \iid \mathcal{U}[0,1]$ as in the continuous case.

\subsection{Regression}
\subsubsection{Martingale}
We now show that predictive resampling in the regression context gives us a martingale. For the update 
$$
p_{i+1}(y \mid \mathbf{x}) = \left\{1-\alpha_{i+1}(\mathbf{x},\mathbf{x}_{i+1})+ \alpha_{i+1}(\mathbf{x},\mathbf{x}_{i+1}) \,c_{\rho_y}\left(q_i,r_i\right)\right\} p_i(y\mid \mathbf{x})
$$
where
$$
q_i = P_{i}(y\mid \mathbf{x}), \quad r_i = P_{i}(Y_{i+1} \mid \mathbf{X}_{i+1}),
$$
it is straightforward to show the martingale. Conditional on $\mathbf{X}_{i+1}  = \mathbf{x}_{i+1}$, we have that
$r_i \sim \mathcal{U}[0,1]$, so we can write
$$
E\left[ c_{\rho_y}\left(q_i,r_i\right)  \mid y_{1:i},\mathbf{x}_{1:i+1} \right]  = \int_{0}^1 c_{\rho_y}\left(q_i,r\right) \, dr = 1.
$$
So we have
$$
E\left[ p_{i+1}(y \mid \mathbf{x}) \mid y_{1:i},\mathbf{x}_{1:i+1} \right]  = p_i(y \mid \mathbf{x})
$$
and from the tower rule
$$
E\left[ p_{i+1}(y \mid \mathbf{x}) \mid y_{1:i},\mathbf{x}_{1:i} \right]  = p_i(y \mid \mathbf{x})
$$
almost surely for each $\mathbf{x} \in \mathbb{R}^d$. The martingale holds irrespective of the distribution of $\mathbf{X}_{i+1}$.
\subsubsection{Conditional regression with dependent stick-breaking}\label{Appendix:copula_regression}
We now derive the regression copula update inspired by the dependent DP. Consider the general covariate-dependent stick-breaking mixture model 
\begin{equation}\label{SM_eq:DDP_mixture_location}
\begin{aligned}
f_{G_\mathbf{x}}(\mathbf{y}) = \int \mathcal{N}(y \mid \theta,1) \, dG_{\mathbf{x}}({\theta}), \quad
G_\mathbf{x} =\sum_{k=1}^\infty w_k(\mathbf{x})\,\delta_{\theta^*_k}.
\end{aligned}
\end{equation}
For the weights, we elicit the stick-breaking prior $w_k(\mathbf{x}) = v_k(\mathbf{x}) \prod_{j< k} \{1-v_j(\mathbf{x})\}$ where $v_k(\mathbf{x})$ is a stochastic process on $\mathcal{X}$ taking values in $[0,1]$, and is independent across $k$. For the atoms, we assume they are independently drawn from a normal distribution,
$$
\theta_k^* \iid G_0,\quad G_0 = \mathcal{N}(\theta \mid 0, \tau^{-1}).
$$
Once again, we want to compute
$$
\frac{E\left[f_{G_\mathbf{x}}({y}) \, f_{G_{\mathbf{x}_1}}({y}_1) \right]}{p_0(y \mid \mathbf{x}) \, p_0(y_1 \mid \mathbf{x}_1)}\cdot
$$
Following the stick-breaking argument as in Section \ref{Appendix:multivariate}, we can write the numerator as
$$
\left\{1-\alpha_1(\mathbf{x},\mathbf{x}')\right\}E\left[ K({y} \mid \theta^*) \right] E\left[ K({y}_1 \mid \theta^*) \right] + \alpha_1(\mathbf{x},\mathbf{x}') E\left[ K({y} \mid \theta^*) \, K({y}_1 \mid \theta^*) \right] 
$$
where we write
$$K(y \mid \theta^*) = \mathcal{N}(y \mid \theta^*,1), \quad \theta^* \sim G_0,$$ 
and
$$
\alpha_1(\mathbf{x},\mathbf{x}') = \sum_{k=1}^\infty E\left[ w_k(\mathbf{x})w_k(\mathbf{x}') \right].
$$
As before, we have
$$
\frac{E\left[ K({y} \mid \theta^*) \, K({y}_1 \mid \theta^*) \right]}{p_0(y \mid \mathbf{x}) \, p_0(y_1 \mid \mathbf{x}_1)} = c_{\rho_y}\left\{ P_0(y \mid \mathbf{x}), P_0(y_1 \mid \mathbf{x}_1)\right\}
$$
where $\rho_y = 1/(1+\tau)$. We thus have the copula density as a mixture of the independent and Gaussian copula density. This then implies the first update step of the predictive takes the form
\begin{equation}
\begin{aligned}
p_{1}(y \mid \mathbf{x}) = \left[1-\alpha_1(\mathbf{x},\mathbf{x}_1)+ \alpha_1(\mathbf{x},\mathbf{x}_1) \, c_{\rho}\left\{P_0(y \mid \mathbf{x}),P_0(y_1 \mid \mathbf{x}_1)\right\}\right]\,  p_0(y\mid \mathbf{x}).
\end{aligned}
\end{equation}

\subsection{Classification}\label{Appendix:copula_classification}
\subsubsection{Beta-Bernoulli copula update}
In this section, we derive the copula update for the beta-Bernoulli model. The Bernoulli likelihood with beta prior is a special case of the DP update, where $y \in \{0,1\}$. 
We will use the update \eqref{eq:DP_countable} which simplifies drastically for the binary case.

For $y = y_1 = 0$, we have that $d_\rho(y,y_1) = D_\rho\{P_0(y), P_0(y_1)\} $, which directly gives us
$$
d_\rho(y, y_1) = 1-\rho + \rho \, \frac{p_0(y) \wedge p_0(y_1)}{p_0(y) \, p_0(y_1)} \quad \text{if } y= y_1 = 0.
$$
For $y= 1, y_1 = 0$ then, we have that any terms in \eqref{eq:DP_countable} with $y_1 -1$ are 0, giving us
\begin{equation*}
\begin{aligned}
d_\rho(y,y_1)\, p_0(y)\, p_0(y_1)&= D_\rho\{P_0(y),P_0(y_1)\} - D_\rho\{P_0(y-1),P_0(y_1)\} \\
&= p_0(y_1) - (1-\rho)\, p_0(1-y) \, p_0(y_1) - \rho \{p_0(1-y) \wedge p_0(y_1)\} \\
&= (1-\rho)\,p_0(y ) \,p_0(y_1) + \rho\left(p_0(y_1) -[\{1-p_0(1-y)\} \wedge p_0(y_1)] \right)\\
&=(1-\rho)\,p_0(y) \,p_0(y_1) + \rho\left(p_0(y) -[p_0(y)\wedge \{1-p_0(y_1)\}] \right)
\end{aligned}
\end{equation*}
where we have used
$$
p_0(y_1) -[\{1-p_0(y)\} \wedge p_0(y_1)] = p_0(y) -[p_0(y)\wedge \{1-p_0(y_1)\}].
$$
This gives us
$$
d_\rho(y, y_1) = 1-\rho + \rho \, \frac{p_0(y) -[p_0(y) \wedge \{1-p_0(y_1)\}]}{p_0(y) \, p_0(y_1)} \quad \text{if } y=1, y_1 = 0.
$$
Following the above derivations for the remaining two cases gives us
$$
d_\rho(y, y_1) = \begin{dcases}1-\rho + \rho \frac{p_0(y) \wedge p_0(y_1)}{p_0(y) \, p_0(y_1)} \quad &\text{if } y= y_1\\ 
 1-\rho + \rho \frac{p_0(y) -[p_0(y) \wedge \{1-p_0(y_1)\}]}{p_0(y)\, p_0(y_1)} \quad &\text{if } y\neq y_1\end{dcases}
$$
returning us the equation in the main paper if we plug in $q_i = p_{i}(y \mid \mathbf{x})$ and $r_i = p_{i}(y_{i+1} \mid \mathbf{x}_{i+1})$.

We now provide a quick check that the beta-Bernoulli update for classification indeed satisfies the martingale conditions. Let us write $q_i(1) = p_i(y = 1 \mid \mathbf{x})$, $q_i(0) = 1-q_i(1)$ and likewise $r_i(1) = p_i(y_{i+1} = 1 \mid \mathbf{x}_{i+1})$, $r_i(0) = 1-r_i(1)$. We first check that $q_{i+1}(1) + q_{i+1}(0) = 0$ given $q_i(1) + q_i(0) = 0$. For $y_{i+1} = 1$, we just need to check that the following term is equal to 1:
\begin{equation*}
\begin{aligned}
\frac{\{q_i(1) \wedge r_i(1)\} + q_i(0) - \{q_i(0) \wedge (1-r_i(1))\}}{r_i(1)}\cdot
\end{aligned}
\end{equation*}
If $q_i(0) < r_i(0)$, then $q_i(1) > r_i(1)$ so the numerator is $r_i(1) + q_i(0) - q_i(0) = r_i(1)$. Likewise if $q_i(0)> r_i(0)$, the numerator is $r_i(1) + q_i(0) - q_i(0) =r_i(1)$. The same applies for $y_{i+1} = 0$. Now to check the martingale condition. For $y = 1$, we want the below term to equal 1:
\begin{equation*}
\begin{aligned}
\frac{\{q_i(1) \wedge r_i(1)\} + q_i(1) - \{q_i(1) \wedge (1-r_i(0)\}}{q_i(1)} = \frac{q_i(1) +\{q_i(1) \wedge r_i(1)\}  -\{ q_i(1) \wedge r_i(1)\}  }{q_i(1)} =1.
\end{aligned}
\end{equation*}

\subsubsection{Probit copula update}
We can follow a similar derivation to the beta-Bernoulli copula update to derive an update for the conjugate probit-normal model. Here, the DPMM kernel is the normal CDF, $P(y = 1 \mid \theta) = \Phi(\theta)$, and the base measure is $G_0 = \mathcal{N}(\theta \mid 0,\tau^{-1})$. The update term instead takes the form 
\begin{equation}
\begin{aligned}
b_{\rho_y}(q_i,r_{i}) &= \begin{dcases} \frac{C_{\rho_y}(q_i,r_i)}{q_i\,  r_i} &\quad \text{if } y = y_{i+1} \\ \frac{ \{q_i - C_{\rho_y}(q_i,1-r_i)\}}{q_i \, r_i}&\quad \text{if } y  \neq y_{i+1} \end{dcases}
\end{aligned}
\end{equation}
where $q_i = p_i(y \mid \mathbf{x}), r_i = p_i(y_{i+1} \mid \mathbf{x}_{i+1})$, and $C$ is the bivariate Gaussian copula distribution function. Again, we have $\rho_y = 1/(1+\tau)$. Note the similarities to the beta-Bernoulli update, where instead of $C_{\rho_y}(u,v)$ we have $D_{\rho_y}(u,v)$.  However, this is computationally much more expensive as we need to approximate a bivariate normal integral.

\section{Practical considerations for copula methods}\label{Appendix:practicalcons}
\subsection{Implementation details}
For our copula methods, we begin by first computing $v_i^j$ for $i = \{0,\ldots,n-1\}, j = \{1,\ldots, d\}$ where $v_i^j = P_i(y_{i+1}^j \mid y_{i+1}^{1:j-1})$. In practice, this involves iterating the copula methods and computing the conditional CDF values through \eqref{eq:mv_DP_conditional} for the observed datapoints $\mathbf{y}_{1:n}$. For the just-in-time compilation, it is generally faster to work with arrays of fixed size, and so it is better to compute $P_{1:n}$ for all $\mathbf{y}_{1:n}$ and extract the needed $v_i^j$. Given $v_i^j$, we can compute $p_n$ at any test point $\mathbf{y}$, and predictively resample. 

\subsection{Computing quantiles and sampling}

For the univariate cases, we may be interested in computing quantiles for $P_n(y)$ or $P_n(y \mid \mathbf{x})$. For example, finding the median $\bar{y}$ such that $P_n(\bar{y}\mid \mathbf{x}) = 0.5$ may be of interest if we are interested in a point estimate of the regression function. Although we cannot compute $\bar{y}$ exactly, we can solve $P_n({y}\mid \mathbf{x}) =0.5$ for $y$ through efficient gradient-based optimization using automatic differentiation.  We may also be interested in sampling from $P_n(\mathbf{y})$ or $P_N(\mathbf{y})$, e.g. for computing Monte Carlo estimates of parameters. We can again utilize an optimization-based inverse transform sampling method:  we simulate $d$ independent uniform random variables $U^{1:d}$, and then find the quantiles by solving $u_n^j = U^j$ for $j = \{1,\ldots,d\}$ to obtain a random sample $\mathbf{Y}$. In practice, we can minimize the loss function,
$\mathbf{Y} = \argmin_{\mathbf{Y}} \sum_{j=1}^d \left\{U^j - P(Y^j \mid Y^{1:j-1}) \right\}^2$, which would give 0 loss at the sample.

\subsection{Optimization details}
As gradients with respect to $\rho$ are generally difficult to compute, we opt for automatic differentiation, which is available in JAX. For hyperparameter optimization, we use the SLSQP method in \texttt{scipy} \citep{Scipy2020}. However, it would be of interest to investigate the applications of stochastic gradient methods. For computing quantiles/sampling, we implement BFGS with Armijo backtracking line search directly in JAX so that the entire optimization procedure can be JIT-compiled.

\section{Experiments}\label{Appendix:experiments}
\subsection{Reproducibility of copula experiments}
Due to the non-determinism of GPU computations even with random seeds, the reported results for the copula methods in the main paper are timed on the GPU but exact numerical values and plots are computed on the CPU (with the appropriate \texttt{jaxlib} version). The results are unsurprisingly very similar on the two platforms, but the CPU allows for exact reproducibility with the provided code at {\href{https://github.com/edfong/MP}{\texttt{https://github.com/edfong/MP}}}. The baselines methods do not have this issue as they are timed and run on the CPU.

\subsection{Baseline details}\label{Appendix:baselines}
For the DPMM with MCMC examples, we use the full covariance multivariate normal kernel with conjugate priors from the \texttt{dirichletprocess} package \citep{Ross2018}, which implements Gibbs sampling from \cite{Neal2000}. The hyperprior on the concentration parameter is $a \sim \text{Gamma}(2,4)$. For the univariate case, the conjugate base measure is
$$
G_0({\mu},\sigma^2) = \mathcal{N}\left( {\mu} \mid {\mu}_0,  \frac{\sigma^2}{k_0}\right) \text{Inverse Gamma}(\sigma^2\mid \alpha,\beta).
$$
For the GMM examples, we set $\{\mu_0,k_0,\alpha,\beta\} = \{0,1,0.1,0.1\}$; we select $\alpha,\beta$ to match the smoothness of the posterior mean with the copula method. For the galaxy example (with unnormalized data), we follow the suggestions of \cite{West1991} and set $\{\mu_0,k_0,\alpha,\beta\} = \{20,0.07,2,1\}$, where again $k_0$ is set to a point estimate that matches the smoothness to the copula method.

For the multivariate DPMM, the base measure of the DP is the normal-Wishart distribution:
$$
G_0(\bm{\mu},\bm{\Lambda}) = \mathcal{N}( \bm{\mu} \mid \bm{\mu}_0, (k_0 \bm{\Lambda})^{-1})\, \text{Wi}(\bm{\Lambda} \mid \bm{\Lambda_0}, \nu)
$$
where $\bm{\Lambda}$ is the precision matrix of the kernel. For both the air quality and LIDAR example, we have $\{\bm{\mu}_0, k_0, \bm{\Lambda}_0,\nu \} = \{\mathbf{0}, d,I_d,d\}$ where $I_d$ is the identity matrix and $d = 2$m which is default in the package. For regression, we fit the DPMM to estimate the joint density $p(y,x)$ and compute the implied conditional $p(y \mid x)$. For MCMC, we sample $B = 2000$ posterior samples for the plots with a burn-in chain of length $2000$. For the timing, we have the same burn-in chain but only take $B = 1000$ to compare to the copula. Given a posterior sample of the cluster parameters $\theta_{1:n}$, we can draw an approximate posterior sample of $[G_\infty \mid \theta_{1:n}]$, which is conditionally independent from $y_{1:n}$, from the stick-breaking representation (see Key Property 5 of \citet{Ross2018}), from which we can compute a random sample of $p_\infty$ by mixing over the kernel with $G_\infty$.

The remainder methods are implemented in \texttt{sklearn}. For the DPMM with VI  (mean-field approximation), we use the diagonal covariance kernel, with default hyperparameters for the priors. For the variational approximation, we set the upper limit of clusters to $K = 30$, and initialize and optimize 100 times to avoid local minima.
For the KDE, the scalar bandwidth is set through 10-fold cross-validation with the log predictive density over a grid. For  Gaussian process regression, we use the RBF kernel with a single length scale, which is set by maximizing the marginal likelihood over 10 repeats. The same is done for Gaussian process classification with the logistic link function, which approximates the posterior with the Laplace approximation. For the linear models, we use Bayesian ridge regression and logistic regression with $L_2$ regularization with default values.

\subsection{Galaxy}
For the galaxy example, convergence to the martingale posterior is assessed in Figure \ref{fig:galaxy_conv}, where $5000$ forward samples is sufficient. 
\begin{figure}[!h]
    \centering
        \includegraphics[width=0.97\textwidth]{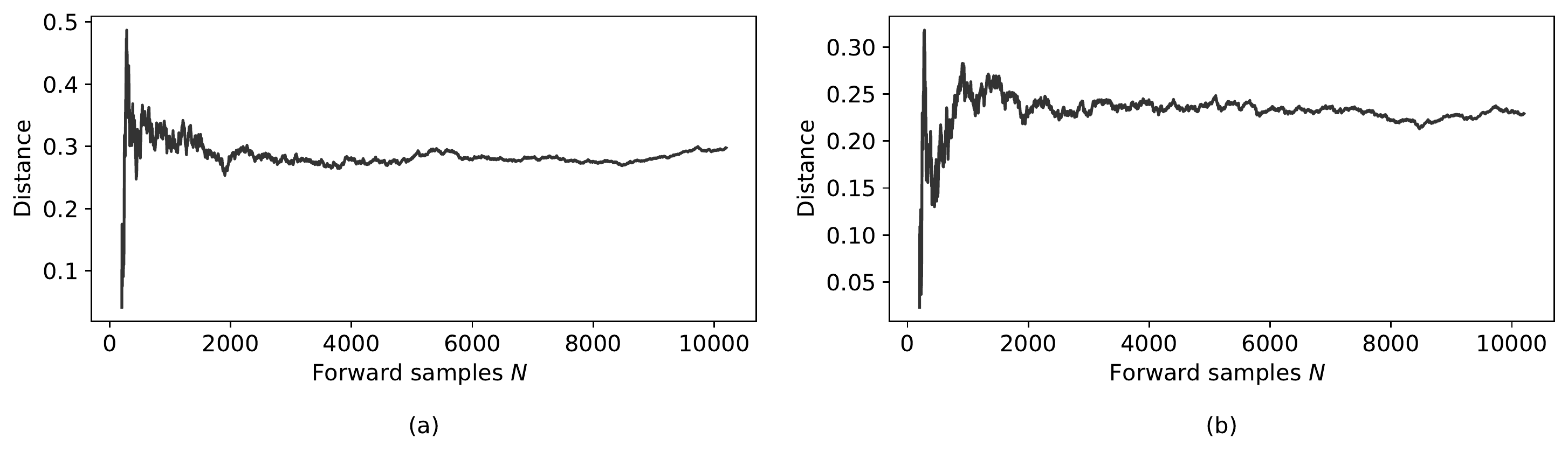}
    \caption{Estimated $L_1$ distance  (a) $\lVert p_N - p_{n} \rVert_1$  and (b) $\lVert P_N - P_n \rVert_1$ for a single forward sample} \label{fig:galaxy_conv}
\end{figure}
\subsection{Bivariate air quality}\label{Appendix:ozone}
In Figure \ref{fig:DPMM_ozone_dens} we show the posterior mean and standard deviation of the density obtained for the DPMM with MCMC, which is comparable to that of the martingale posterior from the copula method. 

\begin{figure}[!h]
    \centering
        \includegraphics[width=0.97\textwidth]{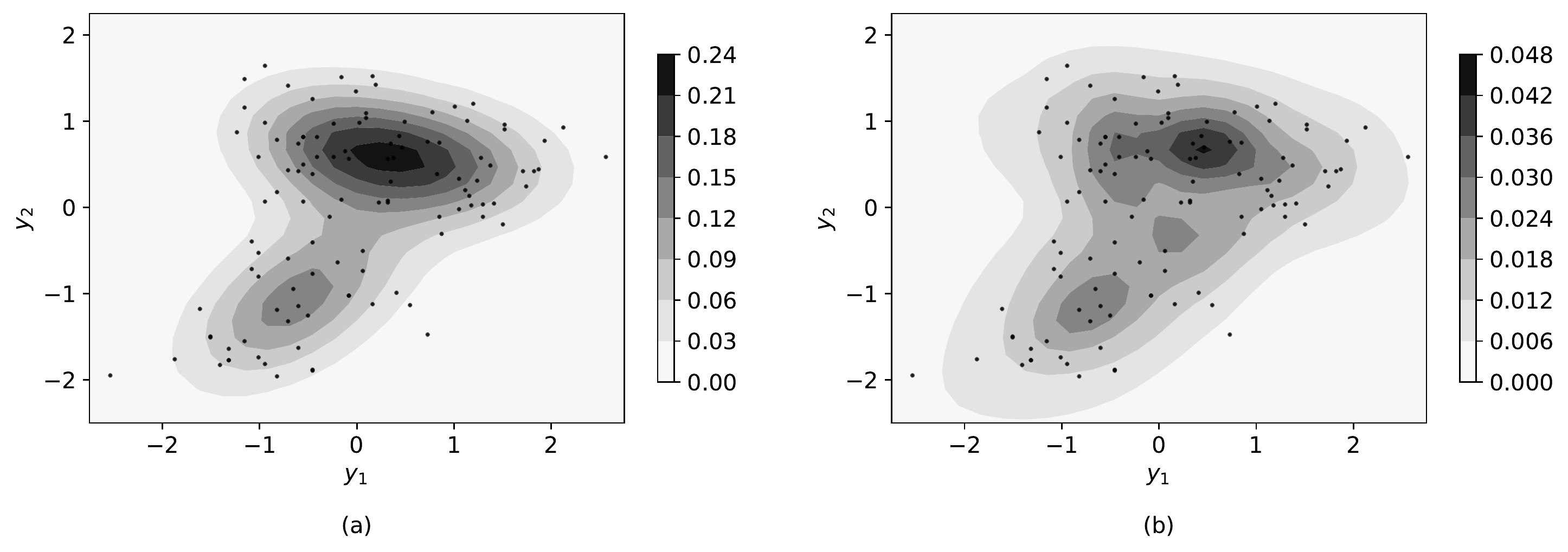}
    \caption{Posterior (a) mean and (b) standard deviation of density for DPMM, with data (\dot)} \label{fig:DPMM_ozone_dens}
\end{figure}

\subsection{Classification in simulated moon dataset}\label{Appendix:moon}
We demonstrate the conditional method of Section \ref{sec:conditreg} with the beta-Bernoulli copula update for classification on a non-linear classification example, which is particularly interesting as we can predictively resample $Y_{n+1:\infty}$ directly. We analyze the simulated moon dataset from \texttt{sklearn} \citep{Pedregosa2011} with bivariate covariates. Figure \ref{fig:moon_data} shows the decision boundary on the left, and $n=100$ simulated data points on the right with a fit GP using the logistic link and Laplace approximation. We fit the conditional copula method and get the bandwidths $\rho_y = 0.73, \rho_x = [0.92,0.74]$; optimizing, fitting and prediction on the $\mathbf{x}$-grid of size $25 \times 25$ required 2 seconds, versus around 1 second for the GP.

Predictive resampling is also possible here, where we draw $X_{n+1:N}$ with the Bayesian bootstrap and simulate $Y_{n+1:N}$ directly. Generating $B=1000$ samples of $Y_{n+1:N},X_{n+1:N}$ and computing each $p_N(y=1 \mid \mathbf{x})$ on the $25 \times 25$ grid took under $3$ seconds. The martingale posterior mean and standard deviations of $p_N(y = 1 \mid \mathbf{x})$ are shown in Figure \ref{fig:moon_fit}. The martingale posterior mean $p_n(y=1 \mid \mathbf{x})$ is similar to the GP, and we see that the uncertainty is higher in $\mathbf{x}$-regions where both classes are present. However, we must note that the uncertainty decreases to 0 as we move away from the data which is undesirable. This may be due to the Bayesian bootstrap resampling $\mathbf{x}_{n+1:\infty}$ from the observed $\mathbf{x}_{1:n}$, so $\mathbf{x}_N$ in the tails are never sampled. As a result, $\alpha(\mathbf{x},\mathbf{x}_N)$ remains small for $\mathbf{x}$ in the tails, so there is not much change in $p_N(y \mid \mathbf{x})$. This behaviour is also observed in the conditional copula regression case in Appendix \ref{Appendix:lidar}. As such, we recommend only looking at inlying $\mathbf{x}$ when using the Bayesian bootstrap for $\mathbf{x}_{n+1:\infty}$ with the conditional copula method. On the left of Figure \ref{fig:moon_conv}, we also plot the martingale posterior of $p_N(y= 1 \mid \mathbf{x} = [1,-0.8])$, which has mean $0.86$ but has relatively large uncertainty. On the right, we plot the absolute difference $|p_N(y= 1 \mid \mathbf{x}) - p_n(y= 1 \mid \mathbf{x})|$ as we carry out one forward sample, which converges relatively quickly with $N$. 

\begin{figure}[!h]
    \centering
        \includegraphics[width=0.92\textwidth]{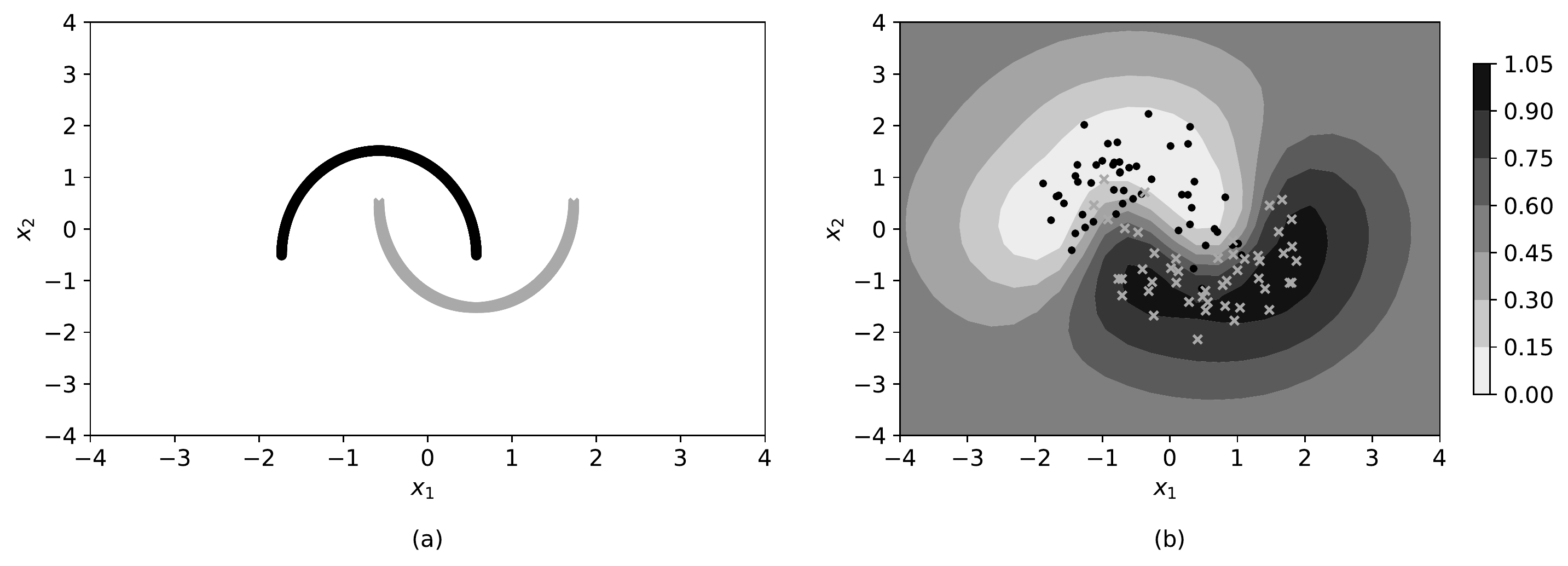}
    \caption{(a) Simulated data with no noise with $y=0$ (\dot) and $y = 1$ (\crossmid); (b) $n=100$ simulated data points with Gaussian noise (added to covariates) and $p_n(y = 1\mid \mathbf{x})$ for GP 
} \label{fig:moon_data}
\end{figure}

\begin{figure}[!h]
    \centering
        \includegraphics[width=0.92\textwidth]{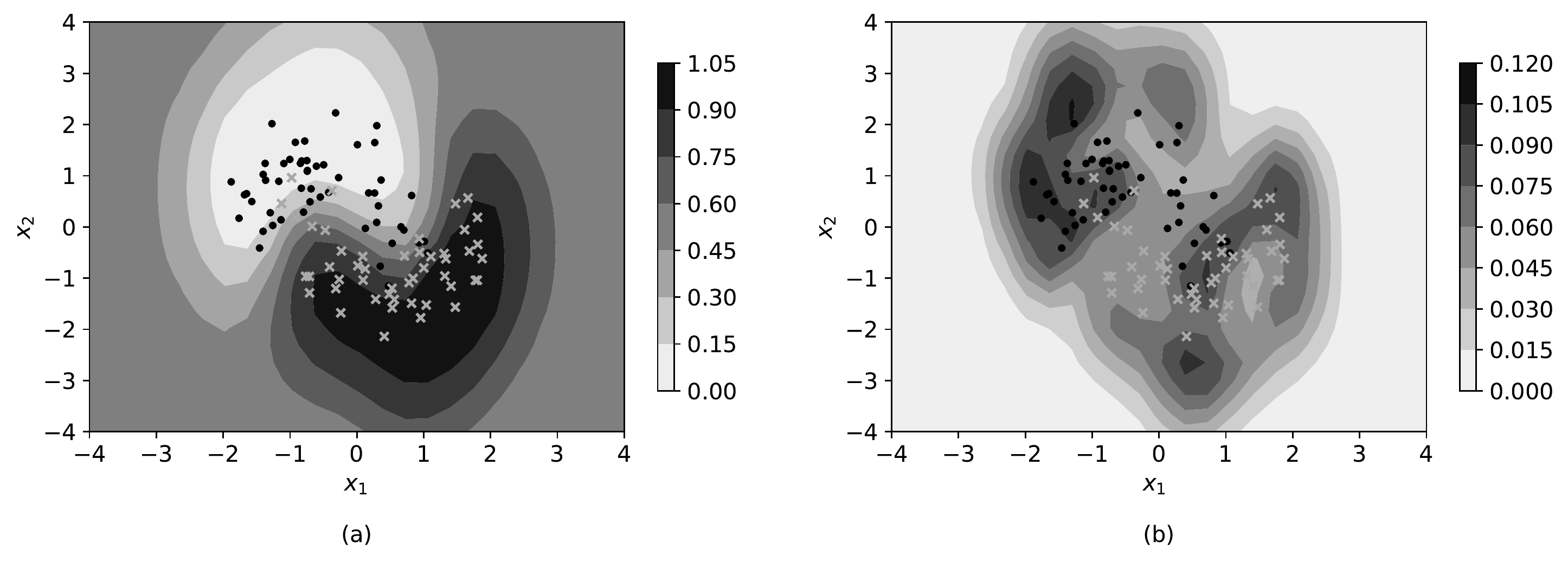}
    \caption{Posterior (a) mean  and (b) standard deviation  of $p_N(y = 1 \mid \mathbf{x})$ for conditional copula method with $y=0$ (\dot) and $y = 1$ (\crossmid)} 
    \label{fig:moon_fit} 
\end{figure}
\begin{figure}[!h]
    \centering
        \includegraphics[width=0.92\textwidth]{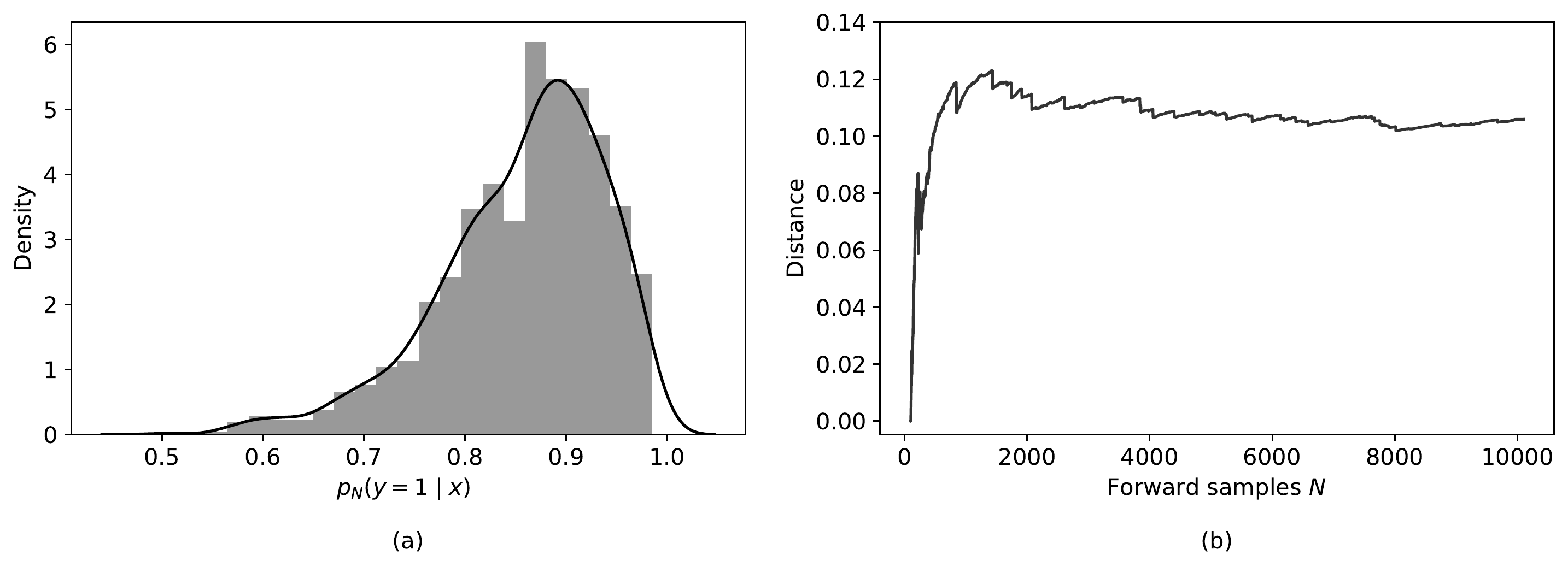}
    \caption{(a) Posterior samples of $p_{N}(y=1 \mid \mathbf{x})$; (b) Convergence plot $|p_{N}(y=1 \mid \mathbf{x}) - p_{n}(y=1 \mid \mathbf{x})|$ for $\mathbf{x} = [1,-0.8]$  } \label{fig:moon_conv}
\end{figure}

\newpage

\subsection{LIDAR}\label{Appendix:lidar}
In Figure \ref{fig:lidar_cgp}, we see the predictive means and 95\% central intervals for the conditional copula method on the left and the GP on the right. The conditional copula method still performs well, with a slight bias towards the end of range of $x$, in a similar way to the Nadaraya-Watson estimator. The GP deals well with the nonlinearity but much more poorly with the heteroscedasticity.

In Figures \ref{fig:lidar_jc_x0}, \ref{fig:lidar_jc_x-3} we see the difference in the martingale posteriors of $p_N(y \mid x= 0)$ for the joint and conditional copula method. For $x=0$ which lies within the data, the uncertainty for the joint and conditional methods are similar. However, for $x=-3$ which is far from the data, we see that the joint method has a high amount of uncertainty as expected, but the conditional method does not have any uncertainty. We believe this occurs due to the discrete nature of the Bayesian bootstrap when resampling $x_{n+1:\infty}$ for the conditional method as discussed in Section \ref{Appendix:moon}. This issue can be mitigated by resampling using a smooth density (e.g. a KDE) for $x_{n+1:\infty}$, but still occurs for suitably distant $x$. It seems that the choice of resampling distribution of $x_{n+1:\infty}$ affects the uncertainty of outlying $x$, but we have found much less sensitivity to this choice for inlying $x$. Interestingly, the joint method does not have this issue irrespective of the distance of $x$ from the dataset, so we recommend the joint method when the $x$ of interest is outlying. We leave a detailed investigation of this for future work.

We can also see the difference in $p_n(y \mid x)$ between the joint and conditional method by comparing Figures \ref{fig:lidar_cgp} and \ref{fig:lidar_jdpmm}. Although the joint method is not skewed towards the end of the range of $x$, the prequential log-likelihood is actually higher for the conditional method. The conditional method is only slightly faster in this example, but for higher dimensional covariates, the computational gain of not having to estimate $P_n(\mathbf{x})$ is much higher. For the median plot in Figure \ref{fig:lidar_conv}(a) in the main paper, we compute the $y,x$ grid of size $40 \times 40$ using the 95\% credible bands as the limits for $y$. The median is then computed numerically from the grid where $P_N(y \mid x) = 0.5$.
\begin{figure}[!h]
    \centering
        \includegraphics[width=0.92\textwidth]{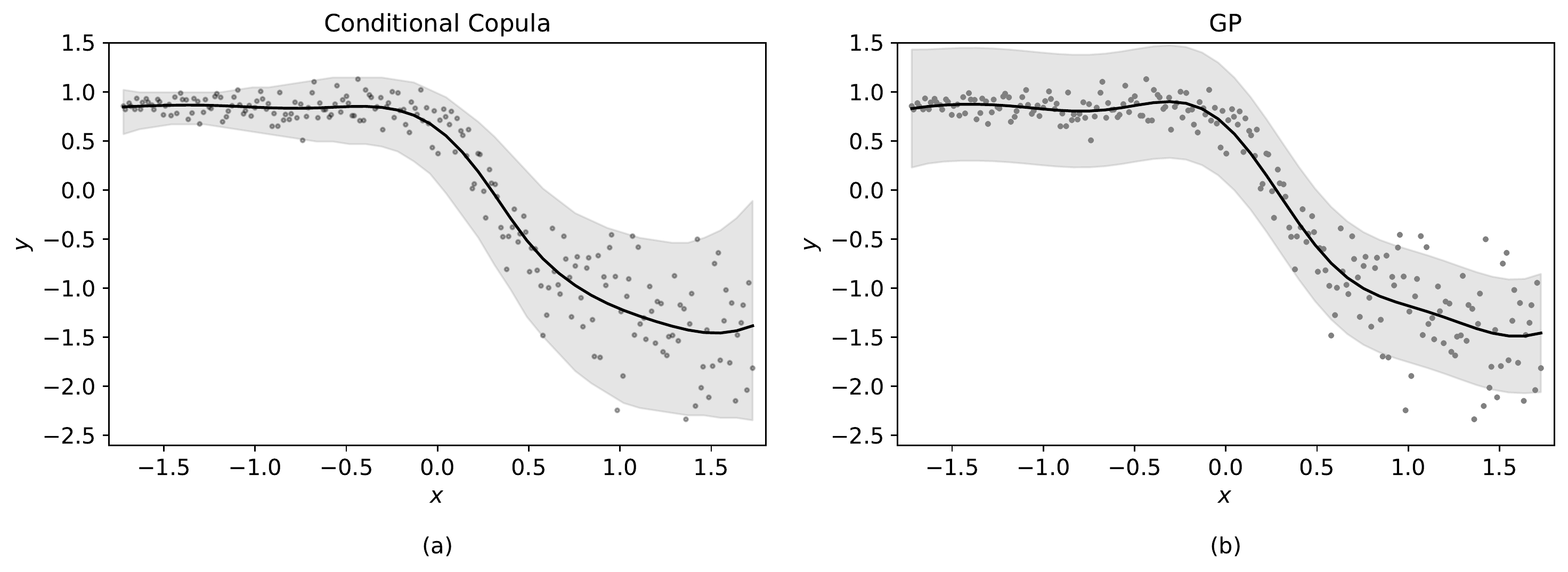}
    \caption{ $p_n(y \mid x)$ (\full) with 95\% predictive interval (\sqrlow) for the (a) conditional copula method and (b) GP, with data (\dotmid)} \label{fig:lidar_cgp}
\end{figure}
\begin{figure}[!h]
    \centering
        \includegraphics[width=0.92\textwidth]{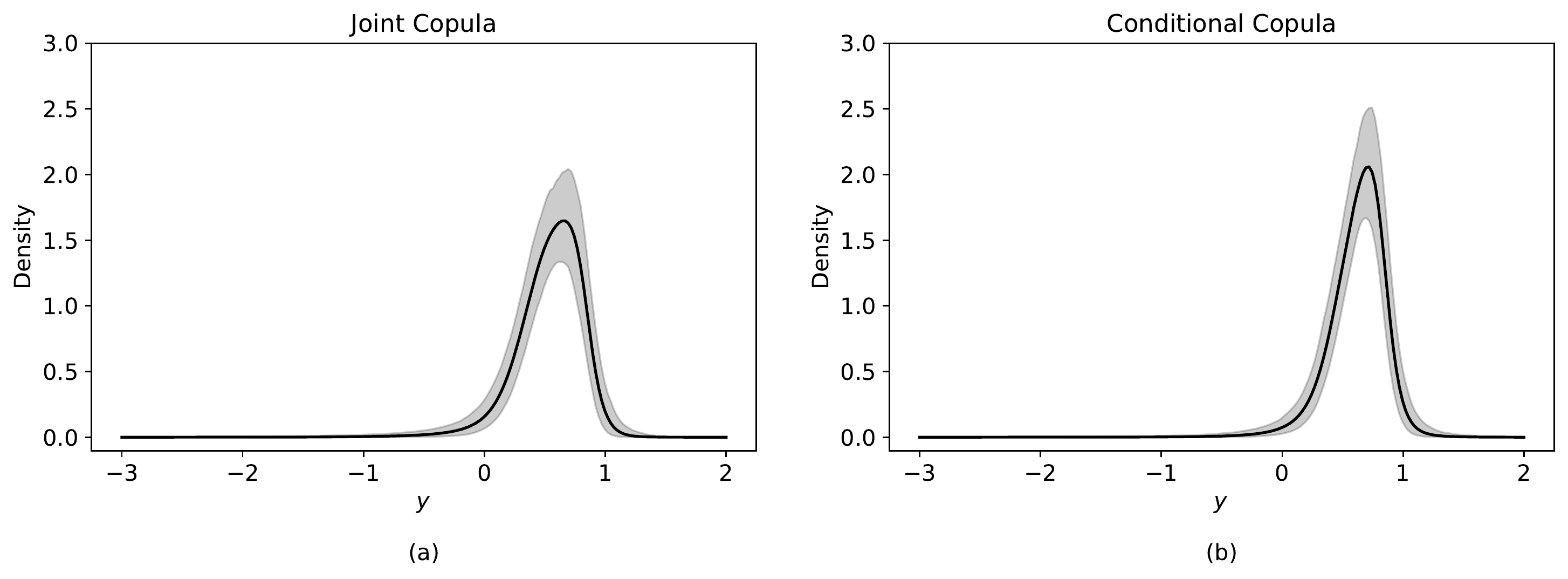}
    \caption{Posterior mean (\full) and 95\% credible interval (\sqrlow) of $p_N(y \mid x=0)$ for the (a) joint copula method  and (b) conditional copula method } \label{fig:lidar_jc_x0}
\end{figure}

\begin{figure}[!h]
    \centering
        \includegraphics[width=0.92\textwidth]{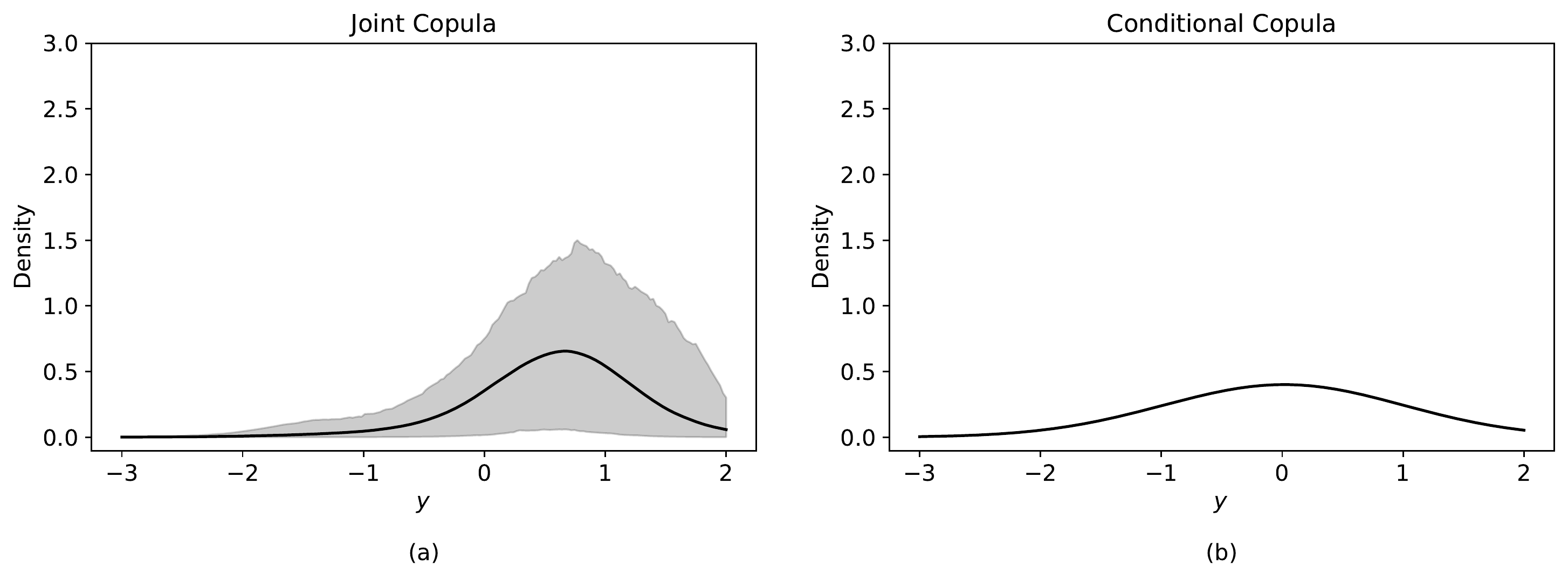}
    \caption{Posterior mean (\full) and 95\% credible interval (\sqrlow) of $p_N(y \mid x=-3)$ for the (a) joint copula method  and (b) conditional copula method } \label{fig:lidar_jc_x-3}
\end{figure}

 \newpage
\subsection{High-dimensional GMM}\label{Appendix:highd_gmm}
In this section, we consider a simulated example to demonstrate the degradation in performance with dimensionality for the DPMM with VI (with diagonal covariance matrix), which the copula method is robust to. We simulate $n = 100, n_{\mathrm{test}} = 1000$ data points from
$$
f_0(\mathbf{y}) = 0.5\, \mathcal{N}(\mathbf{y} \mid \mu^{1}_d, I_d)  + 0.5\, \mathcal{N}(\mathbf{y} \mid \mu^{2}_d, I_d) 
$$
where $\mu^{1}_d = \begin{bmatrix}-1,\ldots,-1 \end{bmatrix}^\T$ and  $\mu^{1}_d = \begin{bmatrix}2,\ldots,2 \end{bmatrix}^\T$ are both $d$-vectors, and $I_d$ is the $d\times d$ identity matrix. The DPMM is thus well-specified in this example. As before, we normalize the data before fitting the methods.

Below is a plot of the average test log-likelihoods as $d$ increases for each method. The DPMM with VI degrades steeply with dimensionality, likely due to the difficulty of the variational optimization. The KDE performs quite well in this simple example. 
\begin{figure}[!h]
    \centering
        \includegraphics[width=0.5\textwidth]{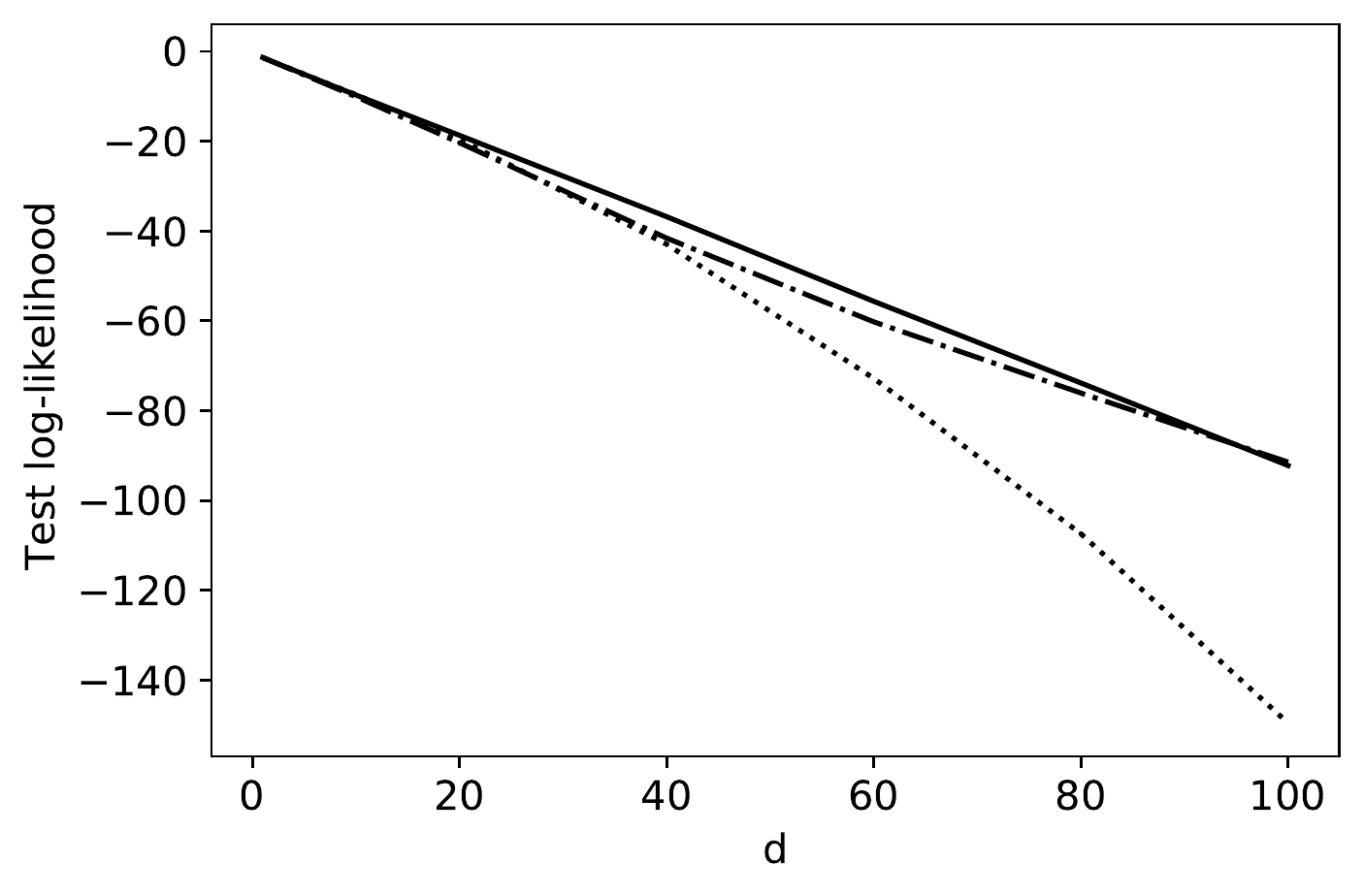}
    \caption{Average test log-likelihood with dimensionality for the copula method (\full), DPMM with VI (\dotted) and KDE(\dashdotted)} \label{fig:highd_GMM}
\end{figure}
\end{appendices}
\newpage


\end{document}